\algnewcommand\algorithmicinput{\textbf{INPUT:}}
\algnewcommand\INPUT{\item[\algorithmicinput]}
\algnewcommand\algorithmicoutput{\textbf{OUTPUT:}}
\algnewcommand\OUTPUT{\item[\algorithmicoutput]}
\newtheorem{theorem}{Theorem}
\newtheorem{lemma}[theorem]{Lemma}
\newtheorem{proposition}[theorem]{Proposition}
\newtheorem{corollary}[theorem]{Corollary}
\newtheorem{definition}{Definition}
\newtheorem{remark}{Remark}
\newtheorem{assumption}{Assumption}
\newtheorem{example}{Example}
\DeclareMathOperator*{\argmin}{arg\,min}
\DeclareMathOperator*{\op}{op}
\DeclareMathOperator*{\var}{Var}
\DeclareMathOperator*{\rank}{rank}
\DeclareMathOperator*{\diag}{diag}
\DeclareMathOperator*{\tr}{tr}
\renewcommand{\d}[1]{\ensuremath{\operatorname{d}\!{#1}}}
\title{Multivariate Poisson intensity estimation via  low-rank tensor decomposition}
\author[1]{Haotian Xu}
\author[2]{Carlos Misael Madrid Padilla}
\author[3]{Oscar Hernan Madrid Padilla}
\author[4]{Daren Wang}
\affil[1]{Department of Mathematics and Statistics, Auburn University}
\affil[2]{Department of Statistics and Data Science, Washington University in St.~Louis}
\affil[3]{Department of Statistics and Data Science, University of California,~Los Angeles}
\affil[4]{Department of Mathematics, University of California,~San Diego}
\date{\today}
\begin{document}

\maketitle

\begin{abstract}
In this work, we propose new matrix- and tensor-based methodologies for estimating multivariate intensity functions of inhomogeneous point processes. By viewing multivariate intensity functions as infinite-dimensional matrices or tensors within function spaces, our algorithms attain the optimal bias–variance trade-off, yielding rate-optimal estimation error, with model complexity governed by matrix or tensor ranks. They substantially improve estimation accuracy, while simultaneously reducing computational cost.   To illustrate the adaptivity of the proposed framework, we show that many fundamental classes of multivariate functions, including additive and mean-field models, admit finite-rank tensor representations. We apply our method to a four-dimensional U.S. Geological Survey earthquake dataset, comprising features such as latitude, longitude, depth, and magnitude. Our tensor estimator recovers localized seismicity patterns (California, Oklahoma, Pacific Northwest, north-central US), whereas the kernel baseline oversmooths them.

\end{abstract}

\vskip 0.3cm
{\bf Keywords.} Point processes; Nonparametric intensity estimation; Curse of dimensionality; Singular value decomposition; Approximately low-rank tensor.

\section{Introduction}\label{sec:intro}

Inhomogeneous point processes model random collections of events in a domain $\mathbb{X}\subset\mathbb{R}^D$ and play a central role in astronomy, biology, neuroscience, epidemiology, seismology, economics, and finance. Classical examples include forest fires \citep{stoyan2000recent,waagepetersen2008estimating,moller2010structured}, earthquakes \citep{bray2013assessment}, crime incidents \citep{baddeley2021analysing}, and financial transactions \citep{bauwens2009modelling,jevtic2014class}. In many contemporary applications, the events are inherently multivariate: seismic catalogs on (longitude, latitude, depth, magnitude), optionally augmented by focal-mechanism or stress-drop attributes; X-ray astronomy photon events on (position, time, energy) \citep{cash1979parameter,park2006bayesian}; and ecological or neuroimaging point patterns in three-dimensional physical space enriched with continuous attributes such as diameter, orientation, or activation level.

When some coordinates are singled out as responses, these processes fall under the \emph{marked point process} framework \citep[see e.g.][]{baccelli2010stochastic,cheng2025deep}, with the responses serving as marks and the remaining coordinates as locations. Rather than the standard regression-style decomposition into a ground intensity and conditional mark distribution \citep{diggle2013statistical,illian2008statistical,baddeley2016spatial}, we directly target the \emph{joint intensity} $\lambda^*:\mathbb{X}\to\mathbb{R}_+$ on the full product space. This avoids the response--covariate asymmetry, and the ground intensity and conditional mark density can be recovered computationally efficiently from the joint estimate by marginalization and normalization (see \Cref{remark:beyond_poisson}).

Accurate estimation of $\lambda^*$ is essential for scientific interpretation and for downstream tasks such as higher-order summaries (e.g.~self-excitation or inhibition) and dependence assessment \citep{baddeley2007spatial,baddeley2000non}. Classical nonparametric estimators such as kernel intensity estimation (KIE) \citep[see e.g.][]{gonzalez2016spatio} suffer the \emph{curse of dimensionality}: both the number of coefficients required for a faithful approximation and the resulting variance grow exponentially in $D$, so KIE's gap to structured estimators is already visible at $D=2$ and is widening sharply for $D\ge 3$. 

We develop matrix- and tensor-based estimators that exploit (approximately) low-rank structure in $\lambda^*$, viewed as an infinite-dimensional tensor on function space. Such structure arises naturally in additive, mean-field, and related models (see \Cref{sec:example_low-rank}); by estimating only the leading spectral components, our methods reduce both estimation error and computational cost relative to classical smoothing. Given $n$ point processes $\{N^{(i)}\}_{i=1}^n$ with $\alpha$-smooth common intensity $\lambda^*$, KIE attains squared $\mathbb{L}_2(\mathbb{X})$ error of order $O(n^{-2\alpha/(2\alpha+D)})$, which degrades rapidly with $D$, whereas our rank-controlled estimators circumvent this exponential dependence, as sketched below.

\noindent\textbf{Bivariate case: low-rank matrix approximation.} For $\lambda^*(x,y)$ on $\mathbb{X}\subset\mathbb{R}^2$, the tensor-product basis expansion
\[
\lambda^*(x,y) \approx \sum_{\mu_1 = 1}^m\sum_{\mu_2 = 1}^m b^*_{\mu_1,\mu_2}\phi_{\mu_1}(x)\phi_{\mu_2}(y),
\qquad b^*_{\mu_1,\mu_2}=\int\!\!\int\lambda^*\phi_{\mu_1}\phi_{\mu_2},
\]
encodes $\lambda^*$ in the $m\times m$ matrix $b^*$ with truncation error $O(m^{-2\alpha})$ under $\alpha$-smoothness \citep{hackbusch2012tensor}. Estimating all $m^2$ entries yields the classical rate $O(n^{-2\alpha/(2\alpha+D)})$ at $D=2$, whereas a rank-$R$ truncated SVD of $b^*$ (see \Cref{fig:SVD}) reduces the parameter count to $2mR+R$, so for bounded $R$ the variance becomes $O(m/n)$. The price is the rank-$R$ bias
\begin{equation}\label{eq:approx_error_rankR}
\xi_{(R)} \;=\; \inf_{\rank(g)\le R}\,\|g-\lambda^*\|_{\mathbb{L}_2(\mathbb{X})},
\end{equation}
which vanishes when $\lambda^*$ is exactly rank-$R$, as in additive or mean-field models. The resulting squared $\mathbb{L}_2(\mathbb{X})$ error is $O(n^{-2\alpha/(2\alpha+1)})+\xi_{(R)}^2$, recovering the one-dimensional minimax rate up to the approximation error.
\begin{figure}[!htbp]
    \centering
    \includegraphics[width=0.6\textwidth]{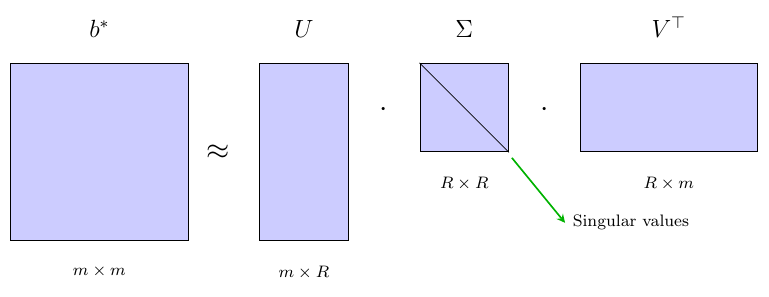}
    \caption{Truncated SVD approximation of the coefficient matrix $b^*$, where $U$ and $V$ are the left and right singular matrices and $\Sigma$ contains the leading $R$ singular values.}
    \label{fig:SVD}
\end{figure}

\noindent\textbf{Multivariate case: low-rank tensor approximation.} The same idea extends to $\lambda^*(x_1,\dots,x_D)$ on $\mathbb{X}\subset\mathbb{R}^D$ via
\[
\lambda^*(x_1,\dots,x_D) \approx \sum_{\mu_1 = 1}^m\cdots\sum_{\mu_D = 1}^m b^*_{\mu_1,\dots,\mu_D}\phi_{\mu_1}(x_1)\cdots\phi_{\mu_D}(x_D),
\]
again with approximation error $O(m^{-2\alpha})$, where $\{b^*_{\mu_1,\dots,\mu_D}\}$ forms a $D$th-order tensor of ambient size $m^D$. Estimating all $m^D$ entries quickly becomes infeasible, but a Tucker rank-$(R_1,\dots,R_D)$ approximation (see \Cref{fig:tensor}) reduces the parameter count to $\prod_{j=1}^D R_j + m\sum_{j=1}^D R_j$. With bounded Tucker ranks, the variance is $O(m/n)$ and the squared $\mathbb{L}_2(\mathbb{X})$ error becomes $O(n^{-2\alpha/(2\alpha+1)})+\xi_{(R_1,\dots,R_D)}^2$, with $\xi_{(R_1,\dots,R_D)}$ defined in \eqref{eq:tucker_low-rank_approx}. This again recovers the one-dimensional minimax rate up to the approximation error.
\begin{figure}[!htbp]
    \centering
    \includegraphics[width=0.6\textwidth]{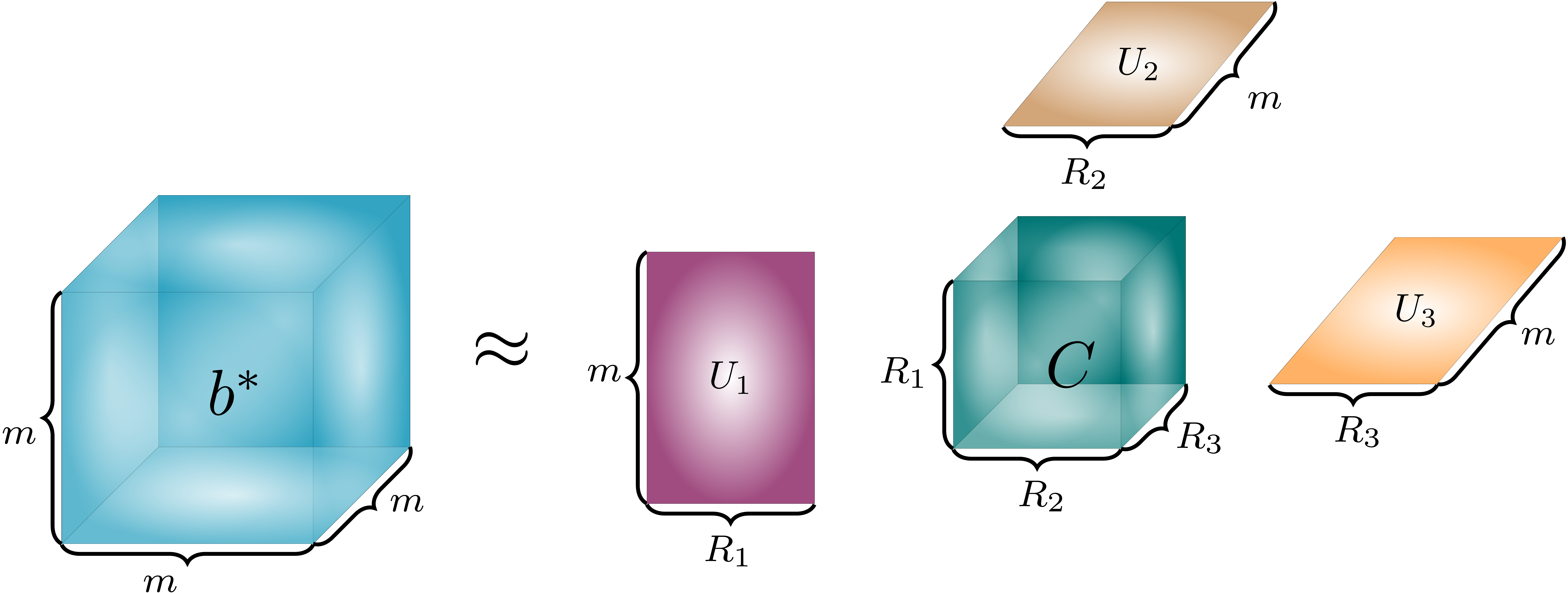}
    \caption{Tucker low-rank approximation of the third-order tensor $b^*$ with rank $(R_1, R_2, R_3)$; $C$ is the core tensor and $U_1, U_2, U_3$ are factor matrices.}
    \label{fig:tensor}
\end{figure}

\subsection{List of contributions}
Our work makes the following key contributions.

\begin{itemize}
    \item \textbf{Matrix- and tensor-based intensity estimators.} We propose nonparametric intensity estimators for multivariate inhomogeneous point processes via low-rank matrix and Tucker tensor decompositions, with explicit, easy-to-implement procedures (Algorithms~\ref{alg0} and~\ref{alg1}).

    \item \textbf{Nonasymptotic theory with rank-adaptive bias--variance trade-off.} We establish finite-sample error bounds (Theorems~\ref{theorem_matrix} and~\ref{thm:main_result}) that cleanly separate the basis-expansion approximation error from the low-rank estimation error, with user-specified ranks controlling complexity. The same guarantees apply to both multiple-process ($n>1$) and single-process ($n=1$) regimes (\Cref{rem:scaling_thinning}), and extend beyond Poisson sampling to Neyman--Scott and temporally dependent data (Appendix~\ref{sec:other_pp}).

    \item \textbf{Minimax optimality.} \Cref{Minimax-Theorem-Tensor} establishes a minimax lower bound for nonparametric intensity estimation that our upper bounds match up to a $\log n$ factor, so the proposed estimators are rate-optimal in the rank-constrained class.

    \item \textbf{Strong empirical performance.} Across seven representative simulation scenarios with $D\in\{2,\ldots,6\}$, the proposed estimators significantly outperform kernel intensity estimation (KIE) in both relative $\mathbb{L}_2$ error (\Cref{summary-num-results} and \Cref{add-num-res}) and computational cost (\Cref{sec:runtime_empirical}), and scale more favorably with $D$. Analogous gains hold on three four-dimensional real-data applications: a daily-aggregated U.S.~earthquake catalog (\Cref{sec:real_data}), a regional California--Nevada seismic catalog at the event level (\Cref{sec:real_data_eq_regional}), and U.S.~tornado trajectory events (\Cref{sec:real_data_tornado}).

    \item \textbf{New theoretical tools for approximately low-rank tensors.} Existing perturbation and approximation results for low-rank tensor estimation assume \emph{exact} low-rank structure \citep{cai2018rate,zhang2018tensor}. We develop analogous tools targeting \emph{approximately} low-rank infinite-dimensional tensors (\Cref{sec:approx_low_rank_tools}), the relevant regime for smooth intensities and other nonparametric objects, which may be of independent interest beyond the present setting.
\end{itemize}

\subsection{Organization}
The rest of the paper is organized as follows.  In \Cref{sec:notation_background}, we introduce notations as well as discuss some background on low-rank tensor approximation for multivariate functions and on inhomogeneous point processes. \Cref{sec:method} introduces our matrix- and tensor-based intensity estimation methods, summarized in Algorithms~\ref{alg0} and~\ref{alg1}, respectively. Theoretical guarantees for both methods are presented in \Cref{sec:approx_low-rank}, focusing on Poisson point processes.   Numerical
studies including a real data application are conducted in \Cref{sec:numeric}. A review of nonparametric intensity function estimation, tensor network approximation, and low-rank tensor estimation is provided in \Cref{sec:lit_rev}.

\section{Notation and background}\label{sec:notation_background}

\subsection{Notation}\label{sec:notation}
For a positive integer $m$,  write $[m]=\{1,\ldots, m\}$. For $a, b \in \mathbb{R}$,  $\lceil a \rceil$ denote the smallest integer
greater than or equal to $a$, $\lfloor a \rfloor$ denote the largest integer less than or equal to $a$, $a \vee b = \max\{a,b\}$ and $a \wedge b = \min\{a,b\}$.

\noindent\textbf{Matrices.}
Let $\mathbb O_{p,r}=\{V\in\mathbb R^{p\times r}:V^\top V=I_r\}$ and $\mathbb O_p=\mathbb O_{p,p}$.
For $M\in\mathbb R^{p\times q}$, its singular value decomposition (SVD) is
$M=U\Sigma V^\top$ with $U\in\mathbb O_p$, $V\in\mathbb O_q$, and singular values
$\sigma_1(M)\ge\cdots\ge\sigma_{\min\{p,q\}}(M)\ge0$ on the diagonal of $\Sigma$.
We write $\|M\|_{\op}=\sigma_1(M)$ and $\|M\|_{\mathrm F}=(\sum_{i=1}^p\sum_{j=1}^q M_{ij}^2)^{1/2}$.
For $R\le \rank(M)$, the rank-$R$ truncated SVD is
$M_{(R)}=U_{(R)}\Sigma_{(R)}V_{(R)}^\top$ with
$U_{(R)}\in\mathbb O_{p,R}$, $V_{(R)}\in\mathbb O_{q,R}$ and
$\Sigma_{(R)}=\diag(\sigma_1(M),\ldots,\sigma_R(M))$.
Throughout, we use the shorthand $\mathrm{SVD}_{(R)}(M) =U_{(R)}$.

\noindent\textbf{Tensors.}
For an $s$th-order tensor $B\in\mathbb R^{p_1\times\cdots\times p_s}$,
$\|B\|_{\mathrm F}=\Big(\sum_{\mu_1=1}^{p_1}\cdots\sum_{\mu_s=1}^{p_s} B_{\mu_1,\ldots,\mu_s}^2\Big)^{1/2}$.
For $j\in[s]$, let $p_{-j} =(\prod_{\ell=1}^s p_\ell)/p_j$ and let $\mathcal M_j(B)$ be the mode-$j$
matricization (the $p_j\times p_{-j}$ unfolding).
For $M\in\mathbb R^{m\times p_j}$, the mode-$j$ product $B\times_j M\in
\mathbb R^{p_1\times\cdots\times p_{j-1}\times m\times p_{j+1}\times\cdots\times p_s}$ is defined by $(B\times_j M)_{\mu_1,\ldots,\mu_{j-1},\, i,\, \mu_{j+1},\ldots,\mu_s}
=\sum_{\mu_j=1}^{p_j} B_{\mu_1,\ldots,\mu_s}\,M_{i,\mu_j}$.
The Tucker rank of $B$ is $(r_1,\ldots,r_s)$ if $\rank(\mathcal M_j(B))=r_j$ for all $j\in[s]$.

\noindent\textbf{Function spaces and tensor products.}
Let $\mathbb X_j\subset\mathbb R^{d_j}$ be measurable and let $\upsilon_j$ be Lebesgue measure restricted
to $\mathbb X_j$. Define the product space
\[
\mathbb X =\mathbb X_1\times\cdots\times\mathbb X_s\subset\mathbb R^{d_1}\times\cdots\times\mathbb R^{d_s}
=\mathbb R^{D},\qquad D =\sum_{j=1}^s d_j,
\]
equipped with the product measure $\upsilon =\upsilon_1\times\cdots\times\upsilon_s$.
We write $\mathbb L_2(\mathbb X)$ for square-integrable functions (w.r.t.\ $\upsilon$), and for
$A:\mathbb X\to\mathbb R$ define
\[
\|A\|_{\mathbb L_2} =\Big(\int_{\mathbb X} A(x)^2\d\upsilon(x)\Big)^{1/2},
\qquad
\|A\|_\infty =\mathrm{ess\,sup}_{x\in\mathbb X}|A(x)|.
\]
For $u_j\in \mathbb L_2(\mathbb X_j)$, define the multilinear functional
\begin{equation}\label{eq:inner_prod}
A[u_1,\ldots,u_s]
=\int_{\mathbb X} A(x_1,\ldots,x_s)\prod_{j=1}^s u_j(x_j)\d\upsilon(x_1,\ldots,x_s).
\end{equation}
For $u_j\in \mathbb L_2(\mathbb X_j)$ and $u_k\in \mathbb L_2(\mathbb X_k)$, the tensor product
$u_j\otimes u_k\in \mathbb L_2(\mathbb X_j\times\mathbb X_k)$ is defined by
$(u_j\otimes u_k)(x_j,x_k)=u_j(x_j)u_k(x_k)$, and $u_1\otimes\cdots\otimes u_s$ is defined analogously.

\noindent\textbf{Basis expansion and coefficient tensor.}
For each $j\in[s]$, let $\{\phi_{j,\mu}\}_{\mu\ge1}$ be an orthonormal basis of $\mathbb L_2(\mathbb X_j)$.
Then $\big\{\phi_{1,\mu_1}\otimes\cdots\otimes\phi_{s,\mu_s}\big\}_{\mu_1,\ldots,\mu_s\ge1}$ is an
orthonormal basis of $\mathbb L_2(\mathbb X)$, and any $\lambda\in \mathbb L_2(\mathbb X)$ admits the expansion
\begin{equation}\label{eq:basis_expansion}
\lambda(x_1,\ldots,x_s)=\sum_{\mu_1=1}^\infty\cdots\sum_{\mu_s=1}^\infty
b_{\mu_1,\ldots,\mu_s}\prod_{j=1}^s \phi_{j,\mu_j}(x_j),
\qquad \text{in }\mathbb L_2(\mathbb X),
\end{equation}
with coefficients
$b_{\mu_1,\ldots,\mu_s}
=\big\langle \lambda,\ \phi_{1,\mu_1}\otimes\cdots\otimes\phi_{s,\mu_s}\big\rangle_{\mathbb L^2(\mathbb X)}
=\lambda[\phi_{1,\mu_1},\ldots,\phi_{s,\mu_s}]$.

\noindent\textbf{Order notation.}
For a sequence of random variables $\{X_n\}$ and positive numbers $\{a_n\}$, we write $X_n = O_p(a_n)$ if $\lim_{K \to \infty}\limsup_{n \to \infty}\mathbb{P}(|X_n| \geq Ka_n) = 0$. For two sequences of positive numbers $\{a_n\}$ and $\{b_n\}$, we write $a_n = O(b_n)$ if there exists some constant $C > 0$ such that $a_n/b_n \leq C$ for all large $n$.

For ease of reference, a quick-reference list of the most frequently used symbols is provided in \Cref{tab:notation_glossary} of the supplementary material.

\subsection{Low-rank approximation for multivariate functions}\label{sec:approximately_low-rank_fct}
Fix a positive integer $s \leq D$, by \eqref{eq:basis_expansion}, we write $\lambda^*\in\mathbb{L}_2(\mathbb{X})$
\begin{equation}\label{eq:expand_A}
\lambda^*(x_1, \dots, x_s) = \sum_{\mu_1 = 1}^{\infty}\cdots\sum_{\mu_s = 1}^{\infty}b^*_{\mu_1, \dots, \mu_s}\prod_{j=1}^s\phi_{j,\mu_j}(x_j),
\end{equation}
where $b^*_{\mu_1, \dots, \mu_s} =  \lambda^*  [\phi_{1,\mu_1},\dots, \phi_{s, \mu_s}]$ are the coefficient tensor entries.

\noindent\textbf{Mode-wise spectra and approximate low rank.}
For $s=2$, we associate $\lambda^*$ with the Hilbert--Schmidt integral operator
$H_{\lambda^*}:\mathbb{L}_2(\mathbb{X}_1)\to \mathbb{L}_2(\mathbb{X}_2)$,
\[
(H_{\lambda^*} g)(x_2)= \int_{\mathbb{X}_1} \lambda^*(x_1,x_2)\,g(x_1) \d \upsilon_1 (x_1),
\]
whose singular values $\{\sigma_\mu(\lambda^*)\}_{\mu\ge1}$ are well-defined and satisfy
$\sum_{\mu\ge1}\sigma_\mu^2(\lambda^*)=\|\lambda^*\|_{\mathbb{L}_2(\mathbb{X})}^2<\infty$.
For $s\ge3$, we consider the mode-$j$ reshaping
as a two-variable function admitting a function SVD:
\begin{equation}\label{eq:unfold_SVD}
\lambda^*(x_1, \dots, x_s) = \lambda^*_j(x_j, x_{-j}) = \sum_{\mu = 1}^{\infty}\sigma_{j,\mu}(\lambda^*)\Phi_{j,\mu}(x_{j})\Psi_{j,\mu}(x_{-j}),
\end{equation}
where $x_{-j}=(x_1,\dots,x_{j-1},x_{j+1},\dots,x_s)$, $\{\sigma_{j,\mu}(\lambda^*)\}_{\mu = 1}^{\infty}$ are singular values of $H_{\lambda^*_j}$, as well as $\{\Phi_{j,\mu}\}_{\mu = 1}^{\infty} \subset \mathbb{L}_2(\mathbb{X}_{1})$ and $\{\Psi_{j,\mu}\}_{\mu = 1}^{\infty} \subset \mathbb{L}_2(\mathbb{X}_{2})$ are singular functions. 
Our methods are guided by the common situation in which these spectra decay sufficiently fast, so that each reshaping is well approximated by a low-rank truncation.

\noindent\textbf{Finite-dimensional projection.} 
To obtain a computationally tractable representation, we restrict each coordinate space to the span of the first $m^{d_j}$ basis functions.
This yields the projected approximation
\begin{equation}\label{eq:projection_tensor}
\lambda^*(x_1,\dots,x_s)
\approx
\sum_{\mu_1=1}^{m^{d_1}}\cdots\sum_{\mu_s=1}^{m^{d_s}}
b^*_{\mu_1,\dots,\mu_s}\,
\phi_{1,\mu_1}(x_1)\cdots \phi_{s,\mu_s}(x_s),
\end{equation}
with approximation properties summarized in \Cref{remark:approx_error}.
When $m$ is large enough, the coefficient tensor $b^*=[b^*_{\mu_1,\dots,\mu_s}]$ inherits the dominant mode-wise spectral structure of $\lambda^*$,
in the sense that the leading singular subspaces of $\mathcal{M}_j(b^*)$ approximate those of the population operators induced by the corresponding reshaping of $\lambda^*$, i.e.~$\sigma_{j,\mu}(b^*) \approx \sigma_{j,\mu}(\lambda^*)$.

\noindent\textbf{Low-rank approximation error.} For $s=2$, we use the rank-$R$ approximation error in \eqref{eq:approx_error_rankR}.
For $s\ge3$ and user-specified Tucker ranks $(R_1,\dots,R_s)$, define
\begin{equation}\label{eq:tucker_low-rank_approx}
\xi_{(R_1,\dots,R_s)}
=
\inf\Big\{\|g-\lambda^*\|_{\mathbb{L}_2(\mathbb{X})}:\ 
\rank\big(g_j(\cdot,\cdot)\big)\le R_j\ \text{for all } j\in[s]\Big\},
\end{equation}
where $g_j(x_j,x_{-j})$ denotes the mode-$j$ reshaping of $g(x_1,\dots,x_s)$.
This quantity captures the population bias incurred by approximating $\lambda^*$ by a function whose mode-wise unfoldings have ranks bounded by
$(R_1,\dots,R_s)$.

\begin{example}[A bivariate intensity]\label{ex:approx_low_rank} Let $\lambda^*(x,y)=2+\exp\!\big(-(x-y)^2/0.1\big)$ on $\mathbb{X}=[0,1]^2$. The diagonal Gaussian kernel is analytic and non-separable, so the singular values of $H_{\lambda^*}$ decay exponentially. Hence $\lambda^*$ is not exactly low rank, but is well approximated by its leading components: the relative truncation error $\xi_{(R)}/\|\lambda^*\|_{\mathbb{L}_2}$ defined in \eqref{eq:approx_error_rankR} equals $0.140$, $0.068$, and $0.022$ for $R=1,2,3$, respectively. Figure~\ref{fig:approx_low_rank} shows $\lambda^*$, its singular values, and the best rank-$R$ truncations: rank-$1$ captures only the constant baseline and the overall trend, rank-$2$ begins to resolve the diagonal ridge, and rank-$3$ is visually indistinguishable from $\lambda^*$. This is the type of fast spectral decay that our matrix- and tensor-based estimators are designed to exploit. \end{example}

\begin{figure}[!htbp] \centering \includegraphics[width=0.85\textwidth]{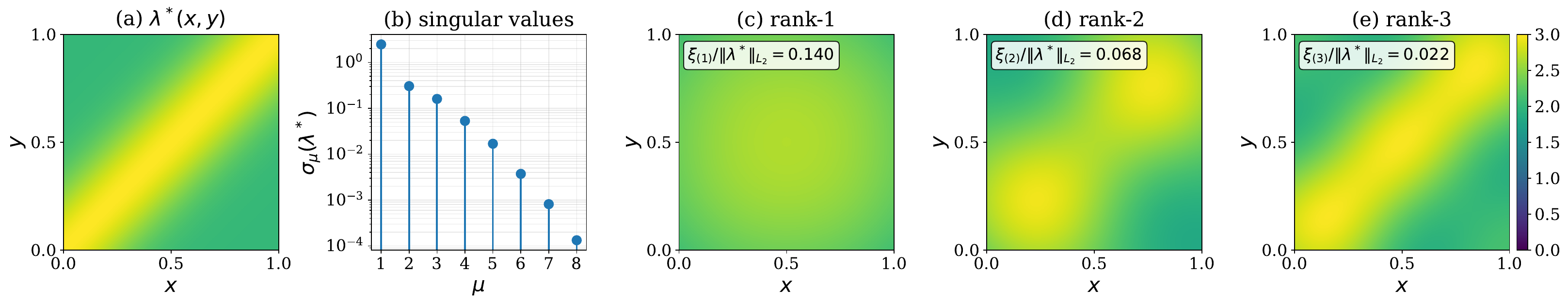} \caption{Bivariate intensity $\lambda^*(x,y)=2+e^{-(x-y)^2/0.1}$ on $[0,1]^2$. (a)~heatmap of $\lambda^*$; (b)~singular values $\sigma_\mu(\lambda^*)$ on a log scale; (c)--(e)~best rank-$R$ approximations for $R=1,2,3$.} \label{fig:approx_low_rank}
\end{figure}

\subsection{Inhomogeneous Poisson point processes}\label{sec:background}
An inhomogeneous point process $N$ is a set of random points $\{X_1, X_2, \dots\} \subseteq \mathbb X \subset \mathbb R^D$. For any compact subset $S \subseteq \mathbb{X}$,  let $N(S) = |S \cap N|$ be the number of points in $S$.
The intensity measure $\Pi(S) = \mathbb{E}[N(S)]$ gives the expected number of points in $S$.
If $\Pi$ is absolutely continuous with respect to the Lebesgue measure $\upsilon$, there exists an  intensity function $\lambda^*$ such that
\[
 \Pi(S) = \int_S \lambda^*(x) \d \upsilon(x), \;\; \text{where} \;\; \lambda^*(x)  =  \frac{\d \Pi}{\d \upsilon}(x).
\] 
Note that $\lambda^*$ is the Radon–Nikodym derivative of $\Pi$ with respect to $\upsilon$, reflecting the first-order properties of $N$.

We call $N$ a \emph{Poisson point process} with intensity function $\lambda^*$ if
\begin{enumerate}
    \item for every compact $S\subseteq\mathbb{X}$, $N(S)$ is Poisson distributed with mean $\Pi(S)$;
    \item for every $w\in\mathbb{N}_+$ and any disjoint compact $S_1,\dots,S_w\subseteq\mathbb{X}$, the counts $N(S_1),\dots,N(S_w)$ are mutually independent.
\end{enumerate}

While our primary focus is on theoretical and numerical analysis of Poisson point processes, marked Poisson processes and other types of point processes are discussed in \Cref{remark:beyond_poisson} and \Cref{sec:other_pp}. Future research will extend theoretical developments to encompass broader classes of point processes under more general sampling conditions.

\section{Methodology}\label{sec:method}
We expand $\lambda^*$ in a truncated tensor-product basis (\Cref{sec:math_setup}), reducing intensity estimation to recovery of the coefficient tensor $b^*$. To exploit its low-rank structure, we propose two estimators: a \emph{matrix-based} method (\Cref{sec:method_matrix}, Algorithm~\ref{alg0}) that applies soft singular-value thresholding to the empirical coefficient matrix when $s=2$, and a \emph{tensor-based} method (\Cref{sec:method_tensor}, Algorithm~\ref{alg1}) that produces a Tucker low-rank approximation of the empirical coefficient tensor via higher-order SVD (HOSVD) initialization with a sketched refinement step when $s\ge 3$.

\subsection{Mathematical setup}\label{sec:math_setup}
Consider \(n\) i.i.d.~inhomogeneous point processes \(\{N^{(i)}\}_{i = 1}^n\) on a compact domain \(\mathbb{X} \subset \mathbb{R}^D\). We assume that the domain \(\mathbb{X}\) factorizes as
\begin{align}\label{eq:domain_partition}
\mathbb{X} = \mathbb{X}_1 \times \cdots \times \mathbb{X}_s \subset \mathbb{R}^{d_1}\times \cdots \times \mathbb{R}^{d_s} = \mathbb{R}^{D}, \quad \text{with} \ \sum_{j=1}^s d_j = D.
\end{align}
Each point process $N^{(i)}$ shares the same unknown intensity function \(\lambda^* : \mathbb{X} \to \mathbb{R}_+\) in \(\mathbb{L}_2(\mathbb{X})\).

For each coordinate space $\mathbb{X}_j$, we select orthonormal basis \(\{\phi_{j,\mu_j}\}_{\mu_j = 1}^{m^{d_j}} \subset \mathbb{L}_2(\mathbb{X}_j)\) 
and consider the tensor-product subspace
\(\mathcal U_1\otimes\cdots\otimes \mathcal U_s\), where \(\mathcal U_j=\mathrm{span}\{\phi_{j,\mu_j}:\mu_j\in[m^{d_j}]\}\). Projecting \(\lambda^*\) onto this subspace yields the coefficient tensor $b^* \in \mathbb{R}^{m^{d_1} \times \cdots \times m^{d_s}}$ with entries
$b^*_{\mu_1, \dots, \mu_s} = \lambda^*[\phi_{1,\mu_1}, \dots, \phi_{s,\mu_s}]$.
Define the empirical measure $\widehat{\lambda} = \frac{1}{n}\sum_{i = 1}^n\sum_{u \in N^{(i)}}\delta_u$,
where $\delta_u$ is a point mass at $u$.
The classical nonparametric method directly estimates $b^*$ by the empirical coefficient tensor $\widehat{b}$ with entries
\begin{align}\label{eq:empirical_coefficient_tensor}
\widehat{b}_{\mu_1,\dots,\mu_s} =  \widehat{\lambda}[\phi_{1,\mu_1}, \dots, \phi_{s,\mu_s}]
= \frac{1}{n}\sum_{i = 1}^n\sum_{X^{(i)} \in N^{(i)}} \phi_{1,\mu_1}(X_{1}^{(i)}) \cdots \phi_{s,\mu_s}(X_{s}^{(i)}),
\end{align}
where \(X^{(i)} = (X^{(i)}_{1}, \dots, X^{(i)}_{s}) \in \mathbb{X}\) represents a point in \(N^{(i)}\). 
We propose methods that seek structured (low-rank) approximations of $b^*$, constructed from $\widehat{b}$.

\subsection{Matrix-based method}\label{sec:method_matrix}
When \(s=2\), the coefficient tensor \(\widehat b\) in \eqref{eq:empirical_coefficient_tensor} is a matrix
\(\widehat b\in\mathbb{R}^{m^{d_1}\times m^{d_2}}\) with \(d_1+d_2=D\). Without loss of generality, assume
\(d_{\min}=d_1\le d_2=d_{\max}\). We estimate the intensity by applying soft singular value thresholding (soft-SVT) to \(\widehat b\).

Compute the SVD $\widehat b = \widehat U \widehat \Sigma \widehat V^{\top}$. For a threshold $\gamma > 0$, define
the thresholded diagonal matrix $T_{\gamma}(\widehat \Sigma)$ with diagonal entries
    $$(T_{\gamma}(\widehat \Sigma))_{j,j}=\max \{0, \widehat \Sigma_{j,j}-\gamma\}, \quad j \in [m^{d_1}],$$
and set the soft-SVT estimator of the coefficient matrix as
    $$T_{\gamma}(\widehat b) = \widehat U T_{\gamma}(\widehat \Sigma) \widehat V^{\top}.$$
Mapping back to the function space yields, for any point $(x_{1}^*, x_{2}^*)\in \mathbb{X}$, $$\widehat{\lambda}_{\text{Matrix}} (x_{1}^*, x_{2}^*) = (\phi^{(1)} (x_{1}^* ))^{\top} \cdot T_{\gamma}(\widehat b) \cdot \phi^{(2)} (x_{2}^*), \quad
\phi^{(j)}(x_j)=(\phi_{j,1}(x_j),\dots,\phi_{j,m^{d_j}}(x_j))^\top.$$
\Cref{alg0} summarizes the procedure.

\begin{algorithm}[htbp]
\caption{ Multivariate intensity estimation via matrix soft-SVT ($s = 2$)
}\label{alg0}
\begin{algorithmic}[1]
\INPUT Point processes $\{N^{(i)}\}_{i = 1}^n$, threshold $\gamma $, basis maps
\(\phi^{(j)}(x_j)\in\mathbb{R}^{m^{d_j}}\) for \(j=1,2\).
\State Compute the empirical coefficient matrix 
 $\widehat{b}$: $\widehat b_{\mu_1,\mu_2} = \widehat \lambda[\phi_{1,\mu_1}, \phi_{2,\mu_2}]$.
\State  Compute the SVD:  
$\widehat b = \widehat U \widehat \Sigma \widehat V^{\top}$.
\State Soft-threshold singular values: $(T_{\gamma}(\widehat \Sigma))_{j,j}=\max \{0, \widehat \Sigma_{j,j}-\gamma\}$, $j \in [m^{d_1}]$.
\State  Form the soft-SVT coefficient matrix: $T_{\gamma}(\widehat b) = \widehat U T_{\gamma}(\widehat \Sigma) \widehat V^{\top}$.
\OUTPUT  
$\widehat{\lambda}_{\text{Matrix}} (x_{1}^*, x_{2}^*) = (\phi^{(1)} (x_{1}^* ))^{\top} \cdot T_{\gamma}(\widehat b) \cdot \phi^{(2)} (x_{2}^*) $ for any point $(x_{1}^*, x_{2}^*)\in \mathbb{X}$. 
\end{algorithmic}
\end{algorithm}

\subsection{Tensor-based method}\label{sec:method_tensor}
When $s \geq 3$, the coefficient tensor $\widehat b \in \mathbb{R}^{m^{d_1} \times \cdots \times m^{d_s}}$ in \eqref{eq:empirical_coefficient_tensor} is of order $s$.  We estimate a Tucker low-rank approximation by recovering the leading mode-wise singular subspaces and then projecting \(\widehat b\) onto these subspaces.
\begin{enumerate}
    \item \textbf{Initialization via HOSVD}: For each \(j\in[s]\), form the mode-\(j\) matricization \(\mathcal M_j(\widehat b)\) and take a truncated SVD to obtain an initial subspace estimator \(\widehat U_j^{(0)}\in\mathbb{R}^{m^{d_j}\times R_j}\).
    \item \textbf{Refinement via tensor sketching}: For each $j$, compute the sketched matrix $$\mathcal{M}_j(\widehat b) \cdot \otimes_{k \neq j} \widehat{U}_{k}^{(0)} \in \mathbb{R}^{m^{d_j} \times \prod_{k \neq j}R_k},$$
    and take the truncated SVD to obtain the refined subspaces $\widehat U_j^{(1)}$.
    \item \textbf{Projection}: Form the low-rank coefficient tensor
    \[\widetilde{b} = \widehat{b} \times_1 \mathcal{P}_{\widehat{U}^{(1)}_{1}} \cdots \times_s \mathcal{P}_{\widehat{U}^{(1)}_{s}}, \]
    where $\mathcal{P}_{\widehat{U}^{(1)}_{j}}$ denotes the projection onto the column space of $\widehat{U}_j^{(1)}$.
    \item \textbf{Intensity Estimation}: Define $\widehat{\lambda}_{\text{Tensor}}$ by mapping $\widetilde{b}$ back to the function space.
\end{enumerate}
The tensor-based method is summarized as \Cref{alg1}.

\begin{algorithm}[htbp]
\caption{ Multivariate intensity estimation via tensor decomposition ($3 \leq s \leq D$)
}\label{alg1}
\begin{algorithmic}[1]
\INPUT Point processes $\{N^{(i)}\}_{i = 1}^n$, target Tucker rank $(R_1, \dots, R_s)$, basis maps
\(\phi^{(j)}(x_j)=(\phi_{j,1}(x_j),\dots,\phi_{j,m^{d_j}}(x_j))^\top\) for \(j\in[s]\).
\State Sample split: Partition $[n]$ into disjoint $H_1 \cup H_2 \cup H_3 = [n]$ of comparable sizes and set $\widehat \lambda^{H_k} = |H_k|^{-1}\sum_{i \in H_k}\sum_{u \in N^{(i)}}\delta_u$.
\For{$k \in [3]$}
    \State Compute the empirical coefficient tensors $\widehat b^{H_k}$: $\widehat b^{H_k}_{\mu_1, ...,\mu_s} = \widehat \lambda^{H_k}[ \phi_{1, \mu_1}, ... ,\phi_{s, \mu_s}]$.
\EndFor
\For{$j \in [s]$}
    \State Initialize the singular vectors: $\widehat{U}_{j}^{(0)} = \text{SVD}_{(R_j)} (\mathcal{M}_j(\widehat b^{H_1}))$.
\EndFor
\For{$j \in [s]$}
    \State Compute the sketched matrix: $S_j = \mathcal{M}_j(\widehat b^{H_2}) \cdot \otimes_{k \neq j} \widehat{U}_{k}^{(0)}$.
    \State Refine the singular vectors: $\widehat{U}_{j}^{(1)} = \text{SVD}_{(R_j)} ( S_j )$.
\EndFor
\State  Compute final low-rank coefficient tensor:
$\widetilde{b} = \widehat{b}^{H_3} \times_1 \mathcal{P}_{\widehat{U}^{(1)}_{1}} \cdots \times_s \mathcal{P}_{\widehat{U}^{(1)}_{s}}$.
\OUTPUT $    \widehat{\lambda}_{\text{Tensor}}(x^*_{1}, \dots, x^*_{s}) = \widetilde{b} \times_1 \phi^{(1)}(x^*_{1}) \cdots \times_s \phi^{(s)}(x^*_{s})$ for any point $(x^*_{1}, \dots, x^*_{s}) \in \mathbb{X}$. 
\end{algorithmic}
\end{algorithm}
\begin{remark}[Sample splitting]
In \Cref{alg1}, sample splitting partitions the $n$ processes into three disjoint subsets $H_1,H_2,H_3$ of size $\approx n/3$, used to initialize the mode-wise singular subspaces $\{\widehat U_j^{(0)}\}_{j=1}^s$, refine them to $\{\widehat U_j^{(1)}\}_{j=1}^s$, and construct the final projection $\widetilde b$, respectively. The resulting independence between coefficient tensors and singular subspaces simplifies the analysis in \Cref{sec:approx_low-rank}. For a single Poisson process ($n=1$), an analogous split is obtained by random thinning: assign each point independently to one of three groups with probability $1/3$, yielding three independent Poisson processes with intensity $\lambda^*/3$ \cite[see e.g.][]{baraud2009estimating}; \Cref{alg1} then estimates $\lambda^*/3$ and the output is rescaled by $3$ (see \Cref{rem:scaling_thinning}). In settings where thinning is infeasible (e.g.~non-Poisson processes), splitting may be omitted in practice with negligible numerical loss, in which case we set $\widehat \lambda^{H_1} = \widehat \lambda^{H_2} = \widehat \lambda^{H_3} = n^{-1}\sum_{i \in [n]}\sum_{u \in N^{(i)}}\delta_u$ in \Cref{alg1}.
\end{remark}

\section{Theory}\label{sec:approx_low-rank}
This section establishes theoretical guarantees for the matrix- and tensor-based estimators of \Cref{sec:method} under Poisson sampling; extensions to Neyman--Scott and temporally dependent settings are deferred to \Cref{sec:other_pp}.

\subsection{Regularity and basis selection}
\label{sec:assumptions}

We follow the mathematical setup introduced in \Cref{sec:math_setup}. Let $W_2^\alpha(\mathbb{X})$ denote the Sobolev space of functions on $\mathbb{X} \subset \mathbb{R}^D$ with smoothness $\alpha$, equipped with the Sobolev norm $\|\cdot\|_{W_2^\alpha(\mathbb{X})}$ (see~\Cref{sec:approximation_theory} for details).
We  impose the following regularity condition.

\begin{assumption}[Regularity  condition on intensity function]\label{ass:approx_err1} 
The (unknown) intensity function \(\lambda^* : \mathbb{X} \to \mathbb{R}_+\) is such that $\tfrac{\|\lambda^*\|_{W_2^\alpha(\mathbb{X})}}{\|\lambda^*\|_{\infty}} \in (0, \infty)$.
\end{assumption}

\begin{remark}[Thinning and superposition]\label{rem:scaling_thinning}
Assumption~\ref{ass:approx_err1} is preserved under the Poisson thinning and superposition operations that bridge the single-process ($n=1$) and multiple-process ($n>1$) regimes: thinning by a factor $p\in(0,1)$ replaces $\lambda^*$ by $p\lambda^*$, and superposing $n$ independent processes adds their intensities, so in either case the underlying intensity is rescaled by a positive constant. For any $\Lambda>0$ and any nonnegative shape $f$ with $\int_{\mathbb X} f = 1$, writing $\lambda^* = \Lambda f$ yields
$\tfrac{\|\lambda^*\|_{W_2^\alpha(\mathbb X)}}{\|\lambda^*\|_{\infty}}
=
\tfrac{\|f\|_{W_2^\alpha(\mathbb X)}}{\|f\|_{\infty}}$,
so the regularity requirement governs only the shape $f$ and is insensitive to the overall magnitude $\Lambda=\int_{\mathbb X}\lambda^*(x)\,\mathrm{d}x$. Consequently, the expected total number of points may grow (or even diverge) under superposition without violating Assumption~\ref{ass:approx_err1}, and the same algorithms and theoretical guarantees apply in both the $n=1$ and $n>1$ settings.
\end{remark}

\noindent\textbf{Building tensor product basis functions:}
For each subdomain $\mathbb{X}_j \subset \mathbb{R}^{d_j}$, we select a suitable kernel function \(\mathcal{K}_j: \mathbb{X}_j \times \mathbb{X}_j \to \mathbb{R}\), such that its reproducing kernel Hilbert space (RKHS) coincides with the univariate Sobolev space 
\(W_2^{\alpha}(\mathbb{X}_j)\). In practice, we pick the first $m^{d_j}$ eigenfunctions of $\mathcal{K}_j$ as an orthonormal basis \(\{\phi_{j,\mu}\}_{\mu=1}^{m^{d_j}}\subset \mathbb L_2(\mathbb{X}_j)\). The resulting tensor products
\(\phi_{1,\mu_1}\otimes\cdots\otimes \phi_{s,\mu_s}\)
form an \(m^D\)-dimensional tensor-product basis for \(\mathbb L_2(\mathbb{X})\).

\begin{assumption}\label{ass:approx_err2}
For each $j \in [s]$, the kernel $\mathcal{K}_j$ generates the RKHS $W_2^{\alpha}(\mathbb{X}_j)$.
\end{assumption}

\begin{remark}[Approximation error]\label{remark:approx_error}
Under Assumption~\ref{ass:approx_err2}, for any functions $A \in W_2^\alpha(\mathbb{X})$,
\begin{equation}\label{eq:approx_error}
\biggl\|A  -     \sum_{\mu_1 = 1}^{m^{d_1}}\cdots\sum_{\mu_s=1}^{m^{d_s}}  
A[\phi_{1,\mu_1}, \dots, \phi_{s,\mu_s}]\;\phi_{1,\mu_1} \cdots \phi_{s,\mu_s}  
\biggr\|_{\mathbb{L}_2(\mathbb{X})}^2 
\;\leq\; s\,m^{-2\alpha} \|A\|_{W_2^\alpha(\mathbb{X})}^2.
\end{equation}
See \Cref{sec:approximation_theory} for more details.
\end{remark}

\subsection{Upper bound for matrix-based method}\label{sec:theory_matrix}

The next theorem establishes the estimation error bound on $\widehat \lambda_{\mathrm{Matrix}}$.

\begin{theorem}[Error bound on the matrix-based estimator]\label{theorem_matrix}
    Let $\{N^{(i)}\}_{i = 1}^n$ be i.i.d.~inhomogeneous Poisson point processes with intensity function $\lambda^*$.  Let $\widehat{\lambda}_{\mathrm{Matrix}}$ be the matrix-based estimator output by \Cref{alg0}, and set 
\begin{equation}\label{eq:choice_matrix}
m = \lceil(\|\lambda^*\|_{W_2^\alpha(\mathbb{X})}n)^{1/(2\alpha+d_{\max})}\rceil \quad \text{and} \quad \gamma = C_{\gamma}\sqrt{ \frac{\|\lambda^*\|_{\infty}^{2\alpha/(2\alpha+d_{\max})}\|\lambda^*\|_{W_2^\alpha(\mathbb{X})}^{2d_{\max}/(2\alpha+d_{\max})}\log(n)}{n^{2\alpha/(2\alpha+d_{\max})}}},
\end{equation}
where $C_{\gamma} > 0$ is an absolute constant, $d_{\max} = \max\{d_1, d_2\}$ and $\alpha \geq 1$ is the smoothness parameter of $\lambda^*$.
Suppose Assumptions~\ref{ass:approx_err1} and \ref{ass:approx_err2} hold,
we have for any integer value $R > 0$
\[
    \|\lambda^* -\widehat{\lambda}_{\mathrm{Matrix}} \|_{\mathbb{L}_2(\mathbb{X})}^2 = O_{p}\left( \frac{\|\lambda^*\|_{\infty}^{2\alpha/(2\alpha+d_{\max})}\|\lambda^*\|_{W_2^\alpha(\mathbb{X})}^{2d_{\max}/(2\alpha+d_{\max})}}{n^{2\alpha/(2\alpha+d_{\max})}}\log(n) +\xi_{(R)}^2 \right),
\]
where $\xi_{(R)}$, as defined in \eqref{eq:tucker_low-rank_approx}, represents error in $\mathbb{L}_2$ norm between $\lambda^*$ and its best rank-$R$ approximation function.
\end{theorem}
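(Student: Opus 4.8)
The plan is to decompose the estimation error into three pieces: (i) the error from truncating the basis expansion at $m^{d_j}$ basis functions per mode, (ii) the error from soft-thresholding the singular values of the empirical coefficient matrix $\widehat b$ versus the population coefficient matrix $b^*$, and (iii) the low-rank approximation bias $\xi_{(R)}^2$. By \Cref{remark:approx_error} (equation \eqref{eq:approx_error}), the first piece is $O(s\,m^{-2\alpha}\|\lambda^*\|_{W_2^\alpha(\mathbb{X})}^2) = O(m^{-2\alpha}\|\lambda^*\|_{W_2^\alpha(\mathbb{X})}^2)$ since $s = 2$, and with the choice $m = \lceil(\|\lambda^*\|_{W_2^\alpha(\mathbb{X})}^2 n)^{1/(2\alpha+d_{\max})}\rceil$ this becomes $O(\|\lambda^*\|_{W_2^\alpha(\mathbb{X})}^{2d_{\max}/(2\alpha+d_{\max})} n^{-2\alpha/(2\alpha+d_{\max})})$, matching the leading term of the claimed bound (up to the $R\log n$ factor which comes from (ii)). Since $\widehat\lambda_{\mathrm{Matrix}}(x_1,x_2) = (\phi^{(1)}(x_1))^\top T_\gamma(\widehat b)\phi^{(2)}(x_2)$ and the bases are orthonormal, Parseval gives $\|\widehat\lambda_{\mathrm{Matrix}} - P_m\lambda^*\|_{\mathbb{L}_2(\mathbb{X})}^2 = \|T_\gamma(\widehat b) - b^*_{(m)}\|_{\mathrm F}^2$ where $P_m\lambda^*$ is the projection onto the span of the chosen bases and $b^*_{(m)}$ its coefficient matrix; so the whole problem reduces to controlling $\|T_\gamma(\widehat b) - b^*\|_{\mathrm F}^2$ in matrix terms (writing $b^*$ for the finite truncation).

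Next I would establish the key stochastic bound: the perturbation $\|\widehat b - b^*\|_{\op}$. Each entry $\widehat b_{\mu_1,\mu_2} = \frac1n\sum_{i=1}^n\sum_{X^{(i)}\in N^{(i)}}\phi_{1,\mu_1}(X_1^{(i)})\phi_{2,\mu_2}(X_2^{(i)})$ is an average of i.i.d.\ terms whose expectation is $b^*_{\mu_1,\mu_2}$ (by Campbell's theorem for Poisson processes). Writing $\widehat b - b^* = \frac1n\sum_{i=1}^n Z_i$ with $Z_i$ i.i.d.\ mean-zero random matrices, I would apply a matrix Bernstein-type concentration inequality. The relevant quantities are the operator-norm bound on $\|Z_i\|_{\op}$ and the matrix variance proxy; using $\|\phi_{j,\mu_j}\|_\infty \lesssim m^{d_j/2}$ (eigenfunction sup-norm bounds for Sobolev RKHS) together with $\mathbb{E}[N^{(i)}(\mathbb{X})] = \int\lambda^* < \infty$ and $\|\lambda^*\|_\infty < \infty$, one gets a bound of the form $\|\widehat b - b^*\|_{\op} = O_p\!\big(\sqrt{(m^{d_{\max}}/n)\log n} + (m^{D}/n)\log n\big)$ after truncating the Poisson counts; with the chosen $m$ the dominant piece is $O_p\big(\sqrt{\|\lambda^*\|_{W_2^\alpha}^{2d_{\max}/(2\alpha+d_{\max})}\log(n)/n^{2\alpha/(2\alpha+d_{\max})}}\big)$, which is precisely $\gamma/C_\gamma$. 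This is why the threshold $\gamma$ in \eqref{eq:choice_matrix} is calibrated the way it is.

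Then I would invoke the standard deterministic oracle inequality for soft singular value thresholding: on the event $\{\|\widehat b - b^*\|_{\op} \le \gamma/2\}$ (say), for every integer $R \ge 0$,
\[
\|T_\gamma(\widehat b) - b^*\|_{\mathrm F}^2 \;\lesssim\; R\gamma^2 \;+\; \|b^* - b^*_{(R)}\|_{\mathrm F}^2,
\]
where $b^*_{(R)}$ is the best rank-$R$ approximation of $b^*$ (this is the matrix analogue of the bias-variance split: thresholding pays $O(\gamma^2)$ per retained singular value and incurs the tail of the singular values as bias). Plugging in $\gamma^2 \asymp \|\lambda^*\|_{W_2^\alpha}^{2d_{\max}/(2\alpha+d_{\max})}\log(n)/n^{2\alpha/(2\alpha+d_{\max})}$ yields the $R\log n$ term. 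Finally I would relate $\|b^* - b^*_{(R)}\|_{\mathrm F}^2$ back to the function-space quantity $\xi_{(R)}^2$: since $\|b^* - b^*_{(R)}\|_{\mathrm F}^2 = \sum_{\mu > R}\sigma_\mu(b^*)^2$ and the truncated coefficient matrix's singular values approximate those of $\lambda^*$ up to the $O(m^{-2\alpha})$ projection error (as discussed in \Cref{sec:approximately_low-rank_fct}), one has $\|b^* - b^*_{(R)}\|_{\mathrm F}^2 \le \xi_{(R)}^2 + O(m^{-2\alpha}\|\lambda^*\|_{W_2^\alpha}^2)$, and the last term is again absorbed into the leading rate. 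Assembling (i)–(iii) via the triangle inequality gives the theorem.

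The main obstacle I anticipate is the concentration step: Poisson point processes have unbounded total count, so the summands $Z_i$ are not bounded random matrices, and a naive matrix Bernstein bound does not apply directly. The fix is a truncation argument — condition on $\max_i N^{(i)}(\mathbb{X})$ being of order $\log n$ (which holds with high probability by Poisson tail bounds since $\mathbb{E}[N^{(i)}(\mathbb{X})]$ is bounded), handle the rare large-count event separately, and bound the variance term carefully using $\|\lambda^*\|_\infty < \infty$ to control $\mathbb{E}[\sum_{u\in N^{(i)}}\phi_{1,\mu_1}(u_1)^2\phi_{2,\mu_2}(u_2)^2]$ and the associated matrix second moment. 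Getting the $\log n$ factors (rather than $\log^2 n$ or worse) and confirming that the second-order Poisson term $O(m^D\log n/n)$ is genuinely lower-order under the stated choice of $m$ requires care, but is routine once the truncation is in place. A secondary technical point is making rigorous the claim that $\sigma_\mu(b^*) \approx \sigma_\mu(\lambda^*)$ — this follows from Weyl's inequality for the compact operator perturbation induced by basis truncation, but the bookkeeping connecting $\|b^* - b^*_{(R)}\|_{\mathrm F}$ to $\xi_{(R)}$ needs to be stated precisely.
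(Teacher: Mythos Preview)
Your proposal is correct and follows essentially the same architecture as the paper's proof: the same three-way decomposition (basis truncation via \eqref{eq:approx_error}, Parseval reduction to $\|T_\gamma(\widehat b)-b^*\|_{\mathrm F}^2$, soft-SVT oracle inequality giving $R\gamma^2 + \sum_{j>R}\sigma_j^2(b^*)$, then relating the tail to $\xi_{(R)}^2$). The one substantive difference is the concentration step you flag as the main obstacle: rather than truncating the Poisson counts, the paper develops a matrix Bernstein inequality tailored to Poisson point processes (their Theorem~\ref{thm:MatrixBernstein_PPP} and Corollary~\ref{coro:MatrixBernstein_PPP}) by computing the matrix MGF of $\sum_{X\in N}F(X)$ directly via the Poisson structure and Lieb's theorem, which sidesteps the unboundedness issue cleanly and yields the bound $\|\widehat b - b^*\|_{\op} \lesssim \sqrt{(m^{d_1}+m^{d_2})\log(m)/n} + m^{D/2}\log(m)/n$ without any truncation argument. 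Your truncation route would also work but is less direct. A minor point: the paper relates $\sum_{j>R}\sigma_j^2(b^*)$ to $\xi_{(R)}^2$ via a Mirsky-type inequality for compact operators plus a projection argument (observing that $\llbracket\lambda^*\rrbracket_{(R)}\times_1\mathcal P_{\mathcal U_1}\times_2\mathcal P_{\mathcal U_2}$ is rank-$R$), which gives $\sum_{j>R}\sigma_j^2(b^*)\le\xi_{(R)}^2$ directly without an additive $m^{-2\alpha}$ term; your Weyl-based route incurs that extra term but it is absorbed anyway, so the end result is the same.
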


\Cref{theorem_matrix} suggests to partition $D$ coordinates into two subgroups with roughly the same size, i.e.~$d_1 \approx d_2 \approx \lceil D/2 \rceil$.  If $\lambda^*$ is an exactly low-rank function, e.g.~the additive or mean-field functions, then the term $\xi_{(R)}$ is zero for all  $R$ no smaller than the true rank. Note that KIE is not adaptive to the low-rank structure of $\lambda^*$: even in exactly low-rank settings,
it still attains the error rate
   \[
    O_p\left(\frac{\|\lambda^*\|_{\infty}^{2\alpha/(2\alpha+D)}\|\lambda^*\|_{W_2^\alpha(\mathbb{X})}^{2D/(2\alpha + D)}}{n^{2\alpha/(2\alpha + D)}}\right),
    \]
In contrast, our matrix-based method replaces $D$ by $\lceil D/2 \rceil$ in the exponent, leading to faster convergence rates.

\subsection{Upper bound for tensor-based method}\label{sec:poisson}
We now analyze the tensor-based estimator produced by \Cref{alg1}. Beyond Assumptions~\ref{ass:approx_err1}--\ref{ass:approx_err2}, the analysis requires (i) a mild compatibility condition between the coordinate partition and the smoothness level, and (ii) a minimum spectral gap at the target Tucker ranks to ensure stable recovery of the mode-wise singular subspaces.

\begin{assumption}[Partition condition and spectral gap]\label{ass:SingularGap_B*}
    Suppose the partition of coordinates satisfies 
    \begin{align}\label{eq:condition_dim}
    D < 2\alpha + d_{\max} + d_{\min},
    \end{align}
      where $d_{\max} = \max\{d_1, \dots, d_s\}$ and $d_{\min} = \min\{d_1, \dots, d_s\}$. In addition, suppose that for each $j \in [s]$, the singular values $\{\sigma_{j,k}(\lambda^*)\}_{k = 1}^{\infty}$, defined in \Cref{sec:approximately_low-rank_fct},  satisfy
      \begin{align}\label{eq:ass2}
      \min_{j = 1}^s\{\sigma_{j,R_j}(\lambda^*)-\sigma_{j,R_j+1}(\lambda^*)\}^2 \geq C_{\mathrm{gap}}\|\lambda^*\|_{W_2^{\alpha}(\mathbb{X})}^{2(D - d_{\min})/(2\alpha+d_{\max})}\|\lambda^*\|_{\infty}^{\beta}n^{-\beta}\log(n),
      \end{align}
      with
      $\beta = (2\alpha + d_{\max} + d_{\min} - D)/(2\alpha +  d_{\max})$,
      where $C_{\mathrm{gap}} > 0$ is a sufficiently large absolute  constant, and $(R_1, \dots, R_s)$ is the user-specified target Tucker rank. 
\end{assumption} 
\Cref{ass:SingularGap_B*}~is a mild assumption in two aspects.   First, Condition~\eqref{eq:condition_dim} on the total dimension $D$ depends on both the smoothness of $\lambda^*$ and the user-specified coordinate partition. For example, if $\alpha = 2$, it allows us to handle inhomogeneous point processes in up to $10$-dimension (see~\Cref{remark:coor_par} for details), accounting for a majority of spatial/spatial-temporal point process data in real world. Second, because  $\beta > 0$, the required spectral gap Condition~\eqref{eq:ass2} vanishes  as $n \to \infty$.

We now present theoretical guarantees for the tensor-based intensity function estimator $\widehat \lambda_{\mathrm{Tensor}}$. See \Cref{sec:proof_main} for a proof.
\begin{theorem}[Error bound on the tensor-based estimator]\label{thm:main_result}
    Let $\{N^{(i)}\}_{i = 1}^n$ be a set of i.i.d.~inhomogeneous Poisson point processes, with intensity function $\lambda^*$.  Let $\widehat{\lambda}_{\mathrm{Tensor}}$ be the tensor-based estimator output by \Cref{alg1} with the target Tucker rank $(R_1, \dots, R_s)$, and set
\begin{equation}\label{eq:choice_m}
m = \lceil(\|\lambda^*\|_{W_2^\alpha(\mathbb{X})}n)^{1/(2\alpha+d_{\max})}\rceil,
\end{equation}
where $d_{\max} = \max\{d_1, \dots, d_s\}$ and $\alpha \geq 1$ is the smoothness parameter of $\lambda^*$.
Suppose Assumptions \ref{ass:approx_err1}, \ref{ass:approx_err2} and \ref{ass:SingularGap_B*} hold,
we have
\begin{align}\label{eq:main_error_bound}
        &\|\lambda^*-\widehat{\lambda}_{\mathrm{Tensor}}\|_{\mathbb{L}_2(\mathbb{X})}^2\nonumber\\
        =& O_p\left(\left\{\frac{\|\lambda^*\|_{\infty}^{2\alpha/(2\alpha + d_{\max})}{\|\lambda^*\|_{W_2^{\alpha}}^{2d_{\max}/(2\alpha + d_{\max})}}\sum_{j = 1}^sR_j}{n^{2\alpha/(2\alpha+d_{\max})}} + \frac{\|\lambda^*\|_{\infty}\prod_{j=1}^sR_j}{n}\right\}\log(n) + \xi_{(R_1, \dots, R_s)}^2\right),
\end{align}
where $\xi_{(R_1, \dots, R_s)}$ is the best rank-$(R_1, \dots, R_s)$ approximation error defined in \eqref{eq:tucker_low-rank_approx}.
\end{theorem}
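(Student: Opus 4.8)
The plan is to decompose the total $\mathbb{L}_2$ error into an approximation (bias) term coming from the truncated basis expansion, a low-Tucker-rank approximation term $\xi_{(R_1,\dots,R_s)}$, and a stochastic estimation term arising from replacing $b^*$ by the sketched, projected empirical tensor $\widetilde b$. Concretely, writing $\lambda^*_{\mathrm{proj}}$ for the projection of $\lambda^*$ onto the $m^D$-dimensional tensor-product subspace (with coefficient tensor $b^*$), Remark~\ref{remark:approx_error} controls $\|\lambda^* - \lambda^*_{\mathrm{proj}}\|_{\mathbb{L}_2}^2 \le s\,m^{-2\alpha}\|\lambda^*\|_{W_2^\alpha(\mathbb{X})}^2$, which with the choice $m = \lceil(\|\lambda^*\|_{W_2^\alpha}^2 n)^{1/(2\alpha+d_{\max})}\rceil$ is of order $\|\lambda^*\|_{W_2^\alpha}^{2d_{\max}/(2\alpha+d_{\max})} n^{-2\alpha/(2\alpha+d_{\max})}$, already within the claimed rate. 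Since the basis is orthonormal, $\|\lambda^*_{\mathrm{proj}} - \widehat\lambda_{\mathrm{Tensor}}\|_{\mathbb{L}_2}^2 = \|b^* - \widetilde b\|_{\mathrm{F}}^2$, so the task reduces to a finite-dimensional tensor estimation bound for $\|b^* - \widetilde b\|_{\mathrm{F}}^2$.

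Next I would analyze the stochastic behavior of the empirical coefficient tensors. By Campbell's theorem, $\mathbb{E}[\widehat b^{H_k}] = b^*$, and because the $N^{(i)}$ are independent Poisson, the entries of $\sqrt{|H_k|}(\widehat b^{H_k} - b^*)$ have variances controlled by $\int \lambda^*\, \phi_{1,\mu_1}^2\cdots\phi_{s,\mu_s}^2\,d\upsilon \le \|\lambda^*\|_\infty$ times a basis-sup-norm factor; the key quantitative input is a bound on $\|\mathcal{M}_j(\widehat b^{H_k} - b^*)\|_{\op}$ and on sketched versions $\|\mathcal{M}_j(\widehat b^{H_k}-b^*)\cdot\otimes_{k\neq j}\widehat U_k\|_{\op}$. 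I expect these to be established via a matrix Bernstein / Poisson concentration argument, giving, with high probability, an operator-norm error of order $\sqrt{m^{d_{\max}}\log n / n}$ for the full matricization and the much smaller $\sqrt{(\log n)\prod_{k\neq j}R_k/n} + \sqrt{m^{d_{\max}}\log n/n}\cdot(\text{something})$ for the sketched matrix — this is precisely why sketching reduces the variance. Sample splitting guarantees $\widehat U_j^{(0)}$ (built from $H_1$) is independent of $\widehat b^{H_2}$, and $\widehat U_j^{(1)}$ (from $H_2$) is independent of $\widehat b^{H_3}$, so these concentration bounds can be applied conditionally without union-bound blowup over the random subspaces.

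Then I would invoke perturbation theory to turn the operator-norm fluctuations into subspace-estimation error. Using a Davis–Kahan / Wedin-type bound together with the spectral-gap condition~\eqref{eq:ass2}, the initial HOSVD estimator satisfies $\|\sin\Theta(\widehat U_j^{(0)}, U_{j,R_j})\| \lesssim \|\mathcal{M}_j(\widehat b^{H_1}-b^*)\|_{\op}/\mathrm{gap}_j + \xi/\mathrm{gap}_j$, and the sketched refinement step improves the noise part from the $m^{d_{\max}}$-scale to the $\prod_{k\neq j}R_k$-scale (because after sketching by nearly-correct subspaces, the relevant noise operator is much lower-dimensional), while the bias $\xi_{(R_1,\dots,R_s)}$ passes through essentially unchanged. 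Here the design of $\beta$ in Assumption~\ref{ass:SingularGap_B*} is calibrated exactly so that $\mathrm{gap}_j^2$ dominates the initial noise level, making the perturbation bounds valid. Finally, I would bound $\|b^* - \widetilde b\|_{\mathrm{F}}^2$ by a standard projection decomposition: $\|b^* - b^* \times_1 \mathcal P_{\widehat U_1^{(1)}}\cdots\times_s \mathcal P_{\widehat U_s^{(1)}}\|_{\mathrm{F}}^2 \lesssim \sum_j \|\sin\Theta(\widehat U_j^{(1)}, U_{j,R_j})\|^2 \|b^*\|_{\mathrm{F}}^2 + \xi^2$, plus a fluctuation term $\|(\widehat b^{H_3} - b^*)\times_1\mathcal P_{\widehat U_1^{(1)}}\cdots\|_{\mathrm{F}}^2 \lesssim (\sum_j R_j\,m^{d_{\max}} + \prod_j R_j)(\log n)/n$ (the first part from projecting onto an $R_j$-dimensional mode-$j$ space, the second from the fully-projected core). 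Collecting terms and matching powers of $m$ to the rate $n^{-2\alpha/(2\alpha+d_{\max})}$ yields~\eqref{eq:main_error_bound}.

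The main obstacle, I expect, is the refinement analysis of Step~3: showing rigorously that sketching $\mathcal M_j(\widehat b^{H_2})$ by the imperfect initial subspaces $\otimes_{k\neq j}\widehat U_k^{(0)}$ reduces the effective noise dimension to $\prod_{k\neq j}R_k$ while the error from the initial subspaces' inaccuracy (and from the intrinsic low-rank bias $\xi$) remains controlled — this requires carefully tracking how $\xi_{(R_1,\dots,R_s)}$ interacts with the matricization structure across modes, which is delicate because $\xi$ is defined jointly over all modes rather than mode-by-mode, and because the target singular subspaces of $b^*$ are only approximately, not exactly, low-rank. Establishing the right Poisson concentration inequality for the sketched matrices, with the sup-norm of products of basis functions appearing in the variance proxy, and verifying it against the scaling of $m$, is the other technically demanding piece.
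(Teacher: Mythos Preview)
Your overall architecture matches the paper's proof exactly: decompose into basis-truncation error plus $\|b^*-\widetilde b\|_{\mathrm F}^2$, control the empirical tensors via a Poisson matrix-Bernstein bound, use Wedin for the initial HOSVD step, exploit sample splitting to condition on $\{\widehat U_k^{(0)}\}$ when analyzing the sketched matrices, and finish with a telescoping projection decomposition of $\|b^*-\widetilde b\|_{\mathrm F}$. All of that is right.

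The genuine gap is in your bound on the projection error. You write
\[
\bigl\|b^* - b^*\times_1\mathcal P_{\widehat U_1^{(1)}}\cdots\times_s\mathcal P_{\widehat U_s^{(1)}}\bigr\|_{\mathrm F}^2 \;\lesssim\; \sum_j \bigl\|\sin\Theta(\widehat U_j^{(1)},U_{j,R_j})\bigr\|^2\,\|b^*\|_{\mathrm F}^2 + \xi^2,
\]
and this is where the argument would fail to deliver the stated rate. The $\sin\Theta$ bound for the refined subspaces is $\|\sin\Theta(\widehat U_j^{(1)},U_j)\|^2 \lesssim (m^{d_j}+\prod_{k\neq j}R_k)\log(m)/(n\cdot\mathrm{gap}_j^2)$, so your projection bound carries an unavoidable factor $1/\mathrm{gap}_j^2$. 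Assumption~\ref{ass:SingularGap_B*} only guarantees $\mathrm{gap}_j^2 \gtrsim n^{-\beta}\log n$ with $\beta=(2\alpha+d_{\max}+d_{\min}-D)/(2\alpha+d_{\max})$; plugging this in and using $m\asymp n^{1/(2\alpha+d_{\max})}$ gives a contribution of order $n^{(d_{\max}+d_{\min}-D)/(2\alpha+d_{\max})}$, which under condition~\eqref{eq:condition_dim} is strictly \emph{slower} than the target $n^{-2\alpha/(2\alpha+d_{\max})}$. Since the theorem must hold for any $\lambda^*$ satisfying the assumptions, including those whose gap sits near the lower bound, your route does not establish~\eqref{eq:main_error_bound}.

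The paper resolves this by \emph{not} passing through $\sin\Theta$ at all for the projection error. Its key device is a direct perturbation bound (their Lemma~\ref{lemma:perterbation_approx}): if $Y=X+Z$ and $\widehat U_\perp$ is the trailing left singular space of $Y$, then
\[
\bigl\|\mathcal P_{\widehat U_\perp}X\bigr\|_{\mathrm F}\;\le\; 3\,\|X-X_{(R)}\|_{\mathrm F}+2\sqrt{R}\,\|Z\|_{\op},
\]
with no spectral gap in sight. Applied with $X=M_j=\mathcal M_j(b^*)\cdot\otimes_{k\neq j}\widehat U_k^{(0)}$ and $Z=\widehat M_j-M_j$ (the sketched noise, of size $\sqrt{(m^{d_j}+\prod_{k\neq j}R_k)\log m/n}$), this yields
\[
\bigl\|b^*\times_j\mathcal P_{\widehat U_{j\perp}^{(1)}}\bigr\|_{\mathrm F}\;\lesssim\;\sqrt{\tfrac{(R_jm^{d_j}+\prod_kR_k)\log m}{n}}+\sum_{l}\sqrt{\textstyle\sum_{k>R_l}\sigma_{l,k}^2(\lambda^*)},
\]
which is exactly the estimation-variance term in~\eqref{eq:main_error_bound} plus the tail singular values (bounded by $\xi_{(R_1,\dots,R_s)}$). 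The spectral-gap assumption is used only to guarantee that the initial $\sin\Theta$ is bounded away from $1$ (so the sketching matrices $\otimes_{k\neq j}\widehat U_k^{(0)}$ are non-degenerate), not to control the final Frobenius error. You correctly flagged the approximately-low-rank refinement step as the crux; the missing idea is this gap-free projection inequality.
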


    Condition~\eqref{eq:condition_dim} and the error rate in \eqref{eq:main_error_bound} together show how the partition (through $d_{\max}$) governs both the allowable ambient dimension $D$ and the estimation error rate. We will explore this trade-off carefully in \Cref{remark:coor_par} through an example.

 \Cref{thm:main_result} shows that our tensor-based method outperforms the matrix-based approach by allowing more partitions (and thus potentially lower $d_{\max}$). If the target Tucker ranks are all bounded constants, \eqref{eq:main_error_bound} reduces to
    \[
    \|\lambda^*-\widehat{\lambda}_{\mathrm{Tensor}}\|_{\mathbb{L}_2(\mathbb{X})}^2
        = O_p\left(\frac{\|\lambda^*\|_{\infty}^{2\alpha/(2\alpha + d_{\max})}{\|\lambda^*\|_{W_2^{\alpha}}^{2d_{\max}/(2\alpha + d_{\max})}}}{n^{2\alpha/(2\alpha+d_{\max})}}\log(n) + \xi_{(R_1, \dots, R_s)}^2\right).
    \]
    Moreover, if $\lambda^*$ is an exactly low-rank function, i.e.~the additive or mean-field functions, then $\xi_{(R_1, \dots, R_s)} = 0$.  In contrast,
   the KIE can only achieve an error rate of
   \[
    O_p\left(\frac{\|\lambda^*\|_{\infty}^{2\alpha/(2\alpha+D)}\|\lambda^*\|_{W_2^\alpha(\mathbb{X})}^{2D/(2\alpha + D)}}{n^{2\alpha/(2\alpha + D)}}\right),
    \]
    which is driven by the ambient dimensionality $D$ rather than the largest block dimension $d_{\max}$ induced by the chosen partition.

\begin{remark}[Beyond Poisson processes]\label{remark:beyond_poisson}
\textbf{Marked Poisson processes.} By the Marking Theorem \citep[Section~5.2]{last2017lectures}, a marked Poisson process on $\mathbb{X}_0\times\mathbb{M}$ is a Poisson point process on the product space whose joint intensity factorizes as $\lambda^*(x,m)=\lambda_0^*(x)\,\mu^*(m\mid x)$, where $\lambda_0^*$ is the ground intensity and $\mu^*(\cdot\mid x)$ is the conditional mark density. Given a joint estimate $\widehat\lambda(x,m)$, the two factors are recovered by marginalization and normalization, $\widehat\lambda_0(x)=\int_{\mathbb{M}}\widehat\lambda(x,m)\,\d \upsilon(m)$ and $\widehat\mu(m\mid x)=\widehat\lambda(x,m)/\widehat\lambda_0(x)$; under the tensor-product basis, the marginal is a finite linear combination of precomputable one-dimensional basis integrals.

\noindent\textbf{Beyond Poisson.} Although \Cref{thm:main_result} is stated for Poisson point processes, the same proof strategy applies whenever, for any deterministic orthonormal $W_j$ and $M_j$ of compatible sizes, the operator norm $\|W_j^\top \mathcal M_j(\widehat b - b^*) M_j\|_{\op}$ admits a deviation bound of the similar form as in the Poisson case (\Cref{lemma:deviation_op_ortho}). Once such a bound is available, the remaining subspace-perturbation and projection-error arguments carry over verbatim. For Neyman–Scott (Appendix~\ref{sec:Neymann-Scott}), the rate matches \Cref{thm:main_result} with an additional factor depending on cluster variance; for $D$-dependent time series (Appendix~\ref{sec:time_series}), an analogous rate holds with $n$ replaced by $n/(2D{+}1)$.
\end{remark}

\begin{remark}[An example on coordinate partition]\label{remark:coor_par}
    We illustrate Condition~\eqref{eq:condition_dim} for the tensor-based method with $\alpha=2$. KIE attains
    \[
    O_p\!\left(\|\lambda^*\|_{\infty}^{4/(4+D)}\|\lambda^*\|_{W_2^2(\mathbb{X})}^{2D/(4+D)}n^{-4/(4+D)}\right),
    \]
    while the matrix-based estimator (the $s=2$ partition with $d_{\max}=\lceil D/2\rceil$) attains, up to a $\log$ factor, the same expression with $D$ replaced by $\lceil D/2\rceil$. For the tensor-based estimator, the choice of partition determines $d_{\max}$ and the rate becomes, up to a $\log$ factor,
    \[
    O_p\!\left(\|\lambda^*\|_{\infty}^{4/(4+d_{\max})}\|\lambda^*\|_{W_2^2(\mathbb{X})}^{2d_{\max}/(4+d_{\max})} n^{-4/(4+d_{\max})}\right).
    \]
    The choices of $(s, d_{\min}, d_{\max})$ listed below satisfy \eqref{eq:condition_dim}.
    \begin{center}
    \small
    \setlength{\tabcolsep}{4pt}
    \renewcommand{\arraystretch}{1}
    \begin{tabular}{lcccc}
    \toprule
     & $3\le D\le 5$ & $D=6$ & $7\le D\le 10$ & $D\ge 11$ \\
    \midrule
    $s$         & $D$ & $4$ & $3$                  & $2$ \\
    $d_{\min}$  & $1$ & $1$ & $\lfloor D/3\rfloor$ & $\lfloor D/2\rfloor$ \\
    $d_{\max}$  & $1$ & $2$ & $\lceil D/3\rceil$   & $\lceil D/2\rceil$ \\
    \bottomrule
    \end{tabular}
    \end{center}
    The tensor-based estimator benefits from finer partitions when $D$ is moderate: $d_{\max}=1$ for $D\le 5$ gives the fastest rate, and $d_{\max}\approx D/3$ for $6\le D\le 10$ outperforms both the matrix-based method and KIE. For $D\ge 11$, only $s=2$ satisfies \eqref{eq:condition_dim}, in which case the tensor-based estimator reduces to the matrix-based one.
\end{remark}

\subsection{Lower bound for intensity estimation}\label{sec:low_bound}

We establishes the minimax lower bound on the estimation error in the context of nonparametric intensity estimation for inhomogeneous point processes. The bound characterizes the fundamental difficulty of the problem by demonstrating the best achievable rate of any estimator restricted to a rank-constrained function class. 

Let $\mathbb{X} = \mathbb{X}_1 \times \cdots \times \mathbb{X}_s$ and for $ \xi_{(R_1, \dots, R_s)} >0 $, define the intensity function class
\begin{align}
\label{Lambda-def}
    \Lambda_{(R_1, \dots, R_s)}^{\alpha, s} = \left\{ \lambda^* : \mathbb{X} \to \mathbb{R}_+ \;\middle|\; \|\lambda^*\|_{W_2^\alpha(\mathbb{X})} < \infty, \; \|\lambda^*\|_{\infty} < \infty, \right. \nonumber \\ 
    \left.  \; \text{and} \ \inf_{\lambda \in \mathcal{T}_{(R_1, \dots, R_s)}} \|\lambda - \lambda^*\|_{\mathbb{L}_2(\mathbb{X})} \leq \xi_{(R_1, \dots, R_s)} \right\},
\end{align}
where \[
\mathcal{T}_{(R_1, \dots, R_s)} = \left\{\lambda \in \mathbb{L}_2(\mathbb{X}): \rank(\lambda_j(x_j,x_{-j})) \leq R_j, \forall j \in [s]\right\}
\]
is the set of functions on $\mathbb{X}$, whose Tucker ranks are bounded by $(R_1, \dots, R_s)$.

The function class \( \Lambda_{(R_1, \dots, R_s)}^{\alpha, s} \) is constructed to encompass intensity functions that exhibit both a prescribed degree of smoothness, as characterized by the Sobolev space \(W_2^\alpha(\mathbb{X})\), and an upper bound on the low-rank approximation error. 

\begin{theorem}[Minimax lower bound]
\label{Minimax-Theorem-Tensor}
    Consider the function class \( \Lambda_{(R_1, \dots, R_s)}^{\alpha, s} \) defined in  \eqref{Lambda-def}. Suppose that $ \{ R_j\}_{j=1}^s$ are all bounded constants. For any estimator $ \widehat{\lambda} \in \mathcal{T}_{(R_1, \dots, R_s)}$ based on the observations \(\{N^{(i)}\}_{i = 1}^n\), we have that 
    $$
    \sup_{\lambda^* \in \Lambda_{(R_1, \dots, R_s)}^{\alpha, s}} \mathbb{E} \left[ \|\lambda^* - \widehat{\lambda}\|_{\mathbb{L}_2(\mathbb{X})}^2 \right] \geq C_{0} \left( \frac{1}{n^{2\alpha/(2\alpha + d_{\max})}}  + \xi_{(R_1, \dots, R_s)}^2 \right),
    $$
    where \(C_{0} > 0\) is a positive constant, \(d_{\max} = \max\{d_1, \dots, d_s\}\), and \(\xi_{(R_1, \dots, R_s)}\) represents an upper bound in the approximation error in the \(\mathbb{L}_2\)-norm between \(\lambda^*\) and its best rank-\((R_1, \dots, R_s)\) approximation, as defined in \eqref{eq:tucker_low-rank_approx}.
\end{theorem}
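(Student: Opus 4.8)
The plan is to prove the two terms of the lower bound separately and then combine them via $\max\{a,b\}\ge(a+b)/2$. Concretely, I would establish, on the one hand,
\[
\inf_{\widehat\lambda}\ \sup_{\lambda^*\in\Lambda_{(R_1,\dots,R_s)}^{\alpha,s}}\mathbb{E}\bigl[\|\lambda^*-\widehat\lambda\|_{\mathbb{L}_2(\mathbb{X})}^2\bigr]\ \gtrsim\ n^{-2\alpha/(2\alpha+d_{\max})},
\]
where the infimum ranges over \emph{all} estimators based on $\{N^{(i)}\}_{i=1}^n$ (so the bound persists under the extra constraint $\widehat\lambda\in\mathcal{T}_{(R_1,\dots,R_s)}$), and, on the other hand,
\[
\sup_{\lambda^*\in\Lambda_{(R_1,\dots,R_s)}^{\alpha,s}}\mathbb{E}\bigl[\|\lambda^*-\widehat\lambda\|_{\mathbb{L}_2(\mathbb{X})}^2\bigr]\ \gtrsim\ \xi_{(R_1,\dots,R_s)}^2\qquad\text{for every }\widehat\lambda\in\mathcal{T}_{(R_1,\dots,R_s)} .
\]
The first inequality captures the estimation ``variance'' and the second the low-rank ``bias''; adding them yields the stated bound.

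For the bias term I would use only that the estimator is rank-constrained: for any $\lambda^*$ and any $\widehat\lambda\in\mathcal{T}_{(R_1,\dots,R_s)}$ one has the deterministic inequality $\|\lambda^*-\widehat\lambda\|_{\mathbb{L}_2(\mathbb{X})}\ge\mathrm{dist}\bigl(\lambda^*,\mathcal{T}_{(R_1,\dots,R_s)}\bigr)$, so it suffices to produce a single $\lambda^*\in\Lambda_{(R_1,\dots,R_s)}^{\alpha,s}$ whose $\mathbb{L}_2$-distance to $\mathcal{T}_{(R_1,\dots,R_s)}$ equals $\xi_{(R_1,\dots,R_s)}$. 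I would take $\lambda^*=\lambda_0+t\,p$, with $\lambda_0$ a strictly positive constant (smooth, of Tucker rank $(1,\dots,1)$, hence in $\mathcal{T}_{(R_1,\dots,R_s)}$) and $p$ a fixed smooth function whose mode-$j$ reshaping has rank exceeding $R_j$ for every $j$, so that $p\notin\mathcal{T}_{(R_1,\dots,R_s)}$. Since $\mathcal{T}_{(R_1,\dots,R_s)}$ is a closed cone in $\mathbb{L}_2(\mathbb{X})$ (the rank of the Schmidt operator is scale invariant and lower semicontinuous under operator-norm limits, and $\mathbb{L}_2$ convergence implies operator-norm convergence), the map $t\mapsto\mathrm{dist}(\lambda_0+tp,\mathcal{T}_{(R_1,\dots,R_s)})$ is continuous, vanishes at $t=0$, and diverges as $t\to\infty$; the intermediate value theorem then supplies a $t_\xi$ with $\mathrm{dist}(\lambda_0+t_\xi p,\mathcal{T}_{(R_1,\dots,R_s)})=\xi_{(R_1,\dots,R_s)}$. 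Enlarging the constant level of $\lambda_0$ keeps $\lambda_0+t_\xi p$ strictly positive and finite in Sobolev and sup norm, so $\lambda^*=\lambda_0+t_\xi p\in\Lambda_{(R_1,\dots,R_s)}^{\alpha,s}$, completing this part.

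For the variance term I would embed a $d_{\max}$-dimensional nonparametric problem into the class. Let $j^\star\in\argmax_j d_j$ and consider intensities $\lambda_f(x)=c_0+f(x_{j^\star})$, $c_0>0$ fixed and $f\in W_2^\alpha(\mathbb{X}_{j^\star})$ of small norm: because $\lambda_f$ depends on $x_{j^\star}$ only, each mode-$j$ reshaping is rank one, so $\lambda_f\in\mathcal{T}_{(1,\dots,1)}\subseteq\mathcal{T}_{(R_1,\dots,R_s)}$ with approximation error $0\le\xi_{(R_1,\dots,R_s)}$; taking $c_0$ large and $\|f\|_\infty$ small keeps $\lambda_f$ strictly positive, and $\|\lambda_f\|_{W_2^\alpha(\mathbb{X})}\lesssim c_0+\|f\|_{W_2^\alpha(\mathbb{X}_{j^\star})}<\infty$, so this subfamily lies inside $\Lambda_{(R_1,\dots,R_s)}^{\alpha,s}$. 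On it I would run a Varshamov--Gilbert/Fano argument: tile $\mathbb{X}_{j^\star}$ into $M\asymp h^{-d_{\max}}$ cells of side $h$, let $\{\psi_k\}_{k=1}^M$ be $C^\infty$ bumps localized to the cells with $\|\psi_k\|_{\mathbb{L}_2}^2\asymp h^{d_{\max}}$ and $\|\psi_k\|_{W_2^\alpha}^2\asymp h^{d_{\max}-2\alpha}$, and set $f_\omega=\delta\sum_{k=1}^M\omega_k\psi_k$ for $\omega$ ranging over a Varshamov--Gilbert packing $\Omega\subseteq\{0,1\}^M$ with $|\Omega|\ge 2^{c_1 M}$ and pairwise Hamming distance $\ge M/4$. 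Choosing $\delta\asymp h^\alpha$ makes $\|f_\omega\|_{W_2^\alpha(\mathbb{X})}^2\asymp\delta^2 h^{-2\alpha}\lesssim 1$, the separation is $\|\lambda_{f_\omega}-\lambda_{f_{\omega'}}\|_{\mathbb{L}_2(\mathbb{X})}^2\asymp\delta^2 M h^{d_{\max}}\asymp\delta^2$, and — using the Poisson identity $\mathrm{KL}\bigl(\mathbb{P}_{\lambda_1}^{\otimes n}\,\|\,\mathbb{P}_{\lambda_2}^{\otimes n}\bigr)=n\int_{\mathbb{X}}\bigl(\lambda_1\log(\lambda_1/\lambda_2)-\lambda_1+\lambda_2\bigr)\d\upsilon\lesssim n\|\lambda_1-\lambda_2\|_{\mathbb{L}_2(\mathbb{X})}^2$, valid since the candidate intensities are uniformly bounded away from $0$ and $\infty$ — the pairwise KL is $\lesssim n\delta^2\asymp nh^{2\alpha}$. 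Taking $h\asymp n^{-1/(2\alpha+d_{\max})}$ equalizes $nh^{2\alpha}\asymp h^{-d_{\max}}\asymp M\asymp\log|\Omega|$, so (after shrinking the constant in $\delta\asymp h^\alpha$ so the KL budget is a small multiple of $\log|\Omega|$) Fano's inequality gives $\inf_{\widehat\lambda}\max_{\omega\in\Omega}\mathbb{E}\|\lambda_{f_\omega}-\widehat\lambda\|_{\mathbb{L}_2(\mathbb{X})}^2\gtrsim\delta^2\asymp n^{-2\alpha/(2\alpha+d_{\max})}$. Assouad's lemma over $\{0,1\}^M$ would work equally well.

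I expect the main obstacle to be the variance step: designing the bump profile and balancing the triple $(h,\delta,M)$ so that the Sobolev-norm constraint, the $\mathbb{L}_2$-separation, and the Kullback--Leibler budget are all satisfied at once, and — since the companion upper bounds track the factor $\|\lambda^*\|_{W_2^\alpha(\mathbb{X})}$ — verifying that the absolute constants can be arranged so the two bounds are of matching order. The Poisson KL estimate itself is routine given the explicit formula and the uniform positivity and boundedness of the candidate intensities, and the bias step is essentially immediate once one records that $\mathcal{T}_{(R_1,\dots,R_s)}$ is a closed cone.
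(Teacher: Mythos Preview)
The paper does not actually contain a proof of this theorem: the appendix (\S E) supplies proofs for the upper-bound Theorems~\ref{theorem_matrix} and~\ref{thm:main_result} and their auxiliary propositions, but no argument for Theorem~\ref{Minimax-Theorem-Tensor} appears anywhere in the manuscript. There is therefore nothing to compare your proposal against.

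Taken on its own, your two-part strategy is the natural one and is essentially correct. The variance step---embedding a $d_{\max}$-dimensional Sobolev problem via intensities depending only on $x_{j^\star}$ (automatically Tucker rank $(1,\dots,1)$, hence in $\Lambda_{(R_1,\dots,R_s)}^{\alpha,s}$ with zero approximation error) and then running Varshamov--Gilbert/Fano with the Poisson KL identity---is standard and goes through as you describe. The bias step via the deterministic inequality $\|\lambda^*-\widehat\lambda\|_{\mathbb{L}_2}\ge\mathrm{dist}(\lambda^*,\mathcal{T}_{(R_1,\dots,R_s)})$ is also the right idea, and your closedness/IVT argument is fine.

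One small circularity to fix: in the bias construction you write ``enlarging the constant level of $\lambda_0$ keeps $\lambda_0+t_\xi p$ strictly positive,'' but changing $\lambda_0$ generally moves $t_\xi$ (since $\mathcal{T}_{(R_1,\dots,R_s)}$ is a cone, not an affine subspace, so $\mathrm{dist}(\cdot,\mathcal{T})$ is not translation-invariant). The clean repair is to choose the perturbation $p$ itself nonnegative---e.g., a finite sum of well-separated nonnegative smooth bumps arranged so that each mode-$j$ unfolding has rank exceeding $R_j$---so that $\lambda_0+tp>0$ for every $t\ge 0$ and the IVT can be run on $[0,\infty)$ with $\lambda_0$ fixed once and for all. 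With that adjustment the argument is complete.
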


Recall from \Cref{thm:main_result} upper bound of our tensor-based estimator $\widehat{\lambda}_{\text{Tensor}}$.
If the ranks \((R_1, \dots, R_s)\) are bounded, this upper bound matches the lower bound in \Cref{Minimax-Theorem-Tensor} up to a $\log(n)$ factor. Thus, the proposed tensor-based estimator achieves the best possible convergence rate among estimators in $\mathcal{T}_{(R_1, \dots, R_s)}$ for estimating intensity functions in the class  $ \Lambda_{(R_1, \dots, R_s)}^{\alpha, s}$.


This lower bound applies to estimators targeting functions that are well approximated by Tucker low-rank tensors. In this regime, the low-rank representation captures reduced effective complexity and, together with smoothness, characterizes the intrinsic difficulty of the problem. The bound therefore shows that our estimator is rate-optimal on this class and that its performance is driven by the intended structural assumptions on \(\lambda^*\).

\section{Numerical results}\label{sec:numeric}
This section provides numerical evidence to support our theoretical results for the proposed matrix- and tensor-based estimators.  For comparison, we include the multivariate kernel intensity estimator (KIE) using a Gaussian kernel with the bandwidth automatically selected using Scott's rule. We provide Python code implementing the proposed methods in the supplementary materials. For full code and reproducibility details, see our \href{https://github.com/HaotianXu/Multivariate-Poisson-intensity-estimation-via-low-rank-tensor-decomposition}{GitHub repository}.

\subsection{Data simulation and setup}\label{sec:simu_setup}
We simulate point processes from various intensity functions $\lambda^*$ on $\mathbb{X}=[0,1]^D$, with the dimension $D$ varying from $2$ to $6$. The four scenarios considered in this section are motivated by practical spatial point process applications and span three structural regimes: a clustered point process with random kernel centers (\textbf{S1}), a random-intensity model (\textbf{S2}), and deterministic smooth intensities with explicit finite Tucker rank (\textbf{S3}, \textbf{S4}).
\begin{enumerate}
    \item[\textbf{S1.}] (Neyman--Scott clustered point process.) The intensity is a sum of $K$ isotropic Gaussian kernels centered at randomly drawn parents:
$\lambda^*(x) = (2\pi\sigma^2)^{-D/2} \sum_{c \in \mathcal{C}} \exp\left(-\tfrac{\|x - c\|_2^2}{2\sigma^2}\right)$,
where $\mathcal{C} = \{c_1, \dots, c_K\}$ are i.i.d.~$\mathrm{Uniform}([0,1]^D)$ parents with $K \sim \mathrm{Poisson}(\kappa)$. We use $\kappa = 30$, $\sigma = 0.35$. A new realization of the parents is drawn for each Monte Carlo repetition.
    \item[\textbf{S2.}] (Log-Gaussian Cox process.) The intensity is $\lambda^*(x) = \exp\left(Y(x) - \sigma_Y^2/2\right)$,
where $Y$ is a centered Gaussian random field on $[0,1]^D$ with covariance kernel $k(x,x') = \sigma_Y^2 \exp\bigl(-12.5 \|x - x'\|_2^2\bigr)$, $\sigma_Y^2 = 0.5$. A new realization of $Y$ is drawn for each Monte Carlo repetition (using a random-Fourier-feature approximation with $1000$ frequencies).
    \item[\textbf{S3.}] (Gaussian mixture.) Three isotropic Gaussian bumps centered on the main diagonal:
$\lambda^*(x) = 5 \sum_{c \in \{0.2, 0.5, 0.8\}} \exp\left(-\frac{\|x - c\,\mathbf{1}_D \|_2^2}{0.32}\right)$,
where $\mathbf{1}_D$ is the all-ones vector in $\mathbb{R}^D$.

    \item[\textbf{S4.}] (Ginzburg--Landau-type smooth intensity.) A smooth deterministic intensity that is \emph{not} exactly low-rank:
$\lambda^*(x) = \exp\Bigl(-\tfrac{(D+1)^2}{800}\sum_{i=1}^{D-1}(x_i - x_{i+1})^2 \;-\; \tfrac{5}{32}\sum_{i=1}^{D}(x_i^2 - 1)^2\Bigr)$.
\end{enumerate}
For each scenario and each $D \in \{2,3,4,5,6\}$, we simulate $n = 10^5$ i.i.d.~point processes with intensity $\lambda^*$ on $\mathbb{X} = [0,1]^D$. Performance is measured by the relative $\mathbb{L}_2$ error
\[
\text{Relative Error} = \frac{\| \widehat{\lambda}(\text{test set}) - \lambda^*(\text{test set}) \|_{\mathbb{L}_2(\mathbb{X})}}{\| \lambda^*(\text{test set}) \|_{\mathbb{L}_2(\mathbb{X})}},
\]
evaluated on a regular test grid with $6^D$ points. Each reported result is the average over $100$ Monte Carlo repetitions. Three additional scenarios, i.e.~a piecewise-constant step intensity and two cosine-series families with polynomial and exponential coefficient decay, are reported in \Cref{add-num-res}.

\subsection{Coordinate partition and rank selection}\label{sec:coor_part_cluster}

We partition the $D$-dimensional input space $\mathbb{X}$ into $s$ clusters using a simple clustering procedure, based on the empirical covariance matrix, that groups coordinates with higher pairwise correlations into the same cluster.  Each cluster $\mathbb{X}_j$ has dimension $d_j$, such that $\sum_{j = 1}^s d_j = D$. 
{For each cluster, we construct a tensor-product basis built from $m$ orthonormal local hat functions (continuous piecewise-linear B-splines on a uniform mesh of $[0,1]$) in each coordinate, yielding $m^{d_j}$ basis functions per cluster.
We adopt this basis because its compact support localizes each coefficient to a small neighborhood, which makes the estimator robust to spatially heterogeneous and non-smooth intensities and avoids the boundary oscillations that high-degree global polynomial bases exhibit at large~$m$.}

For the matrix case (\(s = 2\)), we perform SVD on the empirical coefficient matrix \(\widehat{b}\).  Soft-thresholding the singular values yields a low-rank matrix approximation $T_{\gamma}(\widehat{b})$.  The threshold parameter \(\gamma\) is selected through cross-validation: We partition $\{N^{(i)}\}_{i = 1}^n$ into $k$ folds. For each round, one fold is designated as the testing set, and the remaining $k-1$ folds are as the training set. We compute $\widehat{b}$ on each training set. Applying different $\gamma$ values on a grid of $50$ equally spaced points in $[0,5]$ and choosing $\gamma$ minimizing the average relative error on the testing fold.

For the tensor case (\(s \geq 3\)), we compute the empirical coefficient tensor \(\widehat{b}\).  To adaptively select the target Tucker rank $(R_1, \dots, R_s)$, for each mode-$j$ matricization $\mathcal{M}_j(\widehat{b})$, we perform SVD and monitor the consecutive singular value ratios $\rho_k^{(j)} = \sigma_k^{(j)}/\sigma_{k+1}^{(j)}$. We choose the largest index $k$ such that $\rho_k^{(j)} > \tau$ and set the rank $R_j = k+1$. This data-driven approach ensures that only significant singular values are retained.

\subsection{Summary of the results}
\label{summary-num-results}
We vary $m$ over $\{4, 6, 8\}$ and report all three values jointly in Figure~\ref{fig:simu_main}, which compares the proposed Matrix/Tensor estimators with the KIE across the four scenarios for all $D \in \{2,3,4,5,6\}$. For each pair $(m, D)$, the partition size $s \in \{2,\ldots,D\}$ is chosen to minimize the average relative error over the $100$ Monte Carlo replicates; the resulting $s$ is reported in Table~\ref{tab:argmin_s}. Markers are means over the $100$ replicates and vertical bars span Monte Carlo standard deviation.

\begin{figure}[!htbp]
\centering
\includegraphics[width=0.85\textwidth]{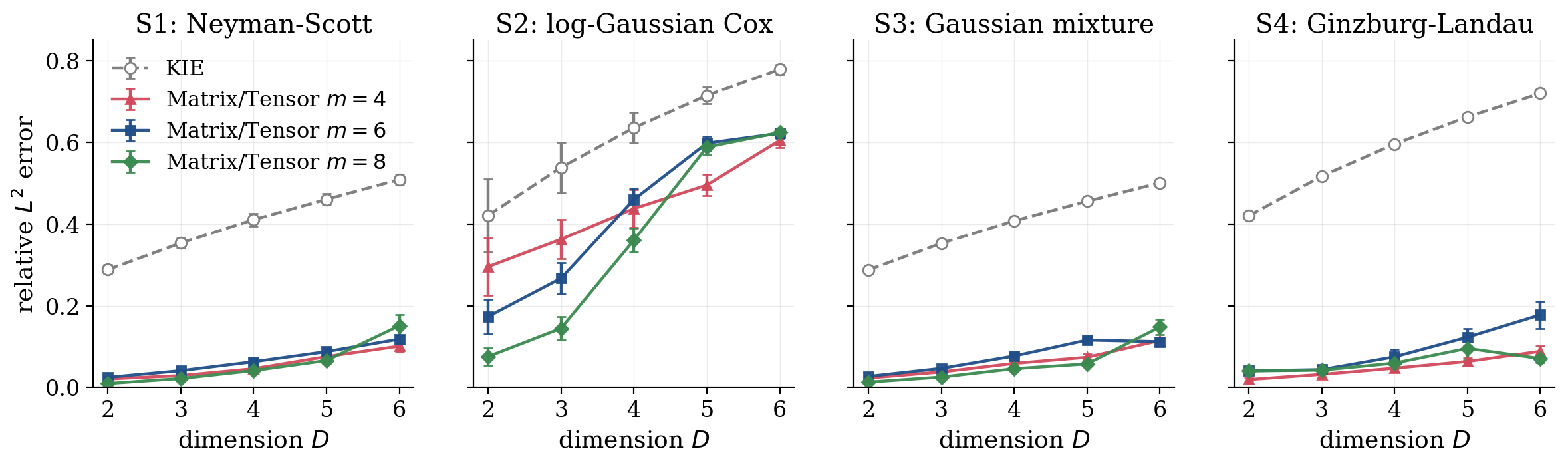}
\caption{Relative $\mathbb{L}_2$ error vs.\ dimension $D$ for the scenarios \textbf{S1}-\textbf{S4}. KIE is shown as a dashed grey curve; the Matrix/Tensor estimator is shown for $m \in \{4, 6, 8\}$, in each case with the partition size $s$ that minimizes the averaged error (see Table~\ref{tab:argmin_s}). Markers are averaged errors over $100$ Monte Carlo replicates; vertical bars show standard deviation. In S1 the parent locations are re-drawn each replicate and in S2 the log-Gaussian field is re-drawn each replicate, so the truth is itself random and the error bars also reflect this between-replicate variation; in S3 and S4 the truth is deterministic.}
\label{fig:simu_main}
\end{figure}

\begin{table}[ht]
\centering
\small
\setlength{\tabcolsep}{6pt}
\renewcommand{\arraystretch}{1}
\begin{tabular}{c|ccc}
\toprule
 & $m=4$ & $m=6$ & $m=8$ \\
\midrule
S1 & $2,\,2,\,2,\,5,\,6$ & $2,\,3,\,4,\,5,\,6$ & $2,\,3,\,4,\,5,\,6$ \\
S2 & $2,\,2,\,2,\,2,\,2$ & $2,\,2,\,2,\,5,\,6$ & $2,\,2,\,2,\,3,\,5$ \\
S3 & $2,\,2,\,4,\,3,\,5$ & $2,\,3,\,4,\,4,\,6$ & $2,\,3,\,4,\,5,\,6$ \\
S4 & $2,\,3,\,4,\,4,\,5$ & $2,\,3,\,4,\,4,\,5$ & $2,\,3,\,4,\,5,\,6$ \\
\bottomrule
\end{tabular}
\caption{Partition size $s$ that minimizes the average (over $100$ Monte Carlo replicates) relative $\mathbb{L}_2$ error of the Matrix/Tensor estimator at each $(\text{scenario}, m, D)$. Each cell lists the selected $s$ at $D = 2, 3, 4, 5, 6$ in that order.}
\label{tab:argmin_s}
\end{table}

\noindent\textbf{Outperformance.} Across all four scenarios and all dimensions, the Matrix/Tensor estimator outperforms KIE. For the structured scenarios (S1, S3, S4) the relative-error ratio of KIE to the best Matrix/Tensor configuration is at least $4$--$5\times$ across all dimensions.

\noindent\textbf{Robustness to the choice of $m$.} The three Matrix/Tensor curves ($m \in \{4, 6, 8\}$) are nearly indistinguishable in S1, S3, and S4, indicating that the estimator is robust to the per-coordinate basis size in regimes where genuine multilinear structure is present. In S2 the curves are more spread out, but all three remain comfortably below KIE.

\noindent\textbf{Selected $s$ tracks $D$ for structured truths.} Table~\ref{tab:argmin_s} shows that for S1, S3, and S4 the selected partition size grows with the dimension and reaches $s = D$ for $D \geq 4$ when $m \in \{6, 8\}$, indicating that the full multilinear factorization is genuinely beneficial. For $m = 4$ the selected $s$ is occasionally smaller, reflecting that a coarser per-coordinate basis cannot resolve the additional structure exposed by larger $s$. In S2, by contrast, the rule prefers $s = 2$ at low $D$ and grows only modestly with $D$: when the underlying truth admits no genuine multilinear structure, the spectral-gap rule honestly returns a low-rank decomposition rather than overfitting.

\noindent\textbf{The log-Gaussian Cox case.} S2 is the only scenario in which the intensity itself is random (resampled per replicate), so the standard deviation of the KIE error is visibly larger and reflects this irreducible variation. Because the truth has no genuine multilinear factorization, the gain over KIE is more modest, i.e.~approximately $2.4\times$ at $D=2$ shrinking to $1.2$--$1.4\times$ at $D \geq 4$, but the proposed method never loses to KIE in this scenario, demonstrating that the Matrix/Tensor decomposition degrades gracefully when its low-rank assumption is violated.

In supplementary material, we report (i)~the additional three scenarios in \Cref{add-num-res} and (ii)~companion timing analyses \Cref{sec:computation_cost}. In particular, an empirical runtime study is given in \Cref{sec:runtime_empirical}.

\subsection{Real data application: Earthquakes in the U.S.}\label{sec:real_data}

We further apply the methods to a real dataset obtained from the U.S.~Geological Survey Earthquake Catalog\footnote{The data set is publicly available at \url{https://earthquake.usgs.gov/earthquakes/search/}.}. The filtered dataset contains records of 101,929 earthquakes in the conterminous United States over the period from 1990-01-01 to 2025-01-01, with $D=4$ attributes (latitude, longitude, depth, and magnitude). For the empirical analysis, we aggregate the catalog at the daily level and treat each day as one realization of a point pattern on this four-dimensional attribute space, yielding 12,670 daily point patterns. The original geographical bounds, with latitude range \([24.6,50]\) and longitude range \([-125,-65]\), are normalized to \([0,1]\times[0,1]\). Similarly, depth and magnitude are normalized to \([0,1]\).

Since no ground truth $\lambda^*$ is available, we evaluate the fitted intensities using two complementary forms of evidence. First, we examine longitude--latitude marginal intensity maps, which provide a direct visual comparison with the held-out empirical spatial distribution. Second, we support this qualitative comparison with a distributional metric based on the sliced Wasserstein-2 distance between samples generated from each fitted estimator and the held-out test data.

In this setting, we compare \(\widehat{\lambda}_{\mathrm{Mat/Ten}}(s)\) for \(s \in \{2,3,4\}\) with \(\widehat{\lambda}_{\mathrm{KIE}}\). 
Our analysis targets the first-order intensity function $\lambda^*(u)$, which represents the average event density over time and does not condition on past event history. Accordingly, for the purpose of estimating this first-order intensity, we treat the daily point patterns as approximately exchangeable units, which allows us to use standard sample splitting and held-out evaluation.
In particular, the data is divided into training ($75\%$) and testing ($25\%$) sets using $30$ random splits at the level of daily point processes, where each day is treated as one realization on the attribute space (latitude, longitude, depth, magnitude). For each split, the days are randomly permuted and assigned to the training and test sets, and averaging over splits reduces variability due to data partitioning.

To place this modeling choice in the context of standard seismological approaches, in \Cref{sec:ETAS_earthquake} of the supplementary material we relate our framework to the widely used Epidemic-Type Aftershock Sequence (ETAS) model. The discussion contrasts the two frameworks' inferential goals (history-dependent conditional intensity in ETAS versus the joint history-averaged first-order intensity here) and their modeling structures, and identifies a shared nonparametric ingredient: although recovering the 2D spatial intensity is not the primary target of our method, it can be obtained as a marginal of the joint estimator and, via the marked Poisson decomposition in \Cref{remark:beyond_poisson}, corresponds to the spatial background that ETAS-based approaches (e.g.~\citealp{zhuang2002stochastic}) estimate nonparametrically.

We begin with the spatial marginal comparison. Figure~\ref{fig:intensity_plots_combined} displays the longitude--latitude marginal intensity maps for \(\widehat{\lambda}_{\mathrm{Mat/Ten}}\) with $m = 10$ and \(s\in\{2,3,4\}\), together with the corresponding map for \(\widehat{\lambda}_{\mathrm{KIE}}\) and an empirical reference map \(\lambda_{\mathrm{Test}}\) computed from the held-out test events. For \(\widehat{\lambda}_{\mathrm{Mat/Ten}}\), the displayed map is the marginal intensity over longitude and latitude obtained by integrating out depth and magnitude. The empirical reference \(\lambda_{\mathrm{Test}}\) is constructed by pooling the held-out earthquake events, binning them over the longitude--latitude grid, and normalizing by the number of test days and the grid-cell area. To produce \Cref{fig:intensity_plots_combined}, each coordinate (latitude, longitude, depth, magnitude) is first rescaled to $[0,1]$, the estimators are fit on the normalized $[0,1]^4$ scale, and the fitted intensities are then mapped back to the original longitude--latitude domain for display

\begin{figure}[!htbp]
\centering
\includegraphics[width=0.8\textwidth]{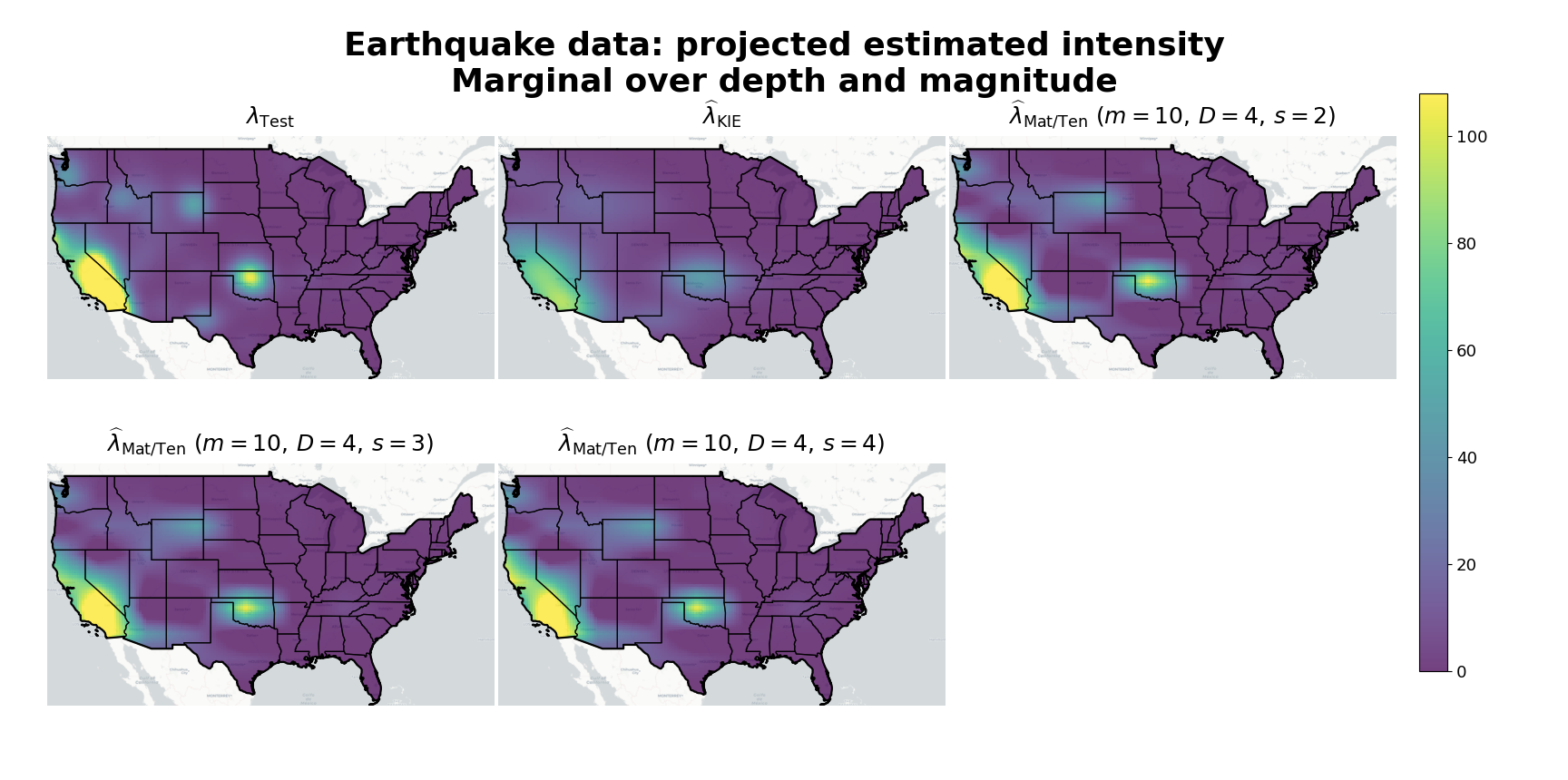}
\caption{Longitude--latitude marginal intensity maps for the earthquake application. The top row shows the empirical reference map \(\lambda_{\mathrm{Test}}\) computed from the held-out test events (left), \(\widehat{\lambda}_{\mathrm{KIE}}\) (center) and \(\widehat{\lambda}_{\mathrm{Mat/Ten}}\) at \(s=2\) (right). The bottom row shows \(\widehat{\lambda}_{\mathrm{Mat/Ten}}\) at \(s=3\) (left) and \(s = 4\) (right).  All displayed maps correspond to marginal intensities over longitude and latitude obtained by integrating out depth and magnitude.}
\label{fig:intensity_plots_combined}
\end{figure}

The empirical reference map shows several concentrated regions of seismic activity, most prominently along California and the broader West Coast, as well as a localized hotspot in Oklahoma. These main features are well captured by \(\widehat{\lambda}_{\mathrm{Mat/Ten}}\) across all values of \(s\). In these maps, the high-intensity region along California exhibits an elongated, oval-shaped structure that closely matches the spatial pattern observed in the empirical test map.

In contrast, \(\widehat{\lambda}_{\mathrm{KIE}}\) substantially oversmooths the intensity and does not clearly recover the dominant localized structures visible in the empirical test map. In particular, it diffuses the high-intensity region along California and the broader West Coast, an area associated with major tectonic boundaries such as the San Andreas fault system \citep{bird2003plate, usgs_hazards}. As a result, the mass is spread over a wider area and the peak intensity observed in the test map is attenuated. A similar effect is observed for the Oklahoma hotspot, where the kernel-based estimate smooths out the localized peak and fails to clearly distinguish it from the surrounding regions. This area is known to exhibit elevated seismicity in recent years, largely attributed to induced seismic activity \citep{keranen2014sharp}.

In addition, the empirical reference map reveals smaller regions of elevated activity in the north-central United States, around Wyoming and extending into South Dakota (and parts of southern Montana), as well as in the Pacific Northwest (notably in Washington state). These localized features are captured by \(\widehat{\lambda}_{\mathrm{Mat/Ten}}\) but are largely missed by \(\widehat{\lambda}_{\mathrm{KIE}}\), further illustrating the ability of \(\widehat{\lambda}_{\mathrm{Mat/Ten}}\) to recover fine-scale spatial structure that is not preserved under global kernel smoothing. These regions are consistent with known seismic zones in the United States, including activity in the Intermountain West and along the Cascadia subduction zone \citep{ goldfinger2012turbidite, usgs_hazards}.

We next support the visual comparison with a distributional metric computed on the full four-dimensional normalized attribute space. For each fitted intensity estimator \(\widehat{\lambda}\), we generate a sample \(\widehat{X}\) of the same size as the held-out test sample \(X_{\mathrm{test}}\), and define the sliced Wasserstein-2 distance
\[
\mathrm{SW}_2(\widehat{\lambda},\,\text{test})
:=
\mathrm{SW}_2\bigl(\widehat{X},\,X_{\mathrm{test}}\bigr),
\]
where \(\mathrm{SW}_2(\cdot,\cdot)\) denotes the sliced Wasserstein-2 distance between empirical distributions. All coordinates are normalized to \([0,1]^4\), and the distance is computed using a fixed number of random projections. Smaller values indicate closer agreement between the distribution of samples generated from the fitted intensity and the empirical test distribution.

\begin{table}[ht]
\centering
\small
\scalebox{0.9}{
\setlength{\tabcolsep}{6pt}
\renewcommand{\arraystretch}{0.6}
\begin{tabular}{lcccc}
\toprule
 & $\widehat{\lambda}_{\mathrm{Mat/Ten}}(s=2)$ & $\bm{\widehat{\lambda}_{\mathrm{Mat/Ten}}(s=3)}$ & $\widehat{\lambda}_{\mathrm{Mat/Ten}}(s=4)$ & $\widehat{\lambda}_{\mathrm{KIE}}$ \\
\midrule
$\mathrm{SW}_2(\widehat{\lambda},\,\text{test})$ & $0.2099$ & $\textbf{0.2095}$ & $0.2394$ & $0.2581$ \\
\bottomrule
\end{tabular}}
\caption{Sliced Wasserstein-2 ($\mathrm{SW}_2$) distance between samples generated from each fitted estimator and the held-out test set, averaged over $30$ random splits. Smaller values indicate closer agreement between the distribution of samples generated from the fitted intensity and the empirical test distribution.}
\label{tab:earthquake_sw2}
\end{table}

Table~\ref{tab:earthquake_sw2} provides quantitative support for the visual comparison in Figure~\ref{fig:intensity_plots_combined}. The Matrix/Tensor estimators achieve smaller sliced Wasserstein-2 distances than KIE, indicating closer distributional agreement with the held-out earthquake data. Among the Matrix/Tensor configurations, \(s=2\) and \(s=3\) perform similarly and yield the smallest distances, while \(s=4\) exhibits a modest increase. This suggests that, although the Matrix/Tensor estimates are broadly stable across partition sizes, overly fine partitioning may introduce additional variability without improving held-out distributional fit.

Overall, the Matrix/Tensor estimators, \(\widehat{\lambda}_{\mathrm{Mat/Ten}}\), provide stable fitted intensities across partition sizes, closer agreement with the held-out empirical distribution under the \(\mathrm{SW}_2\) metric, and clearer recovery of localized spatial features in the longitude--latitude marginal maps. In contrast, the kernel estimator tends to oversmooth the intensity surface, attenuating the dominant California and Oklahoma hotspots as well as smaller localized regions.

Additional real data applications are provided in Appendix~\ref{sec:add-real-data}. 
These examples complement the national earthquake analysis considered here by examining different data structures and levels of aggregation. 
Appendix~\ref{sec:real_data_tornado} considers NOAA tornado events represented through four-dimensional start--end locations, treating each event as a trajectory in spatial coordinates. 
Appendix~\ref{sec:real_data_eq_regional} revisits earthquake data in a localized California--Nevada region and analyzes the catalog directly at the event level rather than through daily aggregation. 
This regional analysis highlights fine-scale localized seismic structure that is partially obscured in the national, temporally aggregated setting. 
Together, these examples show that the proposed Matrix/Tensor estimators remain effective across both aggregated point-pattern data and raw event-level observations.

\section{Conclusion}
In this paper, we introduced novel methods for estimating multivariate intensity functions in inhomogeneous point processes by utilizing low-rank matrix or tensor decompositions. By exploiting the approximately low-rank structures of square-integrable multivariate functions, our approaches effectively mitigate the curse of dimensionality both theoretically and computationally. We developed new theoretical tools to rigorously justify the statistical performance of our estimators, providing, to the best of our knowledge, the first statistical analysis of approximately low-rank tensor estimation. The error bounds on our proposed estimators expose an interesting bias-variance trade-off controlled by the user-specified approximation model's complexity (ranks), paralleling the trade-offs commonly seen in other approximate inference frameworks, e.g.~variational inference. Furthermore, this work represents the first application of matrix and tensor decompositions for intensity function estimation, opening new avenues for research in multidimensional point processes.

\newpage
\bibliographystyle{apalike}
\bibliography{ref}

\clearpage
\newpage
\appendix

\section*{Appendices}

A quick-reference list of the most frequently used symbols is provided in \Cref{tab:notation_glossary}. A concise literature review, additional simulation studies, and all relevant technical details are provided in the supplementary material. Specifically, \Cref{sec:lit_rev} offers an overview of nonparametric intensity function estimation, tensor network approximation, and low-rank tensor estimation. \Cref{add-num-res} presents further numerical results, reinforcing our primary findings and examining the robustness of our algorithms concerning the selected tuning parameters. \Cref{sec:add-real-data} presents additional real data analyses, i.e.~U.S.~tornado and regional earthquake in California–Nevada, to further evaluate the proposed Matrix/Tensor estimators.  Examples demonstrating functions exhibiting low-rank properties are given in \Cref{sec:example_low-rank}. Detailed information about Sobolev spaces and RKHS bases is discussed in \Cref{sec:approximation_theory}, while computational considerations of the proposed methods are covered in \Cref{sec:computation_cost}. Statistical guarantees for our methods under Poisson point processes and other types of point processes are rigorously proven in Appendices~\ref{sec:proofs} and \ref{sec:other_pp}, respectively. Lastly, auxiliary technical tools utilized in the paper are comprehensively documented in Appendices~\ref{sec:approx_low_rank_tensor} and \ref{sec:HilbertSpace}.

\begin{table}[!htbp]
\centering
\small
\setlength{\tabcolsep}{5pt}
\renewcommand{\arraystretch}{1}
\begin{tabular}{ll}
\toprule
Symbol & Meaning \\
\midrule
$D,\,s,\,d_j$ & ambient dimension; number of coordinate blocks; size of the $j$th block ($\sum_{j=1}^s d_j = D$) \\
$\mathbb X = \mathbb X_1\times\cdots\times \mathbb X_s$ & product domain with $\mathbb X_j \subset \mathbb R^{d_j}$ \\
$\lambda^*$ & target intensity function on $\mathbb X$ \\
$\{N^{(i)}\}_{i=1}^n$ & $n$ i.i.d.\ point processes observed on $\mathbb X$ \\
$\{\phi_{j,\mu_j}\}_{\mu_j=1}^{m^{d_j}}$ & orthonormal basis of $\mathbb L_2(\mathbb X_j)$; tensor products span the truncated $\mathbb L_2(\mathbb X)$ subspace \\
$b^*,\,\widehat b$ & population and empirical coefficient tensors in $\mathbb R^{m^{d_1}\times\cdots\times m^{d_s}}$ \\
$\mathcal M_j(\cdot)$ & mode-$j$ matricization (unfolding) of a tensor \\
$R,\,(R_1,\dots,R_s)$ & rank for the matrix-based method; target Tucker rank for the tensor-based method \\
$\alpha$ & Sobolev smoothness of $\lambda^*$ in $W_2^\alpha(\mathbb X)$ \\
$\xi_{(R)},\,\xi_{(R_1,\dots,R_s)}$ & best rank-$R$ / Tucker-rank-$(R_1,\dots,R_s)$ $\mathbb L_2$ approximation error to $\lambda^*$ \\
\bottomrule
\end{tabular}
\caption{Frequently used symbols. Detailed definitions appear in \Cref{sec:notation}.}
\label{tab:notation_glossary}
\end{table}

\section{A short literature review}\label{sec:lit_rev}
\noindent \textbf{Nonparametric intensity function estimation.} 
Classical nonparametric methods for intensity estimation are typically categorized as either kernel-based or projection-based estimators. Existing approaches within these categories focus on different aspects of nonparametric estimation, such as bandwidth selection \cite[e.g.][]{diggle1985kernel,cronie2018non,davies2018fast,van2020infill,van2024non}, choosing the number of basis functions, e.g.~Wavelet, Fourier or spline, in a way that adapts to the unknown smoothness of intensity functions \cite[e.g.][]{reynaud2003adaptive,willett2007multiscale,kroll2016concentration}, penalizing the number of basis functions or number of knots for spline-based estimators \cite[e.g.][]{choiruddin2018convex,schneble2022intensity} and Bayesian nonparametric approaches \cite[e.g.][]{taddy2012mixture,kang2014bayesian}. Recently, \cite{ward2023nonparametric} studied kernel-based estimators for Poisson point processes on a Riemannian manifold, and \cite{cronie2024cross} developed a cross-validation-based theory for point processes and applied it to kernel estimators.
Other methods exist but are often limited to specific point processes \cite[e.g.][]{cunningham2008fast,guan2008consistent,waagepetersen2009two,flaxman2017poisson}.   The approach that is most related to our method is based on Low-rank matrix/tensor approximation. In particular, \cite{miller2014factorized} use the non-negative matrix factorization to analyze $2$D intensity surfaces in basketball shot data. Relatedly, \cite{dunlavy2025poisson} used a low-rank Poisson CP decomposition of histogram count tensors, motivated by a spatial Poisson-process interpretation of the bins, for multivariate density and entropy estimation; this differs from our direct intensity-estimation framework for spatial point processes. \cite{luo2026online} also adopt low-rank matrix representations of multivariate Poisson intensities, but as a building block for online change-point detection rather than for direct intensity estimation.

To our knowledge, methods that effectively address the curse of dimensionality in nonparametric estimation of multidimensional intensity function are lacking. In fact, all the above-mentioned methods struggle in these settings, they not only suffer from rapidly growing estimation errors as the number of dimensions increases, but they are also computationally demanding and do not scale well with multidimensional data.

From a theoretical perspective, intensity estimation is often examined under two main asymptotic regimes. In the increasing-domain regime \cite[e.g.][]{guan2007thinned,baddeley2014logistic}, the domain over which points are observed expands as the sample grows. Conversely, in the infill regime \cite[e.g.][]{waagepetersen2007estimating,choiruddin2018convex}, the domain remains fixed, but the number of points within it increases. This work focuses on the latter regime, and we provide nonasymptotic analysis of our proposed estimators.

\medskip
\noindent \textbf{Tensor network approximation and low-rank tensor estimation.} Our approach intends to address the curse of dimensionality and is closely related to recent advances in tensor network representations for high-dimensional machine learning and statistical modeling, such as tensor train \citep{hur2023generative}, tensor ring \citep{khoo2017efficient} and tree/hierarchical tensor network \citep{tang2022generative,peng2023generative}. In particular, we adopt the Tucker decomposition, a specific type of the tensor network, to approximate an multidimensional intensity function with the model's complexity governed by the user-specified Tucker ranks. 

To perform low-rank estimation, we build on existing methods. In the matrix setting, techniques such as singular value thresholding (SVT) are well established \citep{chatterjee2015matrix,shah2016stochastically}. For tensors, methods including higher-order singular value decomposition \citep{de2000multilinear} and higher-order orthogonal iteration \citep{de2000best} have been extensively studied  only  in finite dimension.

Two key limitations of these tensor network approaches are that they assume the target tensor is finite-dimensional and exactly low-rank. We overcome these by developing new tools to handle infinite-dimensional Hilbert space where the target function is only approximately low-rank (in the Tucker sense), ensuring that our method remains robust and effective even when the ideal low-rank structure is only approximate.

\section{Additional Simulation Results}
\label{add-num-res}

This section reports the simulation results for the three scenarios excluded from the main paper for space, alongside a companion best-$s$ table. The simulation setup, sample size, test grid, and number of Monte Carlo replicates all follow those in \Cref{sec:simu_setup}.

\subsection{Additional scenarios}
\label{sec:supp_scenarios}
The three additional scenarios are:
\begin{enumerate}
    \item[\textbf{S5.}] (Piecewise-constant step intensity.) Three plateaus along the coordinate-mean direction:
\[
\lambda^*(x) = \begin{cases}
0.85, & \bar{x} \leq 1/3,\\
1.00, & 1/3 < \bar{x} \leq 2/3,\\
1.15, & \bar{x} > 2/3,
\end{cases}
\qquad \text{where} \quad \bar{x} = \frac{1}{D}\sum_{i=1}^{D} x_i.
\]
This intensity is non-smooth and tests whether the local hat basis can resolve discontinuities.

    \item[\textbf{S6.}] (Cosine series with polynomial coefficient decay.)
\[
\lambda^*(x) = 1 + \sum_{k=1}^{K} \sigma_k \prod_{i=1}^{D} \sqrt{2}\cos(\pi k\, x_i),
\qquad \sigma_k \propto k^{-2},
\]
with $K = 10$ and $\sum_{k=1}^{K} |\sigma_k|$ normalized to $0.5\,/\,2^{D/2}$ to keep $\lambda^*$ positive.

    \item[\textbf{S7.}] (Cosine series with exponential coefficient decay.) Same as \textbf{S6} but with $\sigma_k \propto e^{-k}$.
\end{enumerate}

\subsection{Numerical results for the additional scenarios}
\label{sec:supp_results}

Figure~\ref{fig:simu_supp} reports the relative $\mathbb{L}_2$ error of the proposed Matrix/Tensor estimator (for $m \in \{4, 6, 8\}$, with the partition size $s$ selected to minimize the rep-mean error at each $(m, D)$) and of the kernel intensity estimator (KIE) across the three additional scenarios. Table~\ref{tab:argmin_s_supp} reports the selected $s$ for every $(\text{scenario}, m, D)$.

\begin{figure}[ht]
\centering
\includegraphics[width=\textwidth]{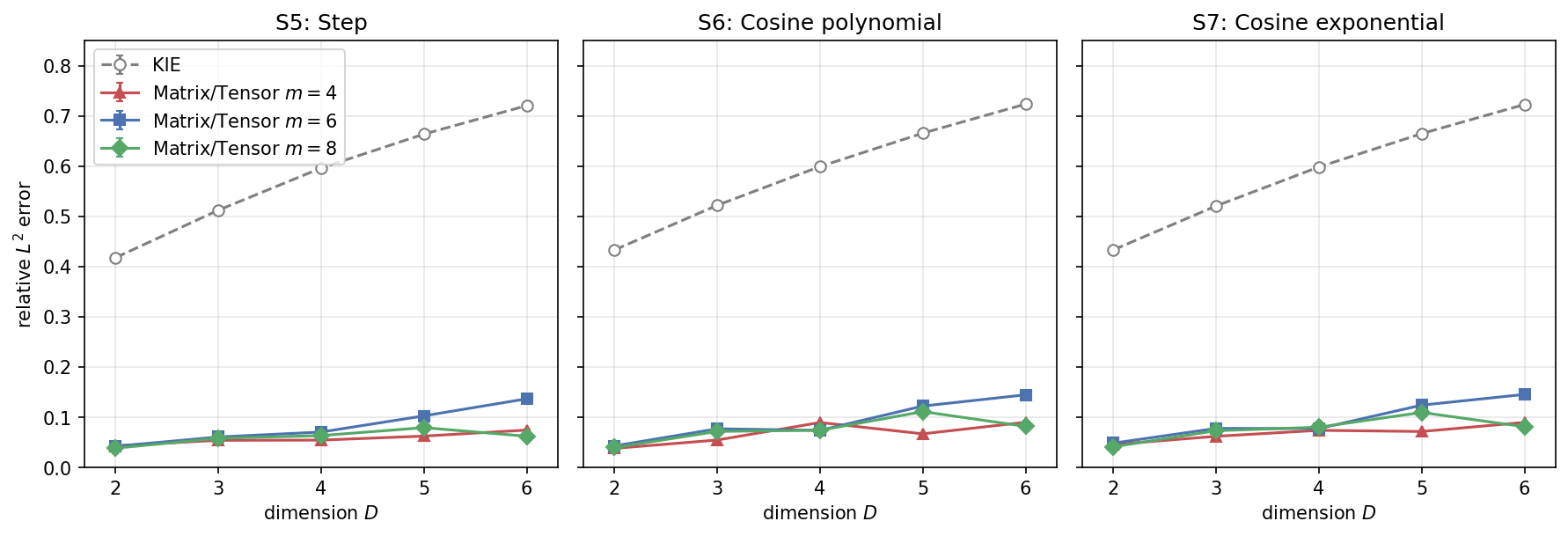}
\caption{Relative $\mathbb{L}_2$ error vs.\ dimension $D$ for the three supplementary scenarios. KIE is shown as a dashed grey curve; the Matrix/Tensor estimator is shown for $m \in \{4, 6, 8\}$, in each case at the partition size $s$ that minimizes the averaged error. Markers are means over $100$ Monte Carlo replicates; vertical bars show the standard deviation.}
\label{fig:simu_supp}
\end{figure}

\begin{table}[ht]
\centering
\small
\setlength{\tabcolsep}{8pt}
\begin{tabular}{c|ccc}
\toprule
 & $m=4$ & $m=6$ & $m=8$ \\
\midrule
S5 & $2,\,3,\,4,\,4,\,5$ & $2,\,3,\,4,\,4,\,5$ & $2,\,3,\,4,\,5,\,6$ \\
S6 & $2,\,2,\,4,\,5,\,5$ & $2,\,3,\,4,\,5,\,5$ & $2,\,3,\,4,\,5,\,6$ \\
S7 & $2,\,2,\,4,\,5,\,5$ & $2,\,3,\,4,\,5,\,5$ & $2,\,3,\,4,\,5,\,6$ \\
\bottomrule
\end{tabular}
\caption{Partition size $s \in \{2, \ldots, D\}$ that minimizes the rep-mean relative $\mathbb{L}_2$ error of the Matrix/Tensor estimator for the supplementary scenarios. Each cell lists the selected $s$ at $D = 2, 3, 4, 5, 6$ in that order.}
\label{tab:argmin_s_supp}
\end{table}

The qualitative findings mirror those in the main paper. The Matrix/Tensor estimator outperforms KIE at every $(\text{scenario}, D)$, with the largest gains at low to moderate $D$ for the smooth structured scenarios (\textbf{S6} and \textbf{S7}) and meaningful gains for the non-smooth step intensity (\textbf{S5}) at every $D$. As in the main paper, the three Matrix/Tensor curves ($m \in \{4, 6, 8\}$) are close in most cells, again indicating robustness to the choice of $m$.

\section{Additional Real Data Applications}
\label{sec:add-real-data}

In this section, we present additional real data analyses to further evaluate the proposed Matrix/Tensor estimators beyond the earthquake application in \Cref{sec:real_data}. These datasets differ in their structure and interpretation, providing complementary evidence of the method’s ability to capture multivariate spatial patterns and localized intensity features.

\subsection{Tornado events in the U.S.}
\label{sec:real_data_tornado}

We apply the proposed methods to tornado events obtained from the NOAA Storm Events Database, available at \url{https://www.ncei.noaa.gov/pub/data/swdi/stormevents/csvfiles/}. We use records from 2020--2024 and retain events with valid starting and ending spatial coordinates. The resulting dataset contains 7,832 tornado events. The number of retained events by year is 1,253 in 2020, 1,536 in 2021, 1,384 in 2022, 1,522 in 2023, and 2,137 in 2024.

Each tornado event is represented as a point in the four-dimensional space
\[
(\text{start longitude},\text{start latitude},\text{end longitude},\text{end latitude}),
\]
which encodes its trajectory. The coordinate ranges before normalization are longitude from $-149.66$ to $-65.88$ and latitude from $18.17$ to $61.13$, and all coordinates are rescaled to $[0,1]^4$ prior to estimation.

In contrast to the earthquake application in \Cref{sec:real_data}, which aggregates events at the daily level and treats each day as a point-pattern realization, the tornado dataset is analyzed at the event level: each tornado contributes one observation in the four-dimensional space. This corresponds to estimating a joint first-order intensity over trajectory endpoints, capturing both the spatial distribution of tornado origins and their displacement patterns.

Since no ground truth intensity is available, we evaluate the fitted estimators using two complementary forms of evidence. First, we examine longitude--latitude marginal intensity maps for both starting and ending locations, which provide a direct visual comparison with the held-out empirical trajectory distribution. Second, we support this qualitative comparison using the sliced Wasserstein-2 distance between samples generated from each fitted estimator and the held-out test data.

Figure~\ref{fig:tornado_maps} displays longitude--latitude marginal intensity maps for both starting and ending locations, obtained by integrating out the complementary coordinates. 
The empirical reference map $\lambda_{\mathrm{Test}}$ is constructed by pooling the held-out tornado events, binning them over the longitude--latitude grid, and normalizing by the total number of test observations and the grid-cell area.

\begin{figure}[]
\centering
\includegraphics[width=1\textwidth]{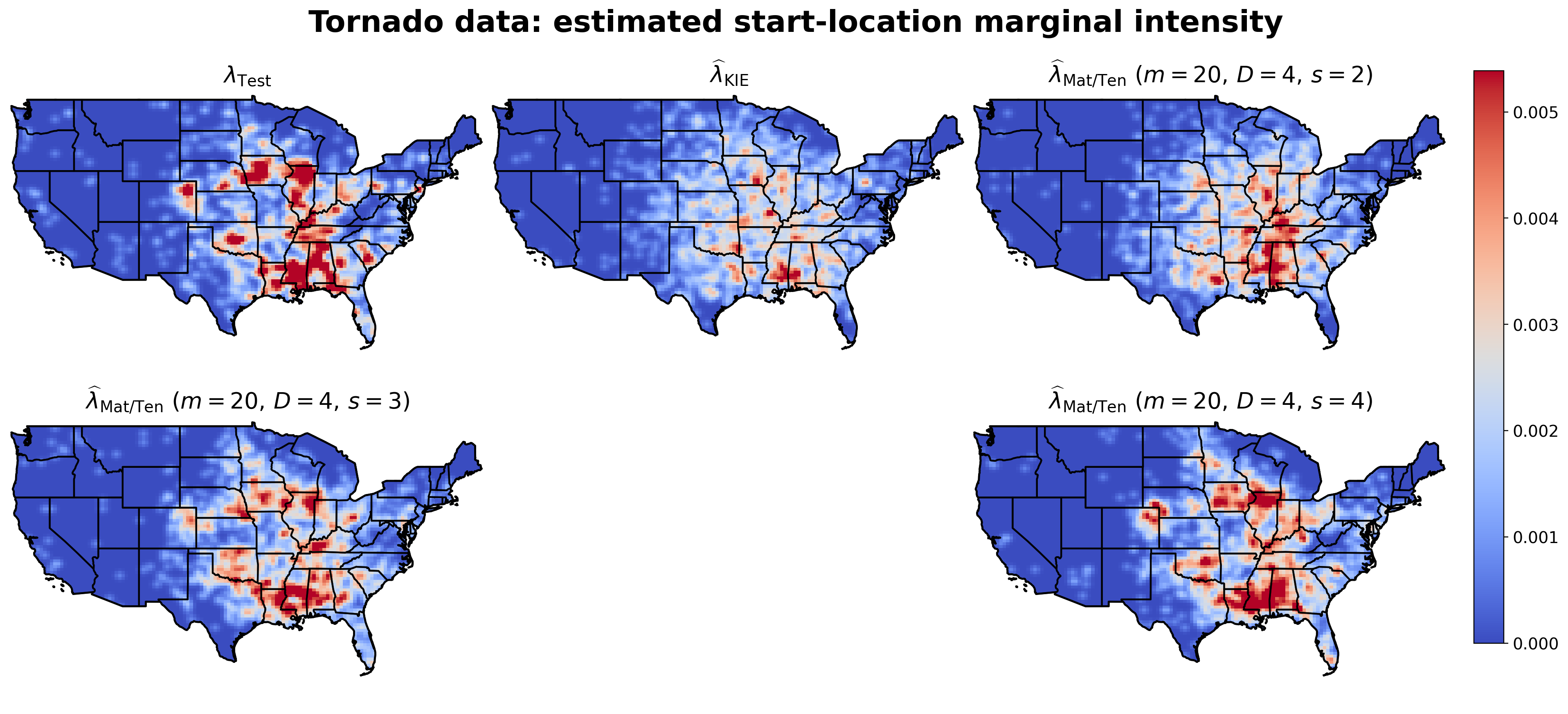}

\includegraphics[width=1\textwidth]{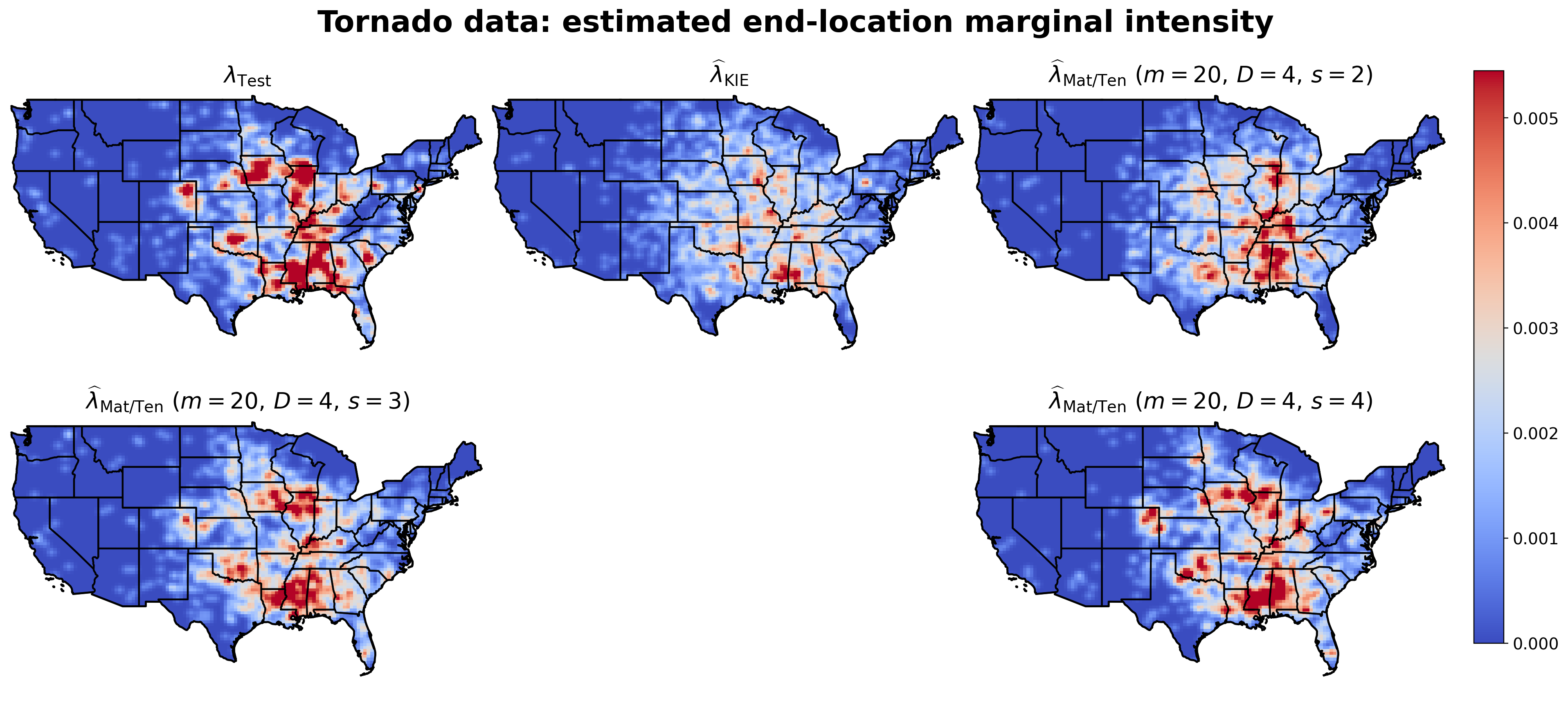}
\caption{Longitude--latitude marginal intensity maps for the tornado dataset. Top: start-location marginal intensity. Bottom: end-location marginal intensity. For each, the first two panels correspond to the empirical reference $\lambda_{\mathrm{Test}}$ and $\widehat{\lambda}_{\mathrm{KIE}}$, followed by $\widehat{\lambda}_{\mathrm{Mat/Ten}}$ for $s=2,3,4$.}
\label{fig:tornado_maps}
\end{figure}

The empirical reference maps reveal a clear concentration of tornado activity in the central and southeastern United States, consistent with well-documented regions such as ``Tornado Alley'' and ``Dixie Alley'' \citep{brooks2003climatological, gagan2010historical}. These regions include parts of Texas, Oklahoma, Kansas, Arkansas, Mississippi, and Alabama.

In addition, the maps exhibit a pronounced extension of activity into the Midwest and lower Ohio Valley, including portions of Illinois, Indiana, Iowa, Tennessee, and Kentucky. The empirical intensity forms a continuous band of elevated activity linking the Southeast and the Midwest. This pattern is consistent with prior climatological studies showing that tornado occurrence follows an irregular semicircular region of elevated risk extending from the Great Plains through the Southeast and into the Midwest \citep{cao2021geographic, brooks2003climatological, dixon2011tornado}. In particular, previous work has documented that elevated risk encompasses large portions of Iowa and Illinois, as well as shifted activity centers including Tennessee and Kentucky, highlighting the continuity of tornado-prone regions across the central United States.

The estimators $\widehat{\lambda}_{\mathrm{Mat/Ten}}$ recover these dominant spatial patterns across all values of $s$, with the sharpest localization for $s=4$, closely followed by $s=3$, while $s=2$ produces a more diffuse representation. In particular, the empirical map shows a prominent band of elevated activity that appears to originate around Oklahoma and extend through eastern Texas into Louisiana, Mississippi, and Alabama, where the intensity is strongest. This southward extension is clearly captured by the Matrix/Tensor estimators, especially for $s=4$, while appearing less sharply defined for $s=3$ and more attenuated for $s=2$. In contrast, $\widehat{\lambda}_{\mathrm{KIE}}$ largely fails to recover this pattern, substantially diffusing the signal and reducing both its spatial coherence and peak intensity.

Beyond these large-scale patterns, finer spatial features further differentiate the methods. In the central Plains and extending into the Midwest and lower Ohio Valley---spanning regions such as Iowa, Illinois, Kentucky, and Tennessee---the empirical reference map exhibits a connected band of elevated activity. This feature is closely matched by $\widehat{\lambda}_{\mathrm{Mat/Ten}}(s=4)$, with $s=3$ capturing most of the pattern and $s=2$ providing a more diffuse approximation. In contrast, $\widehat{\lambda}_{\mathrm{KIE}}$ smooths over this structure, weakening its continuity and attenuating localized peaks.

A similar contrast is observed when examining additional localized features across the domain. In the Ohio Valley, particularly across parts of Kentucky, Indiana, and Ohio, the empirical reference map exhibits a coherent region of elevated intensity. This feature is well recovered by $\widehat{\lambda}_{\mathrm{Mat/Ten}}(s=4)$ and, to a slightly lesser extent, by $s=3$, while $s=2$ yields a more diffuse representation. In contrast, $\widehat{\lambda}_{\mathrm{KIE}}$ substantially attenuates this structure, failing to preserve its spatial coherence.

Another distinct localized feature appears in the Florida peninsula, where the empirical map shows a clear region of elevated intensity. This feature is captured by the Matrix/Tensor estimators---most clearly for $s=4$, followed by $s=3$---but is largely absent under $\widehat{\lambda}_{\mathrm{KIE}}$, which smooths out the signal.

Finally, in the central Plains, a localized feature near the intersection of Colorado, Nebraska, and Kansas appears in the empirical map and is recovered by the Matrix/Tensor estimators, particularly for $s=4$, followed by $s=3$. This feature is not clearly represented by the kernel estimator, which diffuses the signal over a broader area.

We next support this visual comparison with a distributional metric computed on the full four-dimensional normalized trajectory space. Specifically, we report the sliced Wasserstein-2 ($\mathrm{SW}_2$) distance between samples generated from each fitted estimator and the held-out test data, as defined in \Cref{sec:real_data}. Smaller values indicate closer agreement between the distribution of samples generated from the fitted intensity and the empirical distribution of the held-out data.

\begin{table}[ht]
\centering
\scalebox{0.9}{
\begin{tabular}{lc}
\toprule
\textbf{Estimator} & $\mathbf{\mathrm{SW}_2(\widehat{\lambda},\,\text{test})}$ \\
\midrule
$\widehat{\lambda}_{\mathrm{Mat/Ten}}(s=4)$ & $\mathbf{0.0959}$ \\
$\widehat{\lambda}_{\mathrm{Mat/Ten}}(s=3)$ & $0.1064$ \\
$\widehat{\lambda}_{\mathrm{Mat/Ten}}(s=2)$ & $0.1431$ \\
$\widehat{\lambda}_{\mathrm{KIE}}$ & $0.1793$ \\
\bottomrule
\end{tabular}}
\caption{Sliced Wasserstein-2 ($\mathrm{SW}_2$) distance between samples generated from each fitted estimator and the held-out test set for the tornado dataset, averaged over $30$ random splits. Smaller values indicate closer agreement with the empirical distribution.}
\label{tab:tornado_sw2}
\end{table}

Table~\ref{tab:tornado_sw2} provides quantitative support for the visual comparison in Figure~\ref{fig:tornado_maps}. The Matrix/Tensor estimators achieve smaller sliced Wasserstein-2 distances than KIE, indicating closer distributional agreement with the held-out tornado data. Among the Matrix/Tensor configurations, $s=4$ yields the smallest distance, followed closely by $s=3$, while $s=2$ performs noticeably worse. This ordering is consistent with the spatial maps, where the higher-resolution tensor factorizations capture the localized trajectory structure more sharply than the coarser $s=2$ estimator.

Overall, these findings are consistent with those in \Cref{sec:real_data}: the proposed Matrix/Tensor estimators effectively capture both large-scale spatial patterns and finer localized structures, while kernel-based methods tend to oversmooth and obscure these features. The spatial maps therefore explain the quantitative results above: higher-resolution tensor decompositions, especially $s=4$ and $s=3$, better represent the localized trajectory structure in the data and consequently achieve smaller $\mathrm{SW}_2$ distances to the held-out sample.

\subsection{Regional earthquake data: California--Nevada}
\label{sec:real_data_eq_regional}

We further apply the proposed methods to a regional subset of the U.S.~Geological Survey Earthquake Catalog, available at \url{https://earthquake.usgs.gov/earthquakes/search/}. This dataset complements the national earthquake analysis in \Cref{sec:real_data}. Unlike the nationwide, daily-aggregated setting considered there, we focus on a California--Nevada region and analyze the data at the event level.

The filtered dataset contains 15,150 seismic events recorded from 2015-01-01 to 2025-12-30, with $D=4$ attributes: latitude, longitude, depth, and magnitude. The events are overwhelmingly earthquakes (15,108 records), with a small number of mining explosions, quarry blasts, explosions, and sonic booms. Since these non-earthquake records account for less than one percent of the data, we retain them in the analysis.

The geographic region is defined by latitude in $[32.53,\,41.99]$ and longitude in $[-124.41,\,-114.18]$, corresponding primarily to California, Nevada, and nearby areas. Depth ranges from $-2.30$ to $74.16$, and magnitude from $2.5$ to $7.1$. All variables are rescaled to $[0,1]^4$ prior to estimation.

Each event is represented as a point in the four-dimensional space
\[
(\text{latitude}, \text{longitude}, \text{depth}, \text{magnitude}).
\]
This representation corresponds to estimating a joint first-order intensity over spatial location and event characteristics, capturing both the geographic distribution of seismic activity and the variation in depth and magnitude across the region.

From the perspective of \Cref{sec:math_setup}, this dataset departs from the multi-realization setting considered in the main analysis. Rather than observing $n$ independent point processes $\{N^{(i)}\}_{i=1}^n$, we observe a single realization with a large number of events, so inference is driven by the joint distribution of spatial location, depth, and magnitude.

Accordingly, unlike the temporally aggregated analysis in \Cref{sec:real_data}, the event-level representation emphasizes fine-scale spatial and structural variation within a seismically active region, providing a complementary setting focused on recovering localized features of the intensity function.

Since no ground truth intensity is available, we evaluate the fitted estimators using two complementary forms of evidence. First, we examine longitude--latitude marginal intensity maps, which provide a direct visual comparison with the held-out empirical spatial distribution. Second, we support this qualitative comparison using the sliced Wasserstein-2 distance between samples generated from each fitted estimator and the held-out test data.

We begin with the spatial marginal comparison. Figure~\ref{fig:regional_eq_maps} displays longitude--latitude marginal intensity maps obtained by integrating out depth and magnitude. As in \Cref{sec:real_data}, the empirical reference map $\lambda_{\mathrm{Test}}$ is constructed by pooling the held-out events, binning them over a longitude--latitude grid, and normalizing by the number of test observations and the grid-cell area.

\begin{figure}[]
\centering
\includegraphics[width=1\textwidth]{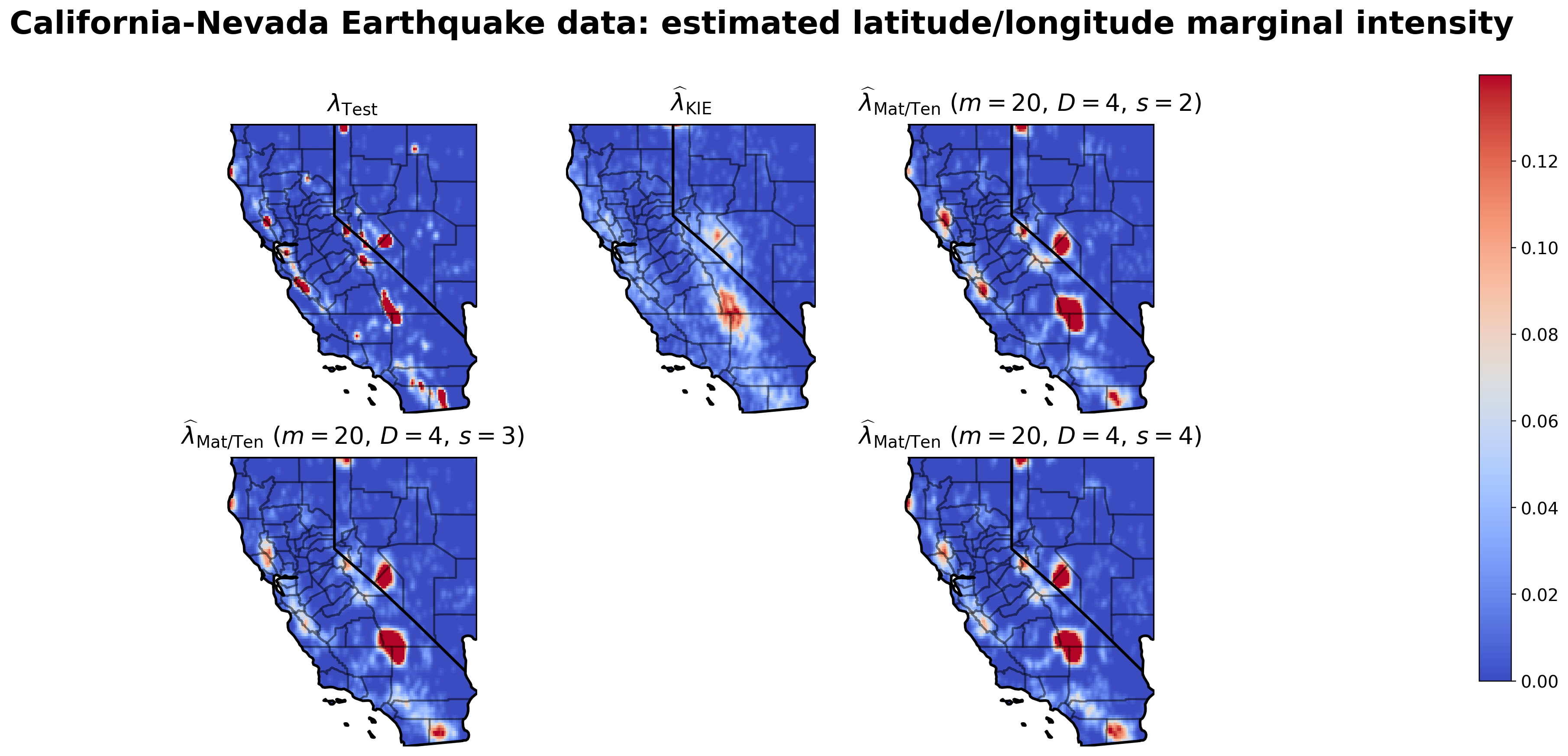}
\caption{Longitude--latitude marginal intensity maps for the California--Nevada earthquake dataset. The first two panels correspond to the empirical reference $\lambda_{\mathrm{Test}}$ and $\widehat{\lambda}_{\mathrm{KIE}}$, followed by $\widehat{\lambda}_{\mathrm{Mat/Ten}}$ for $s=2,3,4$.. The maps correspond to marginal intensities over longitude and latitude obtained by integrating out depth and magnitude.}
\label{fig:regional_eq_maps}
\end{figure}

The empirical reference map exhibits multiple highly localized regions of elevated seismic activity, rather than a single continuous structure. These appear as distinct, isolated clusters distributed across the California coast, the California--Nevada boundary, and southern California, together with additional localized activity in western Nevada.

In particular, six prominent hotspots are visible when ordered from north to south. A compact coastal cluster is observed in far northern California around Humboldt County. Slightly inland, a small isolated hotspot appears in western Nevada, centered in Washoe County. Moving south, a second coastal cluster is visible in the San Francisco Bay Area. Along the central California--Nevada boundary, a compact cluster is present in the Alpine--Mono region, extending into western Nevada. This is followed further south by a broader inland cluster centered in Inyo County, extending toward the Kern and San Bernardino boundary. Finally, in southern California, seismic activity is concentrated along the Imperial--San Diego--Riverside region, forming an elongated structure near the southern border.

These patterns highlight the fragmented and heterogeneous spatial organization of seismicity in this region, with clearly separated hotspots occurring across multiple spatial scales.

The Matrix/Tensor estimators $\widehat{\lambda}_{\mathrm{Mat/Ten}}$ recover these localized structures much more faithfully than the kernel estimator. Across the six regions described above, the estimator with $s=2$ provides the best overall balance between capturing spatial extent and preserving global structure, while $s=4$ yields sharper but sometimes overly localized representations. The estimator with $s=3$ tends to produce more diffuse and less stable structures.

Proceeding in the same north-to-south order, the northern coastal cluster in Humboldt County is best recovered by $\widehat{\lambda}_{\mathrm{Mat/Ten}}(s=4)$, which preserves a localized peak in the correct coastal region. The estimator with $s=2$ also captures this hotspot, although less sharply, while $s=3$ gives a more diffuse representation. In contrast, $\widehat{\lambda}_{\mathrm{KIE}}$ does not clearly recover this localized feature.

The nearby northern Nevada hotspot (Washoe County) is clearly captured by $s=2$, while it appears weaker and slightly less aligned for $s=4$ and $s=3$. This isolated feature is not clearly recovered by $\widehat{\lambda}_{\mathrm{KIE}}$.

The Bay Area cluster is consistently identified by the Matrix/Tensor estimators. The estimator with $s=2$ better preserves its spatial spread and multi-peak structure, while $s=4$ yields a more concentrated representation of the dominant peaks; $s=3$ again appears more diffuse. This cluster is not clearly recovered by $\widehat{\lambda}_{\mathrm{KIE}}$, which substantially attenuates the localized structure.

Along the central California--Nevada boundary, the compact Alpine--Mono cluster is recovered by all Matrix/Tensor estimators. The representations for $s=2$ and $s=4$ are very similar and closely match the location and separation observed in the empirical map, while $s=3$ yields a slightly more diffuse and less well-defined structure. In contrast, $\widehat{\lambda}_{\mathrm{KIE}}$ does not clearly resolve this cluster and instead merges it into a diffuse background.

Further south, the inland Inyo cluster is the most prominent feature and is recovered across all values of $s$, with $s=2$, $s=3$, and $s=4$ producing very similar representations in both location and shape. In contrast, $\widehat{\lambda}_{\mathrm{KIE}}$ also captures this region, but in a substantially more diffuse manner with reduced peak intensity.

In southern California, the Imperial--San Diego--Riverside region forms an extended chain of activity that is best preserved by $s=2$, which maintains the continuity of the structure. The estimators with $s=3$ and $s=4$ recover this region as well, but tend to smooth or fragment the chain into less clearly connected components. In contrast, $\widehat{\lambda}_{\mathrm{KIE}}$ does not clearly recover this southern structure and largely misses the associated localized activity.

Taken together, the kernel estimator $\widehat{\lambda}_{\mathrm{KIE}}$ substantially oversmooths the intensity surface. While it captures the dominant inland Inyo region, it does so in a diffuse manner and does not clearly recover most of the smaller, well-separated hotspots.

We next support this visual comparison with a distributional metric computed on the full four-dimensional normalized event space. Specifically, we report the sliced Wasserstein-2 ($\mathrm{SW}_2$) distance between samples generated from each fitted estimator and the held-out test data, as defined in \Cref{sec:real_data}. Smaller values indicate closer agreement between the distribution of samples generated from the fitted intensity and the empirical distribution of the held-out events.

\begin{table}[ht]
\centering
\scalebox{0.9}{
\begin{tabular}{lc}
\toprule
\textbf{Estimator} & $\mathbf{\mathrm{SW}_2(\widehat{\lambda},\,\text{test})}$ \\
\midrule
$\widehat{\lambda}_{\mathrm{Mat/Ten}}(s=2)$ & $\mathbf{0.0636}$ \\
$\widehat{\lambda}_{\mathrm{Mat/Ten}}(s=4)$ & $0.0672$ \\
$\widehat{\lambda}_{\mathrm{Mat/Ten}}(s=3)$ & $0.0733$ \\
$\widehat{\lambda}_{\mathrm{KIE}}$ & $0.0815$ \\
\bottomrule
\end{tabular}}
\caption{Sliced Wasserstein-2 ($\mathrm{SW}_2$) distance between samples generated from each fitted estimator and the held-out test set for the California--Nevada earthquake dataset, averaged over $30$ random splits. Smaller values indicate closer agreement with the empirical distribution.}
\label{tab:regional_eq_sw2}
\end{table}

Table~\ref{tab:regional_eq_sw2} provides quantitative support for the visual comparison in Figure~\ref{fig:regional_eq_maps}. The Matrix/Tensor estimators achieve smaller sliced Wasserstein-2 distances than KIE, indicating closer distributional agreement with the held-out California--Nevada earthquake data. Among the Matrix/Tensor configurations, $s=2$ yields the smallest distance, followed by $s=4$ and $s=3$. This ordering is consistent with the spatial maps, where $s=2$ provides the best balance between capturing fine-scale localized features and preserving the broader spatial organization of the region.

Overall, this example complements the findings in \Cref{sec:real_data} by highlighting the advantage of the proposed Matrix/Tensor estimators in settings with strong spatial localization. The $\mathrm{SW}_2$ results indicate closer agreement with the empirical distribution of the held-out events, while the spatial analysis explains this improvement by showing that moderate partitioning ($s=2$) provides the best balance between capturing fine-scale structure and preserving global spatial organization. In contrast, kernel-based methods tend to oversmooth and obscure localized features.

\subsection{Relation to ETAS Models in the Earthquake Application}\label{sec:ETAS_earthquake}
Epidemic-Type Aftershock Sequence (ETAS) models \citep{ogata1988statistical}, built on self-exciting (Hawkes) processes \citep{hawkes1974cluster}, are a leading framework for earthquake catalog analysis, declustering, and short-term forecasting. To clarify the relationship between our method and this widely used class of models, we contrast their inferential goals, compare their modeling structures, and identify a shared nonparametric ingredient.

\smallskip
\noindent\textbf{Inferential goal.}
ETAS targets two distinct objects: the \emph{history-dependent conditional intensity} together with a separately estimated \emph{background spatial intensity}. In the standard space--time formulation \citep{zhuang2002stochastic}, an earthquake catalog is represented by event tuples
\(
(t_i,x_i,y_i,m_i),\,i=1,2,\ldots,
\)
where $t_i$ is the occurrence time, $(x_i,y_i)$ is the epicentral location, and $m_i$ is the magnitude of the $i$th earthquake. Let
\(
\mathcal{H}_t
=
\sigma\!\left(\{(t_i,x_i,y_i,m_i):t_i<t\}\right)
\)
denote the $\sigma$-field generated by all events prior to time $t$. The ETAS conditional intensity is then written as
\[
\lambda_{\mathrm{ETAS}}(t,x,y \mid \mathcal{H}_t)
=
\mu(x,y)
+
\sum_{i:t_i<t}
\kappa(m_i)\, g(t-t_i)\, f(x-x_i,y-y_i \mid m_i),
\]
where $\mu(x,y)$ is the background spatial intensity, and $(\kappa,g,f)$ are parametric triggering kernels capturing magnitude-dependent productivity, temporal decay, and magnitude-dependent spatial spread of aftershocks. Our method, by contrast, targets a single object---the \emph{joint history-averaged first-order intensity}
\[
\lambda^*(u)
=
\mathbb{E}\!\left[N(\mathrm{d}u)\right]/\mathrm{d}u,
\qquad u\in\mathbb{X}\subset\mathbb{R}^D,
\]
defined on the full attribute space without conditioning on past events. In the earthquake application, the four physical attributes $(\mathrm{lat},\mathrm{lon},\mathrm{depth},\mathrm{mag})$ are coordinatewise rescaled to the unit cube, giving $u\in[0,1]^4$, and $\lambda^*(u)$ is estimated by basis projection and low-rank tensor regularization (Section~\ref{sec:method}).

\smallskip
\noindent\textbf{Modeling structure.}
The two approaches also differ substantially in their modeling structure. ETAS imposes a parametric self-excitation form on the triggering component while leaving the background nonparametric on the 2D spatial plane, with the two components fit under different criteria (history-conditional likelihood for the triggering kernels; weighted kernel smoothing for the background). Our method instead places a single nonparametric basis representation on the full 4D attribute space and imposes low-rank tensor structure on the resulting coefficient tensor, fit under a unified criterion evaluated via the
sliced Wasserstein-2 distance between samples generated from each fitted estimator and
the held-out test data. The 4D estimator captures structured interactions across location, depth, and magnitude that 2D ETAS background estimators do not directly model, and tensor low-rank regularization remains tractable in multivariate mark spaces where kernel-based estimators are constrained by the curse of dimensionality. Explicit modeling of self-excitation, aftershock cascades, and external drivers, however, remains the domain of ETAS-type models.

\smallskip
\noindent\textbf{A shared nonparametric ingredient.}
Although recovering the 2D spatial intensity is not the primary target of our method, our framework can be used to obtain it as a byproduct of the joint 4D estimator. ETAS-based methods such as \citet{zhuang2002stochastic} estimate the spatial background $\mu(x,y)$ nonparametrically on the 2D plane, typically via a kernel-based method with each event weighted by its inferred background probability. Our method, in contrast, induces a 2D spatial first-order intensity by marginalizing the 4D basis-plus-low-rank-tensor estimator. Specifically, following the marked-process discussion in \Cref{remark:beyond_poisson}, identify the spatial coordinates with
\(
x_0=(\mathrm{lat},\mathrm{lon})
\)
and the remaining marks with
\(
m=(\mathrm{depth},\mathrm{mag}).
\)
Then the joint first-order intensity can be written as $\lambda^*(x_0,m)$, and its spatial marginal is
\(
\lambda_0^*(x_0)
=
\int_{[0,1]^2}
\lambda^*(x_0,\mathrm{depth},\mathrm{mag})
\,\d(\mathrm{depth})\,\d(\mathrm{mag}).
\)
Similarly, our estimator induces
\(
\widehat{\lambda}_0(x_0)
=
\int_{[0,1]^2}
\widehat{\lambda}(x_0,\mathrm{depth},\mathrm{mag})
\,\d(\mathrm{depth})\,\d(\mathrm{mag}).
\)
Because $\widehat{\lambda}$ is represented in a tensor-product basis, this marginalization reduces to finite sums involving one-dimensional basis integrals over the depth and magnitude coordinates; as noted in \Cref{remark:beyond_poisson}, these integrals depend only on the chosen basis and can be precomputed. Thus, both approaches can produce a 2D spatial first-order intensity, but through different routes: ETAS-based methods estimate a background component on the projected spatial plane, whereas our method obtains a spatial marginal by integrating a structured 4D estimator. The latter can borrow strength from depth and magnitude when the joint intensity exhibits low-rank coupling across coordinates.

\color{Black}

\section{Examples of low-rank functions}\label{sec:example_low-rank}
Let $A: \mathbb{X} \to \mathbb R$ be an $D$-variable function in $\mathbb{L}_2(\mathbb{X})$, where $\mathbb{X} = \mathbb{X}_1 \times \cdots \times \mathbb{X}_s \subset \mathbb R^{D}$, with $\mathbb{X}_j \subset \mathbb{R}^{d_j}$ for all $j \in [s]$ and $2 \leq s \leq D$.  For each $j$, let $x_{-j} = (x_1, \dots, x_{j-1}, x_{j+1}, \dots, x_s) \in \mathbb{X}_{-j} = \mathbb{X}_1 \times \cdots \times \mathbb{X}_{j-1} \times \mathbb{X}_{j+1} \times \cdots \times \mathbb{X}_s \subset \mathbb{R}^{D-d_1}$.
\subsection{Example 1: Additive functions}
In nonparametric multiple regression \citep{friedman1981projection}, it is often assumed that the unknown function $A$ is additive, in the sense that for all $2 \leq s \leq D$
\[
A(x_1, \dots, x_s) = A_1(x_1) + \cdots + A_s(x_s), \;\; \text{for all} \;\; (x_1, \dots, x_s) \in \mathbb{X}.
\]
Rewrite the above equation in the form of the function SVD, for each $j \in [s]$,
\[
A(x_j, x_{-j}) = \{A_j(x_j) \cdot 1\} + \{1 \cdot A_{-j}(x_{-j})\},
\]
where $A_{-j}(x_{-j}) = \sum_{l \neq j}A_l(x_l)$.
This indicates that the Tucker rank of $A$ is $(2, \dots, 2)$.

\subsection{Example 2: Multiplicative functions}
It is also commonly assumed that the unknown function $A$ is multiplicative \citep{blei2017variational}, in the sense that for all $2 \leq s \leq D$
\[
A(x_1, \dots, x_s) = A_1(x_1) \cdots   A_s(x_s), \;\; \text{for all} \;\; (x_1, \dots, x_s) \in \mathbb{X}.
\]
Rewrite the above equation in the form of the function SVD, for each $j \in [s]$,
\[
A(x_j, x_{-j}) = A_j(x_j) \cdot A_{-j}(x_{-j}),
\]
where $A_{-j}(x_{-j}) = \prod_{l \neq j}A_l(x_l)$.
This indicates that the Tucker rank of $A$ is $(1, \dots, 1)$. Note that multiplicative functions are special cases of the mean-field models given in the next example.

\subsection{Example 3: Mean-field models}
Mean-field theory is widely used in computational physics, Bayesian statistics, and statistical mechanics.    One of the main challenges in solving statistical mechanics models is the existence of correlations in the system arising from interactions between particles. If we can approximate the model with a non-interacting counterpart, solving it becomes significantly simpler. Mean-field approximation treats these variables as independent and simplifies the complexity of handling their interactions.  We refer the readers to \cite{blei2017variational} for more details.

Specifically, an unknown density function $A : \mathbb{X} \to \mathbb{R}_+$ can be well approximated by a mixture of mean-field densities. Let $\{\tau_{\rho}\}_{\rho = 1}^{r}$ be a sequence of probabilities summing to $1$. In the mean-field mixture model, with probability $\tau_{\rho}$, data are sampled from a mean-field density 
\[
A_{\rho}(x_1, \dots, x_s) = A_{\rho,1}(x_1) \cdots A_{\rho,s}(x_s).
\]
Thus,
\begin{align*}
A(x_1, \dots, x_s) =& \sum_{\rho = 1}^{r}\tau_{\rho}A_{\rho,1}(x_1) \cdots A_{\rho,s}(x_s)\\
=& \sum_{\rho = 1}^{r}\tau_{\rho}A_{\rho,j}(x_j) \cdot A_{\rho,-j}(x_{-j}),
\end{align*}
where $A_{\rho,-j}(x_{-j}) = \prod_{l \neq j}A_{\rho,l}(x_l)$, for each $j \in [s]$. This indicates that the Tucker rank of $A$ is $(r, \dots, r)$.

\subsection{Example 4: Multivariate Taylor expansion}
Consider a function $A$ that is continuously differentiable up to order $\alpha$. By Taylor's theorem, for points $x = (x_1, \dots, x_s) \in \mathbb{X}$ and $t = (t_1, \dots, t_s) \in \mathbb{X}$, we have
\[
A(x) \approx T_t(x) = A(t) + \sum_{k=1}^\alpha \frac{1}{k!} \mathcal D^k A(t, x - t),
\]
where $\mathcal D^k A(l, m) = \sum_{i_1, \dots, i_k = 1}^s \partial_{i_1}\cdots\partial_{i_k}A(l) \cdot m_{i_1}\cdots m_{i_k}$, for $l,m \in \mathbb{X}$.  For simplicity, consider \( t = 0 \in \mathbb{X} \), and then the expansion becomes
\[
T_0(x) = A(0) + \sum_{i=1}^s \partial_i A(0) x_i + \frac{1}{2!} \sum_{i=1}^s
 \sum_{j=1}^s \partial_{i}\partial_{j} A(0) x_i x_j + \cdots + \frac{1}{\alpha!} \sum_{i_1, \dots, i_\alpha=1}^s \partial_{i_1}\cdots\partial_{i_\alpha} A(0) x_{i_1} \cdots x_{i_\alpha}.
\]
Rewrite the above equation in the form of the function SVD, we have that $A$ can be well-approximated by a finite-rank function with the Tucker rank $(\alpha + 1, \dots, \alpha+1)$.

\section{Sobolev space and RKHS basis}\label{sec:approximation_theory}
The approximation error between a function $A: \mathbb{X} \to \mathbb{R}$ and its projection onto a finite-dimensional tensor product subspace relies on both the smoothness of $A$ and the choice of orthonormal basis functions.

Let \(\mathbb{X}  \subset \mathbb{R}^D\) be any measurable set. For a multi-index \(\beta = (\beta_1, \dots, \beta_D) \in \mathbb{N}^D\) and a function \(f : \mathbb{X} \to \mathbb{R}\), the \(\beta\)-derivative of \(f\) is defined as
\[
\mathcal D^\beta f = \partial_1^{\beta_1} \cdots \partial_D^{\beta_D} f.
\]
The Sobolev space \(W_2^\alpha(\mathbb{X})\) is defined as
\[
W_2^\alpha(\mathbb{X}) = \{ f \in \mathbb{L}_2(\mathbb{X}) : \mathcal D^\beta f \in \mathbb{L}_2(\mathbb{X}) \text{ for all } |\beta|_1 \leq \alpha \},
\]
where \(|\beta|_1 = \beta_1 + \cdots + \beta_D\), and \(\alpha\) represents the total order of derivatives. The (squared) Sobolev norm of \(f \in W_2^\alpha(\mathbb{X})\) is
\[
\|f\|_{W_2^\alpha(\mathbb{X})}^2 = \sum_{0 \leq |\beta|_1 \leq \alpha} \|\mathcal D^\beta f\|_{\mathbb{L}_2(\mathbb{X})}^2.
\]

We briefly introduce the reproducing kernel Hilbert space (RKHS).
For \(x, y \in \Omega\), let \(\mathcal{K} : \Omega \times \Omega \to \mathbb{R}\) be a continuous and positive semidefinite kernel function such that
\begin{align}\label{eq:kernel_fct}
\mathcal{K}(x, y) = \sum_{k=1}^\infty \lambda_k^\mathcal{K} \psi_k^\mathcal{K}(x) \psi_k^\mathcal{K}(y),
\end{align}
where \(\{\lambda_k^\mathcal{K}\}_{k=1}^\infty \subset \mathbb{R}_+ \cup \{0\}\) are eigenvalues ordered non-increasingly, and \(\{\psi_k^\mathcal{K}\}_{k=1}^\infty\) is a collection of basis functions in \(\mathbb{L}_2(\Omega)\).

The reproducing kernel Hilbert space generated by the kernel \(\mathcal{K}\) is defined as
\begin{align}\label{eq:rkhs_kernel}
\mathcal{H}(\mathcal{K}) = \left\{ f \in \mathbb{L}_2(\Omega) : \|f\|^2_{\mathcal{H}(\mathcal{K})} = \sum_{k=1}^\infty (\lambda_k^\mathcal{K})^{-1} \langle f, \psi_k^\mathcal{K} \rangle^2 < \infty \right\},
\end{align}
where $\|\cdot\|_{\mathcal{H}(\mathcal{K})}$ is the RKHS norm induced by the inner product.
For all functions \(f, g \in \mathcal{H}(\mathcal{K})\), the inner product in \(\mathcal{H}(\mathcal{K})\) is given by
\[
\langle f, g \rangle_{\mathcal{H}(\mathcal{K})} = \sum_{k=1}^\infty (\lambda_k^\mathcal{K})^{-1} \langle f, \psi_k^\mathcal{K} \rangle \langle g, \psi_k^\mathcal{K} \rangle.
\]
Let \(\varphi_k^\mathcal{K} = (\lambda_k^\mathcal{K})^{-1/2} \psi_k^\mathcal{K}\), and then \(\{\varphi_k^\mathcal{K}\}_{k=1}^\infty\) are the orthonormal basis functions in \(\mathcal{H}(\mathcal{K})\), as we have
\[
\langle \varphi_{k_1}^\mathcal{K}, \varphi_{k_2}^\mathcal{K} \rangle_{\mathcal{H}(\mathcal{K})} = 
\begin{cases}
1, & \text{if } k_1 = k_2, \\
0, & \text{if } k_1 \neq k_2,
\end{cases}
\]
and the induced RKHS norm is
\[
\|f\|^2_{\mathcal{H}(\mathcal{K})} = \sum_{k=1}^\infty (\lambda_k^\mathcal{K})^{-1} \langle f, \psi_k^\mathcal{K} \rangle^2 = \sum_{k=1}^\infty \langle f, \varphi_k^\mathcal{K} \rangle^2.
\]

We refer the readers to the Section~B.1 in \cite{khoo2024nonparametric} for the approximation theory of multi-dimensional Sobolev spaces using the RKHS basis.

\section{Computational and storage costs of matrix- and tensor-based methods}\label{sec:computation_cost}
\subsection{Costs of \Cref{alg0}}
\label{secE1}
\noindent \textbf{Computational cost.}
    The computational costs of \Cref{alg0} can be decomposed into three parts.  (i) The cost of computing $\widehat b$ is due to matrix multiplications, which is of $O(nm^D\mathcal{N})$, where $\mathcal{N} = \sum_{i = 1}^n|N^{(i)}|/n$ is the averaged number of points over the $n$ observed point processes.  (ii) The soft-SVT in computing $T_{\gamma}(\widehat b)$ has a cost of $O(m^{d_1}\cdot m^{d_2}\cdot r_{\gamma}) = O(m^{D} \cdot r_{\gamma})$, where $r_{\gamma}$ is the smallest integer such that $ \sigma_{r_{\gamma}}(\widehat{b}) \leq \gamma$ in \Cref{alg0}.  (iii) Evaluating the resulting estimator $\widehat\lambda_{\mathrm{Matrix}}$ at $n_{\text{test}}$ points using the rank-$r_\gamma$ factorization costs $O(n_{\text{test}}(m^{d_1} + m^{d_2})r_{\gamma})$. Therefore, the total cost of \Cref{alg0} is
    \[
    O\left(nm^D\mathcal{N} + m^{D} r_{\gamma} + n_{\text{test}}(m^{d_1} + m^{d_2})r_{\gamma}\right).
    \]
    The first and last terms in the total cost involve basis evaluations at observed events and test points, respectively. 
    These computations are embarrassingly parallel and can be cached for reuse (if one wants to reuse the observed points as the test points).  
    If the rank $r_{\gamma}$, implied by the soft-thresholding parameter $\gamma$, is a bounded constant, using parallelism (with sufficient number of nodes) would improves wall-clock time to
    $$O(m^D) = O(n^{(d_{\max}+d_{\min})/(2\alpha+d_{\max})})$$ 
    plugging the choice $m \asymp n^{1/(2\alpha+d_{\max})}$ given in \eqref{eq:choice_matrix}.
    
    \noindent\textbf{Storage cost.} The storage cost can be decomposed into two parts: (i) forming the empirical coefficient matrix $\widehat{b}$ and (ii) storing the low-rank representation of the soft-SVT coefficient matrix $T_{\gamma}(\widehat{b})$. For (i), a streaming implementation that iterates over events $u\in N^{(i)}$ only needs to keep the current basis vectors
    $\phi^{(1)}(u_1)\in\mathbb{R}^{m^{d_1}}$ and $\phi^{(2)}(u_2)\in\mathbb{R}^{m^{d_2}}$ in memory, costing
    $O(m^{d_1}+m^{d_2})$ working space per event, but updating and storing the full matrix $\widehat b$ requires $O(m^D)$ memory. 
    For (ii), instead of storing $T_{\gamma}(\widehat{b})$, which costs $O(m^D)$, the low-rank representation allows us to store only $\widehat U$, $\widehat \Sigma$ and $\widehat V$, which requires 
    \[
O\left((m^{d_1}+m^{d_2})r_\gamma\right).
\]
Note that the peak memory is dominated by the storage of $\widehat{b}$. If $m^D$ is large and storing $\widehat{b}$ is infeasible, one can implement randomized SVD by computing products of $\widehat{b}$ with random vectors, which reduces the peak memory from $O(m^D)$ to $O\left((m^{d_1}+m^{d_2})r_\gamma\right)$.
    

\subsection{Costs of \Cref{alg1}}
\label{secE2}
\noindent \textbf{Computational cost.}    The computational costs of \Cref{alg1} can be decomposed into five parts.  The cost of computing empirical measures is due to matrix multiplications, which is of $O(nm^D\mathcal{N})$.  In the first for-loop, each truncated SVD on $\mathcal{M}_j(\widehat b)$ has a computational cost of $O(m^{d_j}\cdot m^{D-d_j}\cdot R_j) = O(m^{D}R_j)$, and the total cost is of $O(m^{D}\sum_{j \in [s]}R_j)$.  In the second for-loop, the computational cost of the sketched matrix $\mathcal{M}_j(\widehat b^{H_2}) \cdot \otimes_{k \neq j} \widehat{U}_{k}^{(0)}$ for each $j$ is of $O(m^{D}\sum_{k \neq j}R_k)$, and the cost of truncated SVD on the sketched matrix is of $O(m^{d_j}\cdot \prod_{k\neq j}R_k \cdot R_j) = O(m^{d_j}\prod_{j \in [s]}R_j)$. The total cost of the second for-loop is of $O((s-1)m^{D}\sum_{j \in [s]}R_j + (\sum_{j \in [s]}m^{d_j})\prod_{j \in [s]}R_j)$. Evaluating the final estimator $\widehat{\lambda}_{\text{tensor}}$ at $n_{\text{test}}$ points using the Tucker decomposition form costs $O(n_{\text{test}}(\sum_{j \in [s]}m^{d_j}R_j +\prod_{j \in [s]}R_j))$. Therefore, the total cost of \Cref{alg1} is
    \[
    O\left(nm^D\mathcal{N} +  sm^{D}\sum_{j \in [s]}R_j + \Big(\sum_{j \in [s]}m^{d_j}\Big)\prod_{j \in [s]}R_j + n_{\text{test}}\Big(\sum_{j \in [s]}m^{d_j}R_j +\prod_{j \in [s]}R_j\Big)\right).
    \]
    Similarly to the matrix-based method, the first and last terms in the total cost involve basis evaluations at observed events and test points, respectively. 
    These computations are embarrassingly parallel and can be cached for reuse.    If the Tucker ranks $R_j$ are all bounded constants,  using parallelism (with sufficient number of nodes) would improves wall-clock time to
    $$O(m^D) = O(n^{D/(2\alpha + d_{\max})}) = O(n^{(2\alpha + d_{\max} + d_{\min})/(2\alpha + d_{\max})}),$$
    using the choice $m \asymp n^{1/(2\alpha+d_{\max})}$ given in \eqref{eq:choice_m} and the second equality follows from Condition \eqref{eq:condition_dim}.
    
    \noindent\textbf{Storage.} For (i) forming the empirical coefficient tensor $\widehat b^{H_k}$, we need to store basis vectors $\phi^{(j)}(u_j)\in\mathbb{R}^{m^{d_j}}$ for $j\in[s]$, which costs $O(\sum_{j=1}^s m^{d_j})$ working space per event, and store each full coefficient tensor $\widehat b^{H_k}$ requiring $O(m^D)$ memory. For (ii) storing the final Tucker estimator, we can store its factor matrices and core tensor, requiring
\[
O\left(\sum_{j\in[s]} m^{d_j}R_j + \prod_{j\in[s]}R_j\right)
\]
memory, which is substantially smaller than $O(m^D)$ when $R_j\ll m^{d_j}$.
Similarly to the matrix-based method, if $m^D$ is large and storing $\widehat{b}^{H_k}$ is infeasible, one can implement randomized HOSVD by computing products of $\mathcal{M}_j(\widehat{b}^{H_k})$ with random test matrices, which avoid materializing the tensors and reduces the peak memory from $O(m^D)$ to $O(\sum_{j\in[s]} m^{d_j}R_j + \prod_{j\in[s]}R_j)$.

\subsection{Comparison with KIE}
\label{secE3}
The naive KIE has quadratic training cost $O(n^2\mathcal N^2)$ and evaluation cost $O(n\mathcal{N})$ per test point, for $n_{\text{test}}$ points.
In contrast, once the low-rank factors are obtained, $\widehat\lambda_{\mathrm{Matrix}}$ can be evaluated at cost
$O((m^{d_1}+m^{d_2})r_\gamma)$ per test point. Similarly, once the Tucker factors are computed, $\widehat\lambda_{\mathrm{Tensor}}$ can be evaluated at
$O(\sum_{j\in[s]}m^{d_j}R_j + \prod_{j\in[s]}R_j)$
per test point. Both can be dramatically smaller than $O(n\mathcal{N})$ due to our choices of $m$, coordinate partitions and rank parameters.
Therefore, when $n_{\text{test}}$ is large, the overall runtime is often dominated by evaluation, and the matrix- and tensor-based estimators yield particularly large gains over the KIE due to both their compact storage and  low-rank evaluation cost.
Moreover, the KIE typically requires storing all observed events (and potentially additional kernel-related quantities), whereas the matrix- and tensor-based estimators can be stored and reused in compressed form.

\subsection{Empirical runtime per Monte Carlo repetition}
\label{sec:runtime_empirical}

To complement the computational analysis, Table~\ref{tab:runtime} reports the empirical wall time per Monte Carlo replicate for the kernel intensity estimator (KIE) and the Matrix/Tensor estimator with $m = 4$, across all seven scenarios from \Cref{sec:simu_setup} and \Cref{sec:supp_scenarios}. Each timing includes the full pipeline (clustering, basis aggregation, soft-SVT or Tucker decomposition with the rank rule from \Cref{sec:coor_part_cluster}, and grid evaluation), and uses the simulation setup of \Cref{sec:simu_setup} ($n = 10^5$ processes per replicate, $100$ Monte Carlo repetitions). For the Matrix/Tensor estimator, the partition size $s$ is selected to minimize the averaged relative error at each $(\text{scenario}, D)$.

\begin{table}[ht]
\centering
\small
\setlength{\tabcolsep}{6pt}
\begin{tabular}{c|ccccc}
\toprule
 & $D=2$ & $D=3$ & $D=4$ & $D=5$ & $D=6$ \\
\midrule
S1 & $\textbf{0.05}\,/\,0.42$ & $\textbf{0.17}\,/\,0.43$ & $0.81\,/\,\textbf{0.75}$ & $4.82\,/\,\textbf{1.63}$ & $25.90\,/\,\textbf{3.96}$ \\
S2 & $\textbf{0.02}\,/\,0.06$ & $\textbf{0.08}\,/\,\textbf{0.08}$ & $0.36\,/\,\textbf{0.12}$ & $2.18\,/\,\textbf{0.30}$ & $13.29\,/\,\textbf{1.00}$ \\
S3 & $\textbf{0.04}\,/\,0.21$ & $\textbf{0.17}\,/\,0.21$ & $1.03\,/\,\textbf{0.36}$ & $5.80\,/\,\textbf{0.80}$ & $28.53\,/\,\textbf{2.13}$ \\
S4 & $\textbf{0.01}\,/\,0.03$ & $0.06\,/\,\textbf{0.03}$ & $0.26\,/\,\textbf{0.06}$ & $1.45\,/\,\textbf{0.17}$ & $\phantom{0}7.68\,/\,\textbf{0.56}$ \\
S5 & $\textbf{0.02}\,/\,0.03$ & $0.08\,/\,\textbf{0.04}$ & $0.37\,/\,\textbf{0.09}$ & $2.30\,/\,\textbf{0.28}$ & $12.95\,/\,\textbf{0.94}$ \\
S6 & $\textbf{0.02}\,/\,0.03$ & $0.08\,/\,\textbf{0.04}$ & $0.37\,/\,\textbf{0.09}$ & $2.25\,/\,\textbf{0.27}$ & $13.01\,/\,\textbf{0.95}$ \\
S7 & $\textbf{0.02}\,/\,0.03$ & $0.08\,/\,\textbf{0.04}$ & $0.37\,/\,\textbf{0.09}$ & $2.31\,/\,\textbf{0.28}$ & $13.11\,/\,\textbf{0.95}$ \\
\bottomrule
\end{tabular}
\caption{Mean wall time per Monte Carlo replicate (seconds), KIE\,/\,Matrix-Tensor at $m = 4$, across the seven scenarios. Timings include the full pipeline (clustering, basis aggregation, decomposition, and grid evaluation). The Matrix-Tensor estimator uses the partition size $s$ that minimizes the rep-mean relative error at each $(\text{scenario}, D)$.}
\label{tab:runtime}
\end{table}

Table~\ref{tab:runtime} shows that the Matrix/Tensor estimator is uniformly faster than KIE from $D = 4$ onward, with the gap widening sharply as $D$ grows: at $D = 6$ it is approximately $7$--$14\times$ faster than KIE across all scenarios, with the smallest speedup ($\approx 6.5\times$) occurring on the Neyman--Scott scenario S1, where the much larger per-replicate event count makes basis aggregation the dominant cost. At $D = 2$ and $D = 3$ the comparison reverses by a small absolute margin: at these low dimensions, the fixed setup cost of the Matrix/Tensor pipeline, i.e.~correlation-clustering, basis construction, and Tucker rank selection, is more visible, whereas KIE has no such setup and is therefore the more direct choice.

These findings are consistent with the complexity analysis in Sections~\ref{secE1}--\ref{secE3}. Because the local hat basis has compact support, each event updates only $\mathcal{O}(2^D)$ cells, so the basis-aggregation cost scales as $\mathcal{O}(n\mathcal{N}\, 2^D)$, with an additional Tucker-decomposition cost of $\mathcal{O}(s R\, m^D)$, where $\mathcal{N}$ is the average number of events per process and $R = \max_j R_j$ is the largest selected mode rank. This is much lower than the cost of evaluating KIE on a $6^D$-cell test grid against all observed points. Together with the accuracy gains shown in Figure~\ref{fig:simu_main}, the main message is that the Matrix/Tensor estimator provides both lower error and lower computing time in multi-dimensions.

\color{black}

\section{Proofs for Section~\ref{sec:approx_low-rank}}\label{sec:proofs}

\subsection{Auxiliary lemmas}
The following lemma is from \cite{shah2016stochastically}, and we provide a proof for completeness.
\begin{lemma}[Soft-SVT]\label{lemma1}
    Let $Y=X+Z$, where $Z \in \mathbb{R}^{p_1 \times p_2}$ is a zero-mean matrix. If $\gamma \geq 1.01\|Z\|_{\op}$, then
$$
\left\|T_{\gamma}(Y)-X\right\|_{\mathrm{F}}^2 \leq C \sum_{k=1}^{\min\{p_1, p_2\}} \min \left\{\gamma^2, \sigma_k^2\left(X\right)\right\},
$$
where $C > 0$ is an absolute constant.
\end{lemma}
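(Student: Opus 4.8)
The plan is to use the classical fact that soft singular value thresholding is the proximal operator of the scaled nuclear norm: the map $M \mapsto \tfrac12\|M-Y\|_{\mathrm F}^2 + \gamma\|M\|_{*}$ is strongly convex, its unique minimizer is exactly $T_\gamma(Y)$, and the first-order optimality condition is $Y - T_\gamma(Y) = \gamma W$ for some $W$ in the subdifferential of $\|\cdot\|_{*}$ at $T_\gamma(Y)$ (here $\|\cdot\|_{*}$ denotes the nuclear norm). Writing $\widehat M_Y = T_\gamma(Y)$ and $\widehat M_X = T_\gamma(X)$ and adding the two subgradient inequalities produced by the optimality conditions for $Y$ and for $X$, monotonicity of the subdifferential gives $\langle \widehat M_Y - \widehat M_X,\ (Y - \widehat M_Y) - (X - \widehat M_X)\rangle \ge 0$; since $Y - X = Z$, this rearranges to the firm non-expansiveness bound $\|\widehat M_Y - \widehat M_X\|_{\mathrm F}^2 \le \langle \widehat M_Y - \widehat M_X,\ Z\rangle$.

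Next I would convert the right-hand side into an operator-norm estimate. By trace duality (Hölder for Schatten norms), $\langle \widehat M_Y - \widehat M_X,\ Z\rangle \le \|\widehat M_Y - \widehat M_X\|_{*}\,\|Z\|_{\op}$, and since $\rank(\widehat M_Y - \widehat M_X) \le \rank(\widehat M_Y) + \rank(\widehat M_X)$ together with $\|A\|_{*} \le \sqrt{\rank(A)}\,\|A\|_{\mathrm F}$, we get $\|\widehat M_Y - \widehat M_X\|_{*} \le \sqrt{\rank(\widehat M_Y) + \rank(\widehat M_X)}\,\|\widehat M_Y - \widehat M_X\|_{\mathrm F}$. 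Substituting into the previous display and cancelling one factor of $\|\widehat M_Y - \widehat M_X\|_{\mathrm F}$ yields $\|\widehat M_Y - \widehat M_X\|_{\mathrm F}^2 \le (\rank(\widehat M_Y) + \rank(\widehat M_X))\,\|Z\|_{\op}^2 \le (\rank(\widehat M_Y) + \rank(\widehat M_X))\,\gamma^2$.

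It then remains to control the two ranks and the bias $\|\widehat M_X - X\|_{\mathrm F}$, after which the triangle inequality in the form $\|T_\gamma(Y) - X\|_{\mathrm F}^2 \le 2\|\widehat M_Y - \widehat M_X\|_{\mathrm F}^2 + 2\|\widehat M_X - X\|_{\mathrm F}^2$ closes the argument. Since $\widehat M_X$ and $X$ share singular vectors, one computes exactly $\|\widehat M_X - X\|_{\mathrm F}^2 = \sum_k \min\{\gamma, \sigma_k(X)\}^2 = \sum_k \min\{\gamma^2, \sigma_k^2(X)\}$, and $\rank(\widehat M_X) = |\{k : \sigma_k(X) > \gamma\}|$, so each index counted there contributes a full $\gamma^2$ to that same sum, giving $\rank(\widehat M_X)\,\gamma^2 \le \sum_k \min\{\gamma^2, \sigma_k^2(X)\}$.

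The one genuinely delicate step — and the main obstacle — is bounding $\rank(\widehat M_Y) = |\{k : \sigma_k(Y) > \gamma\}|$, because the singular values of $Y$ are a priori unrelated to those of $X$. Here I would invoke Weyl's inequality $\sigma_k(Y) \le \sigma_k(X) + \|Z\|_{\op}$ together with the hypothesis $\|Z\|_{\op} \le \gamma/1.01$: if $\sigma_k(Y) > \gamma$ then $\sigma_k(X) > \gamma - \gamma/1.01 = c_0\gamma$ with $c_0 = 0.01/1.01 > 0$, so $\min\{\gamma^2, \sigma_k^2(X)\} \ge c_0^2\gamma^2$ for every such $k$, and hence $\rank(\widehat M_Y)\,\gamma^2 \le c_0^{-2}\sum_k \min\{\gamma^2, \sigma_k^2(X)\}$. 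This is precisely where the slack factor $1.01$ in the hypothesis is needed — a bare $\gamma \ge \|Z\|_{\op}$ would only give $\sigma_k(X) > 0$, which is vacuous. Collecting all the bounds yields the claim with an absolute constant, e.g.\ $C = 4 + 2c_0^{-2}$.
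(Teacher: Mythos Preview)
Your argument is correct and complete, but it follows a genuinely different route from the paper's. The paper never invokes the proximal-operator characterization of $T_\gamma$; instead it picks the hard-thresholded matrix $X_{(q)}$ (with $q$ the number of singular values of $X$ exceeding $\tfrac{\delta}{1+\delta}\gamma$, $\delta=0.01$) as the intermediary, bounds $\|T_\gamma(Y)-X_{(q)}\|_{\mathrm F}^2 \le \rank(T_\gamma(Y)-X_{(q)})\,\|T_\gamma(Y)-X_{(q)}\|_{\op}^2$, and then controls the operator norm via the triangle inequality $\|T_\gamma(Y)-Y\|_{\op}+\|Z\|_{\op}+\|X-X_{(q)}\|_{\op}\le 2\gamma$ together with the same Weyl-based rank bound you found. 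Your approach trades this elementary operator-norm chain for the firm non-expansiveness of proximal maps plus Schatten duality, which is slicker and makes the role of the nuclear-norm geometry explicit; it also generalizes immediately to any penalty whose prox is known. The paper's version is slightly more self-contained (no convex-analysis machinery needed) and uses $X_{(q)}$ rather than $T_\gamma(X)$ as the pivot, but both proofs hinge on exactly the same use of the $1.01$ slack via Weyl's inequality and yield constants of the same order.
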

\begin{proof}
Fix $\delta=0.01$. Let $q \leq \min\{p_1, p_2\}$ be the number of singular values of $X$ above $\delta(1+\delta)^{-1} \gamma$, and let $X_{(q)}$ be the truncated SVD of $X$.  We then have
\begin{align*}
\left\|T_{\gamma}(Y)-X\right\|_{\mathrm{F}}^2 & \leq 2\left\|T_{\gamma}(Y)-X_{(q)}\right\|_{\mathrm{F}}^2+2\left\|X_{(q)} - X\right\|_{\mathrm{F}}^2 \\
& \leq 2 \rank\left(T_{\gamma}(Y)-X_{(q)}\right)\left\|T_{\gamma}(Y)-X_{(q)}\right\|_{\op}^2 + 2 \sum_{k= q+1}^{\min\{p_1, p_2\}} \sigma_k^2\left(X\right) .
\end{align*}
We claim that $T_{\gamma}(Y)$ has rank at most $q$. Indeed, for each $k \geq q+1$, by \Cref{thm:WeylSV} we have
$$
\sigma_k(Y) \leq \sigma_k\left(X\right)+\|Z\|_{\text {op }} \leq \gamma,
$$
where we have used the facts that $\sigma_k(X) \leq \delta(1+\delta)^{-1} \gamma$ for each $k \geq q+1$, and $\gamma \geq(1+\delta)\|Z\|_{\op}$. As a consequence we have $\sigma_k\left(T_{\gamma}(Y)\right)=0$ for each $k \geq q+1$, and hence $\rank\left(T_{\gamma}(Y)-X_{(q)}\right) \leq 2q$. Moreover, we have
\begin{align*}
\left\|T_{\gamma}(Y)-X_{(q)}\right\|_{\op} & \leq\left\|T_{\gamma}(Y)-Y\right\|_{\op}+\left\|Y-X\right\|_{\op}+\left\|X-X_{(q)}\right\|_{\op} \\
& \leq \gamma +\|Z\|_{\op}+\frac{\delta}{1+\delta} \gamma \\
& \leq 2 \gamma .
\end{align*}
Putting together the pieces, we conclude that
$$
\left\|T_{\gamma}(Y)-X\right\|_{\mathrm{F}}^2 \leq 16 q \gamma^2+2 \sum_{k=q+1}^{\min\{p_1, p_2\}} \sigma_k^2\left(X\right) \leq C \sum_{k=1}^{\min\{p_1, p_2\}} \min \left\{\sigma_k^2\left(X\right), \gamma^2\right\},
$$
for some constant $C$. Here the second inequality follows since $\sigma_k\left(X\right) \leq \delta(1+\delta)^{-1} \gamma$ whenever $k \geq q+1$ and $\sigma_k\left(X\right)> \delta(1+\delta)^{-1} \gamma$ whenever $k \leq q$.
\end{proof}

\begin{lemma}[Fundamental bound for Poisson point process]\label{lemma:deviation_op_ortho}
    Let $\{N^{(i)}\}_{i = 1}^n$ be a set of i.i.d.~inhomogeneous Poisson point processes, with intensity function $\lambda^*$. Let $\widehat b \in \mathbb{R}^{m^{d_1} \times \cdots \times m^{d_s}}$ be the empirical coefficient tensor defined in \eqref{eq:empirical_coefficient_tensor}, and $b^*$ be the corresponding population coefficient tensor.  For any deterministic $V_j \in \mathbb O_{m^{D-d_j}, r_{V_j}}$ and $W_j \in \mathbb O_{m^{d_j}, r_{W_j}}$with $r_{V_j} \leq m^{D-d_j}$ and $r_{W_j} \leq m^{d_j}$, we have that with probability at least $1 - m^{-5}$, uniformly for all $j \in [s]$,
    \begin{align*}
        \left\|W_j^{\top} \cdot \mathcal{M}_j(\widehat b - b^*) \cdot V_j\right\|_{\op}
        \leq C\left\{\sqrt{\frac{\|\lambda^*\|_{\infty}(r_{V_j}+r_{W_j})\log(m)}{n}} + \frac{m^{D/2}\log(m)}{n}\right\},
    \end{align*}
    where $C > 0$ is an absolute constant.
\end{lemma}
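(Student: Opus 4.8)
The plan is to fix a mode $j \in [s]$ and reduce the quantity $\|W^\top \mathcal{M}_j(\widehat b - b^*) V\|_{\op}$ to a sum of independent centered random matrices, so that a matrix Bernstein inequality applies; the maximum over $j \in [s]$ is then handled by a union bound, which only costs a constant factor since $s \le D$ is fixed (and we can absorb $\log s$ into $\log m$). First I would write, using the definition \eqref{eq:empirical_coefficient_tensor} and linearity of matricization and mode products,
\[
W^\top \mathcal{M}_j(\widehat b - b^*) V = \frac{1}{n}\sum_{i=1}^n \Big( \sum_{X \in N^{(i)}} W^\top a_j(X) \, b_{-j}(X)^\top V - \mathbb{E}\big[\textstyle\sum_{X \in N^{(i)}} W^\top a_j(X)\, b_{-j}(X)^\top V\big]\Big),
\]
where $a_j(x) = \big(\phi_{j,1}(x_j),\dots,\phi_{j,m^{d_j}}(x_j)\big)^\top \in \mathbb{R}^{m^{d_j}}$ collects the mode-$j$ basis values and $b_{-j}(x) = \otimes_{k\ne j}\big(\phi_{k,1}(x_k),\dots,\phi_{k,m^{d_k}}(x_k)\big) \in \mathbb{R}^{m^{D-d_j}}$ collects the tensor product of the remaining basis values; each summand over $i$ is independent and centered. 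The key structural fact I would use is Campbell's formula for Poisson processes: for a measurable $h$, $\mathbb{E}\big[\sum_{X\in N}h(X)\big] = \int_{\mathbb{X}} h(x)\lambda^*(x)\,d\upsilon(x)$, and more generally the Poisson moment/Laplace functional controls higher moments of $\sum_{X\in N} h(X)$.

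The main work is controlling the two ingredients in matrix Bernstein: the almost-sure operator-norm bound on a single Poisson-summand matrix and the matrix variance proxy. For the variance, I would compute $\big\|\mathbb{E}\big[(\sum_{X\in N}W^\top a_j(X) b_{-j}(X)^\top V)(\cdots)^\top\big]\big\|_{\op}$ and its transpose analog; expanding the double sum and separating diagonal ($X=X'$) from off-diagonal ($X\ne X'$) contributions (the off-diagonal part factors through $(\int \lambda^* \cdots)(\int \lambda^* \cdots)$ by the Poisson pair-correlation being $1$), the dominant diagonal term is $\int_{\mathbb{X}} \|W^\top a_j(x)\|^2 \, b_{-j}(x)^\top VV^\top b_{-j}(x)\, \lambda^*(x)\,d\upsilon(x)$, which I would bound by $\|\lambda^*\|_\infty$ times a trace-type quantity. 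Here orthonormality of the basis is crucial: $\int_{\mathbb{X}_j} a_j(x_j)a_j(x_j)^\top d\upsilon_j = I_{m^{d_j}}$ and likewise on the complementary modes, so $\int \|W^\top a_j\|^2 \cdot \text{(quadratic in }b_{-j}) = \tr(W^\top W)\cdot$ nothing — more carefully, integrating out gives $\tr(VV^\top) = r_V$ after using the mode-$j$ orthonormality to collapse $W^\top a_j a_j^\top W$ to $W^\top W = I_{r_W}$; combining the two symmetrized variance terms yields a variance proxy of order $\|\lambda^*\|_\infty (r_V + r_W)/n$ per average (dividing by $n$ for the empirical mean). For the almost-sure bound, the operator norm of a single point's contribution $W^\top a_j(x) b_{-j}(x)^\top V$ is at most $\|W^\top a_j(x)\|\cdot\|b_{-j}(x)^\top V\| \le \|a_j(x)\|\cdot\|b_{-j}(x)\|$, and using the standard bound $\|\phi_{k,\mu}\|_\infty^2 = O(1)$ per eigenfunction (from the RKHS/Sobolev construction in Assumption~\ref{ass:approx_err2}), one gets $\|a_j(x)\| = O(m^{d_j/2})$ and $\|b_{-j}(x)\| = O(m^{(D-d_j)/2})$, so the bound is $O(m^{D/2})$; multiplied by the typical number of points (controlled via a Poisson tail / truncation argument) and divided by $n$, this produces the $m^{D/2}\log(m)/n$ term.

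I would then invoke a Bernstein-type concentration for sums of independent centered random matrices with subexponential-norm summands — since Poisson sums are not bounded, I would either use a truncated matrix Bernstein (splitting each $\sum_{X\in N^{(i)}}$ at its expectation plus $O(\log n)$ fluctuations, handling the rare exceedance separately) or a Bennett/Poisson-specific matrix inequality — to get, with probability at least $1 - m^{-5}$ (choosing the constant in front of $\log m$ large enough and noting $D$ is fixed so $m^{-c\log m}$-type tails are crude but sufficient), the stated bound $C\{\sqrt{\|\lambda^*\|_\infty (r_V+r_W)\log(m)/n} + m^{D/2}\log(m)/n\}$. The hard part will be the heavy-tailedness of the Poisson point counts: matrix Bernstein in its cleanest form wants bounded summands, so the technical crux is a clean truncation argument showing the contribution of points beyond the $O(\log n)$-deviation scale is negligible with the required probability, while keeping the variance term sharp at $(r_V+r_W)$ rather than a dimension-dependent $m^{d_j}$; this is exactly where orthonormality of the basis and the deterministic choice of $V, W$ (so no union bound over subspaces is needed) do the essential work.
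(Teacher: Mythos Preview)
Your proposal is essentially correct and identifies all the key ingredients: the rank-one representation of each point's contribution, the variance proxy $\|\lambda^*\|_\infty(r_V+r_W)$ obtained from basis orthonormality, and the single-point bound $O(m^{D/2})$ from $\|\phi_{j,\mu}\|_\infty\le C_\phi$. The paper's argument matches yours on these points almost verbatim.

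The one substantive difference is in how the concentration step is executed. You flag the ``heavy-tailedness of the Poisson point counts'' as the technical crux and propose a truncation argument on top of standard matrix Bernstein. The paper instead sidesteps truncation entirely: it first uses infinite divisibility to merge $\{N^{(i)}\}_{i=1}^n$ into a single Poisson process with intensity $n\lambda^*$, and then applies a \emph{Poisson-specific} matrix Bernstein inequality (their Theorem~\ref{thm:MatrixBernstein_PPP}/Corollary~\ref{coro:MatrixBernstein_PPP}) derived directly from the Poisson Laplace transform $\mathbb{E}\,e^{tF\cdot N(A)}=\exp\big(m(e^{tF}-I)\big)$ together with Lieb's theorem. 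This yields the tail bound $(d_1+d_2)\exp\{-t^2/(2\nu+2Lt/3)\}$ with $\nu=\|\int F F^\top\lambda^*\|_{\op}\vee\|\int F^\top F\lambda^*\|_{\op}$ and $L=\sup_x\|F(x)\|_{\op}$ in one stroke, no truncation or Poisson-tail side argument needed. Your truncation route would also work, but is more laborious and carries some risk of losing constants or picking up extra $\log n$ factors; the paper's route is cleaner because it exploits the exact Poisson MGF. If you want to follow the paper, the statement you need is: for $F:\mathbb{X}\to\mathbb{R}^{d_1\times d_2}$ with $\sup_x\|F(x)\|_{\op}\le L$, and $N$ Poisson with intensity $\lambda$, one has $\mathbb{P}(\|\sum_{X\in N}F(X)-\int F\lambda\|_{\op}\ge t)\le(d_1+d_2)\exp\{-t^2/(2\nu+2Lt/3)\}$.
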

\begin{proof}
    We obtain the upper bound using \Cref{coro:MatrixBernstein_PPP}, and we only focus on the matricization of $\widehat{b}$ at mode $j = 1$. For notational simplicity, we let $W = W_1$ and $V = V_1$. 
    Rewrite $$W^{\top} \cdot \mathcal{M}_1(\widehat b) \cdot V = \frac{1}{n}\sum_{i = 1}^n\sum_{X \in N^{(i)}}F(X) \in \mathbb{R}^{r_W \times r_V},$$
    where $X = (X_1^{\top}, \dots, X_s^{\top})^{\top} \in \mathbb{R}^D$ with $X_j \in \mathbb{R}^{d_j}$, and $x \mapsto F(x)$ is a $\mathbb{R}^{r_W \times r_V}$-valued function with the $(j,l)$ entry \begin{align*}
    F_{(j; l)}(x) & = \sum_{\mu_1 = 1}^{m^{d_1}}W_{(\mu_1;j)}\phi_{1,\mu_1}(x_1)\sum_{\mu_2 = 1}^{m^{d_2}}\cdots\sum_{\mu_s = 1}^{m^{d_s}}V_{(\mu_2, \dots, \mu_s; l)}\phi_{2,\mu_2}(x_2)\cdots\phi_{s,\mu_s}(x_s)\\
    &= \psi_{j}(x_1)\sum_{\mu_2 = 1}^{m^{d_2}}\cdots\sum_{\mu_s = 1}^{m^{d_s}}V_{(\mu_2, \dots, \mu_s; l)}\phi_{2,\mu_2}(x_2)\cdots\phi_{s,\mu_s}(x_s),
    \end{align*}
    where $W_{(\mu_1;j)}$ is the $(\mu_1,j)$ entry of $W$. Each combination of $(\mu_2, \dots, \mu_s)$ corresponds to a row index of $V$, and the corresponding row is denoted by $V_{(\mu_2, \dots, \mu_s; \cdot)}$.  For $j \in [r_W]$, we let  $\psi_{j}(\cdot) = \sum_{\mu_1 = 1}^{m^{d_1}}W_{(\mu_1;j)}\phi_{\mu_1}(\cdot)$. Note that $\{\psi_{j}\}_{j = 1}^{r_W}$ is a set of orthonormal basis functions, since $\{\phi_{1,\mu_1}\}_{\mu_1 = 1}^{m^{d_1}}$ is a set of orthonormal basis functions and $\{W_{(\cdot; j)}\}_{j=1}^{r_W}$ is a set of orthonormal vectors.
    
    We can write
    $$W^{\top} \cdot \mathcal{M}_1(b^*) \cdot V = \int F(x)\lambda^*(x)\d x.$$
    We verify the conditions of \Cref{coro:MatrixBernstein_PPP}.  It follows that for all $x$,
    \begin{align*}
        \|F(x)\|_{\op} &\leq \|F(x)\|_{\mathrm{F}} \leq \sqrt{\sum_{\mu_1=1}^{m^{d_1}}\cdots\sum_{\mu_s = 1}^{m^{d_s}}\left\{\phi_{1,\mu_1}(x_1)\cdots\phi_{s,\mu_s}(x_s)\right\}^2}\\
        &\leq m^{D/2}\prod_{j = 1}^s\left\|\phi_{j,\mu_j}(x_j)\right\|_{\infty}\\
        &\leq C_{\phi}^sm^{D/2},
    \end{align*}
    where the second inequality follows from $\|W\|_{\op} \leq 1$, $\|V\|_{\op} \leq 1$ and $D = \sum_{j = 1}^sd_j$ by definition, and the last inequality holds because the basis function satisfying $\|\phi_{j,\mu_j}\|_{\infty} \leq C_{\phi} < \infty$. Recall the matrix variance statistic $\nu$ in \Cref{coro:MatrixBernstein_PPP}, defined as
    $$ \nu = n \max\left\{\left\|\int F(x)(F(x))^{\top} \lambda^*(x) \d x\right\|_{\op}, \left\|\int (F(x))^{\top}F(x) \lambda^*(x) \d x\right\|_{\op} \right\}.$$
    We only derive the bound on $\|\int F(x)(F(x))^{\top} \lambda^*(x) \d x\|_{\op}$, since the bound on the other term can be obtained similarly.
    Note that the $(p,q)$ entry of $[F(x)(F(x))^{\top}]$ is
    \begin{align*}
        &\left[F(x)(F(x))^{\top}\right]_{(p; q)} = \sum_{l = 1}^{r_V} F_{(p;l)}(x)F_{(q;l)}(x)\\
        =& \psi_{p}(x_1)\psi_{q}(x_1)\sum_{l = 1}^{r_V}\left(\sum_{\mu_2 = 1}^{m^{d_2}}\cdots\sum_{\mu_s = 1}^{m^{d_s}} V_{(\mu_2, \dots, \mu_d; l)}\phi_{2,\mu_2}(x_2)\cdots\phi_{s,\mu_s}(x_s)\right)^2.
    \end{align*}
    Furthermore,
    \begin{align*}
        &\left\|\int F(x)(F(x))^{\top} \lambda^*(x) \d x\right\|_{\op} = \sup_{\|v\|_2 = 1} v^{\top}\left[\int F(x)(F(x))^{\top} \lambda^*(x) \d x\right] v\\
        =& \sup_{\|v\|_2 = 1} \int \left(\sum_{p = 1}^{r_W}\sum_{q = 1}^{r_W} v_p\left[F(x)(F(x))^{\top}\right]_{(p; q)}v_q\right)\lambda^*(x)\d x \\
        =& \sup_{\|v\|_2 = 1} \idotsint\left(\sum_{p = 1}^{r_W}\sum_{q = 1}^{r_W} v_p\psi_{p}(x_1)\psi_{q}(x_1)v_q\right)\\
        &\quad\quad\quad\quad\quad\quad\quad\left\{\sum_{l = 1}^{r_V}\left(\sum_{\mu_2 = 1}^{m^{d_2}}\cdots\sum_{\mu_s = 1}^{m^{d_s}} V_{(\mu_1, \dots,\mu_s; l)}\phi_{2,\mu_2}(x_2)\cdots\phi_{s,\mu_s}(x_s)\right)^2\right\}\lambda^*(x_1,\cdots,x_s)\d x_1 \cdots \d x_s\\
        =& \sup_{\|v\|_2 = 1} \idotsint\left(\sum_{k = 1}^{r_W} v_k\psi_{k}(x_1)\right)^2\\
        &\quad\quad\quad\quad\quad\quad\quad\left\{\sum_{l = 1}^{r_V}\left(\sum_{\mu_2 = 1}^{m^{d_2}}\cdots\sum_{\mu_s = 1}^{m^{d_s}} V_{(\mu_1, \dots,\mu_s; l)}\phi_{2,\mu_2}(x_2)\cdots\phi_{s,\mu_s}(x_s)\right)^2\right\}\lambda^*(x_1,\cdots,x_s)\d x_1 \cdots \d x_s\\
        \leq& \|\lambda^*\|_{\infty}\sup_{\|v\|_2 = 1} \int\left(\sum_{k = 1}^{r_W} v_k\psi_{k}(x_1)\right)^2\d x_1\\
        &\quad\quad\quad\quad\quad\quad\quad\left\{\sum_{l = 1}^{r_V}\idotsint\left(\sum_{\mu_2 = 1}^{m^{d_2}}\cdots\sum_{\mu_s = 1}^{m^{d_s}}V_{(\mu_2, \dots,\mu_s; l)}\phi_{2,\mu_2}(x_2)\cdots\phi_{s,\mu_s}(x_s)\right)^2\d x_2\cdots \d x_s\right\}\\
        =& \|\lambda^*\|_{\infty}\sup_{\|v\|_2 = 1} \int\sum_{k = 1}^{r_W} v_k^2\psi^2_{k}(x_1) \d x_1\\
        &\quad\quad\quad\quad\quad\quad \left\{\sum_{l = 1}^{r_V}\idotsint\sum_{\mu_2 = 1}^{m^{d_2}}\cdots\sum_{\mu_s = 1}^{m^{d_s}}\left\{V_{(\mu_2,\dots,\mu_s; l)}\phi_{2,\mu_2}(x_2)\cdots\phi_{s,\mu_s}(x_s)\right\}^2\d x_2\cdots \d x_s\right\}\\
        =& \|\lambda^*\|_{\infty}r_V,
    \end{align*}
    where the last two lines follows from the fact that $\{\psi_k\}_{k = 1}^{r_W}$, $\{\phi_{j,\mu_j}\}_{\mu_j = 1}^{m^{d_j}}$ are collections of othonormal functions, and $V \in \mathbb O_{m^{D-d_1}, r_V}$. Similarly, we can show that $$\left\|\int(F(x))^{\top}F(x) \lambda^*(x) \d x\right\|_{\op} \leq \|\lambda^*\|_{\infty}r_W.$$
    Therefore, we have $\nu \leq n\|\lambda^*\|_{\infty}(r_V + r_W)$ and $L = C_{\phi}^sm^{D/2}$. By \Cref{coro:MatrixBernstein_PPP}, we have that with probability at least $1 - m^{-5}$
    \begin{align*}
        \left\|\frac{1}{n}\sum_{i = 1}^n\sum_{X \in N^{(i)}}F(X) - \int F(x)\lambda(x)\d x\right\|_{\op} \leq C\left\{\sqrt{\frac{\|\lambda^*\|_{\infty}(r_V+r_W)\log(m)}{n}} + \frac{C_{\phi}^sm^{D/2}\log(m)}{n}\right\},
    \end{align*}
    where $C > 0$ is an absolute constant.
    The same argument leads to the similar bounds for $j = 2, \dots, s$, which concludes the proof.
\end{proof}

\subsection{Proof for \Cref{sec:theory_matrix}}
\begin{proof}[Proof of \Cref{theorem_matrix}]
Define the finite-dimensional subspaces $\mathcal{U}_1 = \mathrm{Span}\{\phi_{1,\mu_1}: \mu_1 \in [m^{d_1}]\}$ and $\mathcal{U}_2 = \mathrm{Span}\{\phi_{2,\mu_2}: \mu_2 \in [m^{d_2}]\}$, as well as the corresponding projection operators $\mathcal{P}_{\mathcal{U}_1}$ and $\mathcal{P}_{\mathcal{U}_2}$ (see~\Cref{sec:notation_app} for definitions).
Observe that
\begin{align*}
        \|\lambda^*-\widehat{\lambda}_{\mathrm{Matrix}}\|_{\mathbb{L}_2(\mathbb{X})}^2 &\leq 2\|\lambda^*- \lambda^*\times_1 \mathcal{P}_{\mathcal{U}_1} \times_2 \mathcal{P}_{\mathcal{U}_2}\|_{\mathbb{L}_2(\mathbb{X})}^2 + 2\|\lambda^*\times_1 \mathcal{P}_{\mathcal{U}_1} \times_2 \mathcal{P}_{\mathcal{U}_2} - \widehat{\lambda}_{\mathrm{Matrix}}\|_{\mathbb{L}_2(\mathbb{X})}^2\\
        &=: 2\mathrm{Term}_1 + 2\mathrm{Term}_2.
\end{align*}
For $\mathrm{Term}_1$, under Assumptions~\ref{ass:approx_err1} and \ref{ass:approx_err2}, \eqref{eq:approx_error} (see also Lemma~2 in \citealt{khoo2024nonparametric}) yields 
\begin{align}
    \label{Eq3-T2-UP}
        \mathrm{Term}_1 = O({\|\lambda^*\|_{W_2^{\alpha}(\mathbb{X})}^2  }m^{-2 \alpha}).
    \end{align}
    For $\mathrm{Term}_2$, we have 
    \begin{align}\label{eq:proof_mat}
        \mathrm{Term}_2 =& \|\lambda^*\times_1 \mathcal{P}_{\mathcal{U}_1} \times_2 \mathcal{P}_{\mathcal{U}_2} - \widehat{\lambda}_{\mathrm{Matrix}}\|_{\mathbb{L}_2}^2\nonumber\\
        =& \sum_{\mu_1 = 1}^{m^{d_1}}\sum_{\mu_2 = 1}^{m^{d_2}} \left\{\left(\lambda^*\times_1 \mathcal{P}_{\mathcal{U}_1} \times_2 \mathcal{P}_{\mathcal{U}_2} - \widehat{\lambda}_{\mathrm{Matrix}}\right)[\phi_{1,\mu_1}, \phi_{2,\mu_2}]\right\}^2\nonumber\\
    =& \sum_{\mu_1 = 1}^{m^{d_1}}\sum_{\mu_2 = 1}^{m^{d_2}} \left\{\iint\left[\sum_{v_1 = 1}^{m^{d_1}}\sum_{v_2 = 1}^{m^{d_2}} \{ b_{v_1,v_2}^* - T_{\gamma}(\widehat b)_{v_1,v_s}\}  \phi_{1,v_1}(x_1)\phi_{2,v_2}(x_2)\phi_{1,\mu_1}(x_1)\phi_{2,\mu_2}(x_2)\right]\d x_1 \d x_2\right\}^2\nonumber\\
    =& \sum_{\mu_1 = 1}^{m^{d_1}}\sum_{\mu_2 = 1}^{m^{d_2}} \{ b_{\mu_1,\mu_2}^* - T_{\gamma}(\widehat b)_{\mu_1,\mu_s}\}^2\nonumber\\
    =& \|b^* - T_{\gamma}(\widehat b)\|_{\mathrm{F}}^2,
    \end{align}
    where the fourth equality follows from the orthonormality of $\{\phi_{j,\mu_j}\}_{\mu_j = 1}^{\infty}$.
    Write 
$$\widehat{b} = b^*+Z,$$
    where $Z$ has is mean zero by \Cref{lemma:campbell}. Recall that we choose
    \begin{align*}
    m = ({\|\lambda^*\|_{W_2^{\alpha}(\mathbb{X})}}n)^{1/(2\alpha + d_{\max})}.
    \end{align*}
    We have, under \Cref{ass:approx_err1},  $$\sqrt{\frac{\|\lambda^*\|_{\infty}m^{d_{\max}}\log(m)}{n}} \gtrsim \sqrt{\frac{m^{d_{\max}}\log(m)}{n}} \gtrsim \frac{m^{D/2}\log(m)}{n},$$
    for some constant $C > 0$.
    Consequently, by \Cref{lemma:deviation_op_ortho} with $s = 2$, $W = I_{m^{d_1}}$ and $V = I_{m^{d_2}}$, we have 
\begin{align*}
    1.01\|Z\|_{\op} =& 1.01\|\widehat{b} - b^*\|_{\op}\\
    \lesssim& \sqrt{\frac{\|\lambda^*\|_{\infty}m^{d_{\max}}\log(m)}{n}}\\
    =& \sqrt{\frac{\|\lambda^*\|_{\infty}\|\lambda^*\|_{W_2^\alpha(\mathbb{X})}^{d_{\max}/(2\alpha + d_{\max})}\log(m)}{n^{2\alpha/(2\alpha + d_{\max})}}}\\
    =& \sqrt{\frac{\|\lambda^*\|_{\infty}^{2\alpha/(2\alpha + d_{\max})}\|\lambda^*\|_{W_2^\alpha(\mathbb{X})}^{2d_{\max}/(2\alpha + d_{\max})}(\|\lambda^*\|_{\infty}/\|\lambda^*\|_{W_2^\alpha(\mathbb{X})})^{d_{\max}/(2\alpha + d_{\max})}\log(m)}{n^{2\alpha/(2\alpha + d_{\max})}}}\\
    \le& C_{\gamma}\sqrt{ \frac{\|\lambda^*\|_{\infty}^{2\alpha/(2\alpha+d_{\max})}\|\lambda^*\|_{W_2^\alpha(\mathbb{X})}^{2d_{\max}/(2\alpha+d_{\max})}\log(n)}{n^{2\alpha/(2\alpha+d_{\max})}}} = \gamma,
\end{align*}
where the last inequality following from \Cref{ass:approx_err1}.
    Therefore, by Lemma \ref{lemma1}, we conclude that
\begin{align*}
   \mathrm{Term}_2=\|b^* - T_{\gamma}(\widehat b)\|_{\mathrm{F}}^2 \le c \sum_{j=1}^\infty \min \left\{\gamma^2, \sigma_j^2\left(b^*\right)\right\}
\end{align*}
    for some constant $c>0.$ For each rank $R>0$, we have
    \begin{align*}
        \mathrm{Term}_2 \leq cR\gamma^2+c\sum_{j=R+1}^\infty\sigma_j^2\left(b^*\right).
    \end{align*}
    The last inequality together with \eqref{Eq3-T2-UP} give
    \begin{align*}
        \|\lambda^*-\widehat{\lambda}_{\mathrm{Matrix}}\|_{\mathbb{L}_2}^2\le 2\mathrm{Term}_1+2\mathrm{Term}_{2} \le O({\|\lambda^*\|_{W_2^{\alpha}(\mathbb{X})}^2}m^{-2 \alpha})+2cR\gamma^2+2c\sum_{j=R+1}^{\infty}\sigma_j^2\left(b^*\right).
    \end{align*}
    It remains to bound the term      $$\sum_{j=R+1}^{\infty}\sigma_j^2\left(b^*\right) = \sum_{j=R+1}^{\infty}\sigma_j^2\left(\lambda^* \times_1 \mathcal{P}_{\mathcal{U}_1} \times_2 \mathcal{P}_{\mathcal{U}_2}\right).$$
    Let $\llbracket\lambda^* \times_1 \mathcal{P}_{\mathcal{U}_1} \times_2 \mathcal{P}_{\mathcal{U}_2}\rrbracket_{(R)}$ and $\llbracket \lambda^*\rrbracket_{(R)}$ denote the best rank-$R$ approximation of $\lambda^* \times_1 \mathcal{P}_{\mathcal{U}_1} \times_2 \mathcal{P}_{\mathcal{U}_2}$ and $\lambda^*$, respectively (see \Cref{sec:approximately_low-rank_fct} for details). We have
    \begin{align*}
    &\sum_{j=R+1}^{\infty}\sigma_j^2\left(\lambda^* \times_1 \mathcal{P}_{\mathcal{U}_1} \times_2 \mathcal{P}_{\mathcal{U}_2}\right) = \sum_{j=1}^{\infty}\left|\sigma_j\left(\lambda^* \times_1 \mathcal{P}_{\mathcal{U}_1} \times_2 \mathcal{P}_{\mathcal{U}_2}\right) - \sigma_j\left(\llbracket\lambda^* \times_1 \mathcal{P}_{\mathcal{U}_1} \times_2 \mathcal{P}_{\mathcal{U}_2}\rrbracket_{(R)}\right)\right|^2\\
    \leq& \|\lambda^* \times_1 \mathcal{P}_{\mathcal{U}_1} \times_2 \mathcal{P}_{\mathcal{U}_2} - \llbracket\lambda^* \times_1 \mathcal{P}_{\mathcal{U}_1} \times_2 \mathcal{P}_{\mathcal{U}_2}\rrbracket_{(R)}\|_{\mathbb{L}_2}^2\\
    \leq& \|\lambda^* \times_1 \mathcal{P}_{\mathcal{U}_1} \times_2 \mathcal{P}_{\mathcal{U}_2} - \llbracket\lambda^*\rrbracket_{(R)} \times_1 \mathcal{P}_{\mathcal{U}_1} \times_2 \mathcal{P}_{\mathcal{U}_2}\|_{\mathbb{L}_2}^2\\
    \leq& \|\lambda^* - \llbracket\lambda^*\rrbracket_{(R)} \|_{\mathbb{L}_2}^2\\
    =& \xi^2_{(R)},
    \end{align*}
    where the first inequality follows from \Cref{lem:Mirsky_Hilbert}. The second inequality follows since $\llbracket\lambda^*\rrbracket_{(R)} \times_1 \mathcal{P}_{\mathcal{U}_1} \times_2 \mathcal{P}_{\mathcal{U}_2}$ is of rank $R$.  The last equality follows from the definition \eqref{eq:tucker_low-rank_approx}.
    Note that by \Cref{ass:approx_err1} and for all $R \geq 1$
    \begin{align*}
        {\|\lambda^*\|_{W_2^{\alpha}(\mathbb{X})}^2}m^{-2 \alpha} =& {\|\lambda^*\|_{W_2^{\alpha}(\mathbb{X})}^{(2 \alpha + 2d_{\max})/(2\alpha + d_{\max})}} n^{-2 \alpha/(2\alpha + d_{\max})}\\
        =& {\|\lambda^*\|_\infty^{2 \alpha/(2\alpha + d_{\max})}}({\|\lambda^*\|_{W_2^{\alpha}(\mathbb{X})}/\|\lambda^*\|_{\infty})^{2 \alpha/(2\alpha + d_{\max})}}{\|\lambda^*\|_{W_2^{\alpha}(\mathbb{X})}^{2d_{\max}/(2\alpha + d_{\max})}} n^{-2 \alpha/(2\alpha + d_{\max})}\\
        \lesssim& R\left(\frac{\|\lambda^*\|_{\infty}^{2\alpha/(2\alpha+d_{\max})}\|\lambda^*\|_{W_2^\alpha(\mathbb{X})}^{2d_{\max}/(2\alpha+d_{\max})}}{n^{2\alpha/(2\alpha+d_{\max})}}\right)\log(n)\\
        =& R\gamma^2.
    \end{align*}

    Therefore, by plugging in the choice of $m$ and $\gamma$, we have that for each rank $R>0$,
    \begin{align*}
        \|\lambda^*-\widehat{\lambda}_{\mathrm{Matrix}}\|_{\mathbb{L}_2}^2 \le& O_p\left({\|\lambda^*\|_{W_2^{\alpha}(\mathbb{X})}^2}m^{-2 \alpha} +R\gamma^2+ \xi_{(R)}^2\right)\\
        =& O_{p}\left( \frac{\|\lambda^*\|_{\infty}^{2\alpha/(2\alpha+d_{\max})}\|\lambda^*\|_{W_2^\alpha(\mathbb{X})}^{2d_{\max}/(2\alpha+d_{\max})}}{n^{2\alpha/(2\alpha+d_{\max})}}\log(n) +\xi_{(R)}^2 \right).
    \end{align*}
    This concludes the proof.
\end{proof}

\subsection{Proof for \Cref{sec:poisson}}
\subsubsection{Auxiliary results for \Cref{alg1}}\label{sec:alg1_results}

We present key propositions that establish error bounds on the initial and refined singular vector estimators as well as the final low-rank tensor estimation obtained in \Cref{alg1}.

\begin{proposition}[Error bound on the initial singular vectors]\label{lemma:hosvd_approx}
    Let $\widehat{b}^{H_1}$ denote the empirical coefficient tensor computed based on the subset $H_1$ of the observation (see~\Cref{alg1}). Let $\widehat{U}_{j}^{(0)} = \text{SVD}_{(R_j)} (\mathcal{M}_j(\widehat b^{H_1})) \in \mathbb{O}_{m^{d_j}, R_j}$, for $j \in [s]$, be the truncated SVD obtained in the initialization step of \Cref{alg1}, and $U_{j} = \text{SVD}_{(R_j)} (\mathcal{M}_j(b^*)) \in \mathbb{O}_{m^{d_j}, R_j}$.
    Suppose it holds that with probability at least $1 - m^{-5}$, uniformly for all $j \in [s]$
    \begin{align}\label{eq:fundamental_general}
        \left\|W_j^{\top} \cdot \mathcal{M}_j(\widehat b^{H_1} - b^*) \cdot M_j\right\|_{\op}
        \leq a_1\sqrt{\frac{\|\lambda^*\|_{\infty}(r_{W_j}+r_{M_j})\log(m)}{n}} + a_2\frac{m^{D/2}\log(m)}{n},
    \end{align}
    where $W_j \in \mathbb O_{m^{d_j}, r_{W_j}}$ and $M_j \in \mathbb O_{m^{D-d_j}, r_{M_j}}$ are some deterministic orthonormal matrices with ranks $r_{W_j}$  and $r_{M_j}$, respectively, and $a_1, a_2 > 0$ are some bounded constants.
    Choose $m$ and $\{d_j\}_{j \in [s]}$ such that $$m = \lceil({\|\lambda^*\|_{W_2^{\alpha}(\mathbb{X})}}n)^{1/(2\alpha + d_{\max})}\rceil \;\; \text{and}\;\; d_{\max} + d_{\min} > D - 2\alpha,$$
    where $d_{\max} = \max\{d_1, \dots, d_s\}$ and $d_{\min} = \min\{d_1, \dots, d_s\}$.
    Suppose Assumptions~\ref{ass:approx_err1} and \ref{ass:approx_err2} hold and 
    \begin{align}\label{eq:singular_gap_proposition}
    \min_{j = 1}^s\{\sigma_{j,R_j}(\lambda^*)-\sigma_{j,R_j+1}(\lambda^*) \} \geq C_{\mathrm{gap}}\max\left\{{\|\lambda^*\|_{W_2^{\alpha}(\mathbb{X})}}m^{-\alpha}, \sqrt{\frac{\|\lambda^*\|_{\infty}m^{(D-d_{\min}) \vee d_{\max}}\log(m)}{n}}\right\},
    \end{align}
    where $C_{\mathrm{gap}} > 0$ is a sufficiently large constant.
    We have that with probability at least $1 - 3m^{-5}$,
    \begin{align*}
        \left\|\sin\Theta(\widehat U_{j}^{(0)}, U_{j})\right\|_{\op} \leq \frac{C}{\sigma_{j,R_j}(\lambda^*) - \sigma_{j,R_j+1}(\lambda^*)}\sqrt{\frac{\|\lambda^*\|_{\infty}(m^{D-d_j}+ m^{d_j})\log(m)}{n}}, \;\; \text{for all} \;\; j \in [s],
    \end{align*}
    where $C > 0$ is an absolute constant.
\end{proposition}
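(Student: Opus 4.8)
\emph{Proof plan.} Fix a mode $j \in [s]$ and write $\mathcal M_j(\widehat b^{H_1}) = M_j + Z_j$, where $M_j := \mathcal M_j(b^*)$ is the ``signal'' and $Z_j := \mathcal M_j(\widehat b^{H_1} - b^*)$ the ``noise'', so that $\widehat U_j^{(0)}$ and $U_j$ are, respectively, the leading $R_j$ left singular subspaces of $M_j+Z_j$ and of $M_j$. The plan is to reduce the claim to the classical Wedin $\sin\Theta$ theorem in the form
\[
\big\|\sin\Theta(\widehat U_j^{(0)}, U_j)\big\|_{\op} \;\le\; \frac{c\,\big(\|Z_j^\top U_j\|_{\op} \vee \|Z_j V_j\|_{\op}\big)}{\sigma_{R_j}(M_j) - \sigma_{R_j+1}(M_j)},
\]
valid whenever $\|Z_j\|_{\op}$ is at most a small constant fraction of $\sigma_{R_j}(M_j)-\sigma_{R_j+1}(M_j)$, where $V_j$ is the leading $R_j$ right singular subspace of $M_j$. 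So I need three ingredients: deviation bounds on $\|Z_j\|_{\op}$, $\|Z_j^\top U_j\|_{\op}$ and $\|Z_j V_j\|_{\op}$; a lower bound on the spectral gap of $M_j$ at level $R_j$; and the verification that the former is dominated by the latter.

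First, for the noise I would apply the assumed fundamental bound \eqref{eq:fundamental_general} three times, with $(W_1,W_2)$ equal to $(I_{m^{d_j}}, I_{m^{D-d_j}})$, to $(U_j, I_{m^{D-d_j}})$, and to $(I_{m^{d_j}}, V_j)$; each holds, simultaneously over $j$, with probability at least $1-m^{-5}$, and a union bound over the three gives the $3m^{-5}$ in the statement. Since $\|\lambda^*\|_{\infty}$ is a fixed constant under Assumption~\ref{ass:approx_err1}, these yield
\[
\|Z_j\|_{\op} \,\vee\, \|Z_j^\top U_j\|_{\op} \,\vee\, \|Z_j V_j\|_{\op} \;\lesssim\; \sqrt{\tfrac{(m^{d_j}+m^{D-d_j})\log m}{n}} + \tfrac{m^{D/2}\log m}{n}.
\]
The crucial point, inherited from the matrix-Bernstein structure of \eqref{eq:fundamental_general}, is that this involves the \emph{sum} $m^{d_j}+m^{D-d_j}$ of the two side-dimensions rather than their product $m^D$ — exactly what produces the claimed $\sqrt{(m^{D-d_j}+m^{d_j})\log m/n}$ rate. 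Using $m \asymp (\|\lambda^*\|_{W_2^{\alpha}(\mathbb X)}^2 n)^{1/(2\alpha + d_{\max})}$ together with the dimension condition $d_{\max}+d_{\min} > D - 2\alpha$, I would check that the remainder $m^{D/2}\log m/n$ is of smaller order than $\sqrt{m^{(D-d_{\min})\vee d_{\max}}\log m/n}$, hence that all three quantities are $\lesssim \sqrt{(m^{d_j}+m^{D-d_j})\log m/n} \le 2\sqrt{m^{(D-d_{\min})\vee d_{\max}}\log m/n}$ for every $j$.

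Next, I would transfer the spectral-gap hypothesis, which is stated for the infinite-dimensional reshapings. The matrix $M_j$ is the mode-$j$ matricization of the coefficient tensor of $\lambda^*\times_1 \mathcal P_{\mathcal U_1}\cdots\times_s\mathcal P_{\mathcal U_s}$; since the basis is orthonormal, $\sigma_k(M_j)$ equals the $k$-th singular value of the mode-$j$ reshaping of that projected function, and by \eqref{eq:approx_error} the projected function lies within $\sqrt s\,m^{-\alpha}\|\lambda^*\|_{W_2^{\alpha}(\mathbb X)}$ of $\lambda^*$ in $\mathbb L_2(\mathbb X)$. Viewing the reshapings as Hilbert--Schmidt operators, a Weyl/Mirsky-type inequality (\Cref{lem:Mirsky_Hilbert}) gives $|\sigma_k(M_j) - \sigma_{j,k}(\lambda^*)| \le \sqrt s\,m^{-\alpha}\|\lambda^*\|_{W_2^{\alpha}(\mathbb X)}$ for all $k$, with $\sigma_{j,k}(\lambda^*)$ as in \eqref{eq:unfold_SVD}. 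Because the right-hand side of \eqref{eq:singular_gap_proposition} dominates $m^{-\alpha}\|\lambda^*\|_{W_2^{\alpha}(\mathbb X)}$ once $C_{\mathrm{gap}}$ is large, this forces $\sigma_{R_j}(M_j) - \sigma_{R_j+1}(M_j) \ge \tfrac12\{\sigma_{j,R_j}(\lambda^*) - \sigma_{j,R_j+1}(\lambda^*)\} > 0$ (so $U_j$ and $\widehat U_j^{(0)}$ are well defined), and it makes $\|Z_j\|_{\op} \lesssim \sqrt{m^{(D-d_{\min})\vee d_{\max}}\log m/n} \le C_{\mathrm{gap}}^{-1}\{\sigma_{j,R_j}(\lambda^*) - \sigma_{j,R_j+1}(\lambda^*)\}$, so the separation condition of Wedin's theorem holds. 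Plugging the noise bounds and this gap lower bound into the displayed Wedin inequality yields the assertion.

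The genuinely delicate step — the only one that is not routine — is this last transfer. Because $\lambda^*$ is only \emph{approximately} low-rank, $M_j$ is full rank and there is no exact rank-$R_j$ signal, so the usual exact-low-rank perturbation argument does not apply; one must run everything through the bare spectral gap of $M_j$ while carefully propagating the discretization error $O(m^{-\alpha}\|\lambda^*\|_{W_2^{\alpha}(\mathbb X)})$. This is precisely what the hypotheses $d_{\max}+d_{\min} > D - 2\alpha$ and \eqref{eq:singular_gap_proposition} are designed to accommodate, and it is the perturbation-under-approximate-low-rank machinery alluded to in the statement.
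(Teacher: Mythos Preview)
Your proposal is correct and follows essentially the same route as the paper: three applications of the fundamental deviation bound \eqref{eq:fundamental_general} (with $(W_1,W_2)$ equal to the two identities, to $(U_j,I)$, and to $(I,V_j)$, giving the $1-3m^{-5}$ probability), Wedin's $\sin\Theta$ theorem, and a transfer of the spectral-gap hypothesis from $\lambda^*$ to $M_j=\mathcal M_j(b^*)$ via the basis-expansion approximation error \eqref{eq:approx_error} together with a Weyl-type singular-value inequality for compact operators. The only cosmetic differences are that the paper invokes the looser Wedin form with $\|Z_j\|_{\op}$ in the numerator directly (rather than $\|Z_j^\top U_j\|_{\op}\vee\|Z_jV_j\|_{\op}$, which you then upper-bound by $\|Z_j\|_{\op}$ anyway), and that the paper cites the operator-norm Weyl inequality (\Cref{LemmaDaren}) rather than the Frobenius-norm Mirsky inequality (\Cref{lem:Mirsky_Hilbert}) you mention; either suffices here since the approximation bound \eqref{eq:approx_error} is in $\mathbb L_2$.
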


\begin{proposition}[Error bound on the refined coefficient tensor]\label{lemma:oshoop_approx}
    Let $\widehat{U}_{j}^{(1)} \in \mathbb{O}_{m^{d_j}, R_j}$, for $j \in [s]$, denote the outputs from \Cref{alg1}, and denote $U_{j} = \text{SVD}_{(R_j)} (\mathcal{M}_j(b^*)) \in \mathbb{O}_{m^{d_j}, R_j}$.
    Suppose the assumptions in \Cref{lemma:hosvd_approx} hold.
   Then
    \begin{align}\label{eq:tensor_error_frobenius}
        \left\|\widetilde b - b^*\right\|_{\mathrm{F}}^2
        =& O_{p}\left(\frac{\|\lambda^*\|_{\infty}(\sum_{j = 1}^sR_jm^{d_j}+\prod_{j = 1}^sR_j)\log(m)}{n} + \sum_{j = 1}^s\sum_{k = R_j+1}^{\infty}\sigma^2_{j,k}(\lambda^*)\right).
    \end{align}
\end{proposition}
There are two terms in the error bound \eqref{eq:tensor_error_frobenius}. The first term represents the estimation variance, which matches the minimax lower bound up to a log factor \cite[][Theorem~3]{zhang2018tensor}. The second term accounts for the remaining singular values, representing the approximation bias using a low-rank tensor. To the best of our knowledge, the tensor estimation in the approximately low-rank tensor settings has not been studied in the literature. This result and the theoretical tools we developed may be of independent interest.

These propositions provide essential bounds on the estimation errors of the singular vectors, which are critical for ensuring the accuracy of the tensor-based estimator $\widehat \lambda_{\mathrm{Tensor}}$.

\subsubsection{Proof of \Cref{thm:main_result}}
\begin{proof}[Proof of \Cref{thm:main_result}]\label{sec:proof_main}
Recall that 
\begin{align*}
    \widehat{\lambda}_{\mathrm{Tensor}} = \sum_{\mu_1 = 1}^{m^{d_1}}\cdots\sum_{\mu_s = 1}^{m^{d_s}} \widetilde b_{\mu_1,\dots,\mu_s}  \phi_{1,\mu_1}\cdots\phi_{s,\mu_s},
\end{align*}
where $\widetilde b$ is the coefficient tensor output by \Cref{alg1}.
We have
\begin{align*}
        \|\lambda^*-\widehat{\lambda}_{\mathrm{Tensor}}\|_{\mathbb{L}_2(\mathbb{X})}
        &\leq \|\lambda^*- \lambda^*\times_1 \mathcal{P}_{\mathcal{U}_1} \cdots \times_s \mathcal{P}_{\mathcal{U}_s}\|_{\mathbb{L}_2(\mathbb{X})} + \|\lambda^*\times_1 \mathcal{P}_{\mathcal{U}_1} \cdots \times_s \mathcal{P}_{\mathcal{U}_s} - \widehat{\lambda}_{\mathrm{Tensor}}\|_{\mathbb{L}_2(\mathbb{X})}\\
        &=: \mathrm{Term}_1 + \mathrm{Term}_2.
\end{align*}
By Assumptions~\ref{ass:approx_err1} and \ref{ass:approx_err2}, Lemma~2 in \cite{khoo2024nonparametric} shows that 
\begin{align*}
    \mathrm{Term}_1 = \left\|\lambda^*-\lambda^* \times_1 \mathcal{P}_{\mathcal{U}_1} \cdots  \times_s \mathcal{P}_{\mathcal{U}_s} \right\|_{\mathbb L_2(\mathbb{X})}=O\left({\|\lambda^*\|_{W_2^{\alpha}(\mathbb{X})}}m^{- \alpha}\right).
\end{align*}
For the second term, we have
\begin{align*}
    &(\mathrm{Term}_2)^2 = \|\lambda^*\times_1 \mathcal{P}_{\mathcal{U}_1} \cdots \times_s \mathcal{P}_{\mathcal{U}_s}- \widehat{\lambda}_{\mathrm{Tensor}}\|_{\mathbb{L}_2(\mathbb{X})}^2\\
    =& \sum_{\mu_1 = 1}^{m^{d_1}}\cdots\sum_{\mu_s = 1}^{m^{d_s}} \left\{\left(\lambda^*\times_1 \mathcal{P}_{\mathcal{U}_1} \cdots \times_s \mathcal{P}_{\mathcal{U}_s} - \widehat{\lambda}_{\mathrm{Tensor}}\right)[\phi_{1,\mu_1}, \cdots, \phi_{s,\mu_s}]\right\}^2\\
    =& \sum_{\mu_1 = 1}^{m^{d_1}}\cdots\sum_{\mu_s = 1}^{m^{d_s}} \Bigg\{\idotsint\left[\sum_{v_1 = 1}^{m^{d_1}}\cdots\sum_{v_s = 1}^{m^{d_s}} ( b_{v_1,\dots,v_s} - \widetilde b_{v_1,\dots,v_s})  \phi_{1,v_1}(x_1)\cdots\phi_{s,v_s}(x_s)\phi_{1,\mu_1}(x_1)\cdots\phi_{s,\mu_s}(x_s)\right]\nonumber\\
        &\quad\quad\quad\quad\quad\quad\quad\quad\quad \d x_1 \cdots\d x_s\Bigg\}^2\\
    =& \sum_{\mu_1 = 1}^{m^{d_1}}\cdots\sum_{\mu_s = 1}^{m^{d_s}} ( b^*_{\mu_1,\dots,\mu_s} - \widetilde b_{\mu_1,\dots,\mu_s})^2\\
    =& \|b^* - \widetilde b\|_{\mathrm{F}}^2\\
    =&O_{p}\left(\frac{\|\lambda^*\|_{\infty}(m^{d_{\max}}\sum_{j = 1}^sR_j+\prod_{j = 1}^sR_j)\log(m)}{n}  + \sum_{j = 1}^s\sum_{\mu_j = R_j+1}^{\infty}\sigma^2_{j,\mu_j}(\lambda^*)\right),
\end{align*}
where the fourth equality follows from the orthonormality of $\{\phi_{j,k}\}_{k = 1}^{m^{d_j}}$, and the last equality follows from \Cref{lemma:oshoop_approx}.
We have
\begin{align*}
        &\|\lambda^*-\widehat{\lambda}_{\mathrm{Tensor}}\|_{\mathbb{L}_2(\mathbb{X})}^2\\
        =& O_{p}\left(\frac{\|\lambda^*\|_{\infty}(m^{d_{\max}}\sum_{j = 1}^sR_j+\prod_{j = 1}^sR_j)\log(m)}{n} + {\|\lambda^*\|_{W_2^{\alpha}(\mathbb{X})}^2}m^{-2\alpha} + \sum_{j = 1}^s\sum_{\mu_j = R_j+1}^{\infty}\sigma^2_{j,\mu_j}(\lambda^*)\right)\\
        =& O_{p}\Bigg(\left\{\frac{\|\lambda^*\|_{\infty}{\|\lambda^*\|_{W_2^{\alpha}(\mathbb{X})}^{d_{\max}/(2\alpha + d_{\max})}}\sum_{j = 1}^sR_j}{n^{2 \alpha/(2\alpha + d_{\max})}}+ \frac{\|\lambda^*\|_{\infty}\prod_{j=1}^sR_j}{n}\right\}\log(m)\\
        &\qquad+ \frac{\|\lambda^*\|_{W_2^{\alpha}(\mathbb{X})}^{(2 \alpha + 2d_{\max})/(2\alpha + d_{\max})}}{n^{2 \alpha/(2\alpha + d_{\max})}} + \sum_{j = 1}^s\sum_{\mu_j = R_j+1}^{\infty}\sigma^2_{j,\mu_j}(\lambda^*)\Bigg)\\
        =& O_p\left(\left\{\frac{\|\lambda^*\|_{\infty}^{2\alpha/(2\alpha + d_{\max})}{\|\lambda^*\|_{W_2^{\alpha}}^{2d_{\max}/(2\alpha + d_{\max})}}\sum_{j = 1}^sR_j}{n^{2\alpha/(2\alpha+d_{\max})}} + \frac{\|\lambda^*\|_{\infty}\prod_{j=1}^sR_j}{n}\right\}\log(n) + \sum_{j = 1}^s\sum_{\mu_j = R_j+1}^{\infty}\sigma^2_{j,\mu_j}(\lambda^*)\right),
\end{align*}
where the second equality follows from the choice $m = ({\|\lambda^*\|_{W_2^{\alpha}(\mathbb{X})}}n)^{1/(2\alpha+d_{\max})}$, and the last equality follows from \Cref{ass:approx_err1}.

The proof concludes by noting that
\[
\max_{j = 1}^s\sum_{\mu_j = R_j+1}^{\infty}\sigma^2_{j,\mu_j}(\lambda^*)  \leq \xi_{(R_1, \dots, R_s)}^2,
\]
which follows from the definitions in \Cref{sec:approximately_low-rank_fct}, since for each $j \in [s]$, $\sum_{\mu_j = R_j+1}^{\infty}\sigma^2_{j,\mu_j}(\lambda^*)$ is the approximation error in squared $\mathbb{L}_2$ norm of the best rank-$(R_1, \dots, R_s)$ approximation function.
\end{proof}

\subsection{Proof for \Cref{sec:low_bound}}

\begin{proof}[Proof of \Cref{Minimax-Theorem-Tensor}]
Suppose $N^{(1)},\dots,N^{(n)}$ i.i.d.\ PPPs with common intensity $\lambda$. Let $P_\lambda$ denote the marginal law of one PPP and $P_\lambda^{(n)} = P_\lambda^{\otimes n}$ denote the joint law. Recall the function class for the true intensity function:
\begin{align*}
    \Lambda_{(R_1, \dots, R_s)}^{\alpha, s} = \left\{ \lambda^* : \mathbb{X} \to \mathbb{R}_+ \;\middle|\; \|\lambda^*\|_{W_2^\alpha(\mathbb{X})} < \infty, \; \|\lambda^*\|_{\infty} < \infty, \right. \nonumber \\ 
    \left.  \; \text{and} \ \inf_{\lambda \in \mathcal{T}_{(R_1, \dots, R_s)}} \|\lambda - \lambda^*\|_{\mathbb{L}_2(\mathbb{X})} \leq \xi_{(R_1, \dots, R_s)} \right\},
\end{align*}
where \[
\mathcal{T}_{(R_1, \dots, R_s)} = \left\{\lambda \in \mathbb{L}_2(\mathbb{X}): \rank(\lambda_j(x_j,x_{-j})) \leq R_j, \forall j \in [s]\right\}
\]
is the set of functions on $\mathbb{X}$, whose Tucker ranks are bounded by $(R_1, \dots, R_s)$.

\noindent\textbf{Step 1: Estimation lower bound: $n^{-2\alpha/(2\alpha+d_{\max})}$.}

We construct a finite subset of $\Lambda_{(R_1,\dots,R_s)}^{\alpha,s}$ and apply
Assouad's lemma.  Let $j^*\in[s]$ such that $d_{j^*}=d_{\max}$. Without loss of generality we assume $[0,1]^{d_j} \subset\mathbb X_j$ for all $j$, and 
$\prod_{j=1}^s[0,1]^{d_j}\subset\mathbb X$.

\smallskip
\noindent\textbf{Construct a subclass.}
Fix nonnegative, infinitely differentiable and compact supported functions $\eta_j\in C_c^\infty([0,1]^{d_j})$ for all $j\neq j^*$,
with $\eta_j \neq 0$, and define $\eta_-(x_{-j^*}) =\prod_{j\neq j^*}\eta_j(x_j)$.
For any $f:[0,1]^{d_{j^*}}\to\mathbb R_+$ with $\|f\|_{\infty} < \infty$ and $\|f\|_{W_2^\alpha([0,1]^{d_{j^*}})} < \infty$, define
\begin{equation}
\label{eq:lambda-f}
\lambda_f(x_1,\dots,x_s) = \eta_-(x_{-j^*})\,f(x_{j^*}).
\end{equation}
Note that $\lambda_f$ is multiplicative, so its Tucker rank is $(1, \dots, 1)$. Thus, $\lambda_f\in\mathcal T_{(R_1,\dots,R_s)}$ whenever $R_j\ge 1$ for all $j$. We have
\[
\|\lambda_f\|_{\infty} \le \|\eta_-\|_{\infty}\|f\|_{\infty} < \infty.
\]
Moreover, for any multi-index $\beta=(\beta^{(1)},\dots,\beta^{(s)})$ with $|\beta|_1\le \alpha$,
\[
\mathcal D^\beta \lambda_f(x)
=
\big(\mathcal D^{\beta^{(j^*)}} f\big)(x_{j^*})
\prod_{j\neq j^*}\big(\mathcal D^{\beta^{(j)}}\eta_j\big)(x_j).
\]
Note that by Fubini 
\begin{align*}
\|\lambda_f\|_{W_2^\alpha(\mathbb X)}^2
&= \sum_{0 \leq |\beta|_1 \leq \alpha} \|\mathcal D^\beta \lambda_f\|_{\mathbb{L}_2(\mathbb{X})}^2 = \sum_{0 \leq |\beta|_1 \leq \alpha} \big\|\mathcal D^{\beta^{(j^*)}} f\big\|_{\mathbb{L}_2}^2
\prod_{j\neq j^*}\big\|\mathcal D^{\beta^{(j)}}\eta_j\big\|_{\mathbb{L}_2}^2\\
&= \sum_{0 \leq |\gamma|_1 \leq \alpha} \big\|\mathcal D^{\gamma} f\big\|_{\mathbb{L}_2}^2
\sum_{\sum_{j \neq j^*}|\beta^{(j)}|_1 \leq \alpha - |\gamma|_1}\prod_{j\neq j^*}\big\|\mathcal D^{\beta^{(j)}}\eta_j\big\|_{\mathbb{L}_2}^2\\
&= \sum_{0 \leq |\gamma|_1 \leq \alpha} C_{\eta}(\gamma)\,\big\|\mathcal D^{\gamma} f\big\|_{\mathbb{L}_2}^2\\
&\le
C_\eta\,\|f\|_{W_2^\alpha([0,1]^{d_{j^*}})}^2 < \infty,
\end{align*}
for a constant $C_\eta = \max_{0 \le |\gamma|_1 \leq \alpha}C_{\eta}(\gamma) < \infty$ depending only on $\alpha$ and $\{\eta_j\}_{j\neq j^*}$.

Next, we provide the construction of a set of such $f$ functions. Fix $\phi\in C_c^\infty((0,1)^{d_{j^*}})$, i.e.~$\phi$ is infinitely differentiable on $(0,1)^d$ and has compact support, such that $\|\phi\|_{\mathbb L_2}=1$. Let $h \in (0, 1/4]$ whose value will be specified later.  Let $m = \lfloor 1/h\rfloor$, and partition $[0,1]^{d_{j^*}}$ into
$M = m^{d_{j^*}}$ disjoint cubes $\{Q_k\}_{k=1}^M$ of side length $\asymp h$. Since $\mathrm{supp}(\phi)\subset (0,1)^{d_{j^*}}$, for each cube $Q_k$ one can choose $x(k)\in Q_k$ such that
\[
x(k)+h\mathrm{supp}(\phi)\subset Q_k.
\]
Therefore, for
\[
\phi_k(x) = h^{-d_{j^*}/2}\phi\left(\frac{x-x(k)}{h}\right),
\]
we have
\[
\mathrm{supp}(\phi_k)=x(k)+h\mathrm{supp}(\phi)\subset Q_k.
\]
Then the supports are disjoint,
$\|\phi_k\|_{\mathbb L_2}=1$, and for any multi-index $\gamma$,
\[
\|\mathcal D^\gamma \phi_k\|_{\mathbb L_2} = h^{-|\gamma|_1}\|\mathcal D^\gamma\phi\|_{\mathbb L_2},
\qquad
\|\phi_k\|_\infty \le h^{-d/2}\|\phi\|_\infty.
\]
Let $\theta\in \Theta = \{-1,+1\}^M$. Fix $f_0>0$ and define
\begin{equation}\label{eq:f_theta}
f_\theta(x) =f_0 + a\sum_{k=1}^M \theta_k \phi_k(x),
\qquad x\in(0,1)^{d_{j^*}},
\end{equation}
with amplitude $a = c_a h^{\alpha + d_{j^*}/2}$ for a sufficiently small constant $c_a>0$.
Define the intensity functions on $\mathbb X$:
\begin{equation*}
\lambda_\theta(x_1,\dots,x_s) =\eta_-(x_{-j^*})\,f_\theta(x_{j^*}).
\end{equation*}
Then $\lambda_\theta\in\mathcal T_{(R_1,\dots,R_s)}$, whenever $R_j\ge 1$ for all $j$, and has
$\inf_{\lambda\in\mathcal T_{(R_1,\dots,R_s)}}\|\lambda-\lambda_\theta\|_{\mathbb{L}_2}=0$.
Since the supports of $\{\phi_k\}$ are disjoint, for each $x$ at most one $\phi_k(x)$ is nonzero, hence
\[
|f_\theta(x)-f_0|\le a\max_{1\le k \le M}\|\phi_k\|_\infty \le a h^{-d_{j^*}/2}\|\phi\|_\infty = c_a\|\phi\|_\infty h^\alpha.
\]
Choose $c_a > 0$ sufficiently small so that $c_a\|\phi\|_\infty h^\alpha\le f_0/2$. Then
\begin{equation}
\label{eq:f-positive}
f_\theta(x)\in[f_0/2,3f_0/2]\quad \forall x,
\end{equation}
so $\lambda_\theta\ge 0$ and $\|\lambda_\theta\|_\infty\le (3f_0/2)\|\eta_-\|_\infty$.

By disjoint supports and scaling, for $|\gamma|_1\le \alpha$,
\[
\|\mathcal D^\gamma(f_\theta-f_0)\|_{\mathbb L_2}^2
=
a^2\sum_{k=1}^M \|\mathcal D^\gamma \phi_k\|_{\mathbb L_2}^2
=
a^2 M h^{-2|\gamma|_1}\|\mathcal D^\gamma\phi\|_{\mathbb L_2}^2 \lesssim c_a^2h^{2\alpha+d_{j^*}}h^{-d_{j^*}}h^{-2|\gamma|_1}\|\mathcal D^\gamma\phi\|_{\mathbb L_2}^2 \lesssim 1,
\]
which follows from \(M=m^d\asymp h^{-d}\), \(a=c_a h^{\alpha+d/2}\) and $\phi\in C_c^\infty((0,1)^{d_{j^*}})$.
Hence $\|f_\theta\|_{W_2^\alpha([0,1]^{d_{j^*}})} < \infty$.
Therefore $\|\lambda_\theta\|_{W_2^\alpha(\mathbb X)} < \infty$ and $\lambda_\theta\in \Lambda_{(R_1,\dots,R_s)}^{\alpha,s}$ for every $\theta \in \Theta$.

\smallskip
\noindent\textbf{Separatability of the constructed subset $\{\lambda_{\theta}\}_{\theta \in \Theta}$.} 
For any $\theta, \theta^{\prime} \in \Theta$, we have by Fubini
\begin{align}\label{eq:reduction_Hamming}
\|\lambda_\theta-\lambda_{\theta^{\prime}}\|_{\mathbb{L}_2}^2 &= a^2\|\eta_-\|_{\mathbb{L}_2}^2\Big\|\sum_{k = 1}^M(\theta_k - \theta_k^{\prime})\phi_k\Big\|_{\mathbb{L}_2}^2 = a^2\|\eta_-\|_{\mathbb{L}_2}^2\sum_{k = 1}^M(\theta_k - \theta_k^{\prime})^2\|\phi_k\|_{\mathbb{L}_2}^2\nonumber\\
&= 4a^2\|\eta_-\|_{\mathbb{L}_2}^2d_H(\theta, \theta^{\prime}),
\end{align}
where the second equality follows from the disjoint supports, and the last equality follows from the fact that $\|\phi_k\|_{\mathbb{L}_2} = 1$ and $\sum_{k = 1}^M(\theta_k - \theta^{\prime}_k) = 4\sum_{k = 1}^M\mathbbm{1}\{\theta_k \neq \theta_k^{\prime}\} = 4d_H(\theta, \theta^{\prime})$ where $d_H(\theta, \theta^{\prime})$ is the Hamming distance between $\theta$ and $\theta^{\prime}$.

More specifically, let $\theta^{(k)}$ be $\theta$ with the $k$th bit flipped. Then $d_H(\theta,\theta^{(k)})=1$, and hence \eqref{eq:reduction_Hamming},
\begin{equation}
\label{eq:L2-sep}
\|\lambda_\theta-\lambda_{\theta^{(k)}}\|_{\mathbb L_2}^2
=
4a^2\|\eta_-\|_{\mathbb L_2}^2.
\end{equation}
Note that on $\mathbb{X} \setminus S = \{x \in \mathbb{X}: \eta_-(x_{-j^*}) = 0\}$, we have $\lambda_\theta=\lambda_{\theta^{(k)}}=0$.
By \Cref{thm:KL}, we have that
\begin{align*}
\mathrm{KL}(P_{\lambda_\theta}\|P_{\lambda_{\theta^{(k)}}})
&=
\int_{\mathbb X}\Big[
\lambda_\theta(x)\log\frac{\lambda_\theta(x)}{\lambda_{\theta^{(k)}}(x)}-\lambda_\theta(x)+\lambda_{\theta^{(k)}}(x)
\Big]\,d\upsilon(x)\\
&= \int_{S}\Big[
\lambda_\theta(x)\log\frac{\lambda_\theta(x)}{\lambda_{\theta^{(k)}}(x)}-\lambda_\theta(x)+\lambda_{\theta^{(k)}}(x)
\Big]\,d\upsilon(x),
\end{align*}
where the second equality follows from the convention that $0 \log 0 = 0$. On $S$, set $$t(x) = (\lambda_\theta(x)-\lambda_{\theta^{(k)}}(x))/\lambda_{\theta^{(k)}}(x),$$
so that 
\[
\lambda_\theta(x) = \lambda_{\theta^{(k)}}(x)(1 + t(x)),
\]
and
\begin{align*}
\mathrm{KL}(P_{\lambda_\theta}\|P_{\lambda_{\theta^{(k)}}})
&= \int_{S}
\lambda_{\theta^{(k)}}(x)\Big[(1 + t(x))\log(1+t(x))-t(x)\Big]\,d\upsilon(x).
\end{align*}
On $S$, by \eqref{eq:f-positive}, we choose $c_a$ so that
$|\lambda_\theta(x)-\lambda_{\theta^{(k)}}(x)| \le \lambda_{\theta^{(k)}}(x)/2$ for all $x$. Then we have $|t(x)| \leq 1/2$ for all $x$. Since for all $t$ with $|t| \le 1/2$, $(1+t)\log(1+t)-t \le t^2$, we have
\begin{align*}
\mathrm{KL}(P_{\lambda_\theta}\|P_{\lambda_{\theta^{(k)}}})
&\le \int_{S}
\frac{(\lambda_\theta(x)-\lambda_{\theta^{(k)}}(x))^2}{\lambda_{\theta^{(k)}}(x)}\,d\upsilon(x)\\
&\le \frac{\|\lambda_\theta-\lambda_{\theta^{(k)}}\|_{\mathbb L_2}^2}{\inf_{x \in S}\lambda_{\theta^{(k)}}(x)}\\
&\le Ca^2,
\end{align*}
where the last inequality follows the fact that $\inf_{x \in S}\lambda_{\theta^{(k)}}(x) > 0$ is bounded away from $0$ by construction, and $C > 0$ is a constant depending on $f_0$ and $\eta_-$. Since we are given the i.i.d.~PPPs $\{N^{i}\}_{i = 1}^n$, by \Cref{thm:KL} again, we have
\[
\mathrm{KL}(P_{\lambda_\theta}^{(n)}\|P_{\lambda_{\theta^{(k)}}}^{(n)}) \le Cna^2.
\]
By Pinsker's inequality, we have the TV distance is bounded by
\begin{equation}\label{eq:TV}
\mathrm{TV}(P_{\lambda_\theta}^{(n)},P_{\lambda_{\theta^{(k)}}}^{(n)}) \le \sqrt{\mathrm{KL}(P_{\lambda_\theta}^{(n)}\|P_{\lambda_{\theta^{(k)}}}^{(n)})/2} \le \sqrt{Cna^2/2}.
\end{equation}

\smallskip
\noindent\textbf{Apply Assouad’s lemma.}  To ensure $\mathrm{TV}(P_{\lambda_\theta}^{(n)},P_{\lambda_{\theta^{(k)}}}^{(n)}) \le 1/2$, according to \eqref{eq:TV}, we choose $h$ such that $n a^2 = n c_a^2 h^{2\alpha+d}\le c_0$ for a small absolute $c_0>0$,
i.e.\ $h\asymp n^{-1/(2\alpha+d)}$.

Given an arbitrary estimator $\widehat{\lambda}$, we define
\[
\widehat{\theta} \in \argmin_{\theta \in \Theta}\|\widehat{\lambda}- \lambda_{\theta}\|_{\mathbb{L}_2}^2.
\]
Then for any $\theta$,
\[
\|\lambda_{\widehat{\theta}}- \lambda_{\theta}\|_{\mathbb{L}_2} \le \|\widehat{\lambda}- \lambda_{\widehat\theta}\|_{\mathbb{L}_2} + \|\widehat{\lambda}- \lambda_{\theta}\|_{\mathbb{L}_2} \le 2\|\widehat{\lambda}- \lambda_{\theta}\|_{\mathbb{L}_2},
\]
so by \eqref{eq:reduction_Hamming}
\[
\|\widehat{\lambda}- \lambda_{\theta}\|_{\mathbb{L}_2}^2 \ge \frac{1}{4}\|\lambda_{\widehat{\theta}}- \lambda_{\theta}\|_{\mathbb{L}_2}^2 \ge a^2\|\eta_-\|_{\mathbb{L}_2}^2d_H(\widehat\theta, \theta).
\]
Note that $a = c_ah^{\alpha + d/2}$, $M = m^d$, $m = \lfloor 1/h\rfloor$, and we have chosen $h \asymp n^{-1/(2\alpha+d)}$. By Assouad's lemma \citep[Lemma 2.12 in][]{Tsybakov2009}, we have that
\begin{align*}
    \inf_{\widehat\lambda}\sup_{\theta}\mathbb{E}_{\theta}\|\widehat{\lambda}- \lambda_{\theta}\|_{\mathbb{L}_2}^2 \ge a^2\|\eta_-\|_{\mathbb{L}_2}^2\inf_{\widehat\theta}\sup_{\theta}d_H(\widehat\theta, \theta) \ge \frac{a^2\|\eta_-\|_{\mathbb{L}_2}^2M}{4} \asymp h^{2\alpha+d}h^{-d} \asymp n^{-2\alpha/(2\alpha+d)}.
\end{align*}
Hence
\[
\inf_{\widehat\lambda \in T_{(R_1, \dots, R_s)}}\sup_{\lambda^* \in \Lambda_{(R_1, \dots, R_s)}^{\alpha, s}}\|\widehat\lambda - \lambda^*\|_{\mathbb{L}_2}^2 \ge \inf_{\widehat\lambda \in T_{(R_1, \dots, R_s)}}\sup_{\theta}\|\widehat\lambda - \lambda_\theta\|_{\mathbb{L}_2}^2 \ge Cn^{-2\alpha/(2\alpha+d)}.
\]

\bigskip
\noindent\textbf{Step 2: Approximation lower bound: $\xi_{(R_1, \dots, R_s)}^2$.}
This step is deterministic. Note that for any estimator $\widehat\lambda\in\mathcal T_{(R_1, \dots, R_s)}$, we have 
\[
\|\widehat\lambda - \lambda^*\|_{\mathbb{L}_2} \ge \inf_{g \in \mathcal T_{(R_1, \dots, R_s)}}\|g - \lambda^*\|_{\mathbb{L}_2}.
\]
Let $\lambda^* \in \Lambda_{(R_1, \dots, R_s)}^{\alpha, s}$ such that $\inf_{g \in \mathcal{T}_{(R_1, \dots, R_s)}} \|g - \lambda^*\|_{\mathbb{L}_2(\mathbb{X})} = \xi_{(R_1, \dots, R_s)}$. Hence
\[
\inf_{\widehat\lambda \in T_{(R_1, \dots, R_s)}}\sup_{\lambda^* \in \Lambda_{(R_1, \dots, R_s)}^{\alpha, s}}\|\widehat\lambda - \lambda^*\|_{\mathbb{L}_2} \ge \xi_{(R_1, \dots, R_s)}.
\]

\bigskip
Combining \textbf{Step 1} and \textbf{Step 2} concludes the proof.
\end{proof}

\subsection{Proof for \Cref{sec:alg1_results}}

\begin{proof}[Proof of \Cref{lemma:hosvd_approx}]
    We simplify the notation. Let $\widehat b = \widehat b^{H_1}$.
    We only upper bound $\|\sin\Theta(\widehat U_{j}^{(0)}, U_{j})\|_{\op}$ with $j = 1$, since the same argument applies with $j = 2,\dots,s$.
    Let $Y = \mathcal{M}_1(\widehat b) \in \mathbb R^{m^{d_1} \times m^{D-d_1}}$, $X = \mathcal{M}_1(b^*) \in \mathbb R^{m^{d_1} \times m^{D-d_1}}$, $Z = Y - X = \mathcal{M}_1(\widehat b - b^*)$, $\widehat U = \widehat U_1^{(0)} \in \mathbb O_{m^{d_1}, R_1}$ and $\widehat V = \widehat V_1^{(0)} \in \mathbb O_{m^{D-d_1}, R_1}$ be the left and right singular vectors of $Y$, as well as $U = U_1 \in \mathbb O_{m^{d_1}, R_1}$ and $V = V_1 \in \mathbb O_{m^{D-d_1}, R_1}$ be the left and right singular vectors of $X$.

\noindent\textbf{Step 1.}
    We start by giving some deviation bounds to be used in the rest of the proof. 
    We apply  \eqref{eq:fundamental_general} with $W_1 = I_{m^{d_1}}$ and $W_2 = I_{m^{D-d_1}}$.
    Recall that we choose
    \begin{align*}
    m = ({\|\lambda^*\|_{W_2^{\alpha}(\mathbb{X})}}n)^{1/(2\alpha + d_{\max})}.
    \end{align*}
    We have, under \Cref{ass:approx_err1},  $$\sqrt{\frac{\|\lambda^*\|_{\infty}(m^{D-d_1}+ m^{d_1})\log(m)}{n}} \gtrsim \sqrt{\frac{m^{d_{\max}}\log(m)}{n}} \gtrsim \frac{m^{D/2}\log(m)}{n},$$
    for some constant $C > 0$, where the first inequality follows because (i) if $d_1 = d_{\max}$ then $m^{D-d_1} + m^{d_1} > m^{d_{\max}}$ or (ii) if $d_1 \neq d_{\max}$ then $D - d_1 > d_{\max}$.
    Thus, there exists an absolute constant $a > 0$ such that the following event
    \begin{align}\label{eq:Z_op_bound}
        \mathcal{E}_1 = \left\{\left\|Z\right\|_{\op} \leq a\sqrt{\frac{\|\lambda^*\|_{\infty}(m^{D-d_1}+ m^{d_1})\log(m)}{n}}\right\}
    \end{align}
    holds with probability at least $1 - m^{-5}$. 

    We apply \eqref{eq:fundamental_general} with $W_1 = I_{m^{d_1}}$ and $W_2 = V \in \mathbb{O}_{m^{D-d_1}, R_1}$. Since $m = ({\|\lambda^*\|_{W_2^{\alpha}(\mathbb{X})}}n)^{1/(2\alpha + d_{\max})} =  C(\|\lambda^*\|_{\infty}n)^{1/(2\alpha + d_{\max})}$, for some constant $C > 0$, and $d_{\max} + d_{\min} > D - 2\alpha$, we have that there exists an absolute constant $C_1 > 0$ such that $\|\lambda^*\|_{\infty}n \geq C_1m^{2\alpha + d_{\max}}$
    and
    \[
        \sqrt{\frac{\|\lambda^*\|_{\infty}(m^{d_1}+ R_1)\log(m)}{n}} \geq \sqrt{\frac{\|\lambda^*\|_{\infty}m^{d_{\min}}\log(m)}{n}} \geq \sqrt{\frac{\|\lambda^*\|_{\infty}m^{D-2\alpha-d_{\max}}}{n}}\log(m) \geq C_1^{1/2}\frac{m^{D/2}\log(m)}{n}.
    \]
    Thus, there exists an absolute constant $a > 0$ such that the following event
    \begin{align}\label{eq:Z_op_bound_2}
        \mathcal{E}_2 = \left\{\left\|ZV\right\|_{\op} \leq a\sqrt{\frac{\|\lambda^*\|_{\infty}(m^{d_1} +R_1)\log(m)}{n}}\right\}
    \end{align}
    holds with probability at least $1 - m^{-5}$. 

    We apply \eqref{eq:fundamental_general} with $W_1 = U \in \mathbb{O}_{m^{d_1}, R_1}$ and $W_2 = I_{m^{D-d_1}}$. Since $m = ({\|\lambda^*\|_{W_2^{\alpha}(\mathbb{X})}}n)^{1/(2\alpha + d_{\max})} = C(\|\lambda^*\|_{\infty}n)^{1/(2\alpha + d_{\max})}$, for some constant $C > 0$, and $d_{\max} + d_{\min} > D - 2\alpha$, we have there exists an absolute constant $C_1 > 0$ such that $\|\lambda^*\|_{\infty}n \geq C_1m^{2\alpha + d_{\max}} \geq C_1m^{d_{\max}}\log(m)$
    and
    \[
        \sqrt{\frac{\|\lambda^*\|_{\infty}(m^{D-d_1}+ R_1)\log(m)}{n}} \geq \sqrt{\frac{\|\lambda^*\|_{\infty}m^{D-d_{\max}}\log(m)}{n}}  \geq C_1^{1/2}\frac{m^{D/2}\log(m)}{n}.
    \]
    Thus, there exists an absolute constant $a > 0$ such that the following event
    \begin{align}\label{eq:Z_op_bound_3}
        \mathcal{E}_3 = \left\{\left\|Z^{\top}U\right\|_{\op} \leq a\sqrt{\frac{\|\lambda^*\|_{\infty}(m^{D-d_1} +R_1)\log(m)}{n}}\right\}
    \end{align}
    holds with probability at least $1 - m^{-5}$. 
    
    We assume $\mathcal{E}_1 \cap \mathcal{E}_2 \cap \mathcal{E}_3$ holds  throughout the rest of proof.

\noindent\textbf{Step 2.}
    By \Cref{thm:wedin}, we have
    \begin{align}\label{eq:proof_sintheta_approx}
        \left\|\sin\Theta(\widehat U, U)\right\|_{\op} \leq \frac{\|Z\|_{\op}}{\sigma_{R_1}(X) - \sigma_{R_1+1}(X) - \|Z\|_{\op}}.
    \end{align}
    By \eqref{eq:Z_op_bound}, we can upper bound $\|Z\|_{\op}$. To upper bound $\|\sin\Theta(\widehat U, U)\|_{\op}$, we only need to lower bound the denominator of \eqref{eq:proof_sintheta_approx}:
    \begin{align}
        &\sigma_{R_1}(X) - \sigma_{R_1+1}(X) - \|Z\|_{\op}\nonumber\\
        =& \sigma_{1,R_1}(\lambda^* \times_1 \mathcal{P}_{\mathcal{U}_1} \cdots \times_s \mathcal{P}_{\mathcal{U}_s}) - \sigma_{1,R_1+1}(\lambda^* \times_1 \mathcal{P}_{\mathcal{U}_1} \cdots \times_s \mathcal{P}_{\mathcal{U}_s}) - \|Z\|_{\op}\nonumber\\
        \geq& \sigma_{1,R_1}(\lambda^*) - \sigma_{1,R_1+1}(\lambda^*) - \left|\sigma_{1,R_1}(\lambda^*) - \sigma_{1,R_1}(\lambda^* \times_1 \mathcal{P}_{\mathcal{U}_1} \cdots \times_s \mathcal{P}_{\mathcal{U}_s}) \right|\nonumber\\
        &- \left|\sigma_{1,R_1+1}(\lambda^*) - \sigma_{1,R_1+1}(\lambda^* \times_1 \mathcal{P}_{\mathcal{U}_1} \cdots \times_s \mathcal{P}_{\mathcal{U}_s}) \right| -\|Z\|_{\op}\nonumber\\
        \geq& \sigma_{1,R_1}(\lambda^*) - \sigma_{1,R_1+1}(\lambda^*) - 2\|\mathcal{M}_1(\lambda^*) - \mathcal{M}_1(\lambda^* \times_1 \mathcal{P}_{\mathcal{U}_1} \cdots \times_s \mathcal{P}_{\mathcal{U}_s})\|_{\op} - \|Z\|_{\op}\nonumber\\
    \geq& \sigma_{1,R_1}(\lambda^*) - \sigma_{1,R_1+1}(\lambda^*) - 2\|\lambda^* - \lambda^* \times_1 \mathcal{P}_{\mathcal{U}_1} \cdots \times_s \mathcal{P}_{\mathcal{U}_s}\|_{\mathbb{L}_2(\mathbb{X})} - \|Z\|_{\op}\nonumber\\
    \geq& \sigma_{1,R_1}(\lambda^*) - \sigma_{1,R_1+1}(\lambda^*) - O({\|\lambda^*\|_{W_2^{\alpha}(\mathbb{X})}}m^{-\alpha}) - a\sqrt{\frac{\|\lambda^*\|_{\infty}(m^{D-d_1}+ m^{d_1})\log(m)}{n}} \nonumber\\
    \geq& C_2\left\{\sigma_{1,R_1}(\lambda^*) - \sigma_{1,R_1+1}(\lambda^*)\right\} \label{eq:sin_gap_ineq},
    \end{align}
    where $C_2 \in (0,1)$ is an absolute constant. The first equality following from $\sigma_{k}(\mathcal{M}_1(b^*)) = \sigma_{k}(\mathcal{M}_1(\lambda^* \times_1 \mathcal{P}_{\mathcal{U}_1} \cdots \times_s \mathcal{P}_{\mathcal{U}_s}))$. The first inequality follows from the triangle inequality. The second inequality follows from \Cref{LemmaDaren}, since $\mathcal{M}_1(\lambda^*)$ and $\mathcal{M}_1(\lambda^* \times_1 \mathcal{P}_{\mathcal{U}_1} \cdots \times_s \mathcal{P}_{\mathcal{U}_s})$ are two compact operators on Hilbert space. The fourth inequality follows from \eqref{eq:approx_error} and \eqref{eq:Z_op_bound}. The last inequality follows from the condition \eqref{eq:singular_gap_proposition}.
    \
    \\
    This yields that, for the absolute constant $C_3 = a/C_2 > 0$,
    \begin{align*}
        \left\|\sin\Theta(\widehat U, U)\right\|_{\op} \leq& \frac{\|Z\|_{\op}}{\sigma_{R_1}(X) - \sigma_{R_1+1}(X) - \|Z\|_{\op}}\\
        & \leq \frac{C_3}{\sigma_{1,R_1}(\lambda^*) - \sigma_{1,R_1+1}(\lambda^*)}\sqrt{\frac{\|\lambda^*\|_{\infty}(m^{D-d_1}+ m^{d_1})\log(m)}{n}},
    \end{align*}
    which concludes the proof. 
\end{proof}

\begin{proof}[Proof of \Cref{lemma:oshoop_approx}]
\
\\
\noindent\textbf{Step 0: notation.}
Let $p_j = m^{d_j}$, $p_{-j} = \prod_{\ell\neq j}p_\ell$, $R_{-j} = \prod_{\ell\neq j}R_\ell$ and $D =\sum_{j=1}^s d_j$. For $j \in [s]$, let
\[W^*_{-j} = \otimes_{\ell\neq j} U_\ell^*\in\mathbb O_{p_{-j}, R_{-j}} \quad W^{(0)}_{-j} = \otimes_{\ell\neq j}\widehat U_\ell^{(0)}\in\mathbb O_{p_{-j}, R_{-j}} \quad W^{(1)}_{-j} = \otimes_{\ell\neq j}\widehat U_\ell^{(1)}\in\mathbb O_{p_{-j}, R_{-j}}.\]
For a matrix $U\in\mathbb O_{p,r}$, write
$\mathcal P_U = UU^\top$ and $\mathcal P_{U^\perp} =I-\mathcal P_U$. Using kronecker product properties, we have 
\[\mathcal P_{\otimes_{\ell \neq j}U_{\ell}} = \otimes_{\ell \neq j}\mathcal P_{U_\ell} = \otimes_{\ell \neq j}(U_\ell U_\ell^{\top}) = (\otimes_{\ell \neq j} U_\ell) \cdot (\otimes_{\ell \neq j} U_\ell)^{\top}.\]

For each split $H_k$ ($k\in[3]$), define the noise tensor
\[
Z^{(k)} = \widehat b^{H_k}-b^*.
\]
Recall that Algorithm~\ref{alg1} outputs
\[
\widetilde b
=
\widehat b^{H_3}\times_1\mathcal P_{\widehat U_1^{(1)}}\times_2\cdots\times_s\mathcal P_{\widehat U_s^{(1)}}.
\]
Let $\delta_R = \min_{j = 1}^s\{\sigma_{j,R_j}(\lambda^*)-\sigma_{j,R_j+1}(\lambda^*) \}$.

The first for-loop in \Cref{alg1} outputs for each $j \in [s]$
    \begin{align*}
        \widehat U^{(0)}_j = \mathrm{SVD}_{(R_j)}(\mathcal{M}_j(\widehat b^{H_1})) \in \mathbb O_{p_j, R_j}.
    \end{align*}
    By \Cref{lemma:hosvd_approx}, we have that with probability at least $1 - 3m^{-5}$,
    \begin{align}\label{eq:initial sin theta}
        & \max_{j = 1}^s\left\|\sin\Theta(\widehat U_j^{(0)}, U_j^*)\right\|_{\op}\nonumber\\
        \leq&
        C\max_{j = 1}^s\left\{\frac{1}{\sigma_{j,R_j}(\lambda^*) - \sigma_{j,R_j+1}(\lambda^*)}\sqrt{\frac{\|\lambda^*\|_{\infty}(p_{-j}+ p_j)\log(m)}{n}}\right\}\nonumber\\
        \leq& c,
    \end{align}
    where $c \in (0,1)$ is some sufficiently small constant, and this follows from \eqref{eq:singular_gap_proposition} where $C_{\mathrm{gap}}$ is sufficiently large.

    \medskip
\noindent\textbf{Step 1: decompose $\|\widetilde b-b^*\|_{\mathrm F}$.}
Write
\[
\widetilde b-b^*
=
(b^*+Z^{(3)})\times_1\mathcal P_{\widehat U_1^{(1)}}\cdots\times_s\mathcal P_{\widehat U_s^{(1)}}-b^*,
\]
so by the triangle inequality,
\begin{equation}
\label{eq:I1I2_def}
\|\widetilde b-b^*\|_{\mathrm F}
\le
\mathrm{Term}_1+\mathrm{Term}_2,
\end{equation}
where
\begin{align*}
\mathrm{Term}_1
&=
\big\|b^*\times_1\mathcal P_{\widehat U_1^{(1)}}\times_2\cdots\times_s\mathcal P_{\widehat U_s^{(1)}}-b^*\big\|_{\mathrm F},\\
\mathrm{Term}_2
&=
\big\|Z^{(3)}\times_1\mathcal P_{\widehat U_1^{(1)}}\times_2\cdots\times_s\mathcal P_{\widehat U_s^{(1)}}\big\|_{\mathrm F}.
\end{align*}

\medskip
\noindent\textbf{Step 2: bound $I_2$.} 
Fix any $j\in[s]$ and define
\[
 T
=
Z^{(3)}\times_1\mathcal P_{\widehat U_1^{(1)}}\times_2\cdots\times_s\mathcal P_{\widehat U_s^{(1)}}.
\]
Using the standard unfolding identity for multilinear products,
\[
\mathcal M_j( T)
=
\mathcal P_{\widehat U_j^{(1)}}\cdot
\mathcal M_j(Z^{(3)})\cdot
\big(\otimes_{\ell\neq j}\mathcal P_{\widehat U_\ell^{(1)}}\big) = \mathcal P_{\widehat U_j^{(1)}}\cdot
\mathcal M_j(Z^{(3)})\cdot
W_{-j}^{(1)} \cdot {W_{-j}^{(1)}}^{\top}.
\]
By \Cref{lemma:Fobenius norm preserve},
\[
\mathrm{Term}_2 = \|T\|_{\mathrm F}
=
\|\mathcal M_j( T)\|_{\mathrm F}
=
\big\|\widehat U_j^{(1)\top}\cdot \mathcal M_j(Z^{(3)})\cdot W_{-j}^{(1)}\big\|_{\mathrm F}.
\]
Note that, $\{w_t\}_{t = 1}^{R_{-j}}$, the columns of $W_{-j}^{(1)}$, are orthonormal.
We have
\[
\big\|\widehat U_j^{(1)\top}\cdot \mathcal M_j(Z^{(3)})\cdot W^{(1)}_{-j}\big\|_{\mathrm F}^2
=
\sum_{t=1}^{R_{-j}}\big\|\widehat U_j^{(1)\top}\cdot \mathcal M_j(Z^{(3)})\cdot w_t\big\|_2^2
=
\sum_{t=1}^{R_{-j}}\big\|\widehat U_j^{(1)\top}\cdot \mathcal M_j(Z^{(3)})\cdot w_t\big\|_{\op}^2.
\]
Due to the sample splitting, we have $\{\widehat{U}_j^{(1)}, W_{-j}^{(1)}\}$ is independent of $Z^{(3)}$.
Apply \Cref{lemma:deviation_op_ortho} with $W_1=\widehat U_j^{(1)}$ of rank $R_j$ and
$W_2=w_t$ of rank $1$, we have with probability at least $1 - Cm^{-5}$, for all $t$
\[
\max_{j = 1}^s\big\|\widehat U_j^{(1)\top}\cdot \mathcal M_j(Z^{(3)})\cdot w_t\big\|_{\op}
\le
a_1\sqrt{\frac{\|\lambda^*\|_\infty (R_j+1)\log m}{n}}
+
a_2\frac{m^{D/2}\log m}{n}.
\]
Therefore,
\begin{equation}
\label{eq:I2_bound_raw}
\mathrm{Term}_2
=
\|T\|_{\mathrm F}
\le
\sqrt{R_{-j}}\left(
a_1\sqrt{\frac{\|\lambda^*\|_\infty (R_j+1)\log m}{ n}}
+
a_2\frac{m^{D/2}\log m}{ n}
\right).
\end{equation}

\medskip
\noindent\textbf{Step 3: Bound $\mathrm{Term}_1$ by a telescoping sum.}
Using the fact that $\|\mathcal P_{\widehat U_\ell^{(1)}}\|_{\op}\le 1$,
a standard telescoping argument yields
\begin{equation}
\label{eq:telescoping_I1}
\mathrm{Term}_1
=
\big\|b^*\times_1\mathcal P_{\widehat U_1^{(1)}}\cdots\times_s\mathcal P_{\widehat U_s^{(1)}}-b^*\big\|_{\mathrm F}
\le
\sum_{k=1}^s
\|b^*\times_k(I-\mathcal P_{\widehat U_k^{(1)}})\|_{\mathrm F}
=
\sum_{k=1}^s
\|b^*\times_k\mathcal P_{\widehat U_{k\,\perp}^{(1)}}\|_{\mathrm F}.
\end{equation}

\medskip
\noindent\textbf{Step 3.1: control of $\|b^*\times_k\mathcal P_{\widehat U_{k\,\perp}^{(1)}}\|_{\mathrm F}$.}
We consider only the case $k = 1$. Recall that $U_j^*\in\mathbb O_{p_j,R_j}$ is the top-$R_j$ left singular subspace of
$\mathcal M_j(b^*)$.
We have that
\begin{align}\label{eq:main_b^*_0}
    & \left\| b^* \times_1 \mathcal P_{\widehat U_{1\,\perp}^{(1)}}\right\|_{\mathrm{F}}\nonumber\\
    &\leq \left\| b^* \times_1 \mathcal P_{\widehat U_{1\,\perp}^{(1)}} \times_{2} \mathcal{P}_{U_{2}^*}\right\|_{\mathrm{F}} + \left\| b^* \times_1 \mathcal P_{\widehat U_{1\,\perp}^{(1)}} \times_{2} \mathcal{P}_{U_{2\,\perp}^{*}}\right\|_{\mathrm{F}}\nonumber\\
    &\leq \left\| b^* \times_1 \mathcal P_{\widehat U_{1\,\perp}^{(1)}} \times_{2} \mathcal{P}_{U_{2}^*}\right\|_{\mathrm{F}} + \left\| b^* \times_{2} \mathcal{P}_{U_{2\,\perp}^{*}}\right\|_{\mathrm{F}}\nonumber\\
    &\leq \left\| b^* \times_1 \mathcal P_{\widehat U_{1\,\perp}^{(1)}} \times_{2} \mathcal{P}_{U_{2}^*} \times_{3} \mathcal{P}_{U_{3}^*} \right\|_{\mathrm{F}} + 
    \left\|   b^* \times_1 \mathcal P_{\widehat U_{1\,\perp}^{(1)}} \times_{2}   \mathcal{P}_{U_{2}^*} \times_{3} \mathcal{P}_{U_{3\,\perp}^{*}}\right\|_{\mathrm{F}} +  
    \left\| b^* \times_{2} \mathcal{P}_{U_{2\,\perp}^{*}}\right\|_{\mathrm{F}}   
    \nonumber\\
    &
    \le   \left\| b^* \times_1 \mathcal P_{\widehat U_{1\,\perp}^{(1)}} \times_{2} \mathcal{P}_{U_{2}^*} \times_{3} \mathcal{P}_{U_{3}^*} \right\|_{\mathrm{F}} + 
    \left\|   b^*  \times_{3} \mathcal{P}_{U_{3\,\perp}^{*}}\right\|_{\mathrm{F}}+ 
    \left\| b^* \times_{2} \mathcal{P}_{U_{2\,\perp}^{*}}\right\|_{\mathrm{F}}   \nonumber\\ 
    &\le \dots\nonumber\\
    &\le \left\| b^* \times_1 \mathcal P_{\widehat U_{1\,\perp}^{(1)}} \times_{2} \mathcal{P}_{U_{2}^*} \dots \times_{s} \mathcal{P}_{U_{s}^*} \right\|_{\mathrm{F}} +  \sum_{j \neq 1}\left\| b^* \times_{j} \mathcal{P}_{U_{j\,\perp}^{*}}\right\|_{\mathrm{F}}\nonumber\\
    &= \left\| b^* \times_1 \mathcal P_{\widehat U_{1\,\perp}^{(1)}} \times_{2} \mathcal{P}_{U_{2}^*} \dots \times_{s} \mathcal{P}_{U_{s}^*} \right\|_{\mathrm{F}} + \sum_{j=2}^s\sqrt{\sum_{k = R_j+1}^{p_j \wedge p_{-j}}\sigma_k^2(\mathcal{M}_j(b^*))}.
\end{align}

\medskip
\noindent\textbf{Step 3.2: bound $\| b^* \times_1 \mathcal P_{\widehat U_{1\,\perp}^{(1)}} \times_{2} \mathcal{P}_{U_{2}^*} \dots \times_{s} \mathcal{P}_{U_{s}^*} \|_{\mathrm{F}}$.}
Consider a different but relevant  quantity   
\begin{align*}
    &\left\| b^* \times_1 \mathcal P_{\widehat U_{1\,\perp}^{(1)}} \times_{2} \mathcal{P}_{\widehat U_{2}^{(0)}} \dots \times_{s} \mathcal{P}_{\widehat U_{s}^{(0)}} \right\|_{\mathrm{F}}\\
    &=  \left\|\mathcal P_{\widehat U_{1\,\perp}^{(1)}} \cdot \mathcal{M}_1(b^*) \cdot (\otimes_{\ell \neq 1} \mathcal{P}_{\widehat U_{\ell}^{(0)}}) \right\|_{\mathrm{F}}\\
    &=  \left\|\mathcal P_{\widehat U_{1\,\perp}^{(1)}} \cdot \mathcal{M}_1(b^*) \cdot W_{-1}^{(0)} \cdot {W_{-1}^{(0)}}^{\top} \right\|_{\mathrm{F}}\\
    &=  \left\|\mathcal P_{\widehat U_{1\,\perp}^{(1)}} \cdot \mathcal{M}_1(b^*) \cdot W_{-1}^{(0)} \right\|_{\mathrm{F}}\\
    &\geq  \left\|\mathcal P_{\widehat U_{1\,\perp}^{(1)}} \cdot \mathcal{M}_1(b^*) \cdot \mathcal{P}_{\otimes_{\ell \neq 1} U_\ell^*} \cdot W_{-1}^{(0)} \right\|_{\mathrm{F}}\\
    & \quad - \left\|\mathcal P_{\widehat U_{1\,\perp}^{(1)}} \cdot \mathcal{M}_1(b^*) \cdot (I - \mathcal{P}_{\otimes_{\ell \neq 1} U_\ell^*}) \cdot W_{-1}^{(0)} \right\|_{\mathrm{F}}\\
    &=: {II}_1 - {II}_2,
\end{align*}
where the third equality follows from \Cref{lemma:Fobenius norm preserve}. Before analyzing the terms ${II}_1$ and ${II}_2$, we firstly note that
\begin{equation}\label{eq:facts}
\begin{aligned}
     (W_{-1}^*)^{\top} \cdot W_{-1}^{(0)} &= \otimes_{\ell \neq 1} (U_\ell^{*\,\top} U_{\ell}^{(0)}),\\
     \sigma_{\min}\left(\otimes_{\ell \neq 1}(U_\ell^{*\,\top} U_{\ell}^{(0)})\right) &= \prod_{\ell \neq 1}\sigma_{\min}\left(U_\ell^{*\,\top} U_{\ell}^{(0)}\right),\\
    \sigma_{\min}^2\left(U_k^{*\,\top} U_{k}^{(0)}\right) &= 1 - \left\|U_{k\,\perp}^{*\,\top} U_{k}^{(0)}\right\|_{\op}^2 = 1 - \left\|\sin\Theta(U_{k}^{*}, U_{k}^{(0)})\right\|_{\op}^2,
\end{aligned}
\end{equation}
which hold due to the properties of the Kronecker product and \Cref{lemma:sintheta}.
\\
For the term ${II}_1$, we have that
\begin{align*}
    {II}_1 &=  \left\|\mathcal P_{\widehat U_{1\,\perp}^{(1)}} \cdot \mathcal{M}_1(b^*) \cdot W_{-1}^* \cdot (W_{-1}^*)^{\top} \cdot W_{-1}^{(0)} \right\|_{\mathrm{F}}\\
    &\geq \left\|\mathcal P_{\widehat U_{1\,\perp}^{(1)}} \cdot \mathcal{M}_1(b^*) \cdot W_{-1}^*\right\|_{\mathrm{F}} \cdot \prod_{\ell \neq 1}\sigma_{\min}\left(U_\ell^{*\,\top} U_{\ell}^{(0)}\right)\\
    &= \left\|b^* \times_1 \mathcal{P}_{\widehat U_{1\,\perp}^{(1)}} \times_2 \mathcal{P}_{U_{2}^{*}} \dots \times_s \mathcal{P}_{U_{s}^{*}}\right\|_{\mathrm{F}} \cdot \prod_{\ell \neq 1}\sqrt{1 - \left\|\sin\Theta(U_{\ell}^{*}, U_{\ell}^{(0)})\right\|_{\op}^2}
    \\
    &\ge 
    \Big(\frac{3}{4}\Big)^{(s-1)/2}\left\|b^* \times_1 \mathcal{P}_{\widehat U_{1\,\perp}^{(1)}} \times_2 \mathcal{P}_{U_{2}^{*}} \dots \times_s \mathcal{P}_{U_{s}^{*}}\right\|_{\mathrm{F}},
\end{align*}
where the first inequality follows from \Cref{lemma:lb_forbuious}, and the second equality follows from \eqref{eq:facts}, and the last inequality follows from \eqref{eq:initial sin theta}, i.e.~$\max_{\ell}\|\sin\Theta(U_{\ell}^{*}, U_{\ell}^{(0)})\| \le c \le 1/2$.
\\
For the term ${II}_2$, we have that
\begin{align*}
    {II}_2 &= \left\|\mathcal P_{\widehat U_{1\,\perp}^{(1)}} \cdot \mathcal{M}_1(b^*) \cdot (I - \mathcal{P}_{\otimes_{\ell \neq 1} U_\ell^*}) \cdot W_{-1}^{(0)} \right\|_{\mathrm{F}}\\
    &= \left\|\mathcal P_{\widehat U_{1\,\perp}^{(1)}} \cdot \mathcal{M}_1(b^*) \cdot (W_{-1}^*)_{\perp} \cdot (W_{-1}^*)_{\perp}^{\top} \cdot W_{-1}^{(0)} \right\|_{\mathrm{F}}\\
    &\leq \left\|\mathcal P_{\widehat U_{1\,\perp}^{(1)}}\right\|_{\op} \left\|\mathcal{M}_1(b^*) \cdot (W_{-1}^*)_{\perp} \right\|_{\mathrm{F}} \left\| (W_{-1}^*)_{\perp}^{\top} \cdot W_{-1}^{(0)} \right\|_{\op}\\
    &\le   \left\|\mathcal{M}_1(b^*) \cdot (W_{-1}^*)_{\perp} \right\|_{\mathrm{F}},
\end{align*}
where the last inequality follows from $\|P_{\widehat U_{1\,\perp}^{(1)}}\|_{\op} \le 1$ and $\| (W_{-1})_{-1}^\top W_{-1}^{(0)} \|_{\op} \le \|(W_{-1})_{\perp} \|_{\op} \| W_{-1}^{(0)} \|_{\op} \le 1$.
By \Cref{lemma:complement_d},
\begin{align*}
    II_2 \le \left\| \mathcal{M}_1(b^*) \cdot (W_{-1}^*)_{\perp}\right\|_{\mathrm{F}} 
    \le  \sum_{j = 2}^s\left\| b^* \times_{j} \mathcal{P}_{U_{j\,\perp}^{*}}\right\|_{\mathrm{F}} = \sum_{j=2}^s\sqrt{\sum_{k = R_j+1}^{p_j \wedge p_{-j}}\sigma_k^2(\mathcal{M}_j(b^*))}.
\end{align*}
Combining ${II}_1$ and ${II}_2$, we have
\begin{align}\label{eq:mode_k_bound_1}
      &\left\| b^* \times_1 \mathcal{P}_{\widehat U_{1\,\perp}^{(1)}} \times_{2} \mathcal{P}_{U_{2}^*} \dots \times_{s} \mathcal{P}_{U_{s}^*} \right\|_{\mathrm{F}}\nonumber\\ 
    \le \; &  \Big(\frac{4}{3}\Big)^{(s-1)/2}\left\{\left\| b^* \times_1 P_{\widehat U_{1\,\perp}^{(1)}} \times_{2} \mathcal{P}_{\widehat U_{2}^{(0)}} \dots \times_{s} \mathcal{P}_{\widehat U_{s}^{(0)}} \right\|_{\mathrm{F}}  + \sum_{j=2}^s\sqrt{\sum_{k = R_j+1}^{p_j \wedge p_{-j}}\sigma_k^2(\mathcal{M}_j(b^*))}\right\}.
\end{align}
It remains to bound $\| b^* \times_1 P_{\widehat U_1^{(1)\perp}} \times_{2} \mathcal{P}_{\widehat U_{2}^{(0)}} \dots \times_{s} \mathcal{P}_{\widehat U_{s}^{(0)}}\|_{\mathrm{F}}$.

\medskip
\noindent\textbf{Step 3.3: bound $\| b^* \times_1 P_{\widehat U_1^{(1)\perp}} \times_{2} \mathcal{P}_{\widehat U_{2}^{(0)}} \dots \times_{s} \mathcal{P}_{\widehat U_{s}^{(0)}} \|_{\mathrm{F}}$.}
Recall that $W_{-k}^{(0)} =\otimes_{\ell\neq k}\widehat U_\ell^{(0)}\in \mathbb O_{p_{-k},R_{-k}}$. For a fixed $k\in[s]$, we define
\[
A_k
=
\mathcal M_k(\widehat b^{H_2})\cdot W_{-k}^{(0)},
\quad
B_k
=
\mathcal M_k(b^*)\cdot W_{-k}^{(0)},
\quad
Z_k
=
A_k-B_k
=
\mathcal M_k(Z^{(2)})\cdot W_{-k}^{(0)}.
\]
Hence \Cref{lemma:perterbation_approx} gives
\begin{align}
\label{eq:lem7_apply}
&\left\| b^* \times_1 P_{\widehat U_1^{(1)\perp}} \times_{2} \mathcal{P}_{\widehat U_{2}^{(0)}} \dots \times_{s} \mathcal{P}_{\widehat U_{s}^{(0)}} \right\|_{\mathrm{F}}\nonumber\\
&= \|\mathcal P_{\widehat U_1^{(1)\perp}}\cdot B_1\|_{\mathrm F}\nonumber\\
&\le
3\|B_1-(B_1)_{(R_1)}\|_{\mathrm F}
+
2\min\{\sqrt{R_1}\|Z_1\|_{\op},\,\|Z_1\|_{\mathrm F}\}\nonumber\\
&\le
3\|B_1-(B_1)_{(R_1)}\|_{\mathrm F}
+
2\sqrt{R_1}\|Z_1\|_{\op}.
\end{align}
Since $\otimes_{\ell\neq k}\widehat U_\ell^{(0)}$ is orthonormal, by \Cref{lemma:singular values preserve}, $0 \le \sigma_t(B_k) \le \sigma_t(\mathcal{M}_k(b^*))$ for any $t$. We have that
\begin{align*}
\|B_1-(B_1)_{(R_1)}\|_{\mathrm F}^2
&= \sum_{t = R_1+1}^{p_1 \wedge R_{-1}}\sigma_t^2(B_1) \le \sum_{t = R_1+1}^{p_1 \wedge p_{-1}}\sigma_t^2(\mathcal{M}_1(b^*)).
\end{align*}
    
Due to the sample splitting used in \Cref{alg1}, $Z^{(2)}$ and $W_{-1}^{(0)}$ are independent.
Apply \Cref{lemma:deviation_op_ortho} with $W_1=I_{p_1}$ and $W_2=W_{-1}^{(0)}$ with rank $R_{-1}$, we have with probability at least $1 - m^{-5}$
\begin{equation*}
\label{eq:Zk_op_bound}
\|Z_1\|_{\op}
=
\Big\|\mathcal M_1(Z^{(2)})\cdot W_{-1}^{(0)}\Big\|_{\op}
\le
a_1\sqrt{\frac{\|\lambda^*\|_\infty(p_1+R_{-1})\log m}{n}}
+
a_2\frac{m^{D/2}\log m}{n}.
\end{equation*}
We obtain
\begin{align}
\label{eq:mode_k_bound}
&\left\| b^* \times_1 \mathcal P_{\widehat U_1^{(1)\perp}} \times_{2} \mathcal{P}_{\widehat U_{2}^{(0)}} \dots \times_{s} \mathcal{P}_{\widehat U_{s}^{(0)}} \right\|_{\mathrm{F}}\nonumber\\
&\le
C\left\{
\sqrt{\sum_{t = R_1+1}^{p_1 \wedge p_{-1}}\sigma_t^2(\mathcal{M}_k(b^*))}
+
\sqrt{R_1}\left(
a_1\sqrt{\frac{\|\lambda^*\|_\infty(p_1+R_{-1})\log m}{n}}
+
a_2\frac{m^{D/2}\log m}{n}
\right)
\right\}.
\end{align}
Combining \eqref{eq:mode_k_bound_1} and \eqref{eq:mode_k_bound}, we have
\begin{align*}
      &\left\| b^* \times_1 \mathcal{P}_{\widehat U_1^{(1)\perp}} \times_{2} \mathcal{P}_{U_{2}^*} \dots \times_{s} \mathcal{P}_{U_{s}^*} \right\|_{\mathrm{F}}\\ 
    \le \; &  \Big(\frac{4}{3}\Big)^{(s-1)/2}C_1
\sqrt{R_1}\left(
a_1\sqrt{\frac{\|\lambda^*\|_\infty(p_1+R_{-1})\log m}{\bar n}}
+
a_2\frac{m^{D/2}\log m}{\bar n}
\right)\\
&+ \Big(\frac{4}{3}\Big)^{(s-1)/2}C_2\sum_{j=1}^s\sqrt{\sum_{k = R_j+1}^{p_j \wedge p_{-j}}\sigma_k^2(\mathcal{M}_j(b^*))}.
\end{align*}
Note that for each $j$
\[
 \sum_{k = R_j+1}^{p_j \wedge p_{-j}}\sigma_k^2(\mathcal{M}_j(b^*)) \le \sum_{k=R_j+1}^{\infty}\sigma_k^2\left(\mathcal{M}_j(b^*)\right)
= \sum_{k=R_j+1}^{\infty}\sigma_k^2\left(\mathcal{M}_j(\lambda^* \times_1 \mathcal{P}_{U^*_1} \cdots \times_s \mathcal{P}_{U^*_s})\right).
\]
Let $\llbracket \mathcal{M}_j(\lambda^* \times_1 \mathcal{P}_{
    U^*_1} \cdots \times_s \mathcal{P}_{U^*_s})\rrbracket_{(R_j)}$ and $\llbracket \mathcal{M}_j(\lambda^*)\rrbracket_{(R_j)}$ denote the best rank-$R_j$ approximations of $\mathcal{M}_j(\lambda^* \times_1 \mathcal{P}_{U^*_1} \cdots \times_s \mathcal{P}_{U^*_s})$ and $\mathcal{M}_j(\lambda^*)$, respectively. We have
    \begin{align}\label{eq:proof_hosvd_V_approx}
    &\sum_{k=R_j+1}^{\infty}\sigma_k^2\left(\mathcal{M}_j(\lambda^* \times_1 \mathcal{P}_{U^*_1} \cdots \times_s \mathcal{P}_{U^*_s})\right)\nonumber\\
    =& \sum_{k=1}^{\infty}\left|\sigma_k\left(\mathcal{M}_j(\lambda^* \times_1 \mathcal{P}_{U^*_1} \cdots \times_s \mathcal{P}_{U^*_s})\right) - \sigma_k\left(\llbracket\mathcal{M}_j(\lambda^* \times_1 \mathcal{P}_{U^*_1} \cdots \times_s \mathcal{P}_{U^*_s})\rrbracket_{(R_j)}\right)\right|^2\nonumber\\
    \leq& \left\|\mathcal{M}_j(\lambda^* \times_1 \mathcal{P}_{U^*_1} \cdots \times_s \mathcal{P}_{U^*_s}) - \llbracket\mathcal{M}_j(\lambda^* \times_1 \mathcal{P}_{U^*_1} \cdots \times_s \mathcal{P}_{U^*_s})\rrbracket_{(R_j)}\right\|_{\mathbb{L}_2}^2\nonumber\\
    \leq& \left\|\mathcal{P}_{U^*_j}\cdot\mathcal{M}_j(\lambda^*) \cdot W_{-j}^*  - \mathcal{P}_{U_j^*}\cdot \llbracket \mathcal{M}_j(\lambda^*) \rrbracket_{(R_j)} \cdot W_{-j}^* \right\|_{\mathbb{L}_2}^2\nonumber\\
    \leq& \left\|\mathcal{M}_j(\lambda^*) - \llbracket \mathcal{M}_j(\lambda^*)\rrbracket_{(R_j)} \right\|_{\mathbb{L}_2}^2\nonumber\\
    =& \sum_{k=R_j+1}^{\infty}\sigma_{j,k}^2\left(\lambda^*\right),
    \end{align}
    where the first inequality follows from \Cref{lem:Mirsky_Hilbert}. The second inequality follows since $\mathcal{P}_{U^*_j}\cdot \llbracket \mathcal{M}_j(\lambda^*) \rrbracket_{(R_j)} \cdot W_{- j}^*$ is of rank at most $R_j$. 
Therefore,
\begin{align}\label{eq:main_b^*}
      &\left\| b^* \times_1 \mathcal{P}_{\widehat U_1^{(1)\perp}} \times_{2} \mathcal{P}_{U_{2}^*} \dots \times_{s} \mathcal{P}_{U_{s}^*} \right\|_{\mathrm{F}}\nonumber\\ 
    \le \; &  \Big(\frac{4}{3}\Big)^{(s-1)/2}C_1
\sqrt{R_1}\left(
a_1\sqrt{\frac{\|\lambda^*\|_\infty(p_1+R_{-1})\log m}{n}}
+
a_2\frac{m^{D/2}\log m}{n}
\right)\nonumber\\
&+ \Big(\frac{4}{3}\Big)^{(s-1)/2}C_2\sum_{j=1}^s\sqrt{\sum_{k=R_j+1}^{\infty}\sigma_{j,k}^2\left(\lambda^*\right)}.
\end{align}
Combining \eqref{eq:telescoping_I1}, \eqref{eq:main_b^*_0} and \eqref{eq:main_b^*}, we have
\begin{align}
\label{eq:telescoping_I1_final}
\mathrm{Term}_1
&\le 
\sum_{k=1}^s
\|b^*\times_k\mathcal P_{\widehat U_{k\,\perp}^{(1)}}\|_{\mathrm F}\nonumber\\
&\leq \Big(\frac{4}{3}\Big)^{(s-1)/2}C_1\sum_{k = 1}^s
\sqrt{R_k}\left(\sqrt{\frac{\|\lambda^*\|_\infty(p_k+R_{-k})\log m}{n}}
+
\frac{m^{D/2}\log m}{n}
\right)\nonumber\\
&\quad + s\bigg(\Big(\frac{4}{3}\Big)^{(s-1)/2}C_2+1\bigg)\sum_{j=1}^s\sqrt{\sum_{k=R_j+1}^{\infty}\sigma_{j,k}^2\left(\lambda^*\right)}.
\end{align}

\medskip
\noindent\textbf{Step 4.}
Finally, combining \eqref{eq:I1I2_def}, \eqref{eq:I2_bound_raw} and \eqref{eq:telescoping_I1_final}, we have
\begin{align*}
    &\|\widetilde b-b^*\|_{\mathrm F}\nonumber\\
&\leq \Big(\frac{4}{3}\Big)^{(s-1)/2}C_1\sum_{k = 1}^s
\sqrt{R_k}\left(\sqrt{\frac{\|\lambda^*\|_\infty(p_k+R_{-k})\log m}{n}}
+
\frac{m^{D/2}\log m}{n}
\right)\nonumber\\
&\quad + s\bigg(\Big(\frac{4}{3}\Big)^{(s-1)/2}C_2+1\bigg)\sum_{j=1}^s\sqrt{\sum_{k=R_j+1}^{\infty}\sigma_{j,k}^2\left(\lambda^*\right)}\nonumber\\
&\quad+C_3\sqrt{R_{-j}}\left(
\sqrt{\frac{\|\lambda^*\|_\infty (R_j+1)\log m}{ n}}
+\frac{m^{D/2}\log m}{ n}
\right).
\end{align*}
By \Cref{ass:approx_err1}, we have $\|\lambda^*\|_{W_2^{\alpha}(\mathbb{X})}\|\lambda^*\|_{\infty}^{-1} = O(1)$ and $m = ({\|\lambda^*\|_{W_2^{\alpha}(\mathbb{X})}}n)^{1/(2\alpha + d_{\max})} = C(\|\lambda^*\|_{\infty}n)^{1/(2\alpha + d_{\max})}$, for some constant $C > 0$ and $d_{\max} + d_{\min} > D - 2\alpha$, we have that there exists an absolute constant $C_3 > 0$ such that $\|\lambda^*\|_{\infty}n \geq C_3m^{2\alpha + d_{\max}}$
    and
    \[
        \sqrt{\frac{\|\lambda^*\|_{\infty}(p_k+ R_{-k})\log(m)}{n}} \geq \sqrt{\frac{\|\lambda^*\|_{\infty}m^{d_{\min}}\log(m)}{n}} \geq \sqrt{\frac{\|\lambda^*\|_{\infty}m^{D-2\alpha-d_{\max}}}{n}}\log(m) \geq C_3^{1/2}\frac{m^{D/2}\log(m)}{n}.
    \]
Thus, with proability at least $1 - Cm^{-5}$
\begin{align}
    &\|\widetilde b-b^*\|_{\mathrm F}\nonumber\\
&\leq C_4
\sqrt{\frac{\|\lambda^*\|_{\infty}(\sum_{k = 1}^sp_kR_k + \prod_{k=1}^sR_{k})\log m}{n}} + C_5\sum_{j=1}^s\sqrt{\sum_{k=R_j+1}^{\infty}\sigma_{j,k}^2\left(\lambda^*\right)}.
\end{align}
\end{proof}

\section{Other point processes}\label{sec:other_pp}
\subsection{Neyman-Scott point processes}
\label{sec:Neymann-Scott}
    A Cox process $N \subseteq \mathbb{X} \subset \mathbb{R}^D$ is a point process with  random intensity process $\{\Lambda(x): x \in \mathbb{X}\}$  characterized by the following two properties.
\begin{enumerate}
    \item $\{\Lambda(x): x \in \mathbb{X}\}$ is non-negative valued random process.
    \item Conditional on a realization $\lambda(\cdot)$ of $\Lambda(\cdot)$, $N$ is an inhomogeneous Poission point process with intensity function $\lambda(\cdot)$. In this context, $\lambda(\cdot)$ is also called the local intensity.
\end{enumerate}
    Special examples of the Cox processes include the Log Gaussian Cox processes \citep{moller1998log} and the Neyman-Scott processes \citep{neyman1958statistical}.
In this section, we consider the Neyman-Scott point processes, which belong to the Cox point processes with specific forms of the random intensity processes.  Let $N$ be an inhomogeneous Neyman-Scott point process with random intensity process $\{\Lambda(x): x \in \mathbb{X}\}$, such that
\begin{align}\label{eq:NSPP_random_intensity}
\Lambda(x) = \ell(x)\sum_{c \in N_C}k(x, c),
\end{align}
where $\ell: \mathbb{X} \to \mathbb{R}_+$ is a deterministic locally non-negative intergrable function, $N_C$ is an inhomogeneous Poisson point process defined on $\mathbb{X}$ with intensity function $\lambda_C(\cdot)$ assumed to be locally integrable, and $k: \mathbb{X} \times \mathbb{X} \to \mathbb{R}_+$ is a kernel density function, in the sense that for all $x \in \mathbb{X}$, $k(x, \cdot)$ is a density function on $\mathbb{X}$.  The intensity function of $N$ is
\[
\lambda^*(\cdot) = \mathbb{E}[\Lambda(\cdot)] = \ell(\cdot)\int_{\mathbb{X}}k(\cdot, c)\lambda_C(c)\d c.
\]

Let $\{N^{(i)}\}_{i = 1}^n$ be a set of i.i.d.~inhomogeneous Neyman-Scott point processes, with random intensity processes $\{\{\Lambda^{(i)}(x): x \in \mathbb{X}\}\}_{i = 1}^n$ and with intensity function $\lambda^*: \mathbb{X} \to \mathbb{R}_+$, for $D \in \mathbb{N}_+$. We apply our tensor-based method, describe in \Cref{alg1}, to estimate $\lambda^*$. The theoretical guarantees are provided in the following corollary.

\begin{corollary}\label{thm:main_result_NS}
    Let $\{N^{(i)}\}_{i = 1}^n$ be a set of i.i.d.~inhomogeneous Neyman-Scott point processes, with random intensity processes $\{\{\Lambda^{(i)}(x): x \in \mathbb{X}\}\}_{i = 1}^n$ and with intensity function $\lambda^*$. Assume that $\Lambda^{(i)}$ are uniformly bounded almost surely, i.e.~$\max_{i = 1}^n\|\Lambda^{(i)}\|_{\infty} \leq C_{\Lambda} < \infty$.  Let $\widehat{\lambda}_{\mathrm{Tensor}}$ be the tensor-based estimator output by \Cref{alg1} with the target Tucker rank $(R_1, \dots, R_s)$ and choose
    \[
m = ({\|\lambda^*\|_{W_2^{\alpha}(\mathbb{X})}}n)^{1/(2\alpha+d_{\max})},
\]
where $d_{\max} = \max\{d_1, \dots, d_s\}$ and $\alpha \geq 1$ is the smoothness parameter of $\lambda^*$.
Suppose Assumptions \ref{ass:approx_err1}, \ref{ass:approx_err2} and \ref{ass:SingularGap_B*} hold, and then
we have
\[
        \|\lambda^*-\widehat{\lambda}_{\mathrm{Tensor}}\|_{\mathbb{L}_2}^2
        = O_p\left(\left\{\frac{\|\lambda^*\|_{\infty}^{2\alpha/(2\alpha+d_{\max})}{\|\lambda^*\|_{W_2^{\alpha}(\mathbb{X})}^{2d_{\max}/(2\alpha+d_{\max})}}\sum_{j = 1}^sR_j}{n^{2\alpha/(2\alpha+d_{\max})}} + \frac{\prod_{j=1}^sR_j}{n}\right\}\log(n) + \xi_{(R_1, \dots, R_s)}^2\right),
\]
where $\xi_{(R_1, \dots, R_s)}$ represents the minimum approximation error to $\lambda^*$ for each  rank-$(R_1, \dots, R_s)$ tensor, as defined in \eqref{eq:tucker_low-rank_approx}.
\end{corollary}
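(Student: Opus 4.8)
The plan is to run the proof of \Cref{thm:main_result} essentially verbatim. Inspecting that proof together with the proofs of Propositions~\ref{lemma:hosvd_approx} and \ref{lemma:oshoop_approx}, the point process distribution enters in exactly two places: (i) through the fundamental operator-norm deviation bound \eqref{eq:fundamental_general} on $W_1^\top\mathcal M_j(\widehat b - b^*)W_2$ (established for Poisson processes in \Cref{lemma:deviation_op_ortho}), which is all that Propositions~\ref{lemma:hosvd_approx} and \ref{lemma:oshoop_approx} require from the data; and (ii) through the second-moment bound on the low-rank projection $\|(\widehat b^{H_3} - b^*)\times_1\mathcal P_{\widehat U_1^{(1)}}\cdots\times_s\mathcal P_{\widehat U_s^{(1)}}\|_{\mathrm F}^2$ used in \eqref{eq:proof_hosvd_I_approx} (obtained via \Cref{lemma:low-rank_proj}). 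Everything else --- the approximation-error term $O(\|\lambda^*\|_{W_2^\alpha(\mathbb X)}m^{-\alpha})$ from \eqref{eq:approx_error}, the Weyl/Wedin perturbation arguments, the spectral-gap bookkeeping under \Cref{ass:SingularGap_B*}, and the choice $m=(\|\lambda^*\|_{W_2^\alpha(\mathbb X)}^2 n)^{1/(2\alpha+d_{\max})}$ --- is deterministic or linear-algebraic and carries over unchanged. So it suffices to re-establish (i) and (ii) for i.i.d.~Neyman--Scott processes, with $\|\lambda^*\|_\infty$ replaced by the almost-sure bound $C_\Lambda$.

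The key device is conditioning. Let $\mathcal F=\sigma(\Lambda^{(1)},\dots,\Lambda^{(n)})$ be the $\sigma$-algebra generated by the random intensity processes. By definition of the Neyman--Scott (Cox) process, conditional on $\mathcal F$ the processes $N^{(1)},\dots,N^{(n)}$ are independent inhomogeneous Poisson processes with intensity functions $\Lambda^{(1)},\dots,\Lambda^{(n)}$. Write $b^{(i)}$ for the coefficient tensor of $\Lambda^{(i)}$, i.e.~$b^{(i)}_{\mu_1,\dots,\mu_s}=\Lambda^{(i)}[\phi_{1,\mu_1},\dots,\phi_{s,\mu_s}]$, and set $\bar b=n^{-1}\sum_{i=1}^n b^{(i)}$. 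Then $\mathbb E[\widehat b\mid\mathcal F]=\bar b$ and, since $\mathbb E[\Lambda^{(i)}]=\lambda^*$, also $\mathbb E[b^{(i)}]=b^*$, so I would decompose
\[
W_1^\top\mathcal M_j(\widehat b-b^*)W_2=W_1^\top\mathcal M_j(\widehat b-\bar b)W_2+W_1^\top\mathcal M_j(\bar b-b^*)W_2 .
\]
For the first summand I condition on $\mathcal F$ and apply the Poisson matrix-Bernstein bound (\Cref{coro:MatrixBernstein_PPP}) exactly as in the proof of \Cref{lemma:deviation_op_ortho}, now with local intensities $\Lambda^{(i)}$ in place of $\lambda^*$; using $\max_i\|\Lambda^{(i)}\|_\infty\le C_\Lambda$ almost surely, the same computations give matrix variance statistic $\nu\le nC_\Lambda(r_{W_1}+r_{W_2})$ and a.s.\ bound $L\le Cm^{D/2}$, hence with conditional probability at least $1-m^{-5}$
\[
\bigl\|W_1^\top\mathcal M_j(\widehat b-\bar b)W_2\bigr\|_{\op}\le C\Bigl\{\sqrt{\tfrac{C_\Lambda(r_{W_1}+r_{W_2})\log m}{n}}+\tfrac{m^{D/2}\log m}{n}\Bigr\},
\]
and since the right-hand side does not depend on $\mathcal F$ this holds unconditionally. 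For the second summand, $W_1^\top\mathcal M_j(\bar b-b^*)W_2=n^{-1}\sum_i W_1^\top\mathcal M_j(b^{(i)}-b^*)W_2$ is an average of i.i.d.\ mean-zero matrices with $\|W_1^\top\mathcal M_j(b^{(i)})W_2\|_{\op}\le\|b^{(i)}\|_{\mathrm F}=\|\Lambda^{(i)}\times_1\mathcal P_{\mathcal U_1}\cdots\times_s\mathcal P_{\mathcal U_s}\|_{\mathbb L_2}\le\|\Lambda^{(i)}\|_{\mathbb L_2}\le C_\Lambda\sqrt{\upsilon(\mathbb X)}$ (finite since $\mathbb X$ is compact) and variance proxy at most $nC_\Lambda^2\upsilon(\mathbb X)$, so the bounded matrix-Bernstein inequality gives a bound of order $\sqrt{C_\Lambda^2\upsilon(\mathbb X)\log m/n}+C_\Lambda\sqrt{\upsilon(\mathbb X)}\log m/n$, which is dominated by the first summand up to an absolute constant (as $r_{W_1},r_{W_2}\ge1$). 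Adding the two bounds establishes \eqref{eq:fundamental_general} for Neyman--Scott processes with $\|\lambda^*\|_\infty$ replaced by $\max\{C_\Lambda,\|\lambda^*\|_\infty\}$ and $a_1,a_2>0$ bounded constants. Ingredient (ii) is obtained in the same way: conditioning on $\mathcal F$ and on $\{\widehat U_j^{(1)}\}_{j=1}^s$ (which, by the sample splitting in \Cref{alg1}, are independent of the points in $H_3$), I split $\widehat b^{H_3}-b^*$ into the conditional-Poisson fluctuation, bounded via the Poisson version of \Cref{lemma:low-rank_proj} with variance $O(C_\Lambda^2\prod_j R_j/n)$, and the i.i.d.\ average $\bar b^{H_3}-b^*$, whose projected Frobenius norm has second moment $O(C_\Lambda^2\upsilon(\mathbb X)/n)$, and take expectations.

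The main obstacle, and the part demanding care, is bookkeeping these conditional arguments through the sample-splitting structure of \Cref{alg1}: the subspace estimates $\widehat U_j^{(0)}$ and $\widehat U_j^{(1)}$ are measurable functions of the points in $H_1$ and $H_1\cup H_2$, so when they appear as the random factor $W_2$ in the bounds \eqref{eq:Z_op_bound_2}, \eqref{eq:Z_op_bound_3} and \eqref{eq:tensor-svd_numerator_1_approx}, one must condition jointly on $\mathcal F$ and on the $\sigma$-algebra generated by the earlier folds, invoke the relevant independence to apply the conditional-Poisson bound with a random orthonormal $W_2$, and only then take expectations --- precisely mirroring the conditioning steps already carried out in the proof of \Cref{lemma:oshoop_approx}. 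Once ingredients (i) and (ii) hold with $C_\Lambda$ in the role of $\|\lambda^*\|_\infty$, Propositions~\ref{lemma:hosvd_approx} and \ref{lemma:oshoop_approx} apply verbatim (their hypotheses are exactly \eqref{eq:fundamental_general}, Assumptions~\ref{ass:approx_err1} and~\ref{ass:approx_err2}, the spectral gap, and the stated $m$), and combining $\|\widetilde b-b^*\|_{\mathrm F}^2$ from \eqref{eq:tensor_error_frobenius} with the unchanged projection error $O(\|\lambda^*\|_{W_2^\alpha(\mathbb X)}^2 m^{-2\alpha})$ and plugging in $m=(\|\lambda^*\|_{W_2^\alpha(\mathbb X)}^2 n)^{1/(2\alpha+d_{\max})}$ yields the claimed rate, using as in the proof of \Cref{thm:main_result} that $\max_j\sum_{k>R_j}\sigma_{j,k}^2(\lambda^*)\le\xi_{(R_1,\dots,R_s)}^2$.
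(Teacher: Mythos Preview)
Your proposal is correct and follows the same high-level template as the paper: establish the fundamental operator-norm deviation bound for Neyman--Scott processes, then feed it into Propositions~\ref{lemma:hosvd_approx} and~\ref{lemma:oshoop_approx} and repeat the proof of \Cref{thm:main_result}. The decomposition you use, $\widehat b - b^* = (\widehat b - \bar b) + (\bar b - b^*)$, and the conditional-Poisson treatment of the first piece are exactly what the paper does in \Cref{lemma:fundamental_NS}.

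The genuine difference is in how you handle the second piece $\bar b - b^*$. The paper exploits the specific Neyman--Scott form $\Lambda^{(i)}(x)=\ell(x)\sum_{c\in N_C^{(i)}}k(x,c)$: writing $\mathbb E[\widehat b\mid\{\Lambda^{(i)}\}]-b^*$ as a Poisson integral over the cluster-center process $N_C^{(i)}$ with kernel $F'(c)=\int F(x)\ell(x)k(x,c)\,dx$, it applies the Poisson matrix-Bernstein inequality a second time, producing constants $a_1,a_2$ that depend on $\|\ell\|_\infty,\|k\|_\infty,\|\lambda_C\|_\infty$. You instead observe that $\|b^{(i)}\|_{\mathrm F}\le\|\Lambda^{(i)}\|_{\mathbb L_2}\le C_\Lambda\sqrt{\upsilon(\mathbb X)}$ is uniformly bounded and apply the ordinary bounded-matrix Bernstein inequality to the i.i.d.\ average. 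Your route is more elementary and in fact more general---it works for any Cox process with i.i.d.\ almost-surely bounded random intensities, not just Neyman--Scott---and uses only the hypothesis $C_\Lambda<\infty$ actually stated in the corollary. The paper's route ties the bound to the model parameters and gives a sharper variance constant in some regimes, but both yield bounded $a_1,a_2$, which is all the propositions need.

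Your explicit treatment of ingredient (ii) is also worth noting: the paper's one-line proof does not mention that \Cref{lemma:low-rank_proj} (whose variance identity is Poisson-specific) must be re-derived for Cox processes, whereas you correctly flag this and handle it by the same conditioning-plus-i.i.d.\ split.
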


\subsection{Joint density estimation for stationary $D$-dependent time series}\label{sec:time_series}
A point process can be viewed as a random sample where the sample size may be random and the sample points may exhibit dependence. In this sense, samples of i.i.d.~random variables and sequences of time series data can be viewed as special cases of point processes.

In this section, we consider a point process formed by a time series. For $D \in \mathbb{N}_+$, a sequence of random variables $\{X_i\}_{i \in \mathbb{Z}}$ is said to be $D$-dependent if, for each pair of integers satisfying $|r -s| > D$, the random variables $X_r$ and $X_s$ are independent.  If the process $\{X_i\}_{i \in \mathbb{Z}}$ is stationary and $D$-dependent, then the distribution of this process is fully determined by the joint density of $Y_s = (X_{s}, \dots, X_{s+D-1})^{\top}$ independent of the starting time index $s$.

Let $\{X_{t}\}_{t = 1}^n \subset \mathbb{R}$ be a stationary $D$-dependent process.  Let $f^*: \mathbb{R}^D \to \mathbb{R}_+$ denote the joint density function of a segment $Y_{i}$ of $D$ consecutive random variables, i.e.~$Y_{i} = (X_{i}, \dots, X_{i+D-1})^{\top}$.  Note that $f^*$ fully characterizes the distribution of the $D$-dependent process.  Suppose $f^{*}$ satisfies that $f^* \in \mathbb{L}_2(\mathbb{R}^D)$ and $\|f^*\|_\infty < \infty$.  

Consider the sequence of point processes $\{N^{(i)} = Y_i\}_{i = 1}^{n+1-D}$.  We apply our tensor-based method described in \Cref{alg1} to estimate $\lambda^*$. The theoretical guarantees are provided in the following corollary.

\begin{corollary}\label{thm:main_result_ts}
    Let $\{N^{(i)} = Y_i\}_{i = 1}^{n+1-D}$ be a set of dependent point processes formed by a stationary $D$-dependent process $\{X_{t}\}_{t = 1}^n$ described above.  Let $\widehat{f}_{\mathrm{Tensor}}$ be the tensor-based estimator output by \Cref{alg1} with the target Tucker rank $(R_1, \dots, R_s)$ and choose
    \[
m = \left\{{\|\lambda^*\|_{W_2^{\alpha}(\mathbb{X})}^2}(n+1-D)\right\}^{1/(2\alpha+d_{\max})},
\]
where $d_{\max} = \max\{d_1, \dots, d_s\}$ and $\alpha \geq 1$ is the smoothness parameter of $f^*$.
Suppose Assumptions \ref{ass:approx_err1}, \ref{ass:approx_err2} and \ref{ass:SingularGap_B*} (with $\lambda^*$ therein replaced by $f^*$) hold, and then
we have
\[
        \|f^*-\widehat{f}_{\mathrm{Tensor}}\|_{\mathbb{L}_2}^2
        = O_p\left(\left\{\frac{\|\lambda^*\|_{\infty}^{2\alpha/(2\alpha+d_{\max})}{\|\lambda^*\|_{W_2^{\alpha}(\mathbb{X})}^{2d_{\max}/(2\alpha+d_{\max})}}\sum_{j = 1}^sR_j}{(n+1-D)^{2\alpha/(2\alpha+d_{\max})}} + \frac{\prod_{j=1}^sR_j}{n+1-D}\right\}\log(n) + \xi_{(R_1, \dots, R_s)}^2\right),
\]
where $\xi_{(R_1, \dots, R_s)}$ represents the minimum approximation error to $f^*$ for each rank-$(R_1, \dots, R_s)$ tensor, as defined in \eqref{eq:tucker_low-rank_approx}. 
\end{corollary}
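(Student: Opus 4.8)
\textit{Proof strategy.} The plan is to reduce the claim entirely to the abstract machinery already assembled for the Poisson case. As the discussion around \eqref{eq:foundamental_bound} emphasizes, the only place where the Poisson structure enters the proof of \Cref{thm:main_result} is through the fundamental deviation bound of \Cref{lemma:deviation_op_ortho}; once an analogue of \eqref{eq:fundamental_general} is available (with $\lambda^*$ replaced by $f^*$ and $n$ by $n+1-D$), Propositions~\ref{lemma:hosvd_approx} and~\ref{lemma:oshoop_approx} apply as black boxes and the computation in the proof of \Cref{thm:main_result} carries over verbatim. So the whole task is to show that, with high probability, for all deterministic $W_1\in\mathbb{O}_{m^{d_j},r_{W_1}}$ and $W_2\in\mathbb{O}_{m^{D-d_j},r_{W_2}}$,
\[
\max_{j=1}^s\bigl\|W_1^\top\mathcal{M}_j(\widehat b-b^*)W_2\bigr\|_{\op}\le a_1\sqrt{\frac{\|f^*\|_\infty(r_{W_1}+r_{W_2})\log m}{n+1-D}}+a_2\frac{m^{D/2}\log m}{n+1-D},
\]
where now $\widehat b$ is the empirical coefficient tensor of the empirical measure $(n+1-D)^{-1}\sum_{i=1}^{n+1-D}\delta_{Y_i}$ and $b^*_{\mu_1,\dots,\mu_s}=\mathbb{E}[\phi_{1,\mu_1}(Y_{1,1})\cdots\phi_{s,\mu_s}(Y_{1,s})]=f^*[\phi_{1,\mu_1},\dots,\phi_{s,\mu_s}]$, with $Y_i$ partitioned into $s$ sub-blocks according to the chosen coordinate partition.

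First I would write $W_1^\top\mathcal{M}_j(\widehat b)W_2=(n+1-D)^{-1}\sum_{i=1}^{n+1-D}F(Y_i)$ for the matrix-valued map $F$ built exactly as in the proof of \Cref{lemma:deviation_op_ortho}; the same elementary bounds there give the deterministic bound $\|F(y)\|_{\op}\le C_\phi^s m^{D/2}$ and, using that $f^*$ is a bounded density, the one-sample variance bounds $\|\mathbb{E}[F(Y)F(Y)^\top]\|_{\op}\le\|f^*\|_\infty r_{W_2}$ and $\|\mathbb{E}[F(Y)^\top F(Y)]\|_{\op}\le\|f^*\|_\infty r_{W_1}$. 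The genuinely new ingredient is that $\{F(Y_i)\}_i$ is not independent; however it is $(2D-1)$-dependent, since $Y_i$ involves only $X_i,\dots,X_{i+D-1}$ and so, by $D$-dependence of $\{X_t\}$, $F(Y_i)$ is independent of $F(Y_{i'})$ whenever $|i-i'|\ge 2D$. The remedy is a blocking argument: partition $[n+1-D]$ into the $2D$ residue classes modulo $2D$; within each class the summands are mutually independent, so the matrix Bernstein inequality (\Cref{coro:MatrixBernstein_PPP}, or any i.i.d.\ matrix Bernstein bound) applies to each class, and a union bound over the $2D$ classes and the $s$ modes finishes it. Since $D$ — hence the number of blocks — is a fixed constant, this only inflates $a_1,a_2$ by absolute constants and $\log m$ by $\log(2Ds)$, which is harmless. (Alternatively one may cite an off-the-shelf Bernstein bound for $m$-dependent matrix sequences, in which case the variance of the full sum picks up $O(D)$ extra cross-covariance terms, again inconsequential.) I expect this $(2D-1)$-dependent matrix concentration step to be the only real obstacle; everything else is inherited bookkeeping.

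One further subtlety concerns the sample-splitting step of \Cref{alg1}: the proof of \Cref{lemma:oshoop_approx} uses that $\widehat b^{H_2}$ is independent of $\{\widehat U_j^{(0)}\}$ (constructed from $\widehat b^{H_1}$) and $\widehat b^{H_3}$ independent of $\{\widehat U_j^{(1)}\}$. For i.i.d.\ processes disjointness of $H_1,H_2,H_3$ suffices, but here neighbouring $Y_i$'s are dependent, so I would take $H_1,H_2,H_3$ to be three contiguous index blocks separated by buffer gaps of $2D$ discarded indices, so that $\{Y_i:i\in H_1\},\{Y_i:i\in H_2\},\{Y_i:i\in H_3\}$ are mutually independent; deleting $O(D)=O(1)$ observations affects neither the rate nor the constants. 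With the fundamental bound and this independence in place, Propositions~\ref{lemma:hosvd_approx}--\ref{lemma:oshoop_approx} give the perturbation bounds on the initial and refined singular vectors and on $\|\widetilde b-b^*\|_{\mathrm F}$, and substituting $m=\{\|\lambda^*\|_{W_2^\alpha(\mathbb{X})}^2(n+1-D)\}^{1/(2\alpha+d_{\max})}$ into the decomposition $\|f^*-\widehat f_{\mathrm{Tensor}}\|_{\mathbb{L}_2}^2\le 2I_1+2I_2$ exactly as in the proof of \Cref{thm:main_result}, together with $\sum_j\sum_{k>R_j}\sigma_{j,k}^2(f^*)\le\xi_{(R_1,\dots,R_s)}^2$, yields the stated error bound.
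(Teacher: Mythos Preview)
Your overall architecture is exactly the paper's: establish the fundamental deviation bound \eqref{eq:foundamental_bound} for the time-series setting, then feed it into Propositions~\ref{lemma:hosvd_approx}--\ref{lemma:oshoop_approx} and repeat the proof of \Cref{thm:main_result}. The genuine difference is in how the fundamental bound itself is obtained. The paper (\Cref{lemma:fundamental_ts}) does \emph{not} block; it invokes a matrix Bernstein inequality for $\beta$-mixing sequences due to Banna et al.\ (\Cref{lemma:matrix_bernstein_dep}), and controls the matrix variance statistic $\nu$ of the \emph{full} dependent sum via a $\tau$-mixing coupling argument (Dedecker--Prieur) that bounds the cross-covariances $\|\mathbb{E}[Z^{(j)}(Z^{(k)})^\top]\|_{\op}$ by an exponentially decaying factor in $|j-k|$. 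This route is heavier machinery and leaves an extra $(\log n)^2$ in the constant $a_2$. Your blocking-into-$2D$-residue-classes argument is more elementary and better tailored to strict $m$-dependence: it reduces directly to the i.i.d.\ matrix Bernstein already used for the Poisson case, at the cost only of a factor $2D=O(1)$ in the constants. Both are valid; yours is shorter here, while the paper's generalizes immediately to exponentially mixing but not finitely dependent sequences. Your observation about inserting buffer gaps of width $2D$ between $H_1,H_2,H_3$ to recover the independence used in \Cref{lemma:oshoop_approx} is correct and is a detail the paper does not spell out.
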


\subsection{Proof for Section~\ref{sec:other_pp}}
We recall some notations given in \Cref{sec:math_setup}. Suppose we factorizes the domain of the point processes
\begin{align*}
\mathbb{X} = \mathbb{X}_1 \times \cdots \times \mathbb{X}_s \subset \mathbb{R}^{d_1}\times \cdots \times \mathbb{R}^{d_s} = \mathbb{R}^{D}, \quad \text{with} \ \sum_{j=1}^s d_j = D.
\end{align*}
For each coordinate space $\mathbb{X}_j$, we select orthonormal basis \(\{\phi_{j,\mu_j}\}_{\mu_j = 1}^{m^{d_j}} \subset \mathbb{L}_2(\mathbb{X}_j)\). Projecting \(\lambda^*\) onto the corresponding finite-dimensional subspace yields the coefficients
\[
b^*_{\mu_1, \dots, \mu_s} = \lambda^*[\phi_{1,\mu_1}, \dots, \phi_{s,\mu_s}],
\]
which naturally organizes into a tensor \(b^* \in \mathbb{R}^{m^{d_1} \times \cdots \times m^{d_s}}\).
Define the empirical measure
\[
\widehat{\lambda} = \frac{1}{n}\sum_{i = 1}^n\sum_{u \in N^{(i)}}\delta_u,
\]
where $\delta_u$ is a point mass at $u$.
Denote the empirical coefficient tensor by $\widehat{b}$ whose entries are
\[
\widehat{b}_{\mu_1,\dots,\mu_s} =  \widehat{\lambda}[\phi_{1,\mu_1}, \dots, \phi_{s,\mu_s}]
= \frac{1}{n}\sum_{i = 1}^n\sum_{X^{(i)} \in N^{(i)}} \phi_{1,\mu_1}(X_{1}^{(i)}) \cdots \phi_{s,\mu_s}(X_{s}^{(i)}),
\]
where \(X^{(i)} = (X^{(i)}_{1}, \dots, X^{(i)}_{s}) \in \mathbb{X}\) represents a point in \(N^{(i)}\).

\begin{lemma}[Fundamental bound for Neyman-Scott point process]\label{lemma:fundamental_NS}
    Consider the same setting as in \Cref{thm:main_result_NS}. For all deterministic matrices $W_j \in \mathbb O_{m^{d_j}, r_{W_j}}$ and $V_j \in \mathbb O_{m^{D-d_j}, r_{V_j}}$ with ranks $r_{W}$ and $r_{V}$, respectively, we have that with probability at least $1 - 2m^{-5}$, uniformly for all $j \in [s]$
    \begin{align*}
        \left\|W_j^{\top} \cdot \mathcal{M}_j(\widehat b - b^*) \cdot V_j\right\|_{\op} \leq  C\left\{a_1\sqrt{\frac{(r_{W_j}+r_{V_j})\log(m)}{n}} + a_2\frac{m^{D/2}\log(m)}{n}\right\},
    \end{align*}
    where $a_1 = \sqrt{C_{\Lambda}} + \sqrt{\|\ell\|_{\infty}\|k\|_{\infty}\|\lambda_C\|_{\infty}}$, $a_2 = C_{\phi}^s (1 + \|\ell\|_{\infty} \|k\|_{\infty})$, and $C > 0$ is an absolute constant.
\end{lemma}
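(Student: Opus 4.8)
The plan is to follow the proof of \Cref{lemma:deviation_op_ortho} verbatim for the ``Poisson noise'' part and to insert one additional layer of conditioning to absorb the extra randomness coming from the Cox (cluster) structure of a Neymann--Scott process. Fix $j = 1$; the argument for $j = 2, \dots, s$ is identical. Exactly as in \Cref{lemma:deviation_op_ortho}, write
\[
W^\top \cdot \mathcal{M}_1(\widehat b)\cdot V = \frac1n\sum_{i=1}^n\sum_{X \in N^{(i)}}F(X)\in\mathbb{R}^{r_W\times r_V},
\]
where $x\mapsto F(x)$ is the same $\mathbb{R}^{r_W\times r_V}$-valued function assembled from $W$, $V$ and the tensor-product basis, so that $\|F(x)\|_{\op}\le\|F(x)\|_{\mathrm F}\le C_\phi^s m^{D/2}$ for every $x$. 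Conditioning on the realized local intensities $\{\Lambda^{(i)}\}_{i=1}^n$ (equivalently on the parent processes $\{N_C^{(i)}\}_{i=1}^n$), each $N^{(i)}$ is an inhomogeneous Poisson process with intensity $\Lambda^{(i)}$, so by the conditional form of Campbell's theorem $\mathbb{E}[\sum_{X\in N^{(i)}}F(X)\mid\Lambda^{(i)}] = \int F(x)\Lambda^{(i)}(x)\,\d x =: M^{(i)}$, and since $\mathbb{E}[\Lambda^{(i)}(x)] = \lambda^*(x)$ we get $\mathbb{E}[M^{(i)}] = \int F(x)\lambda^*(x)\,\d x = W^\top \mathcal{M}_1(b^*)V$. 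This yields the decomposition
\[
W^\top\cdot\mathcal{M}_1(\widehat b - b^*)\cdot V = \frac1n\sum_{i=1}^n\Bigl(\sum_{X\in N^{(i)}}F(X) - M^{(i)}\Bigr) + \frac1n\sum_{i=1}^n\Bigl(M^{(i)} - \int F(x)\lambda^*(x)\,\d x\Bigr) =: \mathrm{I} + \mathrm{II},
\]
where $\mathrm I$ is conditionally mean zero given $\{\Lambda^{(i)}\}$ and $\mathrm{II}$ is (unconditionally) mean zero.

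For $\mathrm I$, work conditionally on $\{\Lambda^{(i)}\}_{i=1}^n$: it is the centered sum of the function $F$ over $n$ independent Poisson processes with intensities $\Lambda^{(i)}$, and $\|\Lambda^{(i)}\|_\infty\le C_\Lambda$ almost surely. Running the matrix variance computation of \Cref{lemma:deviation_op_ortho} with $C_\Lambda$ in place of $\|\lambda^*\|_\infty$ gives $\nu\le n C_\Lambda(r_W+r_V)$ and $L = C_\phi^s m^{D/2}$, so \Cref{coro:MatrixBernstein_PPP} delivers, with conditional probability at least $1-m^{-5}$, the bound $\|\mathrm I\|_{\op}\lesssim \sqrt{C_\Lambda(r_W+r_V)\log(m)/n} + C_\phi^s m^{D/2}\log(m)/n$; being nonrandom, this bound then holds unconditionally with probability at least $1-m^{-5}$. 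For $\mathrm{II}$, insert the Neymann--Scott form \eqref{eq:NSPP_random_intensity}, $\Lambda^{(i)}(x) = \ell(x)\sum_{c\in N_C^{(i)}}k(x,c)$, to get $M^{(i)} = \sum_{c\in N_C^{(i)}}G(c)$ with $G(c) := \int F(x)\ell(x)k(x,c)\,\d x$; hence $\mathrm{II} = \frac1n\sum_i(\sum_{c\in N_C^{(i)}}G(c) - \int G(c)\lambda_C(c)\,\d c)$ is the centered sum of the matrix-valued function $G$ over $n$ i.i.d.\ Poisson processes $N_C^{(i)}$ with intensity $\lambda_C$. Since $k(\cdot,c)$ is a probability density, $\|G(c)\|_{\op}\le C_\phi^s m^{D/2}\|\ell\|_\infty$ (trading, up to the volume of $\mathbb X$, a $\|k\|_\infty$ against an $\|\ell\|_\infty$), and for the matrix variance one expands $\int G(c)G(c)^\top\lambda_C(c)\,\d c$ as a double integral against $\ell(x)\ell(y)\int k(x,c)k(y,c)\lambda_C(c)\,\d c$, bounds the inner integral by $\|k\|_\infty\|\lambda_C\|_\infty$, and then reuses the orthonormality argument of \Cref{lemma:deviation_op_ortho} to obtain $\nu\lesssim n\|\ell\|_\infty\|k\|_\infty\|\lambda_C\|_\infty(r_W+r_V)$. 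A second application of \Cref{coro:MatrixBernstein_PPP} controls $\|\mathrm{II}\|_{\op}$ by $\sqrt{\|\ell\|_\infty\|k\|_\infty\|\lambda_C\|_\infty(r_W+r_V)\log(m)/n} + C_\phi^s\|\ell\|_\infty\|k\|_\infty m^{D/2}\log(m)/n$ with probability at least $1-m^{-5}$.

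Combining the two bounds by the triangle inequality $\|W^\top\mathcal{M}_1(\widehat b-b^*)V\|_{\op}\le\|\mathrm I\|_{\op}+\|\mathrm{II}\|_{\op}$, a union bound over the two events (and over the $s$ modes, with $s$ a fixed constant absorbed into $C$), and reading off $\sqrt{C_\Lambda}+\sqrt{\|\ell\|_\infty\|k\|_\infty\|\lambda_C\|_\infty}$ as the coefficient of $\sqrt{(r_W+r_V)\log(m)/n}$ and $C_\phi^s(1+\|\ell\|_\infty\|k\|_\infty)$ as the coefficient of $m^{D/2}\log(m)/n$ yields the claim with probability at least $1-2m^{-5}$. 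The main obstacle is the variance computation for $\mathrm{II}$: one must push the integral $\int k(x,c)k(y,c)\lambda_C(c)\,\d c$ through the orthonormality bookkeeping of \Cref{lemma:deviation_op_ortho} so that the ``effective rank'' appearing in $\nu$ is $r_W + r_V$ rather than a power of $m$ — the crucial facts being that $k(x,\cdot)$ integrates to one and that $\{\phi_{j,\mu_j}\}$ together with the columns of $W$ and $V$ are orthonormal. A minor technical point worth stating carefully is the transfer of a conditional (given $\{\Lambda^{(i)}\}$) high-probability bound on $\mathrm I$ to an unconditional one, which is immediate because the bound itself does not depend on $\{\Lambda^{(i)}\}$.
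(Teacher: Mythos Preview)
Your proposal is correct and mirrors the paper's proof essentially step for step: the same decomposition $\mathrm I+\mathrm{II}$ into a conditional-Poisson term and a parent-process term, the same conditional application of \Cref{coro:MatrixBernstein_PPP} to $\mathrm I$ with $C_\Lambda$ in place of $\|\lambda^*\|_\infty$, and the same function $G(c)$ (the paper writes $F'(c)$) handled by a second application of \Cref{coro:MatrixBernstein_PPP}. One small caution: by the paper's definition it is $k(x,\cdot)$, not $k(\cdot,c)$, that is a density, so the $L$-bound for $G$ proceeds via $\int k(x,c)\,\d x \le \|k\|_\infty$ (on a unit-volume $\mathbb X$), which is why $\|k\|_\infty$ enters $a_2$; adjust your remark accordingly and the argument goes through unchanged.
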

\begin{proof}[Proof of \Cref{lemma:fundamental_NS}]
    We obtain the upper bound using the same arguments in the proof of  \Cref{lemma:deviation_op_ortho}, and we only focus on the case with $j = 1$.
    We decompose
    \begin{align*}
    W^{\top} \cdot \mathcal{M}_1(\widehat b - b^*) \cdot V =& W^{\top} \cdot\left(\mathcal{M}_1(\widehat b) - \mathbb{E}\left[\mathcal{M}_1(\widehat b)\Big|\{\Lambda^{(i)}\}_{i = 1}^n\right]\right) \cdot V\\
    &+ W^{\top} \cdot\left(\mathbb{E}\left[\mathcal{M}_1(\widehat b)\Big|\{\Lambda^{(i)}\}_{i = 1}^n\right] - \mathcal{M}_1(b^*)\right) \cdot V.
    \end{align*}
    Let $$W^{\top} \cdot \mathcal{M}_1(\widehat b) \cdot V = \frac{1}{n}\sum_{i = 1}^n\sum_{X \in N^{(i)}}F(X) \in \mathbb{R}^{r_{W} \times r_{V}},$$
    where $X = (X_1^{\top}, \dots, X_d^{\top})^{\top} \in \mathbb{X}$ with $X_j \in \mathbb{X}_{j}$, and $x \mapsto F(x)$ is an $\mathbb{R}^{r_{W} \times r_{V}}$-valued function with the $(j,l)$ entry \begin{align*}
    F_{(j; l)}(x) =& \sum_{\mu_1 = 1}^{m^{d_1}}W_{(\mu_1;j)}\phi_{\mu_1}(x_1)\sum_{\mu_2 = 1}^{m^{d_2}}\cdots\sum_{\mu_s = 1}^{m^{d_s}}V_{(\mu_2, \dots, \mu_s; l)}\phi_{\mu_2}(x_2)\cdots\phi_{\mu_s}(x_s)\\
    =& \psi_{j}(x_1)\sum_{\mu_2 = 1}^{m^{d_2}}\cdots\sum_{\mu_s = 1}^{m^{d_s}}V_{(\mu_2, \dots, \mu_s; l)}\phi_{\mu_2}(x_2)\cdots\phi_{\mu_s}(x_s),
    \end{align*}
    where $W_{(\mu_1;j)}$ is the $(\mu_1,j)$ entry of $W$, and  each combination of $\mu_2, \dots, \mu_s$ corresponds to a row of $V$ denoted by $V_{(\mu_2, \dots, \mu_s; \cdot)}$.  For $j \in [r_W]$, we let  $\psi_{j}(\cdot) = \sum_{\mu_1 = 1}^{m^s}W_{(\mu_1;j)}\phi_{j}(\cdot)$. Note that $\{\psi_{j}\}_{j = 1}^{r_W}$ is a set of orthonormal basis, since $\{\phi_{\mu_1}\}_{\mu_1 = 1}^{m^{d_1}}$ is a set of orthonormal basis and $\{W_{(\cdot; j)}\}_{j=1}^{r_W}$ is a set of orthonormal vectors.
      We can also write
    $$W^{\top} \cdot \mathbb{E}\left[\mathcal{M}_1(\widehat b)\Big|\{\Lambda^{(i)}\}_{i = 1}^n\right]\cdot V = \frac{1}{n}\sum_{i = 1}^n\int_{\mathbb{X}}F(x)\Lambda^{(i)}(x)\d x = \frac{1}{n}\sum_{i = 1}^n\int_{\mathbb{X}}F(x)\ell(x)\sum_{c \in N_C^{(i)}}k(x, c)\d x,$$
    and
    $$W^{\top} \cdot \mathcal{M}_1(b^*) \cdot V = \int_{\mathbb{X}}F(x)\lambda^*(x)\d x = \int_{\mathbb{X}}\left\{\int_{\mathbb{X}}F(x)\ell(x)k(x, c)\d x\right\}\lambda_C(c)\d c.$$
    It follows that
    \begin{align}\label{eq:proof_NSPP_decomp}
        &\mathbb{P}\left(\left\|W^{\top} \cdot \mathcal{M}_1(\widehat b - b^*) \cdot V\right\|_{\op} \geq t_1 + t_2\right)\nonumber\\
        \leq& \mathbb{P}\bigg(\left\| W^{\top} \cdot \left(\mathcal{M}_1(\widehat b) - \mathbb{E}\left[\mathcal{M}_1(\widehat b)\Big|\{\Lambda^{(i)}\}_{i = 1}^n\right]\right) \cdot V\right\|_{\op}\nonumber\\
        & \quad+ \left\| W^{\top} \cdot \left(\mathbb{E}\left[\mathcal{M}_1(\widehat b)\Big|\{\Lambda^{(i)}\}_{i = 1}^n\right] - \mathcal{M}_1(b^*)\right) \cdot V\right\|_{\op} \geq t_1 + t_2\bigg) \nonumber\\
        \leq&  \mathbb{P}\left(\left\| W^{\top} \cdot \left(\mathcal{M}_1(\widehat b) - \mathbb{E}\left[\mathcal{M}_1(\widehat b)\Big|\{\Lambda^{(i)}\}_{i = 1}^n\right]\right) \cdot V\right\|_{\op} \geq t_1\right) \nonumber\\
        &+ \mathbb{P}\left(\left\| W^{\top} \cdot \left(\mathbb{E}\left[\mathcal{M}_1(\widehat b)\Big|\{\Lambda^{(i)}\}_{i = 1}^n\right] - \mathcal{M}_1(b^*)\right) \cdot V\right\|_{\op} \geq t_2\right)\nonumber\\
        =:& I + II,
    \end{align}
    where the first inequality follows from the triangle inequality, and the second inequality follows from the union bound. 
    Next, we obtain the tail probability bounds in \eqref{eq:proof_NSPP_decomp}.
    \
    \\
    \textbf{Step~1:} Tail probability bound on $I$.  Note that the random intensity processes $\Lambda^{(i)}$ are uniformly bounded almost surely, i.e.~$\max_{i = 1}^n\|\Lambda^{(i)}\|_{\infty} \leq C_{\Lambda} < \infty$. Conditional on $\Lambda^{(i)}$, $N^{(i)}$ is an inhomogeneous Poisson point process with intensity function $\Lambda^{(i)}$.  By the same arguments in the proof of \Cref{lemma:deviation_op_ortho}, we have
    \begin{align*}
        &\mathbb{P}\left(\left\| W^{\top} \cdot \left(\mathcal{M}_1(\widehat b) - \mathbb{E}\left[\mathcal{M}_1(\widehat b)\Big|\{\Lambda^{(i)}\}_{i = 1}^n\right]\right) \cdot V\right\|_{\op}
        \geq C\left\{\sqrt{\frac{C_{\Lambda}(r_V+r_W)\log(m)}{n}} + \frac{C_{\phi}^sm^{D/2}\log(m)}{n}\right\}\Bigg | \Lambda \right)\\
        &\leq m^{-5},
    \end{align*}
    where $C >0$ is an absolute constant.  Taking the expation with respective to the random intensity $\Lambda$, we have
    \begin{align*}
        &\mathbb{P}\left(\left\| W^{\top} \cdot \left(\mathcal{M}_1(\widehat b) - \mathbb{E}\left[\mathcal{M}_1(\widehat b)\Big|\{\Lambda^{(i)}\}_{i = 1}^n\right]\right) \cdot V\right\|_{\op}
        \geq C\left\{\sqrt{\frac{C_{\Lambda}(r_V+r_W)\log(m)}{n}} + \frac{C_{\phi}^sm^{D/2}\log(m)}{n}\right\} \right)\\
        &\leq m^{-5}.
    \end{align*}
    \
    \\
    \textbf{Step~2.} Tail probability bound on $II$.  Let \[
    F^{\prime}(c) = \int_{\mathbb{X}}F(x)\ell(x)k(x, c)\d x.
    \]
    We verify the conditions of \Cref{coro:MatrixBernstein_PPP}. Note that
    \begin{align*}
        L = &\sup_{c \in \mathbb{X}}\|F^{\prime}(c)\|_{\op} \leq \|\ell\|_{\infty}\sup_{x \in \mathbb{X}}\|F(x)\|_{\op}\sup_{c \in \mathbb{X}}\int_{[0,1]^D}k(x, c)\d x\\
        \leq& \|\ell\|_{\infty}\sup_{x \in \mathbb{X}}\|F(x)\|_{\op}\|k\|_{\infty}\\
        \leq& C_{\phi}^s \|\ell\|_{\infty} \|k\|_{\infty}m^{D/2}.
    \end{align*}
    Similarly, we have the variance statistics $\nu \leq n\|\ell\|_{\infty}\|k\|_{\infty}\|\lambda_C\|_{\infty}(r_V + r_W)$.
    By the same arguments in the proof of \Cref{lemma:deviation_op_ortho}, we have that with probability at least $1 - m^{-5}$,
    \begin{align*}
        &\left\| W^{\top} \cdot \left(\mathbb{E}\left[\mathcal{M}_1(\widehat b)\Big|\{\Lambda^{(i)}\}_{i = 1}^n\right] - \mathcal{M}_1(b^*)\right) \cdot V\right\|_{\op}\\
        \leq& C\left\{\sqrt{\frac{\|\ell\|_{\infty}\|k\|_{\infty}\|\lambda_C\|_{\infty}(r_V+r_W)\log(m)}{n}} + \frac{C_{\phi}^s \|\ell\|_{\infty} \|k\|_{\infty}m^{D/2}\log(m)}{n}\right\}.
    \end{align*}
    Consequently, we have that with probability at least $1 - 2m^{-5}$,
    \begin{align*}
        \left\|W^{\top} \cdot \mathcal{M}_1(\widehat b - b^*) \cdot V\right\|_{\op} \leq  C\left\{a_1\sqrt{\frac{(r_V+r_W)\log(m)}{n}} + a_2\frac{m^{D/2}\log(m)}{n}\right\},
    \end{align*}
    where $a_1 = \sqrt{C_{\Lambda}} + \sqrt{\|\ell\|_{\infty}\|k\|_{\infty}\|\lambda_C\|_{\infty}}$ and $a_2 = C_{\phi}^s (1 + \|\ell\|_{\infty} \|k\|_{\infty})$.
\end{proof}

\begin{proof}[Proof of \Cref{thm:main_result_NS}]
    The proof is a consequence of Lemma~\ref{lemma:fundamental_NS} and Propositions~\ref{lemma:hosvd_approx} and \ref{lemma:oshoop_approx}.  It follows the proof of \Cref{thm:main_result}, and thus it is omitted.
\end{proof}

\begin{lemma}[Fundamental bound for the maximal overlapping segments formed by $M$-dependent process]\label{lemma:fundamental_ts}
    Consider the same setting as for \Cref{thm:main_result_ts}. For all deterministic $W_j \in \mathbb O_{m^{d_j}, r_{W_j}}$ and $V_j \in \mathbb O_{m^{D-d_j}, r_{V_j}}$ with ranks $r_{W_j}$ and $r_{V_j}$, respectively, we have that with probability at least $1 - m^{-5}$, uniformly for all $j \in [s]$
    \begin{align*}
        \max_{j = 1}^s\left\|W_j^{\top} \cdot \mathcal{M}_j(\widehat b - b^*) \cdot V_j\right\|_{\op} \leq C\left\{a_1\sqrt{\frac{(r_{W}+r_{V})\log(m)}{n}} + a_2\frac{m^{D/2}\log(m)}{n}\right\},
    \end{align*}
    where $a_1 = C_{\gamma}C_{\mathrm{dep}}\|f^*\|_{\infty}$, $a_2 = C_{\phi}^s (\log(n))^2$, $0 < C_{\mathrm{dep}}, C_{\gamma} < \infty$ and $C > 0$ is an absolute constant.
\end{lemma}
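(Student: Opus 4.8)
The plan is to follow the proof of Lemma~\ref{lemma:deviation_op_ortho} essentially line by line, replacing only its Poisson concentration step by one that tolerates the finite-range dependence among the overlapping segments. The first step is to pin down that dependence: since $\{X_t\}_{t=1}^n$ is $D$-dependent, the segment process $\{Y_i=(X_i,\dots,X_{i+D-1})^\top\}_{i=1}^{n+1-D}$ is $(2D-1)$-dependent, since $Y_i$ and $Y_\ell$ are independent once $|i-\ell|\ge 2D$ (then every coordinate index of one segment lies more than $D$ apart from every coordinate index of the other). Stationarity gives that each $Y_i$ has the common density $f^*$, so $\mathbb{E}\bigl[\phi_{1,\mu_1}(X_1^{(i)})\cdots\phi_{s,\mu_s}(X_s^{(i)})\bigr]=b^*_{\mu_1,\dots,\mu_s}$ for every $i$, and the summands below are centered.

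For a fixed mode $j$ (say $j=1$) I would rewrite, exactly as in the proof of Lemma~\ref{lemma:deviation_op_ortho},
\[
W^\top\cdot\mathcal{M}_1(\widehat b-b^*)\cdot V=\frac{1}{n+1-D}\sum_{i=1}^{n+1-D}\bigl(F(Y_i)-\mathbb{E}[F(Y_i)]\bigr),
\]
for the same $\mathbb{R}^{r_W\times r_V}$-valued map $F$ built from $W$, $V$ and the orthonormal tensor-product basis, so that the two deterministic ingredients established there carry over verbatim with $\lambda^*$ replaced by $f^*$: $\|F(x)\|_{\op}\le C_\phi^s m^{D/2}$ and $\max\bigl\{\|\mathbb{E}[F(Y)F(Y)^\top]\|_{\op},\,\|\mathbb{E}[F(Y)^\top F(Y)]\|_{\op}\bigr\}\le\|f^*\|_\infty(r_W+r_V)$.

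It remains to concentrate $\tfrac{1}{n+1-D}\sum_i(F(Y_i)-\mathbb{E}F(Y_i))$ for the $(2D-1)$-dependent sequence $\{Y_i\}$, which I would do by partitioning $\{1,\dots,n+1-D\}$ into the $2D$ residue classes $G_r=\{i:i\equiv r\bmod 2D\}$: within each $G_r$ consecutive indices differ by $2D$, so — using exact $M$-dependence, not merely mixing — $\{Y_i\}_{i\in G_r}$ is a genuinely i.i.d.\ sample of size $|G_r|\asymp(n+1-D)/(2D)$, to which the bounded matrix Bernstein bound applies exactly as in the proof of Lemma~\ref{lemma:deviation_op_ortho}. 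Recombining through $\bigl\|\tfrac{1}{n+1-D}\sum_i\,\cdot\,\bigr\|_{\op}\le\sum_{r=0}^{2D-1}\tfrac{|G_r|}{n+1-D}\bigl\|\tfrac{1}{|G_r|}\sum_{i\in G_r}\,\cdot\,\bigr\|_{\op}$, a union bound over the $2D$ blocks and over $j\in[s]$, and $|G_r|/(n+1-D)\asymp 1/(2D)$, then gives a bound of the stated form; the dependence on $D$ from the blocking, together with any extra logarithmic overhead incurred if one prefers to cite an off-the-shelf matrix Bernstein inequality for $M$-dependent sequences instead of reducing to the i.i.d.\ case by hand, combines with $\log m\asymp\log n$ to produce the constants $C_{\mathrm{dep}}$, $C_\gamma$ and the $(\log n)^2$ factor in $a_2$. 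I expect this dependent-concentration step to be the main obstacle — in particular verifying that the sub-blocks are genuinely independent, that the $|G_r|$ are mutually comparable, and, if the basis functions or the support of $f^*$ are not uniformly bounded, inserting a coordinatewise truncation of the $Y_i$ (which leaves the $M$-dependence structure intact) at negligible cost using $\|f^*\|_\infty<\infty$. The resulting lemma then feeds into Propositions~\ref{lemma:hosvd_approx} and~\ref{lemma:oshoop_approx} exactly as \Cref{lemma:deviation_op_ortho} does, giving Corollary~\ref{thm:main_result_ts}.
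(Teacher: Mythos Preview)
Your proposal is correct but takes a different route from the paper on the key concentration step. You reduce the $(2D-1)$-dependent sum to $2D$ genuinely i.i.d.\ subsums via residue-class blocking and apply the ordinary (independent) matrix Bernstein inequality to each block; this is the natural, elementary argument for finite-range dependence and, as you note, would in fact yield the bound \emph{without} the extra $(\log n)^2$ factor in $a_2$. The paper instead treats the $M$-dependent sequence as a $\beta$-mixing sequence with exponentially decaying coefficients, invokes a $\tau$-mixing coupling (Dedecker--Prieur) to control the cross-covariances in the variance statistic $\nu$, and then applies the matrix Bernstein inequality for mixing sequences due to Banna et~al.\ (Theorem~\ref{lemma:matrix_bernstein_dep}); the $(\log n)^2$ in $a_2$ and the constants $C_\gamma,\,C_{\mathrm{dep}}$ arise precisely from the $\gamma(c,n)$ term in that theorem and the geometric sum over the mixing coefficients. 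Your approach is sharper and more self-contained for this specific problem; the paper's approach is heavier but would extend unchanged to processes with only approximate (mixing-type) weak dependence rather than exact $M$-dependence. Either route feeds identically into Propositions~\ref{lemma:hosvd_approx} and~\ref{lemma:oshoop_approx}.
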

\begin{proof}
    We obtain the upper bound using \Cref{lemma:matrix_bernstein_dep}, and we only focus on the case with $j = 1$. 
    Let $Z = W^{\top} \cdot \mathcal{M}_1(\widehat b
    - b^*) \cdot V \in \mathbb{R}^{r_{W} \times r_{V}}$. We further write $Z = (n+1-D)^{-1}\sum_{i = 1}^{n+1-D} Z^{(i)}$, where $Z^{(i)} = Q^{(i)} - \mathbb{E}(Q^{(i)})$ and the $(j,l)$ entry is 
    \begin{align*}
        Q^{(i)}_{(j; l)} =& \sum_{\mu_1 = 1}^{m^{d_1}}W_{(\mu_1; j)}\phi_{j}(Y_1^{(i)})\sum_{\mu_2 = 1}^{m^{d_2}}\cdots\sum_{\mu_s = 1}^{m^{d_s}}V_{(\mu_2, \dots, \mu_s; l)}\phi_{\mu_2}(Y_2^{(i)})\cdots\phi_{\mu_s}(Y_s^{(i)})\\
        =& \psi_{j}(Y_1^{(i)})\sum_{\mu_2 = 1}^{m^{d_2}}\cdots\sum_{\mu_s = 1}^{m^{d_s}}V_{(\mu_2, \dots, \mu_s; l)}\phi_{\mu_2}(Y_2^{(i)})\cdots\phi_{\mu_d}(Y_s^{(i)}),
    \end{align*}
        where $W_{(\mu_1;j)}$ is the $(\mu_1,j)$ entry of $W$. For each combination of $\mu_2, \dots, \mu_s$ corresponds to a row of $V$ denoted by $V_{(\mu_2, \dots, \mu_s; \cdot)}$.  For $j \in [r_W]$, we let  $\psi_{j}(\cdot) = \sum_{\mu_1 = 1}^{m^{d_1}}W_{(\mu_1;j)}\phi_{j}(\cdot)$. Note that $\{\psi_{j}\}_{j = 1}^{r_W}$ is a set of orthonormal basis, since $\{\phi_{\mu_1}\}_{\mu_1 = 1}^{m^{d_1}}$ is a set of orthonormal basis functions and $\{W_{(\cdot; j)}\}_{j=1}^{r_W}$ is a set of orthonormal vectors.
    We verify the conditions of \Cref{lemma:matrix_bernstein_dep}. We have
    \begin{align*}
        &\|Z^{(i)}\|_{\op} \leq \|Z^{(i)}\|_{\mathrm{F}}\\
        \leq& \sqrt{\sum_{\mu_1=1}^{m^{d_1}}\cdots\sum_{\mu_s = 1}^{m^{d_s}}\left\{\phi_{\mu_1}(Y_1^{(i)})\cdots\phi_{\mu_s}(Y_s^{(i)}) - \mathbb{E}\left[\phi_{\mu_1}(Y_1^{(i)})\cdots\phi_{\mu_s}(Y_s^{(i)})\right]\right\}^2}\\
        & \leq m^{D/2}\left\|\phi_{\mu_1}(Y_1^{(i)})\cdots\phi_{\mu_s}(Y_s^{(i)}) - \mathbb{E}\left[\phi_{\mu_1}(Y_1^{(i)})\cdots\phi_{\mu_s}(Y_s^{(i)})\right]\right\|_{\infty}\\
        & \leq C_{\phi}^sm^{D/2},
    \end{align*}
    where the second inequality follows from $\|W\|_{\op} \leq 1$ and $\|V\|_{\op} \leq 1$, and the last inequality holds because the basis functions satisfy $\|\phi_{j}\|_{\infty} \leq C_{\phi} < \infty$. Recall the matrix variance statistic
    \begin{align*}
    &\nu = (n+1-D)\\
    &\times \sup_{\mathcal{K} \subseteq[n+1-D]}\frac{1}{|\mathcal{K}|}\max\left\{\left\|\mathbb{E}\left[\left(\sum_{i \in \mathcal{K}}Z^{(i)}\right)\left(\sum_{i \in \mathcal{K}}Z^{(i)}\right)^{\top}\right]\right\|_{\mathrm{op}}, \left\|\mathbb{E}\left[\left(\sum_{i \in \mathcal{K}}Z^{(i)}\right)^{\top}\left(\sum_{i \in \mathcal{K}}Z^{(i)}\right)\right]\right\|_{\mathrm{op}}\right\}.
    \end{align*}
    We focus on deriving the bound for $\|\mathbb{E}[(\sum_{i \in \mathcal{K}}Z^{(i)})(\sum_{i \in \mathcal{K}}Z^{(i)})^{\top})]\|_{\op}$, and the bound for the second term can be obtained similarly.  

    Note that $\{Z^{(t)}\}_{t \leq j}$ is a $2D$-dependence process, and thus a $\tau$-mixing process with an exponential coefficient decay rate in time lag $l$, i.e.~$\exp(-\gamma l)$ for some absolute $\gamma > 0$. By Lemma~5.3 in \cite{dedecker2007weak}, we have for each integer $j$, there exists a sequence of random matrices $\{\widetilde{Z}^{(t)}\}_{t > j}$ which is independent of $\sigma(\{Z^{(t)}\}_{t \leq j})$, identically distributed as $\{Z^{(t)}\}_{t > j}$ and for each $k \geq j+1$
    \begin{align}\label{eq:tau-mixing}
        \left\|\mathbb{E}\left[(Z^{(k)} - \widetilde{Z}^{(k)})(Z^{(k)} - \widetilde{Z}^{(k)})^{\top}\right]\right\|_{\op}^{1/2} \leq C_{\mathrm{dep}}\left\|\mathbb{E}\left[Z^{(1)}(Z^{(1)})^{\top}\right]\right\|_{\op}^{1/2}\exp(-\gamma(k-j-1)/2),
    \end{align}
    for some absolute constants $C_{\mathrm{dep}} > 0$.
    \begin{align*}
    \nu =& (n+1-D)\sup_{\mathcal{K} \subseteq [n+1-D]}|\mathcal{K}|^{-1}\left\|\mathbb{E}\left[\left(\sum_{i \in \mathcal{K}}Z^{(i)}\right)\left(\sum_{i \in \mathcal{K}}Z^{(i)}\right)^{\top}\right]\right\|_{\mathrm{op}}\\
    \leq& (n+1-D)\sup_{\mathcal{K} \subseteq [n+1-D]}|\mathcal{K}|^{-1}\sum_{j \in \mathcal{K}}\sum_{k \in \mathcal{K}}\left\|\mathbb{E}\left[Z^{(j)}\left(Z^{(k)}\right)^{\top}\right]\right\|_{\mathrm{op}}\\
    =& (n+1-D)\sup_{\mathcal{K} \subseteq [n+1-D]}|\mathcal{K}|^{-1}\sum_{j \in \mathcal{K}}\sum_{k \in \mathcal{K}}\sup_{\|v\|_2 = 1}v^{\top}\mathbb{E}\left[Z^{(j)}\left(Z^{(k)}\right)^{\top}\right]v\\
    =& (n+1-D)\sup_{\mathcal{K} \subseteq [n+1-D]}|\mathcal{K}|^{-1}\sum_{j \in \mathcal{K}}\sup_{\|v\|_2 = 1}v^{\top}\mathbb{E}\left[Z^{(j)}\left(Z^{(j)}\right)^{\top}\right]v\\
    &+ 2(n+1-D)\sup_{\mathcal{K} \subseteq [n+1-D]}|\mathcal{K}|^{-1}\sum_{j,k \in \mathcal{K}; j < k} \sup_{\|v\|_2 = 1}\left|\mathbb{E}\left[\left(v^{\top}Z^{(j)}\right)\left(v^{\top}Z^{(k)}\right)^{\top}\right]\right|\\
    =& (n+1-D)\sup_{\|v\|_2 = 1}v^{\top}\mathbb{E}\left[Z^{(j)}\left(Z^{(j)}\right)^{\top}\right]v\\
    &+ 2(n+1-D)\sup_{\mathcal{K} \subseteq [n+1-D]}|\mathcal{K}|^{-1}\sum_{j,k \in \mathcal{K}; j < k} \sup_{\|v\|_2 = 1}\left|\mathbb{E}\left[\left(v^{\top}Z^{(j)}\right)\left(v^{\top}(Z^{(k)} - \widetilde{Z}^{(k)})\right)^{\top}\right]\right|\\
    \leq& (n+1-D)\sup_{\|v\|_2 = 1}v^{\top}\mathbb{E}\left[Z^{(j)}\left(Z^{(j)}\right)^{\top}\right]v\\
    &+ 2(n+1-D)\sup_{\mathcal{K} \subseteq [n+1-D]}|\mathcal{K}|^{-1}\\
    &\quad \times \sum_{j,k \in \mathcal{K}; j < k} \sup_{\|v\|_2 = 1}\sqrt{\mathbb{E}\left[\left(v^{\top}Z^{(j)}\right)\left(v^{\top}Z^{(j)}\right)^{\top}\right]\mathbb{E}\left[\left(v^{\top}(Z^{(k)} - \widetilde{Z}^{(k)})\right)\left(v^{\top}(Z^{(k)} - \widetilde{Z}^{(k)})\right)^{\top}\right]}\\
    \leq& (n+1-D)\sup_{\|v\|_2 = 1}v^{\top}\mathbb{E}\left[Z^{(j)}\left(Z^{(j)}\right)^{\top}\right]v\\
    & + 2(n+1-D)C_{\mathrm{dep}}\sup_{\mathcal{K} \subseteq [n+1-D]}|\mathcal{K}|^{-1}\sum_{j,k \in \mathcal{K}; j < k} \sup_{\|v\|_2 = 1}\mathbb{E}\left[\left(v^{\top}Z^{(j)}\right)\left(v^{\top}Z^{(j)}\right)^{\top}\right]\exp(-\gamma(k-j-1))\\
    \leq& (n+1-D)(1+2C_{\gamma}C_{\mathrm{dep}})\left\|\mathbb{E}\left[Z^{(j)}\left(Z^{(j)}\right)^{\top}\right]\right\|_{\mathrm{op}}\\
    \leq& 2(n+1-D)C_{\gamma}C_{\mathrm{dep}}\left\|\mathbb{E}\left[Q^{(j)}\left(Q^{(j)}\right)^{\top}\right]\right\|_{\mathrm{op}},
\end{align*}
    where the first equality follows from the triangle inequality, the third equality follows from the symmetry of the cross-covariances, the fourth equality follows from the stationarity of $\{Z^{(j)}\}$ and the independence between $Z^{(j)}$ and $\widetilde{Z}^{(k)}$, the second inequality follows from the Cauchy-Schwarz inequality for expectations, the third inequality follows from \eqref{eq:tau-mixing}, the fourth inequality follows from the fact $\sum_{j,k \in \mathcal{K}; j<k}\exp(-\gamma(k-j-1)) \leq |\mathcal{K}|\sum_{l \geq 0}\exp(-\gamma l) = |\mathcal{K}| C_{\gamma}$, with $C_{\gamma} < \infty$. The last inequality follows from Fact 8.3.2 of \cite{tropp2015introduction}, i.e.~$\var(Q^{(i)}) \preccurlyeq \mathbb{E}[Q^{(i)}(Q^{(i)})^{\top}]$. Note that 
    \begin{align*}
        &\left[Q^{(j)}\left(Q^{(j)}\right)^{\top}\right]_{p, q}\\
        =& \sum_{l = 1}^{r_V} Q_{(p;l)}^{(j)}Q_{(q;l)}^{(j)}
        = \psi_{p}(Y_1^{(j)})\psi_{q}(Y_1^{(j)})\sum_{l = 1}^{r_V}\left(\sum_{\mu_2 = 1}^{m^{d_2}}\cdots\sum_{\mu_s = 1}^{m^{d_s}} V_{(\mu_2, \dots, \mu_s; l)}\phi_{\mu_2}(Y_2^{(j)})\cdots\phi_{\mu_s}(Y_s^{(j)})\right)^2.
    \end{align*}
    Furthermore,
    \begin{align*}
        &\left\|\mathbb{E}[Q^{(j)}(Q^{(j)})^{\top}]\right\|_{\op} = \sup_{\|v\|_2 = 1} v^{\top}\mathbb{E}[Q^{(j)}(Q^{(j)})^{\top}]v = \sup_{\|v\|_2 = 1} \mathbb{E}\left(\sum_{p = 1}^{r_W}\sum_{q = 1}^{r_W} v_p\left[Q^{(j)}(Q^{(j)})^{\top}\right]_{p; q}v_q\right) \\
        =& \sup_{\|v\|_2 = 1} \idotsint\left(\sum_{p = 1}^{r_W}\sum_{q = 1}^{r_W} v_p\psi_{p}(x_1)\psi_{q}(x_1)v_q\right)\\
        &\quad\quad\quad\quad\quad\quad\quad\left\{\sum_{l = 1}^{r_V}\left(\sum_{\mu_2 = 1}^{m^{d_2}}\cdots\sum_{\mu_s = 1}^{m^{d_s}} V_{(\mu_1, \dots,\mu_s; l)}\phi_{\mu_2}(x_2)\cdots\phi_{\mu_s}(x_s)\right)^2\right\}f^*(x_1,\cdots,x_s)\d x_2 \cdots \d x_s\\
        \leq& \|f^*\|_{\infty}\sup_{\|v\|_2 = 1} \int\left(\sum_{k = 1}^{r_W} v_k\psi_{k}(x_1)\right)^2\d x_1\\
        &\times \left\{\sum_{l = 1}^{r_V}\idotsint\left(\sum_{\mu_2 = 1}^{m^{d_2}}\cdots\sum_{\mu_s = 1}^{m^{d_s}}V_{(\mu_2, \dots,\mu_s; l)}\phi_{\mu_2}(x_2)\cdots\phi_{\mu_s}(x_s)\right)^2\d x_2\cdots \d x_s\right\}\\
        =& \|f^*\|_{\infty}\sup_{\|v\|_2 = 1} \int\sum_{k = 1}^{r_W} v_k^2\psi^2_{k}(x_1) \d x_1\\
        &\times \left\{\sum_{l = 1}^{r_V}\idotsint\sum_{\mu_2 = 1}^{m^{d_2}}\cdots\sum_{\mu_s = 1}^{m^{d_s}}\left\{V_{(\mu_2,\dots,\mu_s; l)}\phi_{\mu_2}(x_2)\cdots\phi_{\mu_s}(x_s)\right\}^2\d x_2\cdots \d x_s\right\}\\
        =& \|f^*\|_{\infty}r_V,
    \end{align*}
    where the last two lines follows from the fact that $\{\psi_j\}$ and $\{\phi_j\}$ are othonormal basis and $V \in \mathbb O_{m^{D-d_1}, r_V}$. Similarly, we can show that $\|\mathbb{E}[(Q^{(i)})^{\top}Q^{(i)}]\|_{\op} \leq \|f^*\|_{\infty}r_W$.
    Therefore, we have $\nu \leq 2(n+1-D)C_{\gamma}C_{\mathrm{dep}}\|f^*\|_{\infty}(r_V + r_W)$. By \Cref{lemma:matrix_bernstein_dep}, we have
\begin{align*}
    &\mathbb{P}\left(\left\|W^{\top} \cdot \mathcal{M}_1(\widehat b - b^*) \cdot V\right\|_{\op} \geq t\right)\\
    \leq& (r_V+ r_W)\exp\left(-\frac{C_1(n+1-D)t^2}{2C_{\gamma}C_{\mathrm{dep}}\|f^*\|_{\infty}(r_V + r_W) }\right)\\
    &+ (r_V+ r_W)\exp\left(-\frac{C_2(n+1-M)^2t^2}{c^{-1}C_{\phi}^{2s}m^{D}}\right)\\
    &\leq (r_V+ r_W)\exp\left(-\frac{C_3(n+1-D)t}{C_{\phi}^sm^{D/2}(\log(n+1-D))^2}\right). 
\end{align*}
    It follows that with probability at least $1 - m^{-5}$,
    \begin{align*}
        \left\|W^{\top} \cdot \mathcal{M}_1(\widehat b - b^*) \cdot V\right\|_{\op} \leq C\left\{\sqrt{\frac{C_{\gamma}C_{\mathrm{dep}}\|f^*\|_{\infty}(r_V+r_W)\log(m)}{n}} + \frac{C_{\phi}^{s}m^{D/2}\log(m)(\log(n))^2}{n}\right\}.
    \end{align*}
    The same argument leads to the similar bounds for $j = 2, \dots, s$, which concludes the proof.
\end{proof}

\begin{proof}[Proof of \Cref{thm:main_result_ts}]
    The proof is a consequence of Lemma~\ref{lemma:fundamental_ts} and Propositions~\ref{lemma:hosvd_approx} and \ref{lemma:oshoop_approx}.  It follows from the proof of \Cref{thm:main_result}, and thus it is omitted.
\end{proof}

\section{Auxiliary results for tensor estimation under approximately low-rank settings}\label{sec:approx_low_rank_tensor}
\subsection{Notation}\label{sec:notation_app}
We recall the notation used for our main results.  Let $\mathbb{X} = \mathbb{X}_1 \times \cdots \times \mathbb{X}_s \subset \mathbb{R}^{d_1}\times  \cdots \times \mathbb{R}^{d_s}= \mathbb R^{D}$. Let $\{N^{(i)}\}_{i = 1}^n \subseteq \mathbb{X}$ be a set of i.i.d.~spatial point processes, with intensity function $\lambda^*: \mathbb{X} \to \mathbb{R}_+$.  Suppose $\lambda^{*}$ satisfies that $\lambda^* \in \mathbb{L}_2(\mathbb{X})$ and $\|\lambda^*\|_\infty < \infty$.

Let $u_j \in \mathbb L_2(\mathbb{X}_j)$ for each $j \in [s]$. For a function $A: \mathbb{X} \to \mathbb{R}$, define the operator norm of $A$ as
\[
    \|A\|_{\op} = \sup_{\|u_j\|_{\mathbb L_2(\mathbb{X}_j)}\leq 1; \, j \in [s]} A[u_{1}, \dots,u_{s}].
\]
Let $\widehat{\lambda}$ be the empirical version of $\lambda^*$ based on $\{N^{(i)}\}_{i = 1}^n$. Let $\{\phi_{j,\mu_j}\}_{\mu_j = 1}^{\infty} \subset \mathbb L_2(\mathbb{X}_j)$ be a collection of orthonormal basis functions satisfying $\|\phi_{j, \mu_j}\|_{\infty} \leq C_{\phi} < \infty$. For $\widehat \lambda$ and $\lambda^*$, the associated coefficients tensors are
    \[\widehat b = \{\widehat b_{\mu_1,\dots,\mu_s}\}_{\mu_1,\dots,\mu_s = 1}^{m^{d_1}, \dots, m^{d_s}} = \{\widehat \lambda[\phi_{1,\mu_1}, \cdots, \phi_{s,\mu_s}]\}_{\mu_1,\dots,\mu_s = 1}^{m^{d_1}, \dots, m^{d_s}},\]
    \[b^* = \{b^*_{\mu_1,\dots,\mu_s}\}_{\mu_1,\dots,\mu_s= 1}^{m^{d_1}, \dots, m^{d_s}} = \{\lambda^*[\phi_{1,\mu_1}, \cdots, \phi_{s,\mu_s}]\}_{\mu_1,\dots,\mu_s = 1}^{m^{d_1}, \dots, m^{d_s}}.\]
Let $\mathcal{M}_j(b)$ be the mode-$j$ matricization of an $s$th-order tensor $b$.

Given elements $\{f_i\}_{i = 1}^n$ in a Hilbert space $\mathcal{H}$, define the span as
\[
\mathrm{Span}\{f_{i}: i \in [n]\} = \{b_1f_1 + \cdots + b_nf_n: \{b_i\}_{i = 1}^n \subset \mathbb{R}\}
\]
Let $\{\phi_i\}_{i = 1}^{\infty}$ be a set of orthonormal basis functions in $\mathcal{H}$.
The linear subspace $\mathcal{U} = \mathrm{Span}\{\phi_{i}: i \in [m]\}$ has dimension $m$. The projection operator onto $\mathcal{U}$ is defined for all $f \in \mathcal{H}$ by
\[
    \mathcal{P}_{\mathcal{U}}f = \sum_{i = 1}^m \left\{\int f(x)\phi_{i}(x)\d x\right\} \phi_{i}.
\]

\subsection{Technical tools for approximately low-rank matrices/tensors}\label{sec:approx_low_rank_tools}

\begin{theorem}[Weyl's inequality for singular values]
\label{thm:WeylSV}
Let $A,B\in\mathbb{R}^{m\times n}$ and denote their singular values (in nonincreasing order) by $\{\sigma_i(A)\}$ and $\{\sigma_i(B)\}$  respectively. In addition denote the singular values of $A+B$ as $\{\sigma_i(A+B)\}$. Then for all indices $i,j$ satisfying $i+j-1\le \min\{m,n\}$,
\[
\sigma_{i+j-1}(A+B) \le \sigma_i(A) + \sigma_j(B).
\]
\end{theorem}

\begin{theorem}[Wedin's sin$\Theta$ theorem, Theorem~2.9 of \cite{chen2021spectral}]\label{thm:wedin}
Let $M = M^{*} + E \in \mathbb R^{n_1 \times n_2}$ (without loss of generality assume that $n_1 \leq n_2$). The SVD of $M^*$ and $M$ are given respectively by
\[
M^* = \sum_{i=1}^{n_1} \sigma_i^* u_i^* {v_i^*}^\top \in \mathbb R^{n_1 \times n_2} \quad \text{and} \quad M = \sum_{i=1}^{n_1} \sigma_i u_i v_i^\top \in \mathbb R^{n_1 \times n_2},
\]
where $\sigma_1^* \geq \cdots \geq \sigma_{n_1}^*$ and $\sigma_1 \geq \cdots \geq \sigma_{n_1}$. For all $R \leq n_1$, let
\[
\Sigma^* = \mathrm{diag}([\sigma_1^*, \cdots, \sigma_R^*]) \in \mathbb{R}^{R \times R}, \quad U^* = [u_1^*, \cdots, u_R^*] \in \mathbb{R}^{n_1 \times R}, \quad V^* = [v_1^*, \cdots, v_R^*] \in \mathbb{R}^{R \times n_2},
\]
\[
\Sigma = \mathrm{diag}([\sigma_1,  \cdots, \sigma_R]) \in \mathbb{R}^{R \times R}, \quad U = [u_1, \cdots, u_R] \in \mathbb{R}^{n_1 \times R}, \quad V = [v_1, \cdots, v_R] \in \mathbb{R}^{R \times n_2}.
\]
If $\|\mathbb{E}\|_{\op} < \sigma_R^* - \sigma_{R+1}^*$, then we have
\[
\max\left\{\left\| \sin \Theta(U, U^*)\right\|_{\op}, \left\| \sin \Theta(V, V^*)\right\|_{\op} \right\} 
\leq \frac{ \|E \|_{\op}}{\sigma_R^* - \sigma_{R+1}^* - \|E\|_{\op}}.
\]
\end{theorem}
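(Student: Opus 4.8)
The statement is the classical Wedin sin$\Theta$ theorem (Wedin 1972), and the plan is to reduce both claimed subspace bounds to a single matrix identity obtained by evaluating a projected block of $M$ in two different ways. Write $U_{\perp}\in\mathbb R^{n_1\times(n_1-R)}$ and $V_{\perp}\in\mathbb R^{n_2\times(n_2-R)}$ for orthonormal bases of the orthogonal complements of $\mathrm{col}(U)$ and $\mathrm{col}(V)$; since the reduced SVD of $M$ is $\sum_{i=1}^{n_1}\sigma_iu_iv_i^{\top}$ with $n_1\le n_2$, I may take $U_{\perp}=[u_{R+1},\dots,u_{n_1}]$ (an exact fit) and choose $V_{\perp}$ so that its first $n_1-R$ columns are $v_{R+1},\dots,v_{n_1}$; set $\Sigma_{\mathrm{rest}}=\mathrm{diag}(\sigma_{R+1},\dots,\sigma_{n_1})$ and $V_{\mathrm{rest}}=[v_{R+1},\dots,v_{n_1}]$. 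I will use the standard fact that the nonzero singular values of $U_{\perp}^{\top}U^{*}$ are the sines of the principal angles between $\mathrm{col}(U)$ and $\mathrm{col}(U^{*})$, so $\|\sin\Theta(U,U^{*})\|_{\op}=\|U_{\perp}^{\top}U^{*}\|_{\op}$ and likewise $\|\sin\Theta(V,V^{*})\|_{\op}=\|V_{\perp}^{\top}V^{*}\|_{\op}$; hence it suffices to bound $\Phi:=U_{\perp}^{\top}U^{*}$ and $\Psi:=V_{\perp}^{\top}V^{*}$ in operator norm.

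The core step is to compute $U_{\perp}^{\top}MV^{*}$ twice. On one hand, from the SVD of $M$ one has $U_{\perp}^{\top}M=\Sigma_{\mathrm{rest}}V_{\mathrm{rest}}^{\top}$, so $U_{\perp}^{\top}MV^{*}=\Sigma_{\mathrm{rest}}(V_{\mathrm{rest}}^{\top}V^{*})$, whose operator norm is at most $\|\Sigma_{\mathrm{rest}}\|_{\op}\,\|V_{\mathrm{rest}}^{\top}V^{*}\|_{\op}\le\sigma_{R+1}(M)\,\|\Psi\|_{\op}$, because $V_{\mathrm{rest}}^{\top}V^{*}$ is a row-subblock of $V_{\perp}^{\top}V^{*}$. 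On the other hand, writing $M=M^{*}+E$ and using the SVD of $M^{*}$ (whose trailing part is annihilated on the right by $V^{*}$, giving $M^{*}V^{*}=U^{*}\Sigma^{*}$), one gets $U_{\perp}^{\top}MV^{*}=\Phi\Sigma^{*}+U_{\perp}^{\top}EV^{*}$. Equating and rearranging gives $\Phi\Sigma^{*}=\Sigma_{\mathrm{rest}}(V_{\mathrm{rest}}^{\top}V^{*})-U_{\perp}^{\top}EV^{*}$. Taking operator norms, using Weyl's inequality $\sigma_{R+1}(M)\le\sigma_{R+1}^{*}+\|E\|_{\op}$, the bound $\|U_{\perp}^{\top}EV^{*}\|_{\op}\le\|EV^{*}\|_{\op}$, and $\|\Phi\Sigma^{*}\|_{\op}\ge\sigma_{R}^{*}\|\Phi\|_{\op}$ (valid since the gap hypothesis forces $\sigma_{R}^{*}>\sigma_{R+1}^{*}\ge0$, so $\Sigma^{*}$ is invertible and $\|\Phi\|_{\op}=\|\Phi\Sigma^{*}(\Sigma^{*})^{-1}\|_{\op}\le\|\Phi\Sigma^{*}\|_{\op}/\sigma_R^*$), I obtain
\[
\sigma_{R}^{*}\,\|\Phi\|_{\op}\;\le\;(\sigma_{R+1}^{*}+\|E\|_{\op})\,\|\Psi\|_{\op}+\|EV^{*}\|_{\op}.
\]
Repeating the computation with left and right roles interchanged — i.e.\ computing $U^{*\top}MV_{\perp}$, which equals $\Sigma^{*}\Psi^{\top}+U^{*\top}EV_{\perp}$ on one side and, on the other, $U^{*\top}U_{\perp}\Sigma_{\mathrm{rest}}$ up to a harmless trailing zero block of width $n_2-n_1$ that does not affect the operator norm — yields the mirror inequality $\sigma_{R}^{*}\,\|\Psi\|_{\op}\le(\sigma_{R+1}^{*}+\|E\|_{\op})\,\|\Phi\|_{\op}+\|E^{\top}U^{*}\|_{\op}$.

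To finish, set $x=\|\Phi\|_{\op}$, $y=\|\Psi\|_{\op}$, $\epsilon=\max\{\|EV^{*}\|_{\op},\|E^{\top}U^{*}\|_{\op}\}$, and assume without loss of generality $x\ge y$. The first inequality then gives $\sigma_{R}^{*}x\le(\sigma_{R+1}^{*}+\|E\|_{\op})x+\epsilon$, i.e.\ $(\sigma_{R}^{*}-\sigma_{R+1}^{*}-\|E\|_{\op})x\le\epsilon$; since $\|E\|_{\op}<\sigma_{R}^{*}-\sigma_{R+1}^{*}$ makes the coefficient strictly positive, $x\le\epsilon/(\sigma_{R}^{*}-\sigma_{R+1}^{*}-\|E\|_{\op})$, and hence $\max\{x,y\}=x$ obeys the same bound. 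Translating back and using $\epsilon\le\|E\|_{\op}$ (because $U^{*},V^{*}$ have orthonormal columns) delivers both displayed bounds. I expect the only delicate points to be the dimension bookkeeping in the identity $U_{\perp}^{\top}M=\Sigma_{\mathrm{rest}}V_{\mathrm{rest}}^{\top}$ together with its transposed counterpart (where a zero block of width $n_2-n_1$ appears and must be tracked so the sharp constant is not lost), and the elementary but essential observation that $V_{\mathrm{rest}}^{\top}V^{*}$ genuinely sits inside $V_{\perp}^{\top}V^{*}$ as a submatrix; everything else is Weyl's inequality and submultiplicativity of the operator norm.
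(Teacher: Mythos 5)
Your proposal is correct. Note that the paper does not prove this statement at all — it is imported verbatim as Theorem~2.9 of \cite{chen2021spectral} — so there is no internal proof to compare against; what you have written is a correct, self-contained rendition of the standard argument for Wedin's $\sin\Theta$ theorem. The key identities check out: $U_\perp^{\top}M = \Sigma_{\mathrm{rest}}V_{\mathrm{rest}}^{\top}$ and $M^*V^* = U^*\Sigma^*$ give $\Phi\Sigma^* = \Sigma_{\mathrm{rest}}(V_{\mathrm{rest}}^{\top}V^*) - U_\perp^{\top}EV^*$, the submatrix observation $\|V_{\mathrm{rest}}^{\top}V^*\|_{\op}\le\|V_\perp^{\top}V^*\|_{\op}$ is valid since $V_{\mathrm{rest}}^{\top}V^*$ consists of rows of $V_\perp^{\top}V^*$, the zero block of width $n_2-n_1$ in the mirrored identity indeed does not change the operator norm, Weyl's inequality and $\|\Phi\|_{\op}\le\|\Phi\Sigma^*\|_{\op}/\sigma_R^*$ (using $\sigma_R^*>\sigma_{R+1}^*\ge 0$ from the gap hypothesis) are applied correctly, and the final ``WLOG $x\ge y$'' step is legitimate because the two coupled inequalities are symmetric under swapping $(\Phi,\|EV^*\|_{\op})$ with $(\Psi,\|E^{\top}U^*\|_{\op})$, so whichever of $x,y$ is larger satisfies the same bound. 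The only implicit convention worth stating is the usual one that $\sigma_{R+1}^*:=0$ when $R=n_1$; otherwise the argument is complete and delivers both displayed inequalities, the second following from $\max\{\|E^{\top}U^*\|_{\op},\|EV^*\|_{\op}\}\le\|E\|_{\op}$.
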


\begin{lemma}[Properties of the sin$\Theta$ distances, Lemma~1 of \cite{cai2018rate}]\label{lemma:sintheta}
\
\\
The following properties hold for the $\sin\Theta$ distances.
\begin{enumerate}
    \item (Equivalent Expressions) Suppose $V, \widehat{V} \in \mathbb{O}_{p, R}$. If $V_{\perp}$ is an orthogonal extension of $V$, namely $\begin{bmatrix} V & V_{\perp}\end{bmatrix} \in \mathbb{O}_{p}$, we have the following equivalent forms for $\| \sin \Theta(\widehat{V}, V) \|_{\op}$ and $\| \sin \Theta(\widehat{V}, V) \|_{\mathrm{F}}$,
    \[
    \| \sin \Theta(\widehat{V}, V) \|_{\op} = \sqrt{1 - \sigma_{\min}^2(\widehat{V}^T V)} = \| \widehat{V}^T V_{\perp} \|_{\op},
    \]
    \[
    \| \sin \Theta(\widehat{V}, V) \|_{\mathrm{F}} = \sqrt{r - \| V^T \widehat{V} \|_{\mathrm{F}}^2} = \| \widehat{V}^T V_{\perp} \|_{\mathrm{F}}.
    \]
    \item (Triangle Inequality) For all $V_1, V_2, V_3 \in \mathbb{O}_{p, R}$,
    \[
    \| \sin \Theta(V_2, V_3) \|_{\op} \leq \| \sin \Theta(V_1, V_2) \|_{\op} + \| \sin \Theta(V_1, V_3) \|_{\op},
    \]
    \[
    \| \sin \Theta(V_2, V_3) \|_{\mathrm{F}} \leq \| \sin \Theta(V_1, V_2) \|_{\mathrm{F}} + \| \sin \Theta(V_1, V_3) \|_{\mathrm{F}}.
    \]
    \item (Equivalence with Other Metrics)
\[
\| \sin \Theta(\widehat{V}, V) \|_{\op} \leq \sqrt{2} \| \sin \Theta(\widehat{V}, V) \|_{\op},
\]
\[
\| \sin \Theta(\widehat{V}, V) \|_{\mathrm{F}} \leq \sqrt{2} \| \sin \Theta(\widehat{V}, V) \|_{\mathrm{F}},
\]
\[
\| \sin \Theta(\widehat{V}, V) \|_{\op} \leq \| \widehat{V} \widehat{V}^\top - V V^\top \|_{\op} \leq 2 \| \sin \Theta(\widehat{V}, V) \|_{\op},
\]
\[
\| \widehat{V} \widehat{V}^\top - V V^\top \|_{\mathrm{F}} = \sqrt{2} \| \sin \Theta(\widehat{V}, V) \|_{\mathrm{F}}.
\]

\end{enumerate}
    
\end{lemma}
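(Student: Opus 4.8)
The plan is to push everything through the principal-angle (CS) decomposition of the pair of $R$-dimensional subspaces spanned by the columns of $\widehat V$ and $V$, after which each displayed identity is a substitution. First I would fix $\widehat V, V \in \mathbb{O}_{p,R}$ and take an SVD $\widehat V^{\top} V = A\, \mathrm{diag}(\cos\theta_1,\dots,\cos\theta_R)\, B^{\top}$ with $A,B \in \mathbb{O}_R$ and $0 \le \theta_1 \le \cdots \le \theta_R \le \pi/2$; by definition $\sin\Theta(\widehat V, V)$ is the diagonal matrix $\mathrm{diag}(\sin\theta_1,\dots,\sin\theta_R)$ up to the orthogonal factors $A,B$, which leave the operator and Frobenius norms unchanged. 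Part~1 is then immediate: $\|\sin\Theta(\widehat V, V)\|_{\op} = \max_i \sin\theta_i = \sqrt{1-\min_i\cos^2\theta_i} = \sqrt{1-\sigma_{\min}^2(\widehat V^{\top} V)}$, and $\|\sin\Theta(\widehat V, V)\|_{\mathrm{F}}^2 = \sum_i \sin^2\theta_i = R - \sum_i \cos^2\theta_i = R - \|V^{\top}\widehat V\|_{\mathrm{F}}^2$. For the identification with $\|\widehat V^{\top} V_{\perp}\|$, I would use $VV^{\top} + V_{\perp}V_{\perp}^{\top} = I_p$ to write $\widehat V^{\top} V_{\perp}V_{\perp}^{\top}\widehat V = I_R - \widehat V^{\top} V V^{\top}\widehat V = A(I_R - \mathrm{diag}(\cos^2\theta_i))A^{\top}$, so the singular values of $\widehat V^{\top} V_{\perp}$ are exactly $\{\sin\theta_i\}_{i=1}^R$; hence $\|\widehat V^{\top} V_{\perp}\|_{\op} = \|\sin\Theta(\widehat V,V)\|_{\op}$ and $\|\widehat V^{\top} V_{\perp}\|_{\mathrm{F}} = \|\sin\Theta(\widehat V,V)\|_{\mathrm{F}}$.

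Next I would record the projection-difference identities, which drive Parts~2 and~3. Writing $P = \widehat V\widehat V^{\top}$ and $Q = VV^{\top}$, a short computation in coordinates adapted to the two column spaces (equivalently, bringing the pair $(\widehat V, V)$ into canonical $2\times2$ block CS form on the span of their ranges) shows that the nonzero eigenvalues of $P - Q$ are $\{\pm\sin\theta_i\}_{i=1}^R$, each occurring once, together with zeros. Therefore $\|P-Q\|_{\op} = \max_i\sin\theta_i = \|\sin\Theta(\widehat V,V)\|_{\op}$ and $\|P-Q\|_{\mathrm{F}}^2 = 2\sum_i\sin^2\theta_i = 2\,\|\sin\Theta(\widehat V,V)\|_{\mathrm{F}}^2$. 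This yields the last two lines of Part~3 at once: the operator-norm sandwich $\|\sin\Theta(\widehat V,V)\|_{\op} \le \|P-Q\|_{\op} \le 2\|\sin\Theta(\widehat V,V)\|_{\op}$ holds (in fact the left bound is an equality), and the Frobenius identity $\|P-Q\|_{\mathrm{F}} = \sqrt2\,\|\sin\Theta(\widehat V,V)\|_{\mathrm{F}}$ is exact. The remaining comparisons in Part~3 are then trivialities: they either follow from $\|M\|_{\op} \le \|M\|_{\mathrm{F}}$ with a slack factor of $\sqrt2$, or (reading the intended statement as the standard comparison with the minimal-rotation distance $\min_{O\in\mathbb{O}_R}\|\widehat V - VO\|_{\bullet}$) from $\|\sin\Theta(\widehat V,V)\|_{\bullet} \le \min_{O}\|\widehat V - VO\|_{\bullet} \le \sqrt2\,\|\sin\Theta(\widehat V,V)\|_{\bullet}$, which is obtained by choosing $O = AB^{\top}$ in the SVD above and expanding $\|\widehat V - V O\|_{\bullet}^2$.

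Part~2 then follows with no further work. Since $V \mapsto VV^{\top}$ sends each $V_k \in \mathbb{O}_{p,R}$ to a genuine matrix $P_k$, and Part~1 / the computation above give $\|\sin\Theta(V_i,V_j)\|_{\op} = \|P_i-P_j\|_{\op}$ and $\|\sin\Theta(V_i,V_j)\|_{\mathrm{F}} = \tfrac1{\sqrt2}\|P_i-P_j\|_{\mathrm{F}}$ for every pair, the triangle inequality for the matrix norms $\|\cdot\|_{\op}$ and $\|\cdot\|_{\mathrm{F}}$ transfers verbatim to the $\sin\Theta$ distances. I do not expect a genuine obstacle anywhere in this argument; the one step that must be written with care is the eigenvalue computation for $P-Q$ (equivalently, the reduction to canonical CS block form), since the equalities $\|P-Q\|_{\op} = \|\sin\Theta\|_{\op}$ and $\|P-Q\|_{\mathrm{F}} = \sqrt2\|\sin\Theta\|_{\mathrm{F}}$ for equidimensional subspaces are precisely what makes all three parts collapse to routine substitutions.
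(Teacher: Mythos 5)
Your argument is correct, but there is no internal proof to compare it with: the paper imports this statement verbatim as Lemma~1 of \cite{cai2018rate} and never proves it. Your CS-decomposition route is essentially the standard argument behind the cited lemma. Part~1 follows from the SVD $\widehat V^{\top}V = A\cos\Theta\, B^{\top}$ together with $VV^{\top}+V_{\perp}V_{\perp}^{\top}=I_p$, exactly as you write; parts~2 and~3 rest on the fact that for two $R$-dimensional subspaces the nonzero eigenvalues of $\widehat V\widehat V^{\top}-VV^{\top}$ are $\pm\sin\theta_i$, whence $\|\widehat V\widehat V^{\top}-VV^{\top}\|_{\op}=\|\sin\Theta(\widehat V,V)\|_{\op}$ and $\|\widehat V\widehat V^{\top}-VV^{\top}\|_{\mathrm{F}}=\sqrt{2}\,\|\sin\Theta(\widehat V,V)\|_{\mathrm{F}}$, after which the triangle inequalities transfer verbatim from the matrix norms. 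Three small points. (i) The equality $\|\widehat V\widehat V^{\top}-VV^{\top}\|_{\op}=\max_i\sin\theta_i$ uses equidimensionality of the two subspaces (for subspaces of different dimensions the operator norm equals $1$); this holds here since both matrices lie in $\mathbb{O}_{p,R}$, but it deserves explicit mention, and the eigenvalue computation you flag is indeed the one step that must be written out (via the CS decomposition or a $2\times 2$ block reduction on the joint span). (ii) You correctly diagnosed that the first two displays of part~3 are vacuous as printed ($x\le\sqrt{2}\,x$); in \cite{cai2018rate} they compare $\|\sin\Theta(\widehat V,V)\|$ with $\inf_{O\in\mathbb{O}_R}\|\widehat V O-V\|$, and your bound using the orthogonal factor from the SVD together with $1-\cos\theta\le\sin^2\theta$ on $[0,\pi/2]$ gives exactly that (minor transposition detail: the optimal choice is $O=AB^{\top}$ for $\|\widehat V O-V\|$ but $O=BA^{\top}$ for $\|\widehat V-VO\|$; the two infima coincide). (iii) In part~1 the quantity $r$ inside the square root should be $R$, which your computation implicitly corrects.
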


\begin{lemma}\label{lemma:perterbation_approx}
    Suppose $X, Z \in \mathbb{R}^{n \times m}$. For all $1 \leq R \leq \min\{n,m\}$, write the full SVD of $Y$ as
\[
Y = X + Z = \widehat{U} \widehat{\Sigma} \widehat{V}^\top = \begin{bmatrix} \widehat{U}_{(R)} \ \widehat{U}_{\perp} \end{bmatrix} \cdot \begin{bmatrix} \widehat{\Sigma}_{(R)} & \\ & \widehat{\Sigma}_{\perp} \end{bmatrix} \cdot \begin{bmatrix} \widehat{V}_{(R)}^\top \\ \widehat{V}_{\perp}^\top \end{bmatrix},
\]
where $\widehat{U}_{(R)} \in \mathbb{O}_{n, R}$, $\widehat{V}_{(R)} \in \mathbb{O}_{m, R}$ correspond to the leading $R$ left and right singular vectors; and $\widehat{U}_{\perp} \in \mathbb{O}_{n, n - R}$, $\widehat{V}_{\perp} \in \mathbb{O}_{m, m - R}$ correspond to their orthonormal complement. We have
    \begin{align*}
    \left\|\mathcal P_{\widehat{U}_{\perp}}X\right\|_{\mathrm{F}} \leq& 3\sqrt{\sum_{j = R+1}^{\min\{n,m\}}\sigma_j^2(X)} + 2\min\left\{\sqrt{R}\|Z\|_{\op}, \|Z\|_{\mathrm{F}}\right\}\\
    =& 3\left\|X_{(R)} - X\right\|_{\mathrm{F}} + 2\min\left\{\sqrt{R}\|Z\|_{\op}, \|Z\|_{\mathrm{F}}\right\}.
    \end{align*}
\end{lemma}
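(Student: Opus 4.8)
The plan is to control $\|\mathcal{P}_{\widehat U_\perp}X\|_{\mathrm F}$ by decomposing $X$ into its best rank-$R$ part $X_{(R)}$ and the tail, and then handling the rank-$R$ part via Wedin-type subspace perturbation. Write $X = X_{(R)} + (X - X_{(R)})$, so that
\[
\left\|\mathcal{P}_{\widehat U_\perp}X\right\|_{\mathrm F} \le \left\|\mathcal{P}_{\widehat U_\perp}X_{(R)}\right\|_{\mathrm F} + \left\|\mathcal{P}_{\widehat U_\perp}(X - X_{(R)})\right\|_{\mathrm F} \le \left\|\mathcal{P}_{\widehat U_\perp}X_{(R)}\right\|_{\mathrm F} + \left\|X - X_{(R)}\right\|_{\mathrm F},
\]
using $\|\mathcal{P}_{\widehat U_\perp}\|_{\op}\le 1$ and that $\|X-X_{(R)}\|_{\mathrm F}^2 = \sum_{j=R+1}^{\min\{n,m\}}\sigma_j^2(X)$. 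So the whole task reduces to bounding $\|\mathcal{P}_{\widehat U_\perp}X_{(R)}\|_{\mathrm F}$, where $X_{(R)}$ has rank at most $R$; I would aim for a bound of the form $2\left\|X_{(R)}-X\right\|_{\mathrm F} + 2\min\{\sqrt R\|Z\|_{\op},\|Z\|_{\mathrm F}\}$.

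\textbf{Bounding the rank-$R$ term.} Let $U_{(R)}$ be the leading $R$ left singular vectors of $X$ (so the column space of $X_{(R)}$ lies in $\mathrm{range}(U_{(R)})$). Then $\mathcal{P}_{\widehat U_\perp}X_{(R)} = \mathcal{P}_{\widehat U_\perp}\mathcal{P}_{U_{(R)}}X_{(R)}$, hence
\[
\left\|\mathcal{P}_{\widehat U_\perp}X_{(R)}\right\|_{\mathrm F} \le \left\|\mathcal{P}_{\widehat U_\perp}\mathcal{P}_{U_{(R)}}\right\|_{\op}\left\|X_{(R)}\right\|_{\mathrm F}
\]
is too crude; instead I would use the sharper factorization $\|\mathcal{P}_{\widehat U_\perp}X_{(R)}\|_{\mathrm F} \le \|\widehat U_\perp^\top U_{(R)}\|_{\op}\,\|X_{(R)}\|_{\mathrm F}$ only when the spectral gap is large, which is not assumed here. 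The robust route that avoids any gap assumption is: project $Y_{(R)} = \widehat U_{(R)}\widehat\Sigma_{(R)}\widehat V_{(R)}^\top$ onto $\widehat U_\perp$ to get zero, so $\mathcal{P}_{\widehat U_\perp}X_{(R)} = \mathcal{P}_{\widehat U_\perp}(X_{(R)} - Y_{(R)})$, giving $\|\mathcal{P}_{\widehat U_\perp}X_{(R)}\|_{\mathrm F}\le \|X_{(R)} - Y_{(R)}\|_{\mathrm F}$. Now bound $\|X_{(R)} - Y_{(R)}\|_{\mathrm F}\le \|X_{(R)} - X\|_{\mathrm F} + \|X - Y\|_{\mathrm F} + \|Y - Y_{(R)}\|_{\mathrm F} = \|X_{(R)}-X\|_{\mathrm F} + \|Z\|_{\mathrm F} + \|Y-Y_{(R)}\|_{\mathrm F}$, and for the last term use that $Y_{(R)}$ is the best rank-$R$ approximation of $Y$, so $\|Y - Y_{(R)}\|_{\mathrm F}\le \|Y - X_{(R)}\|_{\mathrm F}\le \|X-X_{(R)}\|_{\mathrm F} + \|Z\|_{\mathrm F}$. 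Combining these gives a clean $2\|X_{(R)}-X\|_{\mathrm F} + 2\|Z\|_{\mathrm F}$ contribution.

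\textbf{The operator-norm refinement.} The $\sqrt R\|Z\|_{\op}$ alternative in the minimum requires more care, since a $\mathrm{rank}\le 2R$ argument is the natural way to convert an operator-norm bound into a Frobenius bound at the cost of $\sqrt{2R}$. Here I would observe that $X_{(R)} - Y_{(R)}$ has rank at most $2R$, so $\|X_{(R)} - Y_{(R)}\|_{\mathrm F}\le \sqrt{2R}\,\|X_{(R)} - Y_{(R)}\|_{\op}\le \sqrt{2R}(\|X_{(R)}-X\|_{\op} + \|Z\|_{\op} + \|Y - Y_{(R)}\|_{\op})$; bounding $\|Y-Y_{(R)}\|_{\op} = \sigma_{R+1}(Y)\le \sigma_{R+1}(X) + \|Z\|_{\op}$ and $\|X_{(R)}-X\|_{\op} = \sigma_{R+1}(X)$ via Weyl's inequality, then absorbing the $\sigma_{R+1}(X)$ terms into $\|X-X_{(R)}\|_{\mathrm F}$, yields the $\sqrt R\|Z\|_{\op}$ branch after tracking the constants (the statement allows constants $3$ and $2$, so some slack is available). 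Taking the minimum of the two branches and assembling with the tail bound from the first paragraph finishes the proof. The main obstacle is bookkeeping the constants so that the two competing bounds both fit under the stated $3$ and $2$; there is no deep difficulty, only the need to route every triangle-inequality step through $Y_{(R)}$ (the empirically computable object) rather than through the unknown $X_{(R)}$, since only $\widehat U_\perp$ — defined from $Y$ — annihilates $Y_{(R)}$.
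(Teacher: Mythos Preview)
Your Frobenius-norm branch is correct and in fact cleaner than the paper's: routing through $Y_{(R)}$ and using $\mathcal P_{\widehat U_\perp}Y_{(R)}=0$ together with the best-rank-$R$ property of $Y_{(R)}$ yields exactly $3\|X-X_{(R)}\|_{\mathrm F}+2\|Z\|_{\mathrm F}$.

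The operator-norm branch, however, has a genuine gap. Your rank-$2R$ argument gives
\[
\|\mathcal P_{\widehat U_\perp}X_{(R)}\|_{\mathrm F}\le\|X_{(R)}-Y_{(R)}\|_{\mathrm F}\le\sqrt{2R}\bigl(2\sigma_{R+1}(X)+2\|Z\|_{\op}\bigr),
\]
and you then propose to ``absorb the $\sigma_{R+1}(X)$ terms into $\|X-X_{(R)}\|_{\mathrm F}$''. This is not possible: $\sqrt{R}\,\sigma_{R+1}(X)$ is in general not bounded by any fixed multiple of $\|X-X_{(R)}\|_{\mathrm F}=\bigl(\sum_{j>R}\sigma_j^2(X)\bigr)^{1/2}$. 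Take $X$ of rank exactly $R+1$ with $\sigma_{R+1}(X)=1$; then the tail Frobenius norm equals $1$ while $\sqrt{R}\,\sigma_{R+1}(X)=\sqrt{R}$, which can be made arbitrarily large. So the constants cannot be ``tracked'' to fit under the stated $3$ and $2$; the obstruction is structural, not cosmetic.

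The paper avoids this by not passing through $\|X_{(R)}-Y_{(R)}\|_{\op}$ at all. Instead it exploits the exact identity $\sigma_j(\mathcal P_{\widehat U_\perp}Y)=\sigma_{R+j}(Y)$, so that
\[
\Bigl(\textstyle\sum_{j=1}^{R}\sigma_j^2(\mathcal P_{\widehat U_\perp}Y)\Bigr)^{1/2}=\Bigl(\textstyle\sum_{j=R+1}^{2R}\sigma_j^2(Y)\Bigr)^{1/2}\le \min\{\sqrt R\|Z\|_{\op},\|Z\|_{\mathrm F}\}+\Bigl(\textstyle\sum_{j>R}\sigma_j^2(X)\Bigr)^{1/2},
\]
using Weyl and Mirsky on the vector of singular values. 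One more perturbation from $\mathcal P_{\widehat U_\perp}Y$ to $\mathcal P_{\widehat U_\perp}X$ (again costing $\min\{\sqrt R\|Z\|_{\op},\|Z\|_{\mathrm F}\}$) gives the bound on $\bigl(\sum_{j\le R}\sigma_j^2(\mathcal P_{\widehat U_\perp}X)\bigr)^{1/2}$, and the $\sqrt R$ factor never touches $\sigma_{R+1}(X)$. That is the missing idea in your operator-norm route.
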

\begin{proof}
    Without loss of generality, assume $n \leq m$.
    For $A \in \mathbb{R}^{n\times m}$, let $\Sigma(A) \in \mathbb{R}^{n\times m}$ denote the non-negative diagonal matrices whose diagonal entries are the non-increasingly ordered singular values of $A$.

    For all $1 \leq R \leq n$, let $X_{(R)}$ denote the truncated SVD of $X$ with rank $R$, and we have 
    \[
    \left\|X_{(R)} - X\right\|_{\mathrm{F}} = \sqrt{\sum_{j = R+1}^{n}\sigma_{j}^2(X)}.
    \]
    We have
        \begin{align*}
        &\left\|\mathcal P_{\widehat{U}_{\perp}}X\right\|_{\mathrm{F}} \leq \left\|\mathcal P_{\widehat{U}_{\perp}}X_{(R)}\right\|_{\mathrm{F}} + \left\|\mathcal P_{\widehat{U}_{\perp}}(X-X_{(R)})\right\|_{\mathrm{F}} = \sqrt{\sum_{j = 1}^R\sigma_j^2(\mathcal P_{\widehat{U}_{\perp}}X_{(R)})} + \left\|\mathcal P_{\widehat{U}_{\perp}}(X-X_{(R)})\right\|_{\mathrm{F}}\\
        \leq& \sqrt{\sum_{j = 1}^R\sigma_j^2(\mathcal P_{\widehat{U}_{\perp}}X_{(R)})} + \left\|X-X_{(R)}\right\|_{\mathrm{F}} = \sqrt{\sum_{j = 1}^R\sigma_j^2(\mathcal P_{\widehat{U}_{\perp}}X_{(R)})} + \sqrt{\sum_{j = R+1}^{n}\sigma_{j}^2(X)}\\
        \leq& \left\|(\sigma_{1}(\mathcal P_{\widehat{U}_{\perp}}X_{(R)}) - \sigma_{1}(\mathcal P_{\widehat{U}_{\perp}}X), \dots, \sigma_{R}(\mathcal P_{\widehat{U}_{\perp}}X_{(R)}) - \sigma_{R}(\mathcal P_{\widehat{U}_{\perp}}X))^{\top}\right\|_2\\
        &+ \left\|(\sigma_{1}(\mathcal P_{\widehat{U}_{\perp}}X), \dots, \sigma_{R}(\mathcal P_{\widehat{U}_{\perp}}X))^{\top}\right\|_2
        + \sqrt{\sum_{j = R+1}^{n}\sigma_{j}^2(X)}\\
        \leq& \left\|\Sigma(\mathcal P_{\widehat{U}_{\perp}}X_{(R)}) - \Sigma(\mathcal P_{\widehat{U}_{\perp}}X)\right\|_{\mathrm{F}} + \left\|(\sigma_{1}(\mathcal P_{\widehat{U}_{\perp}}X), \dots, \sigma_{R}(\mathcal P_{\widehat{U}_{\perp}}X))^{\top}\right\|_2 + \sqrt{\sum_{j = R+1}^{n}\sigma_{j}^2(X)}\\
        \leq& \left\|\mathcal P_{\widehat{U}_{\perp}}(X_{(R)} - X)\right\|_{\mathrm{F}} + \sqrt{\sum_{j = 1}^R\sigma_{j}^2(\mathcal P_{\widehat{U}_{\perp}}X)} + \sqrt{\sum_{j = R+1}^{n}\sigma_{j}^2(X)}\\
        \leq& \sqrt{\sum_{j = 1}^R\sigma_{j}^2(\mathcal P_{\widehat{U}_{\perp}}X)} + 2\sqrt{\sum_{j = R+1}^{n}\sigma_{j}^2(X)},
    \end{align*}
    where the first equality follows from $\rank(X_{(R)}) = R$, and the fifth inequality follows from \Cref{thm:Mirsky}.
    To upper bound $\sqrt{\sum_{j = 1}^{R}\sigma_j^2(\mathcal P_{\widehat{U}_{\perp}}X)}$, we first consider $\sqrt{\sum_{j = 1}^{R}\sigma_j^2(\mathcal P_{\widehat{U}_{\perp}}Y)}$.
    Note that
    \[
        \mathcal P_{\widehat{U}_{\perp}}Y = \sum_{j = R+1}^{n}\sigma_j(Y)\widehat u_j \widehat v_j^{\top},
    \]
    where $\widehat u_j$ and $\widehat v_j$ are the left and right singular vector associated with the $j$th largest singular value $\sigma_j(Y)$. Let $\sigma_j(Y) = \sigma_j(X) = 0$ for $j > p_1$.
    It follows that
    \begin{align}\label{eq:proof_singularvalue_perp}
        &\sqrt{\sum_{j = 1}^{R}\sigma_j^2(\mathcal P_{\widehat{U}_{\perp}}Y)} = \sqrt{\sum_{j = R+1}^{2R}\sigma_j^2(Y)} = \left\|(\sigma_{R+1}(Y), \dots, \sigma_{2R}(Y))^{\top}\right\|_2\nonumber\\
        \leq& \left\|(\sigma_{R+1}(Y) - \sigma_{R+1}(X), \dots, \sigma_{2R}(Y) - \sigma_{2R}(X))^{\top}\right\|_2 + \left\|(\sigma_{R+1}(X), \dots, \sigma_{2R}(X))^{\top}\right\|_2\nonumber\\
        \leq& \min\left\{\sqrt{R}\|Z\|_{\op}, \|Z\|_{\mathrm{F}}\right\} + \sqrt{\sum_{j = R+1}^{n}\sigma_j^2(X)},
    \end{align}
    where the first inequality follows from the triangle inequality, and second inequality follows from \Cref{thm:WeylSV}, i.e.~$|\sigma_j(Y) - \sigma_j(X)| \leq \|Y - X\|_{\op}$ for all $1 \leq j \leq n$, as well as the fact that
    \[
    \left\|(\sigma_{R+1}(Y) - \sigma_{R+1}(X), \dots, \sigma_{2R}(Y) - \sigma_{2R}(X))^{\top}\right\|_2 \leq \left\|\Sigma(Y) - \Sigma(X)\right\|_{\mathrm{F}} \leq \left\|Z\right\|_{\mathrm{F}},
    \]
    where the last inequality follows from \Cref{thm:Mirsky}.
    It then follows from \eqref{eq:proof_singularvalue_perp},
    \begin{align*}
        &\sqrt{\sum_{j = 1}^{R}\sigma_j^2(\mathcal P_{\widehat{U}_{\perp}}X)} = \left\|(\sigma_{1}(\mathcal P_{\widehat{U}_{\perp}}(Y-Z)), \dots, \sigma_{R}(\mathcal P_{\widehat{U}_{\perp}}(Y-Z))^{\top}\right\|_2\\
        \leq& \left\|(\sigma_{1}(\mathcal P_{\widehat{U}_\perp}(Y-Z)) - \sigma_{1}(\mathcal P_{\widehat{U}_\perp}Y), \dots, \sigma_{R}(\mathcal P_{\widehat{U}_{\perp}}(Y-Z)) - \sigma_{R}(\mathcal P_{\widehat{U}_{\perp}}Y))^{\top}\right\|_2\\& + \left\|(\sigma_{1}(\mathcal P_{\widehat{U}_{\perp}}Y), \dots, \sigma_{R}(\mathcal P_{\widehat{U}_\perp}Y))^{\top}\right\|_2\\ \leq& \min\left\{\sqrt{R}\|\mathcal P_{\widehat{U}_\perp}Z\|_{\op}, \|\mathcal P_{\widehat{U}_\perp}Z\|_{\mathrm{F}} \right\} + \sqrt{\sum_{j = 1}^{R}\sigma_j^2(\mathcal P_{\widehat{U}_\perp}Y)}\\
         \leq& \min\left\{\sqrt{R}\|Z\|_{\op}, \|Z\|_{\mathrm{F}}\right\} + \sqrt{\sum_{j = 1}^{R}\sigma_j^2(P_{\widehat{U}_\perp}Y)}\\
          \leq& 2\min\left\{\sqrt{R}\|Z\|_{\op}, \|Z\|_{\mathrm{F}}\right\} + \sqrt{\sum_{j = R+1}^{n}\sigma_j^2(X)},
    \end{align*}
    where the first two inequalities follow from the same arguments as in \eqref{eq:proof_singularvalue_perp}.
    Consequently,
    \[
    \left\|\mathcal P_{\widehat{U}_\perp}X\right\|_{\mathrm{F}} \leq 3\sqrt{\sum_{j = R+1}^{n}\sigma_j^2(X)} + 2\min\left\{\sqrt{R}\|Z\|_{\op}, \|Z\|_{\mathrm{F}}\right\}.
    \]
\end{proof}

\begin{lemma}\label{lemma:singular values preserve}
Let $A \in \mathbb{R}^{p \times q}$ and let $U \in \mathbb{O}^{q \times r}$ (i.e.~$U^\top U = I_r$). Then
\[
\sigma_i(AU) \le \sigma_i(A) \quad \text{for all } i.
\]
\end{lemma}
\begin{proof}
Let $M = A^\top A \in \mathbb{R}^{q\times q}$. Then $M$ is symmetric positive semidefinite, and
\[
(AU)^\top (AU) = U^\top A^\top A U = U^\top M U.
\]
Therefore, for each admissible index $i$,
\[
\sigma_i^2(A) = \lambda_i(M),
\qquad
\sigma_i^2(AU) = \lambda_i(U^\top M U),
\]
where $\lambda_i(\cdot)$ denotes the $i$-th largest eigenvalue.

We now compare $\lambda_i(U^\top M U)$ and $\lambda_i(M)$ via the Courant--Fischer variational characterization (Min-max theorem):
For any symmetric matrix $B$, we have
\[
\lambda_i(B)
=
\max_{\substack{S \subset \mathbb{R}^{\dim(B)}\\ \dim S = i}}
\ \min_{\substack{x \in S\\ \|x\|_2=1}} x^\top B x.
\]
Applying this with $B = U^\top M U \in \mathbb{R}^{r\times r}$ yields
\[
\lambda_i(U^\top M U)
=
\max_{\substack{S \subset \mathbb{R}^{r}\\ \dim S = i}}
\ \min_{\substack{x \in S\\ \|x\|_2=1}} x^\top U^\top M U x
=
\max_{\substack{S \subset \mathbb{R}^{r}\\ \dim S = i}}
\ \min_{\substack{x \in S\\ \|x\|_2=1}} (Ux)^\top M (Ux).
\]
Since $U^\top U = I_r$, the map $x \mapsto Ux$ is an isometry, so $\|Ux\|_2=\|x\|_2$, and it is injective, so
$\dim(US)=\dim(S)=i$. Let $\mathcal{R}:=\mathrm{range}(U)\subset \mathbb{R}^q$. Then we can rewrite the previous
expression as
\[
\lambda_i(U^\top M U)
=
\max_{\substack{T \subset \mathcal{R}\\ \dim T = i}}
\ \min_{\substack{y \in T\\ \|y\|_2=1}} y^\top M y.
\]
On the other hand, Courant--Fischer also gives
\[
\lambda_i(M)
=
\max_{\substack{T \subset \mathbb{R}^{q}\\ \dim T = i}}
\ \min_{\substack{y \in T\\ \|y\|_2=1}} y^\top M y.
\]
Because the collection of $i$-dimensional subspaces of $\mathcal{R}$ is a subset of the collection of all
$i$-dimensional subspaces of $\mathbb{R}^q$, restricting the maximization can only decrease the value. Hence,
\[
\lambda_i(U^\top M U) \le \lambda_i(M).
\]
Taking square roots (all eigenvalues involved are nonnegative) yields
\[
\sigma_i(AU) = \sqrt{\lambda_i(U^\top M U)} \le \sqrt{\lambda_i(M)} = \sigma_i(A),
\]
which proves the claim.
\end{proof}

\begin{lemma} \label{lemma:Fobenius norm preserve}
    Let $A\in \mathbb R^{p\times q}$ and $U \in \mathbb O^{q\times r}$. Then 
    $$ \|AUU^\top  \|_{\mathrm{F}} = \|AU\|_{\mathrm{F}}. $$
\end{lemma}
\begin{proof}Observe that 
    \begin{align*}
        \|AUU^\top  \|_{\mathrm{F}} ^2 = \tr ( AUU^\top UU^\top A^\top ) = \tr ( AU U^\top A^\top ) = \|AU\|_{\mathrm{F}}^2.
    \end{align*}
\end{proof}
\begin{lemma}\label{lemma:lb_forbuious}
    For any   real matrices $A \in \mathbb R^{n\times m }$ and $B\in \mathbb R^{m \times m } $, it holds that 
    \[
    \|AB\|_{\mathrm{F}} \geq \sigma_{\min}(B) \|A\|_{\mathrm{F}} .
    \]
\end{lemma}
\begin{proof}[Proof of \Cref{lemma:lb_forbuious}]
Since $B $ is a square matrix, it follows that 
$$ \lambda_{\min }(BB^\top ) = \sigma_{\min}^2(B),$$
where $\lambda_{\min}(\cdot)$ denotes the minimum eigenvalue.
Note that 
$$ BB^\top  \succeq \lambda _{\min}(BB^\top ) I_m .$$
Therefore 
$$ABB^{\top}A^{\top} \succeq  A \{ \lambda _{\min}(BB^\top ) I_m\}  A ^{\top}, $$
and so 
$$\tr(ABB^{\top}A^{\top})     \geq \tr(A  \{ \lambda _{\min}(BB^\top ) I_m\}  A^{\top}) .$$
Then 
    \begin{align*}
        \|AB\|_{\mathrm{F}}^2 = \tr(ABB^{\top}A^{\top})     \geq \tr(A \{ \lambda _{\min}(BB^\top ) I_m\}  A^{\top})  = \lambda _{\min}(BB^\top ) \tr(A   A^{\top})    =   \sigma_{\min}^2(B) \|A\|^2_{\mathrm{F}}  .
    \end{align*}
\end{proof}

\begin{lemma}
\label{lemma:complement_d}
Let $X \in\mathbb R^{p_1\times\cdots\times p_d}$ and for each $j\in [d]$ let
$U_j \in\mathbb O_{p_j,r_j}$ with orthogonal complement $U_{j\,\perp} \in\mathbb O_{p_j,p_j-r_j}$.
For each $k \in [d]$, define $W_{-k} = \otimes_{j \neq k} U_j$ and let $W_{-k\,\perp}$ be an orthonormal basis of $\mathrm{span}(W_{-k})^\perp$.
Then, for each $k$, we have
\[
\big\|\mathcal M_k(X)\cdot W_{-k\,\perp}\big\|_{\mathrm F}
\le
\sum_{j\neq k} \big\|X \times_j U_{j\,\perp}\big\|_{\mathrm F}.
\]
\end{lemma}

\begin{proof}
Let $\mathcal P_j = U_j U_j^{\top}$ and $\mathcal Q_j = I-\mathcal P_j=U_{j\,\perp}U_{j\,\perp}^{\top}$ for $j\ge2$. We only consider the case with $k = 1$ in the following proof.
Denote
\[
W = U_2 \otimes\cdots\otimes U_d \in\mathbb O_{\prod_{j=2}^d p_j,\ \prod_{j=2}^d r_j},
\qquad
\Pi =I-\otimes_{j=2}^d \mathcal P_j,
\]
so that $\Pi$ is the orthogonal projector onto $\mathrm{span}(W)^\perp$ and
\[
\|\mathcal M_1(X) \cdot W_\perp\|_{\mathrm F}=\|\mathcal M_1(X) \cdot \Pi\|_{\mathrm F}.
\]
Using the identity $I-AB=(I-A)+A(I-B)$ iteratively with
$A=\mathcal P_2$ and $B=\otimes_{j=3}^d\mathcal P_j$, we obtain the exact decomposition
\begin{equation}
\label{eq:group_decomp}
I-\otimes_{j=2}^d\mathcal P_j
=
\sum_{j=2}^d
\left(\otimes_{\ell=2}^{j-1}\mathcal P_\ell\right)\otimes \mathcal Q_j \otimes I_{j+1:d},
\end{equation}
where $I_{j+1:d} =\otimes_{\ell=j+1}^d I_{p_\ell}$ (and by convention $\otimes_{\ell=2}^{1}\mathcal P_\ell$ is the scalar $1$).
Each summand in \eqref{eq:group_decomp} is an orthogonal projector onto the subspace
\[
\mathcal S_j
=
\left(\otimes_{\ell=2}^{j-1}\mathrm{span}(U_\ell)\right)\otimes \mathrm{span}(U_{j\,\perp})\otimes
\left(\otimes_{\ell=j+1}^d \mathbb R^{p_\ell}\right),
\]
and the subspaces $\{\mathcal S_j\}_{j=2}^d$ are mutually orthogonal. Hence \eqref{eq:group_decomp}
expresses $\mathrm{span}(W)^\perp$ as an orthogonal direct sum of $(d-1)$ groups.

By the triangle inequality and \eqref{eq:group_decomp},
\[
\|\mathcal M_1(X) \cdot \Pi\|_{\mathrm F}
\le
\sum_{j=2}^d
\left\|
\mathcal M_1(X)
\left(\otimes_{\ell=2}^{j-1}\mathcal P_\ell\right)\otimes \mathcal Q_j \otimes I_{j+1:d}
\right\|_{\mathrm F}.
\]
We can rewrite each summand as
\[
\left\|
\mathcal M_1(X)
\left(\otimes_{\ell=2}^{j-1}\mathcal P_\ell\right)\otimes \mathcal Q_j \otimes I_{j+1:d}
\right\|_{\mathrm F}
=
\left\|
X\times_2 \mathcal P_2 \times_3\cdots\times_{j-1}\mathcal P_{j-1}\times_j \mathcal Q_j
\right\|_{\mathrm F}.
\]
Since 
$\|T\times_\ell \mathcal P_\ell\|_{\mathrm F}\le \|T\|_{\mathrm F}$ for all $\ell$,
it follows that
\[
\left\|
X\times_2 \mathcal P_2 \times_3\cdots\times_{j-1}\mathcal P_{j-1}\times_j \mathcal Q_j
\right\|_{\mathrm F}
\le
\|X\times_j \mathcal Q_j\|_{\mathrm F}
=
\|X\times_j U_{j\,\perp}\|_{\mathrm F}.
\]
Combining the previous displays yields
\[
\|\mathcal M_1(X)\cdot W_\perp\|_{\mathrm F}
=
\|\mathcal M_1(X) \cdot \Pi\|_{\mathrm F}
\le
\sum_{j=2}^d \|X\times_j U_{j\,\perp}\|_{\mathrm F}.
\]
\end{proof}

\begin{theorem}[Matrix Bernstein's inequality, Corollary~3.3 in \cite{chen2021spectral}]\label{thm:MatrixBernstein} 
Let $\{X_{i}\}_{i = 1}^n$ be a set of independent real random matrices with dimension $d_1\times d_2$. Suppose that $\mathbb{E}(X_{i}) = 0$ and $\|X_i\|_{\op} \leq L$ almost surely, for all $i$. Define the variance statistics as 
\begin{align*}
    \nu = n\max\left\{\|\mathbb{E}(X_iX_i^{\top})\|_{\op}, \|\mathbb{E}(X_i^{\top}X_i)\|_{\op}\right\}.
\end{align*}
We have for all $t \geq 0$,
\begin{align*}
    \mathbb P\left(\left\|\sum_{i = 1}^n X_{i}\right\|_{\op} \geq t \right) \leq (d_1 + d_2)\exp\left(-\frac{t^2/2}{\nu + Lt/3}\right).
\end{align*}
In particular, for all $a \geq 2$, with probability at least $1 - 2(\max\{d_1,d_2\})^{1-a}$, we have 
\begin{align*}
    \left\|\sum_{i = 1}^n X_{i}\right\|_{\op} \leq \sqrt{2a\nu\log(d_1 + d_2)} + \frac{2a}{3}L\log(d_1 + d_2).
\end{align*}
\end{theorem}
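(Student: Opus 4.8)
The final statement is the classical matrix Bernstein inequality, so the plan is to reproduce (a strictly simpler instance of) the matrix Laplace transform argument already carried out above for \Cref{thm:MatrixBernstein_PPP}; in practice the write-up will simply cite \cite[Corollary~3.3]{chen2021spectral}, since the point-process version proved above already subsumes it.

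First I would reduce to the Hermitian case via the Hermitian dilation $\overline{X}_i$ (see \Cref{def:dilation}). Since $\bigl\|\sum_{i=1}^n X_i\bigr\|_{\op} = \lambda_{\max}\bigl(\sum_{i=1}^n \overline{X}_i\bigr)$, $\mathbb{E}\overline{X}_i = 0$, $\|\overline{X}_i\|_{\op} = \|X_i\|_{\op} \le L$ almost surely, and $\|\mathbb{E}\overline{X}_i^2\|_{\op} = \max\{\|\mathbb{E}(X_iX_i^\top)\|_{\op}, \|\mathbb{E}(X_i^\top X_i)\|_{\op}\}$, it suffices to prove the bound for Hermitian, mean-zero, operator-norm-bounded summands of dimension $d = d_1 + d_2$, with $\bigl\|\sum_i \mathbb{E}\overline{X}_i^2\bigr\|_{\op} \le \nu$.

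Then I would run the matrix Chernoff method: for $\theta > 0$, $\mathbb{P}\bigl(\lambda_{\max}(\sum_i \overline{X}_i) \ge t\bigr) \le e^{-\theta t}\,\mathbb{E}\tr\exp(\theta \sum_i \overline{X}_i)$, and Lieb's theorem (\Cref{lemma:Lieb}) together with independence gives $\mathbb{E}\tr\exp(\theta \sum_i \overline{X}_i) \le \tr\exp(\sum_i \log \mathbb{E} e^{\theta \overline{X}_i})$. For each summand, writing $e^x = 1 + x + x^2 g(x)$ with $g(x) = (e^x - 1 - x)/x^2$ monotone increasing, the bound $\|\theta\overline{X}_i\|_{\op} \le \theta L$ yields $e^{\theta\overline{X}_i} \preceq I + \theta\overline{X}_i + g(\theta L)\theta^2 \overline{X}_i^2$; taking expectations and then using $\log(I + A) \preceq A$ gives $\log \mathbb{E} e^{\theta\overline{X}_i} \preceq g(\theta L)\theta^2 \mathbb{E}\overline{X}_i^2$. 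Summing, using the elementary estimate $g(\theta L) \le \tfrac{1/2}{1 - \theta L/3}$ valid for $0 < \theta < 3/L$, and the positivity of $\overline{X}_i^2$, I get $\mathbb{P}\bigl(\|\sum_i X_i\|_{\op} \ge t\bigr) \le d\exp\bigl(-\theta t + \tfrac{\theta^2 \nu/2}{1 - \theta L/3}\bigr)$; choosing $\theta = t/(\nu + Lt/3)$ (which satisfies $\theta < 3/L$) produces the stated tail bound $(d_1+d_2)\exp\{-\tfrac{t^2/2}{\nu + Lt/3}\}$. The ``in particular'' claim then follows by substituting $t = \sqrt{2a\nu\log(d_1+d_2)} + \tfrac{2a}{3}L\log(d_1+d_2)$ and checking, via $\sqrt{u+v}\le\sqrt u + \sqrt v$ and $(x+y)^2 \ge x^2$, that the exponent is at most $-a\log(d_1+d_2)$, so the probability is at most $(d_1+d_2)^{1-a} \le 2(\max\{d_1,d_2\})^{1-a}$.

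The argument has no genuine obstacle; the only points needing care are the dilation bookkeeping that recovers the advertised variance proxy $\nu$ and the validity range $\theta < 3/L$ in the matrix moment generating function estimate. Since \Cref{thm:MatrixBernstein_PPP} already establishes the harder point-process analogue by precisely this route, the cleanest route for this corollary is to cite \cite{chen2021spectral} directly.
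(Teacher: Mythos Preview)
Your proposal is correct and matches the paper's treatment: the paper does not prove this theorem at all but simply states it as a citation of \cite[Corollary~3.3]{chen2021spectral}, exactly as you anticipate in your final sentence. The sketch you give (Hermitian dilation, Lieb's theorem, the $g(x)=(e^x-1-x)/x^2$ bound, and the choice $\theta=t/(\nu+Lt/3)$) is the standard argument and is precisely the route the paper uses for the harder Poisson-process version in \Cref{thm:MatrixBernstein_PPP}, so there is nothing to add.
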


\begin{theorem}[Modified Theorem 1 in \cite{banna2016bernstein}]\label{lemma:matrix_bernstein_dep}
Let $\{M_i\}_{i = 1}^n$ be a sequence of random matrices of size $d_1 \times d_2$. Assume that there exists a constant $c > 0$ such that for all $\ell \geq 1$, $\beta_M(\ell) \leq \exp(1-c\ell)$, and there exist a positive constant $L$ such that for all $i$,
\begin{align*}
    \mathbb{E}(M_i) = 0 \quad \text{and} \quad \|M_i\|_{\mathrm{op}} \leq L \quad \text{almost surely}.
\end{align*}
We have that there exists an absolute constant $C$ such that for all $t > 0$ and all integers $n \geq 2$,
\[
\mathbb{P}\left(\left\|\sum_{i=1}^nM_i\right\|_{\mathrm{op}} \geq t\right) \leq (d_1 + d_2)\exp\left(-\frac{Ct^2}{\nu + c^{-1}L^2 + tL\gamma(c,n)}\right),
\]
where
\[
\nu = n\sup_{\mathcal{K} \subseteq\{1, \dots, n\}}\frac{1}{|\mathcal{K}|}\max\left\{\left\|\mathbb{E}\left[\left(\sum_{i \in \mathcal{K}}M_i\right)\left(\sum_{i \in \mathcal{K}}M_i\right)^{\top}\right]\right\|_{\mathrm{op}}, \left\|\mathbb{E}\left[\left(\sum_{i \in \mathcal{K}}M_i\right)^{\top}\left(\sum_{i \in \mathcal{K}}M_i\right)\right]\right\|_{\mathrm{op}}\right\}
\]
and
\[
\gamma(c,n) = \frac{\log n}{\log 2}\max\left\{2, \frac{32\log n}{c\log 2}\right\}.
\]
\end{theorem}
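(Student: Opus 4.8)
The plan is to reduce the statement, in three stages, to the self-adjoint independent case and then run the matrix Laplace-transform (Chernoff) argument already used in the proof of \Cref{thm:MatrixBernstein_PPP}. Stage~(i): pass to Hermitian dilations, so that it suffices to bound $\lambda_{\max}$ of a sum of centered, operator-norm-bounded, $\beta$-mixing \emph{symmetric} matrices of size $d_1+d_2$, with the block-variance proxy unchanged. Stage~(ii): use the dyadic block/coupling scheme of \cite{banna2016bernstein} to replace the dependent summands by conditionally independent blocks at the cost of a $\beta$-mixing error that the hypothesis $\beta_M(\ell)\le e^{1-c\ell}$ makes negligible once the gaps and block length are taken of order $\gamma(c,n)$. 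Stage~(iii): on the good event apply Lieb's theorem (\Cref{lemma:Lieb}) to the independent blocks, bound the matrix cumulant generating function, and optimize the Chernoff parameter exactly as in \Cref{thm:MatrixBernstein_PPP}/\Cref{thm:MatrixBernstein}. Stages (i) and (iii) are routine given the tools already in the paper; the content of the "modification" is verifying that the bookkeeping of stage~(ii) goes through under the present hypotheses.

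For stage~(i): for each $M_i$ let $\overline{M}_i$ be its Hermitian dilation (\Cref{def:dilation}). Then $\overline{M}_i$ is symmetric, $\|\overline{M}_i\|_{\op}=\|M_i\|_{\op}\le L$, $\mathbb{E}\overline{M}_i=0$, $\sum_i\overline{M}_i=\overline{\sum_i M_i}$ so $\lambda_{\max}(\sum_i\overline{M}_i)=\|\sum_i\overline{M}_i\|_{\op}=\|\sum_i M_i\|_{\op}$, and since $\overline{A}^2=\diag(AA^\top,A^\top A)$,
\[
\Big\|\mathbb{E}\big[(\textstyle\sum_{i\in\mathcal{K}}\overline{M}_i)^2\big]\Big\|_{\op}
=\max\Big\{\big\|\mathbb{E}[(\textstyle\sum_{i\in\mathcal{K}}M_i)(\textstyle\sum_{i\in\mathcal{K}}M_i)^\top]\big\|_{\op},\;\big\|\mathbb{E}[(\textstyle\sum_{i\in\mathcal{K}}M_i)^\top(\textstyle\sum_{i\in\mathcal{K}}M_i)]\big\|_{\op}\Big\},
\]
so $\nu/n=\sup_{\mathcal{K}}|\mathcal{K}|^{-1}\|\mathbb{E}[(\sum_{i\in\mathcal{K}}\overline{M}_i)^2]\|_{\op}$ is exactly the normalized block-variance proxy for $(\overline{M}_i)$. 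Because the dilation is a deterministic injective map applied coordinatewise, $\sigma(\overline{M}_i:i\le k)=\sigma(M_i:i\le k)$, so the $\beta$-mixing coefficients of $(\overline{M}_i)$ coincide with $\beta_M(\cdot)$. Hence it suffices to prove the tail bound for the symmetric sequence $(\overline{M}_i)$, with $\lambda_{\max}$ in place of $\|\cdot\|_{\op}$.

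For stage~(ii): fix $\theta>0$ to be optimized and a block length $q\asymp\gamma(c,n)$; partition $\{1,\dots,n\}$ into consecutive blocks $B_1,\dots,B_p$, $p=\lceil n/q\rceil$, and write $Y_j=\sum_{i\in B_j}\overline{M}_i$, so $\|Y_j\|_{\op}\le Lq$. Applying Berbee's coupling lemma for $\beta$-mixing sequences (as in Lemma~5.3 of \cite{dedecker2007weak}) recursively along a dyadic splitting of the block indices, one constructs, on an enlarged probability space, mutually independent blocks $\widetilde{Y}_1,\dots,\widetilde{Y}_p$ with $\widetilde{Y}_j\overset{d}{=}Y_j$ and, with $\Omega_0=\{\widetilde{Y}_j=Y_j\ \forall j\}$, $\mathbb{P}(\Omega_0^{c})\le\sum_{k\ge 0}2^{k}\beta_M(g_k)$, where $g_k$ is the gap inserted at recursion level $k$; choosing every $g_k\asymp(\log n)/c$ and using $\beta_M(\ell)\le e^{1-c\ell}$ makes $\mathbb{P}(\Omega_0^{c})$ smaller than any prescribed inverse polynomial in $n$, and this is exactly what forces the block length and gaps to scale like $\gamma(c,n)$. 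On $\Omega_0$ we have $\sum_iM_i=\sum_j\widetilde{Y}_j$, a sum of independent matrices with $\|\widetilde{Y}_j\|_{\op}\le Lq$, so by Lieb's theorem and the cumulant bound $\log\mathbb{E}e^{\theta\widetilde{Y}_j}\preceq\tfrac{\theta^2/2}{1-\theta Lq/3}\,\mathbb{E}\widetilde{Y}_j^2$ for $0<\theta<3/(Lq)$ (precisely the estimate carried out in \Cref{thm:MatrixBernstein_PPP}),
\[
\mathbb{E}\,\tr\exp\Big(\theta\sum_j\widetilde{Y}_j\Big)\le (d_1+d_2)\exp\Big(\frac{\theta^2/2}{1-\theta Lq/3}\,\Big\|\sum_j\mathbb{E}\widetilde{Y}_j^2\Big\|_{\op}\Big).
\]
The three terms in the final denominator arise as follows: $\sum_j\|\mathbb{E}\widetilde{Y}_j^2\|_{\op}=\sum_j\|\mathbb{E}Y_j^2\|_{\op}\le pq\cdot(\nu/n)\lesssim\nu$ gives the $\nu$ contribution; the cross-block covariance corrections that must be added when the recursion is not split in one shot are bounded by $L^2\sum_{\ell\ge 1}\beta_M(\ell)\lesssim L^2/c$, giving the $c^{-1}L^2$ term; and the truncation $\theta Lq/3$ in the Golden--Thompson step contributes the $tL\gamma(c,n)$ term once $q\asymp\gamma(c,n)$. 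Folding back $\mathbb{P}(\Omega_0^{c})$ (bounded crudely by its probability times $(d_1+d_2)e^{\theta Ln}$ and absorbed into the constant for the range of $t$ of interest) and applying the matrix Chernoff bound $\mathbb{P}(\lambda_{\max}(\sum_i\overline{M}_i)\ge t)\le(d_1+d_2)e^{-\theta t}\,\mathbb{E}\tr e^{\theta\sum_i\overline{M}_i}$, then optimizing $\theta\in(0,3/(Lq))$ exactly as in \Cref{thm:MatrixBernstein_PPP}, yields $\mathbb{P}(\lambda_{\max}(\sum_i\overline{M}_i)\ge t)\le(d_1+d_2)\exp\big(-\tfrac{Ct^2}{\nu+c^{-1}L^2+tL\gamma(c,n)}\big)$; stage~(i) transfers this to $\|\sum_iM_i\|_{\op}$, which is the claim.

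The main obstacle is stage~(ii): making the dyadic coupling produce genuinely mutually independent leaves while (a) keeping the accumulated $\beta$-mixing replacement error negligible across all $\asymp\log n$ recursion levels, (b) keeping the effective block length — hence the coefficient of $tL$ in the denominator — down to $\gamma(c,n)\asymp(\log n)^2/c$ rather than something larger, and (c) showing that the within- and cross-block variance contributions collapse to $\nu+c^{-1}L^2$ with an absolute constant. This combinatorial/probabilistic bookkeeping is the heart of \cite{banna2016bernstein}; the "modification" consists in checking that it runs verbatim under the hypotheses $\mathbb{E}M_i=0$, $\|M_i\|_{\op}\le L$, $\beta_M(\ell)\le e^{1-c\ell}$, with the block-variance proxy $\nu$ written as the indicated supremum over index subsets, and after the dilation reduction to rectangular matrices.
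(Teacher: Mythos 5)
Your stage~(i) is exactly what the paper does: its ``proof'' of this theorem consists precisely of the Hermitian dilation reduction (\Cref{def:dilation}), the observations that $\overline{M}_i$ is centered and symmetric with $\|\overline{M}_i\|_{\op}=\|M_i\|_{\op}$, that $\overline{A}^2=\diag(AA^\top,A^\top A)$ so the variance proxy becomes the stated maximum, and that the mixing coefficients are unchanged, after which the symmetric dependent case is taken from Theorem~1 of \cite{banna2016bernstein} as a black box (the paper explicitly omits the details, remarking only that the dilation bookkeeping is as in the proof of \Cref{thm:MatrixBernstein_PPP}). So the genuinely new content of the ``modification'' is your stage~(i), and you have it right and complete, including the identification of $\nu$ with the normalized block-variance proxy of the dilated sequence.

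Where you diverge is stages~(ii)--(iii), where you attempt to re-derive the symmetric $\beta$-mixing Bernstein inequality itself. If you intend that part as a genuine proof rather than a gloss on the cited result, the coupling scheme as you describe it does not close. With gaps of order $(\log n)/c$, Berbee coupling makes $\mathbb{P}(\Omega_0^c)$ only polynomially small in $n$, whereas for large $t$ the target tail bound is far smaller than any inverse polynomial, so the additive failure probability is not dominated; and in the Laplace-transform version, the correction term $\mathbb{P}(\Omega_0^c)\,(d_1+d_2)e^{\theta L n}$ with $\theta\asymp 1/(L\gamma(c,n))$ is of size $e^{\Theta(n/\gamma(c,n))}$ and cannot be ``absorbed into the constant'' unless the gaps are taken macroscopic, which destroys the variance bookkeeping. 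The actual argument of \cite{banna2016bernstein} avoids an all-or-nothing good event: it uses a Cantor-like recursive decomposition in which the mixing error enters the Laplace-transform estimate multiplicatively at each scale against geometrically shrinking block lengths, and that is precisely where the $(\log n)^2/c$ shape of $\gamma(c,n)$ and the $c^{-1}L^2$ term come from. Since your closing paragraph indicates you would in any case defer to \cite{banna2016bernstein} for this bookkeeping, the cleanest fix is simply to cite their Theorem~1 for the dilated (symmetric) sequence, exactly as the paper does, and keep your stage~(i) as the entire content of the proof.
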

Note that Theorem~1 of \cite{banna2016bernstein} only considers symmetric matrices. To obtain \Cref{lemma:matrix_bernstein_dep} for general cases, we the Hermitian dilation and the properties of block anti-diagonal matrices.  The proof is similar to that of \Cref{thm:MatrixBernstein_PPP} and thus is omitted.
\begin{definition}[Hermitian dilation, \cite{tropp2015introduction}]\label{def:dilation}
    Consider an asymmetric matrix $A \in \mathbb{R}^{d_1 \times d_2}$, its Hermitian dilation is defined as
\begin{equation*}
    \overline{A} = 
    \begin{bmatrix}
    0 & A\\
    A^{\top} & 0
    \end{bmatrix} \in \mathbb{R}^{(d_1+d_2) \times (d_1+d_2)}.
\end{equation*}
\end{definition}
\begin{remark}
    By construction $\overline{A}$ is a block anti-diagonal matrix, we have $\lambda_{\max}(\overline{A}) = \Vert \overline{A} \Vert_{\op} = \Vert A \Vert_{\op}$.
\end{remark}

\begin{lemma}[Lieb’s Theorem, Theorem~6 of \cite{lieb1973convex}]\label{lemma:Lieb}
    Fix an Hermitian matrix $H$ with dimension $d$. The function
    \[A \to \tr\exp(H + \log A),\]
    is a concave map on the convex cone of $d \times d$ positive-definite matrices.
\end{lemma}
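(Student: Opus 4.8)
The plan is to deduce Lieb's concavity theorem from the variational (Gibbs) representation of the trace exponential together with the joint convexity of the (unnormalized) quantum relative entropy, the latter being the single classical deep ingredient. First I would record that for any $d\times d$ Hermitian matrix $M$,
\[
\tr\exp(M) \;=\; \sup_{\tau \succeq 0}\Bigl\{\tr(\tau M) - \tr(\tau\log\tau) + \tr(\tau)\Bigr\},
\]
with the convention $0\log 0 := 0$ and the supremum attained at $\tau = \exp(M)$. This is the matrix lift, via the trace, of the scalar Legendre duality between $y\mapsto e^y$ and $x\mapsto x\log x - x$; concretely, the functional on the right is concave and smooth in $\tau$ on the positive-definite cone with gradient $M - \log\tau$, vanishing at $\tau=\exp(M)$, where its value equals $\tr\exp(M)$.

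\textbf{Main reduction.} Apply the formula with $M = H + \log A$ for positive definite $A$ and the fixed Hermitian $H$, giving
\[
\tr\exp(H + \log A) \;=\; \sup_{\tau\succeq 0}\; G(\tau,A),\qquad G(\tau,A):=\tr(\tau H) + \tr(\tau\log A) - \tr(\tau\log\tau) + \tr(\tau).
\]
The terms $\tr(\tau H)$ and $\tr(\tau)$ are linear, so the joint concavity of $G$ in $(\tau,A)$ on pairs of positive definite matrices is equivalent to the joint convexity of
\[
(\tau, A)\;\longmapsto\; \tr(\tau\log\tau) - \tr(\tau\log A) \;=\; D(\tau\,\|\,A),
\]
which is exactly Lindblad's joint convexity of the unnormalized quantum relative entropy, itself equivalent to Lieb's original three-matrix concavity inequality \cite{lieb1973convex}. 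Granting this, the proof is finished by the standard fact that the supremum over one variable of a jointly concave function is concave: fixing $A_0,A_1\succ 0$, $\lambda\in[0,1]$, and choosing $\tau_i\succeq 0$ that are $\varepsilon$-optimal for $A_i$, one gets
\[
\tr\exp\bigl(H + \log A_\lambda\bigr) \;\ge\; G(\tau_\lambda, A_\lambda) \;\ge\; (1-\lambda)\,G(\tau_0,A_0) + \lambda\,G(\tau_1,A_1),
\]
with $\tau_\lambda = (1-\lambda)\tau_0 + \lambda\tau_1$ and $A_\lambda = (1-\lambda)A_0 + \lambda A_1$; the right-hand side is within $\varepsilon$ of $(1-\lambda)\tr\exp(H+\log A_0) + \lambda\tr\exp(H+\log A_1)$, and letting $\varepsilon\to 0$ and invoking continuity to reach the boundary of the cone yields concavity of $A\mapsto\tr\exp(H+\log A)$.

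\textbf{Main obstacle.} All the difficulty is concentrated in the joint convexity of $D(\tau\|A)$ used in the reduction; the rest is bookkeeping. If one does not wish to cite it, I would fill this gap by Epstein's analytic-interpolation argument (or, alternatively, Ando's tensor-product argument). The starting point is the integral representation $A^s = c_s\int_0^\infty \lambda^{s-1}\,A(\lambda+A)^{-1}\,d\lambda$ for $0<s<1$, whose integrands $A\mapsto A(\lambda+A)^{-1} = I - \lambda(\lambda+A)^{-1}$ are operator concave (since $A\mapsto(\lambda+A)^{-1}$ is operator convex), so that $A\mapsto A^s$ is operator concave; a complex-interpolation (three-lines) step on $z\mapsto\tr\bigl(A^z K B^{1-z}K^*\bigr)$ then upgrades this to the joint concavity of $(A,B)\mapsto\tr\bigl(A^p K B^q K^*\bigr)$ for $p,q\ge 0$, $p+q\le 1$, which is Lieb's inequality; Ando's route instead realizes $A^s\otimes B^{1-s}$ as a Kubo--Ando operator mean of $A\otimes I$ and $I\otimes B$ and reads off joint concavity from the concavity of operator means. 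Either way, combined with Steps 1--2 above and the supremum argument, this establishes the lemma.
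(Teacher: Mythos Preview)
The paper does not prove this lemma at all: it is stated as Theorem~6 of \cite{lieb1973convex} and used as a black-box tool in the proof of the matrix Bernstein inequality for Poisson point processes. So there is no ``paper's own proof'' to compare against.

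Your route is the standard modern one (essentially Tropp's derivation): the Gibbs variational formula reduces the claim to joint convexity of the unnormalized relative entropy $D(\tau\|A)=\tr(\tau\log\tau)-\tr(\tau\log A)$, and that in turn is equivalent to Lieb's three-matrix concavity $(A,B)\mapsto\tr(A^pKB^qK^*)$. The reduction and the supremum-of-jointly-concave argument are correct as written. One small quibble: your sketch of the ``Epstein'' branch conflates two separate ingredients. Operator concavity of $A\mapsto A^s$ via the integral representation is true but is not what feeds the three-lines step; Epstein's interpolation argument works directly on the analytic family $z\mapsto\tr(A^zKB^{1-z}K^*)$ and Herglotz-type positivity, without first establishing operator concavity of $A^s$. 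Your Ando alternative (realizing $A^s\otimes B^{1-s}$ as an operator mean of $A\otimes I$ and $I\otimes B$) is cleanly stated and is probably the easier gap to fill rigorously if you want a self-contained write-up.
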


\section{Technical tools for point processes}
The next lemma is Campbell's Theorem, a classical result for general spatial point processes \cite[see e.g.~Theorem~2.2 of][]{baddeley2007spatial}.
\begin{lemma}[Campbell's Theorem for spatial point processes]\label{lemma:campbell}
Let $N$ be a spatial point process in a compact space $\mathbb{X}$ with the intensity function $\lambda^*$.  For all measurable function $h: \mathbb{X} \to \mathbb{R}$, we have
    \[
    \mathbb{E}\left[\int_{\mathbb{X}} h(x)\d N(x)\right] = \mathbb{E}\left[\sum_{u \in N} h(u)\right] = \int_{\mathbb{X}}h(x)\lambda^*(x)\d x. 
    \]
\end{lemma}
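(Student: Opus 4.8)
The plan is the standard ``measure-theoretic induction'' (monotone class) argument: verify the identity first for indicator functions, then extend successively to simple functions, nonnegative measurable functions, and finally general measurable functions, using linearity and monotone convergence.

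First, I would note that the first equality $\int_{\mathbb{X}} h(x)\,\d N(x) = \sum_{u\in N} h(u)$ is merely the definition of integration against the random counting measure $N=\sum_{u\in N}\delta_u$, so it remains only to prove the second equality. \emph{Step 1 (indicators).} For any measurable $S\subseteq\mathbb{X}$ put $h=\mathbbm{1}_S$; then $\sum_{u\in N} h(u)=N(S)$, and by the definition of the intensity function recalled in \Cref{sec:background}, namely $\mathbb{E}[N(S)]=\Pi(S)=\int_S\lambda^*(x)\,\d\upsilon(x)$, we obtain $\mathbb{E}\big[\sum_{u\in N}\mathbbm{1}_S(u)\big]=\int_{\mathbb{X}}\mathbbm{1}_S(x)\lambda^*(x)\,\d\upsilon(x)$. \emph{Step 2 (simple functions).} For a nonnegative simple function $h=\sum_{k=1}^K c_k\mathbbm{1}_{S_k}$ with $c_k\ge 0$, linearity of expectation and of the integral, together with Step 1, give the identity.

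\emph{Step 3 (nonnegative measurable functions).} For a general nonnegative measurable $h$, choose simple functions $0\le h_\ell\uparrow h$ pointwise; then $\sum_{u\in N}h_\ell(u)\uparrow\sum_{u\in N}h(u)$ monotonically (each summand increases and the index set is countable since $N$ is locally finite on the compact $\mathbb{X}$). Applying the monotone convergence theorem on the left-hand side and on the right-hand side, and Step 2 in between, yields $\mathbb{E}\big[\sum_{u\in N}h(u)\big]=\lim_\ell\int_{\mathbb{X}}h_\ell\lambda^*\,\d\upsilon=\int_{\mathbb{X}}h\lambda^*\,\d\upsilon$, with both sides allowed to be $+\infty$. \emph{Step 4 (general measurable functions).} Writing $h=h^+-h^-$ and applying Step 3 to $h^+$ and $h^-$, the claimed equality follows by subtraction whenever $\int_{\mathbb{X}}|h|\lambda^*\,\d\upsilon<\infty$ (equivalently $\mathbb{E}\big[\sum_{u\in N}|h(u)|\big]<\infty$), which is the regime in which the statement is used (e.g.\ $h$ a product of bounded basis functions on a compact domain, so $|h|$ is bounded and $\int_{\mathbb X}\lambda^*<\infty$).

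The argument invokes no structural property of $N$ beyond the definition of its first-order intensity measure, so it applies verbatim to the non-Poisson processes considered in \Cref{sec:other_pp}. There is no serious obstacle here; the only point deserving care is the interchange of expectation with the countable sum over points and with the limit over $\ell$ in Step 3, which is legitimate because all quantities are nonnegative (Tonelli / monotone convergence) and $N$ is locally finite on the compact domain $\mathbb{X}$, ensuring the sums are well-defined term by term.
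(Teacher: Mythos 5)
Your proof is correct. Note that the paper itself does not prove this lemma: it is stated as a classical fact with a pointer to Theorem~2.2 of \citet{baddeley2007spatial}, and your measure-theoretic induction (indicators $\to$ simple functions $\to$ monotone convergence $\to$ $h = h^+ - h^-$) is precisely the standard argument behind that citation, so there is no genuinely different route to compare against. The only step worth tightening is Step~1: the paper defines $\Pi(S) = \mathbb{E}[N(S)] = \int_S \lambda^*(x)\,\d \upsilon(x)$ only for \emph{compact} $S \subseteq \mathbb{X}$, whereas your induction needs the identity for arbitrary Borel $S$. This is closed in one line: $S \mapsto \mathbb{E}[N(S)]$ and $S \mapsto \int_S \lambda^*\,\d\upsilon$ are finite measures on $\mathbb{X}$ (finiteness since $\lambda^* \in \mathbb{L}_2(\mathbb{X}) \subset \mathbb{L}_1(\mathbb{X})$ on the compact domain) that agree on the compact subsets, a $\pi$-system generating the Borel $\sigma$-algebra, hence they agree on all Borel sets by Dynkin's lemma. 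With that addition the argument is complete, and, as you observe, it uses nothing about $N$ beyond its first-order intensity measure, so it applies equally to the non-Poisson processes of \Cref{sec:other_pp}; your integrability caveat in Step~4 correctly reflects how the lemma is invoked in the paper (bounded $h$ on a compact domain).
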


\bigskip
Suppose we observe $N^{(1)},\dots,N^{(n)}$ i.i.d.\ PPPs with common intensity $\lambda$. Let $P_\lambda$ denote the marginal law of one PPP and $P_\lambda^{(n)} = P_\lambda^{\otimes n}$ denote the joint law.

\begin{theorem}[Kullback–Leibler divergence for Poisson point processes]\label{thm:KL}
Assume $\lambda,\mu$ are such that $P_\lambda\ll P_\mu$. Then the KL divergence for PPP laws $P_{\lambda}$ and $P_{\mu}$ is
\begin{align*}
\mathrm{KL}(P_\lambda\|P_\mu)
=
\mathbb E_\lambda\Big[\log\frac{dP_\lambda}{dP_\mu}(N)\Big] =
\int_{\mathbb X}\Big[
\lambda(x)\log\frac{\lambda(x)}{\mu(x)}-\lambda(x)+\mu(x)
\Big]\,d\upsilon(x).
\end{align*}
Moreover, for $n$ i.i.d.\ replicates,
\begin{equation*}
\mathrm{KL}(P_\lambda^{(n)}\|P_\mu^{(n)}) = n\,\mathrm{KL}(P_\lambda\|P_\mu).
\end{equation*}
\end{theorem}
\begin{proof}
    See the proof of Theorem 6 in \cite{leskela2024information}.
\end{proof}

\begin{theorem}[Matrix Bernstein's inequality for Poisson point processes]
\label{thm:MatrixBernstein_PPP} 
Let $N$ be an inhomogeneous Poisson point process, with intensity function $\lambda: \mathbb{X} \to \mathbb{R}_+$, in a compact subset $\mathbb{X} \subset \mathbb{R}^D$ for some $D \in \mathbb{Z}_+$, such that $\|\lambda\|_{\infty}< \infty$.  Let $F: \mathbb{X} \to \mathbb{R}^{d_1 \times d_2}$ be a matrix-valued, continuous and measurable function. Suppose that $\sup_{x \in \mathbb{X}}\|F(x)\|_{\op} \leq L < \infty$. Define the variance statistics as 
$$ \nu = \max\left\{\left\|\int_{\mathbb{X}}F(x)(F(x))^{\top} \lambda(x) \d x\right\|_{\op}, \left\|\int_{\mathbb{X}}(F(x))^{\top}F(x) \lambda(x) \d x\right\|_{\op} \right\}.$$
For all $t \geq 0$, we have
\begin{align*}
    \mathbb P\left(\left\|\sum_{X \in N}F(X) - \int_{\mathbb{X}}F(x)\lambda(x)\d x\right\|_{\op} \geq t \right) \leq (d_1 + d_2)\exp\left(-\frac{t^2/2}{\nu + Lt/3}\right).
\end{align*}
In particular, for all $a \geq 2$, with probability at least $1 - 2(\max\{d_1,d_2\})^{1-a}$, we have 
\begin{align*}
    \left\|\sum_{X \in N}F(X) - \int_{\mathbb{X}}F(x)\lambda(x)\d x\right\|_{\op} \leq \sqrt{2a\nu\log(d_1 + d_2)} + \frac{2a}{3}L\log(d_1 + d_2).
\end{align*}
\end{theorem}
\begin{proof}
We first consider the symmetric case; i.e.~$F$ is a symmetric matrix-valued function, and $d_1 = d_2 = d$. Results for the general case are derived based on that of the symmetric case and the Hermitian dilation (\Cref{def:dilation}).
\\
\textbf{The symmetric case.}
\
\\
\textbf{Step~1.}
We construct a sequence of piecewise constant matrix-valued function $F_n$, which converges uniformly to $F$ in $\|\cdot\|_{\op}$.
\
\\
We partition the compact space $\mathbb{X}$ into disjoint subsets $\{A_s^{(n)}\}_{s = 1}^n$, such that the diameter of each $A_s^{(n)}$ is at most $\delta_n$, where $\delta_n \to 0$ as $n \to \infty$. Define the piecewise constant function
\[
F_n(x) = F_s^{(n)} = F(x_s^{(n)}) \;\; \text{for} \;\; x\in A_s^{(n)},
\]
where $x_s^{(n)} \in A_s^{(n)}$ is the midpoint of $A_s^{(n)}$.
Since $F$ is continuous on a compact space $\mathbb{X}$, we have that $F$ is uniformly continuous on $\mathbb{X}$. Due to the uniform continuity, we have that $\forall \epsilon > 0$, there exists $\delta > 0$ such that for all $x, y \in \mathbb{X}$, $\|x-y\|_2 < \delta$ implies that $\|F(x) - F(y)\|_{\op} < \epsilon$.  Note that
\begin{align*}
    \sup_{x \in \mathbb{X}}\|F_n(x) - F(x)\|_{\op} =& \max_{s = 1}^n\sup_{x \in A_s^{(n)}}\|F_n(x) - F(x)\|_{\op}\\
    =& \max_{s = 1}^n\sup_{x \in A_s^{(n)}}\|F(x_s^{(n)}) - F(x)\|_{\op}\\
    <& \epsilon,
\end{align*}
where the inequality follows if $n$ is large enough such that $\delta_n \leq \delta$. Therefore, as $n \to \infty$
\begin{align*}
    \sup_{x \in \mathbb{X}}\|F_n(x) - F(x)\|_{\op} \to 0.
\end{align*}
\
\\
\textbf{Step~2.}
Define
\begin{align}\label{eq:concen_Poisson_proof_sigma_n}
\Sigma_n = \sum_{X \in N}F_n(X) - \int_{\mathbb{X}}F_n(x)\lambda(x)\d x &= \sum_{s = 1}^n\left\{F_s^{(n)} \cdot N(A_s^{(n)}) - \int_{A_s^{(n)}}F_n(x)\lambda(x)\d x \right\}\nonumber\\
&=: \sum_{s = 1}^nY_s.
\end{align}
Since $N(A_s^{(n)})$ are counts of disjoint regions in a Poisson point process, $\{Y_s\}_{s = 1}^n$ are independent random matrices. For all $t \in \mathbb{R}$, the Laplace transform of $\Sigma_n$ is 
\begin{align*}
    \mathbb{E}\tr \exp\left(t\Sigma_n\right) =& \mathbb{E}\tr \exp\left(t\sum_{s = 1}^n Y_s\right) = \mathbb{E}\left\{\mathbb{E}\left[\tr \exp\left(tY_1 + t\sum_{s = 2}^n Y_s\right)\bigg| \{Y_s\}_{s = 2}^n\right]\right\}\\
    \le& \mathbb{E}\left\{\tr \exp\left(t\sum_{s = 2}^n Y_s + \log \mathbb{E}\left[e^{Y_1}\big| \{Y_s\}_{s = 2}^n\right]\right)\right\}\\
    =& \mathbb{E}\left[\tr \exp\left(t\sum_{s = 2}^n Y_s + \log \mathbb{E}e^{Y_1}\right)\right]\\
    =& \mathbb{E}\left\{\mathbb{E}\left[\tr \exp\left(tY_2+ t\sum_{s = 3}^n Y_s + \log \mathbb{E}e^{Y_1}\right)\bigg| \{Y_s\}_{s = 3}^n \right]\right\}\\
    \le& \mathbb{E}\left[\tr \exp\left(t\sum_{s = 3}^n Y_s + \sum_{k = 1}^2\log \mathbb{E}e^{Y_k}\right)\right]\\
    &\cdots\\
    \leq& \tr \exp\left(\sum_{s = 1}^n \log \mathbb{E} e^{t\{F_s^{(n)} \cdot N(A_s^{(n)}) - \int_{A_s^{(n)}}F_n(x)\lambda(x)\d x\}}\right),
\end{align*}
where the first inequality follows from Jensen's inequality and the concavity due to Lieb’s Theorem (\Cref{lemma:Lieb}), the third equality follows from the independence of $\{Y_s\}_{s = 1}^n$, and the last inequality follows from the iterative use of the above arguments.
Note that $N(A_s^{(n)})$ is a Poisson random variable with parameter $m_s^{(n)} = \mathbb{E}[N(A_s^{(n)})]$. We have
\begin{align*}
    \mathbb{E} e^{tF_s^{(n)} \cdot N(A_s^{(n)})} =& \sum_{k = 0}^{\infty}e^{ktF_s^{(n)}} \cdot e^{-m_s^{(n)}}\cdot \frac{(m_s^{(n)})^k}{k!} = e^{-m_s^{(n)}}\sum_{k = 0}^{\infty}\frac{(m_s^{(n)} e^{tF_s^{(n)}})^k}{k!} = e^{-m_s^{(n)}}\exp\left(m_s^{(n)} e^{tF_s^{(n)}}\right)\\
    =& \exp(-m_s^{(n)})\cdot I_d \cdot\exp\left(m_s^{(n)} e^{tF_s^{(n)}}\right) = \exp(-m_s^{(n)}I_d)\exp\left(m_s^{(n)} e^{tF_s^{(n)}}\right)\\
    =& \exp\left(m_s^{(n)} e^{tF_s^{(n)}} -m_s^{(n)}I_d \right),
\end{align*}
where the first equality is by definition, the second, third and sixth equalities follow from the properties of matrix exponential, i.e.~$e^A\cdot e^B = e^{A+B}$ if $AB = BA$, for $A,B \in \mathbb{R}^{d\times d}$. The fifth equality follows from the properties of matrix functions \cite[see e.g.~Definition 2.1.2 of][]{tropp2015introduction}. 
We then have
\[
\mathbb{E} e^{t\{F_s^{(n)} \cdot N(A_s^{(n)}) - \int_{A_s^{(n)}}F_n(x)\lambda(x)\d x\}} = \exp\left(m_s^{(n)} e^{tF_s^{(n)}} -m_s^{(n)}I_d - t\int_{A_s^{(n)}}F_n(x)\lambda(x)\d x \right).
\]
By the properties of matrix logarithm, i.e.~$\log(e^A) = A$ for all Hermitian matrix $A$, we have
\begin{align*}
    \log\mathbb{E} e^{t\{F_s^{(n)} \cdot N(A_s^{(n)}) - \int_{A_s^{(n)}}F_n(x)\lambda(x)\d x\}} 
    = m_s^{(n)} \left( e^{tF_s^{(n)}} - I_d\right) - t\int_{A_s^{(n)}}F_n(x)\lambda(x)\d x.
\end{align*}
Thus,
\begin{align*}
    \mathbb{E}\tr \exp\left(t\Sigma_n\right)
    \leq& \tr \exp\left(\sum_{s = 1}^n \left\{m_s^{(n)} \left( e^{tF_s^{(n)}} - I_d\right) - t\int_{A_s^{(n)}}F_n(x)\lambda(x)\d x\right\}\right)\\
    =& \tr \exp\left(\sum_{s = 1}^n \int_{A_s^{(n)}}  \left( e^{tF_n(x)} - I_d - tF_n(x)\right) \lambda(x)\d x\right)\\
    =& \tr \exp\left( \int_{\mathbb{X}}  \left( e^{tF_n(x)} - I_d  - tF_n(x)\right) \lambda(x) \d x\right),
\end{align*}
where the first equality follows from the definitions of $F_n$ and $m_s^{(n)}$.
Define $$g(x) = \frac{e^{tx} - 1 - tx}{x^2}.$$
Note that for $t > 0$, $g(x)$ is positive and monotonically increasing.
Let $\sup_{x \in \mathbb{X}}\|F_n(x)\|_{\op} = L_n$.  We have
\begin{align*}
    e^{tF_n(x)} - I_d - tF_n(x) =& F_n(x) \cdot g(F_n(x)) \cdot F_n(x) \preceq g(L_n) \cdot (F_n(x))^2,
\end{align*}
where $A \preceq B$ represents that the matrix $B-A$ is positive semidefinite. 
For all $0 < t < 3/L_n$, we have
\begin{align*}
    g(L_n) = \frac{e^{tL_n} - 1 - tL_n}{L_n^2} = L_n^{-2}\sum_{k = 2}^{\infty}\frac{(tL_n)^k}{k!} \leq \frac{t^2}{2}\sum_{k = 2}^{\infty}\frac{(tL_n)^{k-2}}{3^{k-2}} = \frac{t^2/2}{1 - tL_n/3}.
\end{align*}
For all $0 < t < 3/L_n$, we have
\begin{align*}
    e^{tF_n(x)} - I_d - tF_n(x) \preceq \frac{t^2/2}{1 - tL_n/3} \cdot (F_n(x))^2.
\end{align*}
Define $\nu_n = \|\int_{\mathbb{X}}(F_n(x))^2 \lambda(x) \d x\|_{\op}$.    For all $0 < t < 3/L_n$, 
\begin{align*}
    \mathbb{E}\tr \exp\left(t\Sigma_n\right)
    \leq& \tr \exp\left( \int_{\mathbb{X}}  \left( e^{tF_n(x)} - I  - tF_n(x)\right) \lambda(x) \d x\right)\\
    \leq& \tr \exp\left( \frac{t^2/2}{1 - tL_n/3} \cdot\int_{\mathbb{X}}(F_n(x))^2 \lambda(x) \d x\right)\\
    \leq& d\exp\left( \frac{t^2/2}{1 - tL_n/3} \cdot \nu_n\right),
\end{align*}
where the last inequality follows from the properties of matrix functions \cite[see e.g.~Definition 2.1.2 of][]{tropp2015introduction}. 
By the matrix Chernoff inequality
\begin{align*}
    \mathbb{P}\left(\left\|\Sigma_n\right\|_{\op} \geq u\right) \leq& \inf_{t > 0}\frac{\mathbb{E}\tr \exp\left(t\Sigma_n\right)}{e^{tu}} \leq \inf_{0 < t < 3/L_n}\frac{\mathbb{E}\tr \exp\left(t\Sigma_n\right)}{e^{tu}}\\
    \leq& \inf_{0 < t < 3/L_n}d\exp\left( \frac{t^2/2}{1 - tL_n/3} \cdot \nu_n - tu\right)\\
    \leq& d\exp\left( -\frac{u^2/2}{\nu_n + uL_n/3}\right),
\end{align*}
where the last inequality follows by setting $t = u/(\nu_n + uL_n/3) = 3/(3\nu_n/u + L_n) < 3/L_n$.
Recall that $L_n = \sup_{x \in \mathbb{X}}\|F_n(x)\|_{\op}$ and $L = \sup_{x \in \mathbb{X}}\|F(x)\|_{\op}$.  By construction, for all $x \in \mathbb{X}$, we can find $x_s^{(n)}$ such that $F_n(x) = F(x_s^{(n)})$. Thus, we have $L_n \leq L$.  Moreover, since $$\sup_{x \in \mathbb{X}}\|F_n(x)\|_{\op} = L_n \leq L < \infty$$ and $$\sup_{x \in \mathbb{X}}\|(F_n(x))^2\|_{\op} \leq L^2,$$ by the dominated convergence theorem, we have $\lim_{n \to \infty}\nu_n = \nu$.  In other words, for all $\epsilon >0$, there exists $n_0$ such that for all $n \geq n_0$, $|\nu_n - \nu| < \epsilon$.  Consequently, due to the monotonicity, we have
\begin{align*}
    \mathbb{P}\left(\left\|\Sigma_n\right\|_{\op} \geq u\right) \leq  d\exp\left( -\frac{u^2/2}{\nu + \epsilon + uL/3}\right),
\end{align*}
for all $n \geq n_0$.
\
\\
\textbf{Step~3.} Recall $\Sigma_n$ defined in \eqref{eq:concen_Poisson_proof_sigma_n} and define
\begin{align*}
\Sigma = \sum_{X \in N}F(X) - \int_{\mathbb{X}}F(x)\lambda(x)\d x.
\end{align*}
Since $F_n$ converges to $F$ pointwisely and $L_n \leq L < \infty$, by the dominated convergence theorem, we have 
\[
\left\|\int_{\mathbb{X}}F_n(x)\lambda(x)\d x - \int_{\mathbb{X}}F(x)\lambda(x)\d x\right\|_{\op} \to 0.
\]
Moreover, we have that as $n \to \infty$
\[
\left\| \sum_{X \in N}F_n(X) - \sum_{X \in N}F(X)\right\|_{\op} \leq N(\mathbb{X})\cdot \sup_{x \in \mathbb{X}}\left\| F_n(x) - F(x)\right\|_{\op} \overset{a.s.}{\to} 0,
\]
since $N(\mathbb{X}) < \infty$ a.s.~for a compact space $\mathbb{X}$ and $\sup_{x \in \mathbb{X}}\|F_n(x) - F(x)\|_{\op} \to 0$.  Thus, we have
\[
\left\| \Sigma_n \right\|_{\op} \overset{a.s.}{\to} \left\| \Sigma\right\|_{\op},
\]
which implies weak convergence. By the Portmanteau lemma \cite[see e.g.~Lemma 2.2 in][]{van2000asymptotic}, we have
\begin{align*}
    \mathbb{P}\left(\left\|\Sigma\right\|_{\op} \geq u\right) \leq \liminf_{n \to \infty}\mathbb{P}\left(\left\|\Sigma_n\right\|_{\op} \geq u\right) \leq  d\exp\left( -\frac{u^2/2}{\nu + \epsilon + uL/3}\right).
\end{align*}
Since we can set $\epsilon$ arbitrarily small by choosing $n$ sufficiently large, we conclude the proof for the symmetric case.
\\
\textbf{The general case.}  We consider 
the general case, where $F: \mathbb{X} \to \mathbb{R}^{d_1 \times d_2}$ is an asymmetric matrix-valued function.  Define the Hermitian dilation as $\overline{F}: \mathbb{X} \to \mathbb{R}^{(d_1 + d_2) \times (d_1 + d_2)}$ (see \Cref{def:dilation}) and
\begin{align*}
\overline{\Sigma} = \sum_{X \in N}\overline{F}(X) - \int_{\mathbb{X}}\overline{F}(x)\lambda(x)\d x.
\end{align*}  
For all matrix $A$, its Hermitian dilation $\overline{A}$ is a block anti-diagonal matrix, and we have $\lambda_{\max}(\overline{A}) = \Vert \overline{A} \Vert_{\op} = \Vert A \Vert_{\op}$.
Note that by construction $\overline{F}$ is a block anti-diagonal matrix-valued function. Thus, $$\lambda_{\max}(\overline{\Sigma}) = \Vert \overline{\Sigma} \Vert_{\op} = \Vert \Sigma \Vert_{\op},$$
$$\sup_{x \in \mathbb{X}}\|\overline{F}(x)\|_{\op} = \sup_{x \in \mathbb{X}}\|F(x)\|_{\op} = L,$$
and by the arguments in Section~2.2.8 of \cite{tropp2015introduction},
$$\left\|\int_{\mathbb{X}}(\overline{F}(x))^2 \lambda(x) \d x\right\|_{\op} = \max\left\{\left\|\int_{\mathbb{X}}F(x)(F(x))^{\top} \lambda(x) \d x\right\|_{\op}, \left\|\int_{\mathbb{X}}(F(x))^{\top}F(x) \lambda(x) \d x\right\|_{\op} \right\} = \nu.$$
Finally, applying the results for the symmetric case,
\begin{align*}
    \mathbb{P}\left(\left\|\Sigma\right\|_{\op} \geq u\right) = \mathbb{P}\left(\left\|\overline{\Sigma}\right\|_{\op} \geq u\right) \leq  (d_1+d_2)\exp\left( -\frac{u^2/2}{\nu + uL/3}\right).
\end{align*}
\end{proof}

\begin{corollary}[Matrix Bernstein's inequality for Poisson point processes]
\label{coro:MatrixBernstein_PPP} 
Let $\{N^{(i)}\}_{i = 1}^n$ be a set of i.i.d.~inhomogeneous Poisson point processes, with intensity function $\lambda: \mathbb{X} \to \mathbb{R}_+$, in a compact subset $\mathbb{X} \subset \mathbb{R}^D$ for some $D \in \mathbb{Z}_+$. Let $F: \mathbb{X} \to \mathbb{R}^{d_1 \times d_2}$ be a matrix-valued, continuous and measurable function. Suppose that $\sup_{x \in \mathbb{X}}\|F(x)\|_{\op} \leq L < \infty$. Define the matrix variance statistics as 
$$ \nu = n \max\left\{\left\|\int_{\mathbb{X}}F(x)(F(x))^{\top} \lambda(x) \d x\right\|_{\op}, \left\|\int_{\mathbb{X}}(F(x))^{\top}F(x) \lambda(x) \d x\right\|_{\op} \right\}.$$
We have for all $t \geq 0$,
\begin{align*}
    \mathbb P\left(\left\|\sum_{i = 1}^n\sum_{X \in N^{(i)}}F(X) - n\int_{\mathbb{X}}F(x)\lambda(x)\d x\right\|_{\op} \geq t \right) \leq (d_1 + d_2)\exp\left(-\frac{t^2/2}{\nu + Lt/3}\right).
\end{align*}
In particular, for all $a \geq 2$, with probability at least $1 - 2(\max\{d_1,d_2\})^{1-a}$, we have 
\begin{align*}
    \left\|\sum_{i = 1}^n\sum_{X \in N^{(i)}}F(X) - n\int_{\mathbb{X}}F(x)\lambda(x)\d x\right\|_{\op} \leq \sqrt{2a\nu\log(d_1 + d_2)} + \frac{2a}{3}L\log(d_1 + d_2).
\end{align*}
\end{corollary}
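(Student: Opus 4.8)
The plan is to reduce Corollary~\ref{coro:MatrixBernstein_PPP} directly to Theorem~\ref{thm:MatrixBernstein_PPP} via the superposition property of Poisson point processes. Since $\{N^{(i)}\}_{i=1}^n$ are independent inhomogeneous Poisson point processes on the compact set $\mathbb{X}$, each with the same intensity $\lambda$, their superposition $\widetilde{N} = \bigcup_{i=1}^n N^{(i)}$ is again an inhomogeneous Poisson point process on $\mathbb{X}$, now with intensity $n\lambda$. Consequently,
\[
\sum_{i=1}^n \sum_{X \in N^{(i)}} F(X) = \sum_{X \in \widetilde{N}} F(X),
\]
and the centering term for $\widetilde N$ transforms as $\int_{\mathbb{X}} F(x)\,\bigl(n\lambda(x)\bigr)\,\d x = n\int_{\mathbb{X}} F(x)\lambda(x)\,\d x$, which is exactly the quantity subtracted in the corollary.

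Next I would apply Theorem~\ref{thm:MatrixBernstein_PPP} to $\widetilde{N}$ with intensity function $n\lambda$ in place of $\lambda$. The uniform bound is unchanged, $\sup_{x\in\mathbb{X}}\|F(x)\|_{\op}\le L$, while the variance statistic produced by the theorem is
\[
\max\left\{\left\|\int_{\mathbb{X}}F(x)(F(x))^{\top}\, n\lambda(x)\,\d x\right\|_{\op},\ \left\|\int_{\mathbb{X}}(F(x))^{\top} F(x)\, n\lambda(x)\,\d x\right\|_{\op}\right\} = \nu,
\]
precisely the $\nu$ defined in the statement of the corollary. Substituting these into the tail bound of Theorem~\ref{thm:MatrixBernstein_PPP} immediately yields the claimed Bernstein inequality for all $t\ge 0$.

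Finally, the ``in particular'' clause follows from the standard choice $t = \sqrt{2a\nu\log(d_1+d_2)} + \tfrac{2a}{3}L\log(d_1+d_2)$: with this $t$ one verifies the elementary inequality $\tfrac{t^2/2}{\nu + Lt/3} \ge a\log(d_1+d_2)$ (checking the two regimes according to which of the two summands in $t$ dominates), so the tail probability is at most $(d_1+d_2)\exp\bigl(-a\log(d_1+d_2)\bigr) = (d_1+d_2)^{1-a}$, and since $d_1+d_2 \ge \max\{d_1,d_2\}$ and $1-a \le 0$ this is bounded by $2(\max\{d_1,d_2\})^{1-a}$. Because Theorem~\ref{thm:MatrixBernstein_PPP} is already established, there is essentially no real obstacle; the only step needing a (routine) verification is this elementary conversion of a Bernstein-type tail into a high-probability bound, together with the observation that superposition of independent Poisson processes is Poisson, which is classical.
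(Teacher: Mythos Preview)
Your proposal is correct and matches the paper's proof essentially verbatim: the paper also takes the superposition $N=\bigcup_{i=1}^n N^{(i)}$, notes it is Poisson with intensity $n\lambda$ (phrased there as ``infinite divisibility''), and then applies Theorem~\ref{thm:MatrixBernstein_PPP}. Your write-up is in fact more detailed than the paper's, which omits the verification of the ``in particular'' clause entirely.
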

\begin{proof}
    Let $N = \bigcup_{i = 1}^nN^{(i)}$, and then by the infinite divisibility of the Poisson point process, $N$ is a Poisson point process with intensity function $n\lambda$.  Applying \Cref{thm:MatrixBernstein_PPP} on $N$ concludes the proof.
\end{proof}

\section{Technical tools for compact operators on Hilbert spaces}\label{sec:HilbertSpace}

 \begin{lemma}[Lemma 14 of \cite{khoo2024nonparametric}]
    \label{LemmaDaren}  Let $\mathcal{W}$ and $\mathcal{W}^{\prime}$ be two separable Hilbert spaces. Suppose $A$ and $B$ are two compact operators from $\mathcal{W} \otimes \mathcal{W}^{\prime} \rightarrow \mathbb{R}$. For all $k \in \mathbb{N}_+$, we have
$$
\left|\sigma_k(A+B)-\sigma_k(A)\right| \leq\|B\|_{\mathrm{op}} .
$$
 \end{lemma}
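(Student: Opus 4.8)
The statement is the Hilbert-space version of Weyl's inequality for singular values; the plan is to reduce it to the Courant–Fischer–Weyl min-max characterization of the singular values of a compact operator and then conclude by the same two-sided triangle-inequality argument used in the finite-dimensional case. The only place where compactness (as opposed to mere boundedness) is really needed is to guarantee that the relevant extremal values are attained and that the singular values form a well-defined nonincreasing sequence; everything else is elementary.

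\textbf{Step 1: reduction to an operator between Hilbert spaces.} A bounded compact bilinear form $A$ on $\mathcal{W}\times\mathcal{W}'$ (what the statement calls a compact operator ``from $\mathcal{W}\otimes\mathcal{W}'$ to $\mathbb{R}$'') corresponds, by the Riesz representation theorem, to a unique compact linear operator $T_A:\mathcal{W}\to\mathcal{W}'$ with $A[u,v]=\langle T_Au,\,v\rangle_{\mathcal{W}'}$ for all $u\in\mathcal{W},\,v\in\mathcal{W}'$. Under this identification one has $\|A\|_{\op}=\|T_A\|_{\op}$, $\sigma_k(A)=\sigma_k(T_A)$ for every $k$, and $A+B\leftrightarrow T_A+T_B$ with $\|B\|_{\op}=\|T_B\|_{\op}$. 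Hence it suffices to prove $|\sigma_k(T_A+T_B)-\sigma_k(T_A)|\leq\|T_B\|_{\op}$; I write $T:=T_A$ and $S:=T_B$ below.

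\textbf{Step 2: min-max characterization.} Next I would recall the spectral theory of compact operators: $T^*T$ is compact, self-adjoint and positive semidefinite, so by the spectral theorem it admits an orthonormal eigenbasis of $\mathcal{W}$ whose eigenvalues $\mu_1\geq\mu_2\geq\cdots\geq 0$ accumulate only at $0$, and one sets $\sigma_k(T)=\sqrt{\mu_k}$. Since $\|Tu\|_{\mathcal{W}'}^2=\langle T^*Tu,\,u\rangle_{\mathcal{W}}$, the Courant–Fischer min-max principle for the eigenvalues of $T^*T$, together with monotonicity of the square root, yields
\[
\sigma_k(T)=\min_{\substack{\mathcal{V}\subseteq\mathcal{W}\\ \dim\mathcal{V}\leq k-1}}\ \max_{\substack{u\in\mathcal{V}^{\perp}\\ \|u\|_{\mathcal{W}}=1}}\ \|Tu\|_{\mathcal{W}'},
\]
and likewise with $T$ replaced by $T+S$. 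The maximum over the unit sphere of $\mathcal{V}^{\perp}$ is genuinely attained because $T$ is compact; this is the one step that uses compactness essentially (in a general bounded setting one would have a supremum). I would cite a standard functional-analysis reference for this min-max statement rather than reprove it.

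\textbf{Step 3: triangle inequality and conclusion.} Fix $k$ and let $\mathcal{V}\subseteq\mathcal{W}$ be any subspace with $\dim\mathcal{V}\leq k-1$. For every unit vector $u\in\mathcal{V}^{\perp}$,
\[
\|(T+S)u\|_{\mathcal{W}'}\leq\|Tu\|_{\mathcal{W}'}+\|Su\|_{\mathcal{W}'}\leq\|Tu\|_{\mathcal{W}'}+\|S\|_{\op}.
\]
Taking the maximum over such $u$ and then the minimum over $\mathcal{V}$ in the characterization of Step 2 gives $\sigma_k(T+S)\leq\sigma_k(T)+\|S\|_{\op}$. Applying this same inequality with the roles of the operators interchanged — that is, with $T$ replaced by $T+S$ and $S$ replaced by $-S$, noting $(T+S)+(-S)=T$ and $\|-S\|_{\op}=\|S\|_{\op}$ — gives $\sigma_k(T)\leq\sigma_k(T+S)+\|S\|_{\op}$. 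Combining the two bounds yields $|\sigma_k(T+S)-\sigma_k(T)|\leq\|S\|_{\op}$, and translating back through Step 1 gives $|\sigma_k(A+B)-\sigma_k(A)|\leq\|B\|_{\op}$. The main (and essentially only) obstacle is the clean justification of the min-max formula in the separable infinite-dimensional setting — handling finite-rank or infinitely-many-nonzero-singular-value cases uniformly in $k$ and the attainment of the extrema — but this is classical, and once invoked the remainder of the argument is a one-line computation.
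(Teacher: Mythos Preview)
Your proof is correct and follows the classical route to Weyl's inequality for singular values of compact operators: identify the bilinear form with a compact linear operator, invoke the Courant--Fischer min--max characterization, and conclude by the triangle inequality applied symmetrically. Note that the paper itself does not supply a proof of this lemma; it simply cites it as Lemma~14 of \cite{khoo2024nonparametric}, so there is no in-paper argument to compare against.
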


\begin{lemma}[Mirsky's singular value inequality of \cite{mirsky1960symmetric}]\label{thm:Mirsky}
For all matrices $A, B \in \mathbb{R}^{m\times n}$, let $A = V_1\Sigma(A)W_1^\top$ and $B = V_2\Sigma(B)W_2^\top$ be the full SVDs of $A$ and $B$, respectively. Note that $\Sigma(A),\Sigma(B) \in \mathbb{R}^{m \times n}$ are non-negative (rectangular) diagonal matrices whose diagonal entries are the non-increasingly ordered singular values of $A$ and $B$, respectively. We have
\begin{equation}
    \|\Sigma(A) - \Sigma(B)\| \leq \|A - B\|
\end{equation}
for all unitarily invariant norm $\|\cdot\|$ on $\mathbb{R}^{m\times n}$. 
\end{lemma}

\begin{lemma}[Mirsky's inequality for compact operators on Hilbert spaces]\label{lem:Mirsky_Hilbert}
    Suppose \( A \) and \( B \) are two compact operators in \( \mathcal{W} \otimes \mathcal{W}^{\prime} \), where \( \mathcal{W} \) and \( \mathcal{W}^{\prime} \) are two separable Hilbert spaces. Let \( \{\sigma_k(A)\}_{k=1}^\infty \) be the singular values of \( A \) in decreasing order, and \( \{\sigma_k(B)\}_{k=1}^\infty \) be the singular values of \( B \) in decreasing order. We have
\[
\sum_{k=1}^\infty (\sigma_k(A) - \sigma_k(B))^2 \leq \|A - B\|_{\mathrm{F}}^2 = \sum_{k=1}^\infty \sigma_k^2(A - B).
\]
\end{lemma}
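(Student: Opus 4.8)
The plan is to reduce the infinite-dimensional statement to the finite-dimensional Mirsky inequality (Lemma~\ref{thm:Mirsky}) by a compression argument. First, if $\|A-B\|_{\mathrm{F}}^2 = \sum_{k=1}^\infty \sigma_k^2(A-B) = \infty$ there is nothing to prove, so assume $A-B$ (which is compact, being a difference of compact operators) is Hilbert--Schmidt. Fix $N \in \mathbb{N}_+$. Write the singular value decompositions of $A$ and $B$ with orthonormal right-singular systems $\{v_j^A\},\{v_j^B\} \subset \mathcal{W}$ and orthonormal left-singular systems $\{u_j^A\},\{u_j^B\} \subset \mathcal{W}^{\prime}$, so that $A v_j^A = \sigma_j(A) u_j^A$ and $B v_j^B = \sigma_j(B) u_j^B$. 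Let $\mathcal{S} = \mathrm{Span}\{v_1^A,\dots,v_N^A,v_1^B,\dots,v_N^B\} \subseteq \mathcal{W}$ and $\mathcal{S}^{\prime} = \mathrm{Span}\{u_1^A,\dots,u_N^A,u_1^B,\dots,u_N^B\} \subseteq \mathcal{W}^{\prime}$, both of finite dimension at most $2N$, with orthogonal projections $P = \mathcal{P}_{\mathcal{S}}$ and $Q = \mathcal{P}_{\mathcal{S}^{\prime}}$. Set $\widetilde A = Q A P$ and $\widetilde B = Q B P$, regarded as finite matrices acting $\mathcal{S}\to\mathcal{S}^{\prime}$.

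The key step is to show $\sigma_k(\widetilde A) = \sigma_k(A)$ and $\sigma_k(\widetilde B) = \sigma_k(B)$ for all $k \le N$. On one hand, for any compact $X$ and orthogonal projections $P,Q$ one has $\sigma_k(QXP) \le \sigma_k(X)$ for every $k$, by the Courant--Fischer min-max characterization of singular values together with $\|QX Pw\| \le \|X(Pw)\|$ and $\|Pw\| \le \|w\|$. On the other hand, since $v_j^A \in \mathcal{S}$ and $u_j^A \in \mathcal{S}^{\prime}$ for $j \le N$, we get $\widetilde A v_j^A = QAv_j^A = \sigma_j(A) Q u_j^A = \sigma_j(A) u_j^A$; hence the restriction of $\widetilde A$ to $\mathrm{Span}\{v_1^A,\dots,v_j^A\}$ has smallest singular value $\sigma_j(A)$, and the max-min characterization gives $\sigma_j(\widetilde A) \ge \sigma_j(A)$ for $j \le N$. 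Combining the two bounds gives the claimed equalities, and the same argument applies verbatim to $\widetilde B$. Moreover $\|\widetilde A - \widetilde B\|_{\mathrm{F}} = \|Q(A-B)P\|_{\mathrm{F}} \le \|A-B\|_{\mathrm{F}}$, since orthogonal projections are contractions in Frobenius norm.

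Now Lemma~\ref{thm:Mirsky}, applied to the finite matrices $\widetilde A,\widetilde B$ with the (unitarily invariant) Frobenius norm, yields
\[
\sum_{k=1}^N (\sigma_k(A)-\sigma_k(B))^2 = \sum_{k=1}^N (\sigma_k(\widetilde A)-\sigma_k(\widetilde B))^2 \le \sum_{k}(\sigma_k(\widetilde A)-\sigma_k(\widetilde B))^2 = \|\Sigma(\widetilde A)-\Sigma(\widetilde B)\|_{\mathrm{F}}^2 \le \|\widetilde A-\widetilde B\|_{\mathrm{F}}^2 \le \|A-B\|_{\mathrm{F}}^2.
\]
Letting $N \to \infty$ gives $\sum_{k=1}^\infty (\sigma_k(A)-\sigma_k(B))^2 \le \|A-B\|_{\mathrm{F}}^2 = \sum_{k=1}^\infty \sigma_k^2(A-B)$, which is the assertion.

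The main obstacle is the second paragraph: one must argue carefully that compressing onto a \emph{single} finite-dimensional subspace that simultaneously contains the leading $N$ singular vectors of both $A$ and $B$ reproduces \emph{exactly} the top $N$ singular values of each — this is where the min-max theorem for compact operators on Hilbert spaces is used, and one must keep straight the distinction between "compression does not increase any singular value" and "the leading $N$ singular values are unchanged." The remaining ingredients (the trivial divergent case, rectangular matrices being allowed in Lemma~\ref{thm:Mirsky}, and the Frobenius-norm contraction $\|QXP\|_{\mathrm{F}}\le\|X\|_{\mathrm{F}}$) are routine.
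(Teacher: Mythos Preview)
Your proof is correct and takes a genuinely different route from the paper. Both arguments reduce to the finite-dimensional Mirsky inequality (Lemma~\ref{thm:Mirsky}), but the compressions are chosen differently. The paper projects $A$ and $B$ onto a \emph{fixed} pair of basis subspaces $\mathcal{W}_j \times \mathcal{W}_j'$ (chosen independently of $A,B$) to obtain $A_j,B_j$ with $\|A-A_j\|_{\mathrm F},\|B-B_j\|_{\mathrm F}\le\epsilon$, then uses a triangle inequality in $\ell^2$ to split $\sqrt{\sum_k(\sigma_k(A)-\sigma_k(B))^2}$ into three pieces and applies finite-dimensional Mirsky only to the middle piece $\sqrt{\sum_k(\sigma_k(A_j)-\sigma_k(B_j))^2}$; finally $\epsilon\to0$. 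You instead compress onto a single pair of subspaces spanned by the leading $N$ singular vectors of \emph{both} operators, arranged so that the top $N$ singular values of $A$ and of $B$ are preserved \emph{exactly} (via the min-max/max-min characterization), which lets you apply finite-dimensional Mirsky directly to $\sum_{k\le N}(\sigma_k(A)-\sigma_k(B))^2$ with no $\epsilon$-argument. Your route is shorter and also sidesteps the paper's step of bounding $\sqrt{\sum_k(\sigma_k(A)-\sigma_k(A_j))^2}$ by $\|A-A_j\|_{\mathrm F}$, which as written in the paper is asserted as an equality and, if read as the needed inequality, is itself an instance of the lemma being proved; your argument requires no such auxiliary bound.
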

\begin{proof}
    Let \( \{\phi_i\}_{i=1}^\infty \) and \( \{\phi_i'\}_{i=1}^\infty \) be the orthogonal basis of \( \mathcal{W} \) and \( \mathcal{W}' \). Let
\[
\mathcal{W}_j = \operatorname{span}(\phi_i: i \in [j]) \quad \text{and} \quad \mathcal{W}_j' = \operatorname{span}(\phi_i^{\prime}: i \in [j]).
\]
Let
\[
A_j = A \cdot \mathcal{P}_{\mathcal{W}_j} \cdot \mathcal{P}_{\mathcal{W}_j'} \quad \text{and} \quad B_j = B \cdot \mathcal{P}_{\mathcal{W}_j} \cdot \mathcal{P}_{\mathcal{W}_j'},
\]
where $\mathcal{P}_{\mathcal{W}_j}$ denotes the orthogonal projection onto $\mathcal{W}_j$, and similarly for $\mathcal{P}_{\mathcal{W}^{\prime}_j}$.
Since both \( A \) and \( B \) are compact, let \( n \) be sufficiently large such that for all \( j \geq n \),
\[
\|A - A_j\|_{\mathrm{F}} \leq \epsilon \quad \text{and} \quad \|B - B_j\|_{\mathrm{F}} \leq \epsilon.
\]
It follows that
\[
\sqrt{\sum_{k=1}^\infty (\sigma_k(A) - \sigma_k(B))^2} = \sqrt{\sum_{k=1}^\infty (\sigma_k(A) - \sigma_k(A_j) + \sigma_k(A_j) - \sigma_k(B_j) + \sigma_k(B_j) - \sigma_k(B))^2}.
\]
By the triangle inequality, this is
\[
\leq \sqrt{\sum_{k=1}^\infty (\sigma_k(A) - \sigma_k(A_j))^2} + \sqrt{\sum_{k=1}^\infty (\sigma_k(A_j) - \sigma_k(B_j))^2} + \sqrt{\sum_{k=1}^\infty (\sigma_k(B_j) - \sigma_k(B))^2}.
\]
By the definitions of $A_j$ and $B_j$, we have
\[
\sqrt{\sum_{k=1}^\infty (\sigma_k(A) - \sigma_k(A_j))^2} = \|A - A_j\|_{\mathrm{F}} \leq \epsilon \quad \text{and} \quad \sqrt{\sum_{k=1}^\infty (\sigma_k(B_j) - \sigma_k(B))^2} = \|B - B_j\|_{\mathrm{F}} \leq \epsilon.
\]
In addition, both \( A_j \) and \( B_j \) can be viewed as finite-dimensional matrices of size \( j \times j \). So for \( k > j \),
\[
\sigma_k(A_j) = \sigma_k(B_j) = 0.
\]
By the finite-dimensional Mirsky's inequality (\Cref{thm:Mirsky}),
\begin{align*}
&\sqrt{\sum_{k=1}^\infty (\sigma_k(A_j) - \sigma_k(B_j))^2} = \sqrt{\sum_{k=1}^j (\sigma_k(A_j) - \sigma_k(B_j))^2}\\
\leq& \sqrt{\sum_{k=1}^j \sigma_k(A_j - B_j)^2} = \|A_j - B_j\|_{\mathrm{F}} \leq \|A - B\|_{\mathrm{F}} + \|A - A_j\|_{\mathrm{F}} + \|B - B_j\|_{\mathrm{F}} \leq 2\epsilon + \|A - B\|_{\mathrm{F}}.
\end{align*}
Therefore,
\[
\sqrt{\sum_{k=1}^\infty (\sigma_k(A) - \sigma_k(B))^2} \leq \|A - B\|_{\mathrm{F}} + 2\epsilon.
\]
Since $\epsilon$ is arbitrary, we can make $\epsilon$ arbitrarily small by choosing sufficiently large $j$ in our approximations. Taking $\epsilon \to 0$, concluds the proof.

\end{proof}

\end{document}